\renewcommand\onecolumngrid{%
\do@columngrid{one}{\@ne}%
\def\set@footnotewidth{\onecolumngrid}%
\def\footnoterule{\kern-6pt\hrule width 1.5in\kern6pt}%
}
\renewcommand\twocolumngrid{%
        \def\footnoterule{%
        \dimen@\skip\footins\divide\dimen@\thr@@
        \kern-\dimen@\hrule width.5in\kern\dimen@}
        \do@columngrid{mlt}{\tw@}
}%
\newcommand{\Footnotemark}{\footnotemark%
 \expandafter\global\expandafter\let\csname saved@Href@A\endcsname%
    \Hy@footnote@currentHref}
\newcommand{\Footnotetext}[1]{%
   \expandafter\let\expandafter\Hy@footnote@currentHref\csname saved@Href@A\endcsname%
   \footnotetext{#1}}
\pgfplotsset{compat=1.13}
\newcommand{\mparen}[1]{\mleft(#1\mright)}
\newcommand{\mbracket}[1]{\mleft[#1\mright]}
\newcommand{\mbrace}[1]{\mleft\{#1\mright\}}
\renewcommand{\abs}[1]{\mleft|#1\mright|}
\renewcommand{\norm}[1]{\mleft\|#1\mright\|}
\newtheorem{theorem}{Theorem}
\newtheorem{corollary}{Corollary}
\newtheorem{definition}{Definition}
\newtheorem{lemma}{Lemma}
\newtheorem{proposition}{Proposition}
\newcommand{\eps}{\varepsilon}
\newcommand{\E}{\mathop{\mathbb{E}}}
\renewcommand{\erf}{\mathrm{erf}}
\newcommand{\Naturals}{\mathbb{N}}
\newcommand{\Reals}{\mathbb{R}}
\newcommand{\Unitaries}{U}
\newcommand{\SUnitaries}{SU}
\newcommand{\StateDist}{\mathcal{S}}
\newcommand{\UnitaryDist}{\mathcal{U}}
\newcommand{\poly}{\mathrm{poly}}
\newcommand{\polylog}{\mathrm{polylog}}
\newcommand{\PivotSet}{\mathcal{T}}
\newcommand{\F}{\mathbb{F}}
\DeclareMathOperator*{\argmax}{arg\,max}
\def\vev#1{\langle{#1}\rangle}
\def\and{\quad {\rm and} \quad}
\def\C{\mathbb{C}}
\def\CS{{\cal S}}
\def\ketbra#1{|{#1}\rangle\!\langle{#1}|}
\newcommand{\FXEB}{\mathcal{F}_{\mathsf{XEB}}}
\newcommand{\hFXEB}{\widehat{\mathcal{F}}_{\mathsf{XEB}}}
\begin{document}

\title{Demonstrating an unconditional separation between quantum and classical information resources}

\author{William Kretschmer}
\thanks{Corresponding author. Email: \url{kretsch@cs.utexas.edu}}
\affiliation{Department of Computer Science, The University of Texas at Austin, Austin, TX 78712, USA}
\affiliation{Simons Institute for the Theory of Computing, University of California, Berkeley, Berkeley, CA 94720, USA}
\author{Sabee Grewal}
\affiliation{Department of Computer Science, The University of Texas at Austin, Austin, TX 78712, USA}

\author{Matthew DeCross}
\affiliation{Quantinuum, Broomfield, CO 80021, USA}
\author{Justin A. Gerber}
\affiliation{Quantinuum, Broomfield, CO 80021, USA}
\author{Kevin Gilmore}
\affiliation{Quantinuum, Broomfield, CO 80021, USA}
\author{Dan Gresh}
\affiliation{Quantinuum, Broomfield, CO 80021, USA}
\author{Nicholas Hunter-Jones}
\affiliation{Department of Physics, The University of Texas at Austin, Austin, TX 78712, USA}
\affiliation{Department of Computer Science, The University of Texas at Austin, Austin, TX 78712, USA}
\author{Karl Mayer}
\affiliation{Quantinuum, Broomfield, CO 80021, USA}
\author{Brian Neyenhuis}
\affiliation{Quantinuum, Broomfield, CO 80021, USA}

\author{David Hayes}
\affiliation{Quantinuum, Broomfield, CO 80021, USA}

\author{Scott Aaronson}
\affiliation{Department of Computer Science, The University of Texas at Austin, Austin, TX 78712, USA}

\date{\today}

\begin{abstract} 
A longstanding goal in quantum information science is to demonstrate quantum computations that cannot be feasibly reproduced on a classical computer. 
Such demonstrations mark major milestones: they showcase fine control over quantum systems and are prerequisites for useful quantum computation. 
To date, quantum advantage has been demonstrated, for example, through violations of Bell inequalities and sampling-based quantum supremacy experiments.
However, both forms of advantage come with important caveats: Bell tests are not computationally difficult tasks, and the classical hardness of sampling experiments relies on unproven complexity-theoretic assumptions.
Here we demonstrate an unconditional quantum advantage in information resources required for a computational task, realized on Quantinuum's H1-1 trapped-ion quantum computer operating at a median two-qubit partial-entangler fidelity of $99.941(7)\%$. 
We construct a task for which the most space-efficient classical algorithm provably requires between 62 and 382 bits of memory, and solve it using only 12 qubits.
Our result provides the most direct evidence yet that currently existing quantum processors can generate and manipulate entangled states of sufficient complexity to access the exponentiality of Hilbert space. This form of quantum advantage---which we call \emph{quantum information supremacy}---represents a new benchmark in quantum computing, one that does not rely on unproven conjectures. 
\end{abstract}

\maketitle

Quantum theory postulates that a system of $n$ particles is represented by a vector in Hilbert space with exponentially many continuous parameters. 
The goal of quantum computing is to determine when and how this exponentiality can be harnessed to perform computations that are infeasible for classical computers.
Numerous theoretical results identify quantum advantages, most famously Shor’s algorithm for factoring integers~\cite{Sho99-factoring}, which achieves an exponential speedup over the best known classical algorithms.

Despite the progress of the past 30 years, 
it has not yet been demonstrated to skeptics' satisfaction that quantum computational advantage can be achieved in reality.
Skeptics have argued that while quantum algorithms like Shor’s exhibit exponential speedup \emph{in theory}, this speedup may never be physically accessible~\cite{levin2003tale,dyakonov2019,kalai2020argument,Hooft_1999, Wolfram2002}. At the heart of many of these arguments lies the belief that physically realizable quantum systems cannot be engineered 
to exploit the exponential dimensionality of Hilbert space. Reasons for this skepticism range from practical concerns about decoherence and error accumulation to deeper claims about the completeness of quantum mechanics itself.
Indeed, the question of whether Hilbert space is physical---or merely a mathematical convenience---has been debated since the early days of quantum mechanics~\cite{Fuchs_2011}.
Experiments that probe the exponentiality of Hilbert space are thus important for both the foundations of physics and for establishing that quantum algorithms can access more powerful information resources than classical ones, independent of assumptions about noise levels or advances in classical algorithms.

To date, the most widely recognized experimental demonstrations of quantum advantage fall into two broad categories. The first, violations of Bell inequalities, exhibit the nonlocal nature of quantum correlations, which cannot be replicated by any ``locally real'' classical theory \cite{Clauser-BellInequalityViolation,Aspect-BellInequalityViolation,Zeilinger-BellInequalityViolation}. These experiments, recognized by the 2022 Nobel Prize in Physics, do not directly probe questions of computational power or the exponentiality of quantum mechanics. 
The second category consists of sampling-based quantum supremacy experiments~\cite{AAB+19-google-supremacy, WBC+21-ustc-superconducting, ZCZ21, Sycamore70, ZCZ24, PhysRevX.15.021052, madsen2022quantum}, where quantum devices sample from probability distributions that are widely believed to be classically intractable to sample from faithfully.
However, the exponential classical hardness of these sampling problems is \emph{conditional}, i.e., it relies on unproven complexity-theoretic assumptions~\cite{AA13-boson-sampling,AG20-xeb}.

Fundamental questions therefore remain: Can we experimentally demonstrate an \textit{unconditional} quantum advantage for a well-defined computational task?
And can a quantum device perform a computation that establishes the exponential dimensionality of Hilbert space as a physically accessible resource?

We answer these questions in the affirmative. Using the tools of \textit{one-way communication complexity}, we formulate a computational task for which any classical algorithm provably requires at least 62 bits of memory, and we solve this task using only 12 qubits on a trapped-ion quantum computer. This yields an unconditional separation between quantum and classical information resources, realized in experiment.

Following the terminology of \cite{ABK23-relation}, we refer to this form of advantage as \emph{quantum information supremacy}: an experimental demonstration in which a quantum device solves a task using significantly fewer qubits than the number of bits required by any classical algorithm, according to a provable lower bound.

Our result provides direct evidence that today’s quantum hardware can prepare quantum states of sufficient complexity to access the exponentiality of Hilbert space. These states cannot be simulated, compressed, or described by any small number of classical bits. Thus, our experiment directly challenges the view that physically realizable quantum systems are always ``secretly classical” or reducible to low-dimensional descriptions.

\section*{From Communication Complexity to Quantum Information Supremacy} 
The design of our experiment centers around a quantum advantage for a one-way communication task.
In a typical such task, two parties---Alice and Bob---receive input strings $x$ and $y$, respectively, and wish to jointly compute a function on their inputs. However, their interaction is subject to a communication bottleneck: Alice may only send a single message $m_x$ to Bob, after which Bob alone must perform a computation that outputs the result $z$ (\Cref{fig:communication-complexity}).
Their goal is to minimize the length of a message $m_x$ that enables Bob to compute the correct output.
A trivial protocol for any such task is for Alice to send her entire input to Bob, i.e., $m_x = x$.
The challenge, then, is to identify tasks for which the communication required is significantly smaller than the input size.
The one-way communication model is also a setting for quantum advantage: for certain tasks~\cite{BJK08-comm,GKK+09-comm,Mon19-transmit}, the use of a quantum message $\ket{\psi_x}$ can provably reduce the amount of communication required by an exponential factor compared to any protocol that uses classical communication alone.

The basic premise of quantum information supremacy~\cite{ABK23-relation} is to recast such a quantum advantage in communication as an advantage in \textit{storage}.
Rather than considering two parties, Alice and Bob, separated \emph{in space}, we envision a single device separated between two points \emph{in time}, $t_0$ and $t_1$.
In this analogy, the communication bottleneck between Alice and Bob becomes a limitation on the amount of information that the device can transmit across the temporal boundary (\Cref{fig:circuit-template}).
This reinterpretation is desirable because it enables experimental realization on existing hardware without requiring distributed quantum systems.
A demonstration of quantum advantage in this setting---where much more than $n$ bits of classical memory are required to perform comparably to a device using $n$ qubits---exhibits a measurable gap in information capacity between classical and quantum resources.

\begin{figure}
  \begin{subfigure}[b]{0.45\textwidth}
    \hspace*{0.275cm}\includegraphics[width=\textwidth]{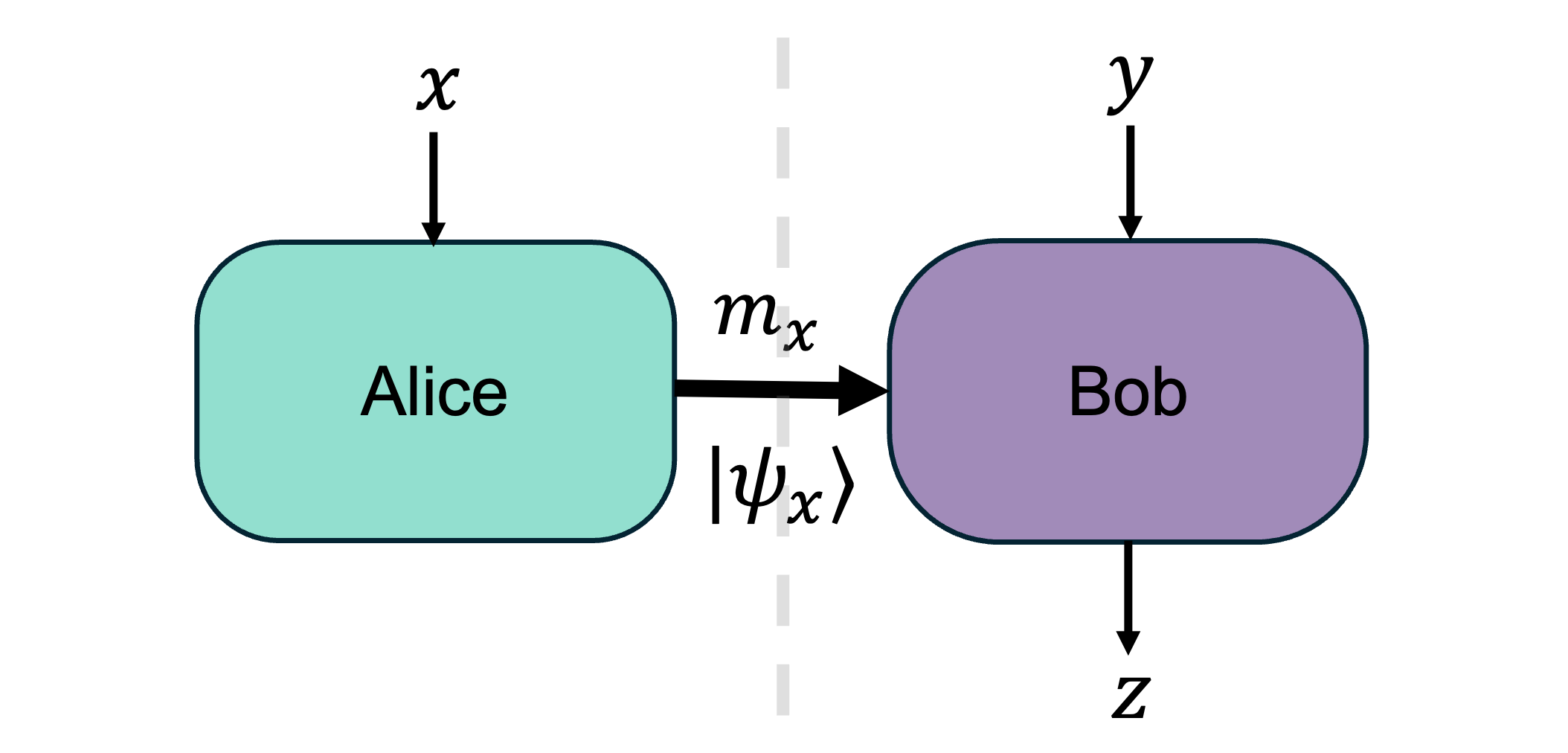}
    \caption{One-way communication model.
    Alice receives input $x$, Bob receives input $y$, and Alice sends a classical message $m_x$ or a quantum message $\ket{\psi_x}$ to Bob. Bob then outputs a value $z$ based on the message and his input.}
    \label{fig:communication-complexity}
  \end{subfigure}
  \hfill
  \begin{subfigure}[b]{0.45\textwidth}
    \vspace*{0.25cm}
    \includegraphics[width=\textwidth]{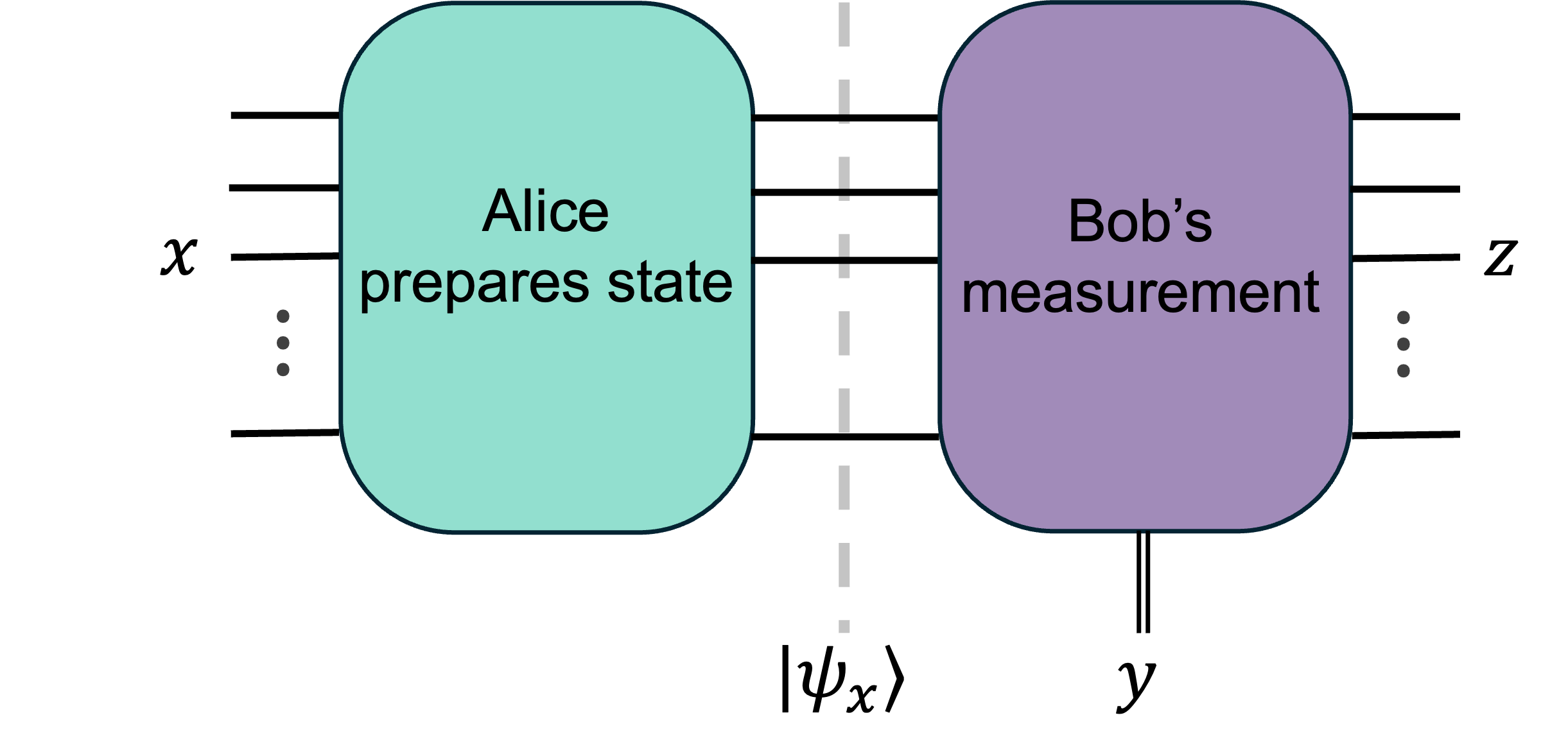}
    \caption{Realization of the communication task as a quantum circuit. The communication model is reinterpreted as a time-separated process on a single device. Given input $x$, Alice prepares a quantum state $\ket{\psi_x}$, which is passed forward in time to Bob. Bob performs a measurement determined by his input $y$, producing an output $z$.}
    \label{fig:circuit-template}
  \end{subfigure}
  \caption{One-way communication between two parties (a) is reinterpreted as a time-separated process on a single quantum device (b).}
  \label{fig:template-for-qis}
\end{figure}

\section*{Improved Quantum-Classical Separations} 
Although asymptotically exponential quantum advantages in one-way communication complexity have long been known~\cite{BJK08-comm}, previously established separations are insufficiently strong to achieve nontrivial quantum information supremacy on existing hardware (see Supplemental Material).
To remedy this, our chief theoretical contribution is a new one-way communication problem that can be solved using $n$ qubits of quantum communication, but which provably requires on the order of $2^n$ bits of communication to solve, up to lower-order terms in the exponent.

The problem we introduce is essentially a distributed version of linear cross-entropy benchmarking (XEB), which has seen use in other quantum supremacy demonstrations~\cite{AAB+19-google-supremacy, WBC+21-ustc-superconducting, ZCZ21, Sycamore70, ZCZ24, PhysRevX.15.021052}.
Put simply, the task asks Alice, who holds a quantum state, and Bob, who holds a quantum measurement, to work together to produce bit strings that are high-probability outputs of Bob's measurement on Alice's state. %
Formally, in a single instance of the problem, Alice's input $x$ is a classical description of some $n$-qubit quantum state $\ket{\psi_x}$, while Bob's input is a description of an $n$-qubit measurement in the form of a circuit $U_y$.
The task is to produce length-$n$ bit strings $z$ whose distribution, over many repetitions of the problem, achieves high linear cross-entropy benchmarking fidelity
\[
\FXEB \coloneqq \E\mbracket{2^n \abs{\braket{z|U_y|\psi_x}}^2 - 1}
\]
when $U_y$ is applied to $\ket{\psi_x}$, with the expectation taken over the input distribution $(x,y)$ and any additional randomness used by Alice and Bob in the protocol.
The quantum protocol for this task is simple: Alice's message to Bob is $\ket{\psi_x}$, and Bob processes the state by applying $U_y$ and measuring in the computational basis.
Assuming that the state $\ket{\psi_x}$ and circuit $U_y$ exhibit sufficiently strong anticoncentration, the $\FXEB$ achieved by implementation on a quantum device is well-approximated by the fidelity of the device~(see \cite{AAB+19-google-supremacy} and SM~\Cref{sec:variational}). 
Thus, most of the work lies in proving that any classical protocol for our distributed XEB task requires an exponential number of bits to replicate the behavior of the quantum protocol, even in the presence of non-negligible device noise. %
Our main result (proved in the Supplemental Material) establishes this:

\begin{theorem}\label{thm:classical-lb}
    Let Alice's state be sampled from the Haar measure over $n$-qubit states, and let Bob's measurement be sampled independently from the uniform distribution over $n$-qubit Clifford measurements.
    Any classical protocol that achieves average $\FXEB \geq \eps$ with respect to this input distribution must use at least
    \[
    \min \mbrace{\Omega\mparen{\eps^2 2^n}, \eps 2^{n - O(\sqrt{n})}}
    \]
    bits of communication. 
\end{theorem}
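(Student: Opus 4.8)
\emph{Proof strategy.} The plan is to reduce to a deterministic protocol, split the $\FXEB$ the protocol achieves into a per-message contribution, and argue that a single classical message can be useful for only an exponentially small fraction of Haar-random inputs. First I would observe that fixing all shared and private randomness to its best value only increases $\FXEB$, so it suffices to lower bound the communication $c$ of a deterministic protocol in which Alice sends $m(x) \in \{0,1\}^{c}$ and Bob, given the message $\mu$ and the Clifford $U_y$, outputs the index of the heaviest diagonal entry of $U_y \sigma_\mu U_y^{\dagger}$, where $\sigma_\mu \coloneqq \E_{x \,:\, m(x) = \mu}\mbracket{\ketbra{\psi_x}}$ and $p_\mu \coloneqq \Pr_x\mbracket{m(x) = \mu}$. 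Expanding the objective yields the decomposition
\[
\FXEB + 1 \;=\; \sum_{\mu} p_\mu \cdot 2^{n}\, \E_{U_y}\mbracket{\, \max_{z}\, \bra{z} U_y \sigma_\mu U_y^{\dagger} \ket{z} \,},
\]
and since $\sum_\mu p_\mu \sigma_\mu = I/2^{n}$, each inner expectation has baseline value $2^{-n}$; everything reduces to bounding how far above $1$ the encoding can push this sum.

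\emph{Step 1: the conditional state is nearly maximally mixed.} For any fixed unit vector $\ket{\phi}$, the Haar overlap $\abs{\braket{\phi|\psi_x}}^{2}$ has mean $2^{-n}$ and an exponential upper tail. Truncating this random variable at level $\tau \sim 2^{-n}\log(1/p_\mu)$ and using only the crude fact that $\Pr_x[m(x)=\mu] = p_\mu$ gives, uniformly over $\ket{\phi}$, the estimate $\bra{\phi}\sigma_\mu\ket{\phi} \le \Lambda_\mu / 2^{n}$ with $\Lambda_\mu = O\mparen{\log(1/p_\mu)} + O(1)$; hence $\norm{\sigma_\mu}_{\mathrm{op}} \le \Lambda_\mu/2^{n}$. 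So unless $p_\mu$ is doubly-exponentially small, the state Bob actually sees is close to maximally mixed --- the honest quantum strategy effectively delivers a pure (rank-one) $\sigma$, which a short classical message cannot mimic.

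\emph{Step 2: a random Clifford flattens $\sigma_\mu$ (the crux).} The heart of the argument is a bound of the shape
\[
2^{n}\, \E_{U \sim \mathrm{Clifford}}\mbracket{\, \max_{z}\, \bra{z} U \sigma U^{\dagger} \ket{z} \,} \;\le\; 1 + O\!\mparen{\sqrt{\Lambda\, \poly(n)/2^{n}}} \;+\; \Lambda \cdot 2^{-n + O(\sqrt{n})}
\]
for every density matrix $\sigma$ with $\norm{\sigma}_{\mathrm{op}} \le \Lambda/2^{n}$. Writing $\bra{z}U\sigma U^{\dagger}\ket{z} = \bra{s_z}\sigma\ket{s_z}$ for the uniformly random stabilizer basis $\ket{s_z} \coloneqq U^{\dagger}\ket{z}$, this would follow from a moment/union-bound argument over the $2^{n}$ outcomes $z$ once one controls the upper tail of $\bra{s}\sigma\ket{s}$ for a single uniformly random stabilizer state $\ket{s}$. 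Because random stabilizer states form a complex-projective $3$-design, the second moment of $\bra{s}\sigma\ket{s}$ matches the Haar value and produces the Gaussian-type first term; but the stabilizer ensemble is \emph{not} a $4$-design, so its higher moments carry additional ``stabilizer'' contributions --- to be analyzed via the commutant of the Clifford group, or equivalently by stratifying stabilizer states according to the dimension of their computational-basis support and summing the resulting geometric-type series. I expect the main obstacle to be bounding these higher-moment corrections \emph{uniformly over all} density matrices $\sigma$ of bounded operator norm, including highly magical $\sigma$ far from the stabilizer polytope, which are exactly the ones that stress this step; it is precisely this analysis that forces the $2^{O(\sqrt{n})}$ loss.

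\emph{Step 3: assemble.} Substituting Step 2 into the decomposition, using $\Lambda_\mu = O(\log(1/p_\mu)) + O(1)$ and $\sum_\mu p_\mu \Lambda_\mu = O\mparen{H(p)}+O(1) = O(c)$ (the message distribution $p$ is supported on $2^{c}$ atoms, with $H(p) = \sum_\mu p_\mu\log(1/p_\mu)$), concavity of the square root gives
\[
\FXEB \;\le\; O\!\mparen{\sqrt{\poly(n)\, c / 2^{n}}} \;+\; c \cdot 2^{-n + O(\sqrt{n})}.
\]
If $\FXEB \ge \eps$, then at least one of the two terms on the right is at least $\eps/2$, forcing $c = \Omega(\eps^{2} 2^{n}/\poly(n))$ in the first case and $c \ge \eps\, 2^{n - O(\sqrt n)}$ in the second; together these yield the claimed $\min\mbrace{\Omega(\eps^{2}2^{n}),\, \eps\, 2^{n - O(\sqrt n)}}$ up to the $\poly(n)$, and I would remove that factor by replacing the crude union bound in Step 2 with a direct higher-moment estimate matched to the $2^{n}$ outcomes.
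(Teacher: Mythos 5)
Your high-level reduction (derandomize, then decompose $\FXEB$ message-by-message) is sound, and it is the \emph{dual} of the decomposition the paper actually uses: you condition on Alice's message and average her state into $\sigma_\mu$, whereas the paper conditions on Alice's state and maximizes over the $2^m$ matrices $M(g_x;\UnitaryDist)=\E_U\mbracket{U^\dagger\ketbra{g_x(U)}U}$ induced by Bob's decoding functions. Your Step 1 is also correct: for each fixed unit vector $\ket{\phi}$, the quantity $\bra{\phi}\sigma_\mu\ket{\phi}$ is an average of $\abs{\braket{\phi|\psi_x}}^2$ over a set of Haar measure $p_\mu$, which is maximized by the top quantile of an (approximately) exponential variable, giving $\norm{\sigma_\mu}_{op}\le\mparen{\ln(1/p_\mu)+O(1)}/2^n$; the entropy accounting $\sum_\mu p_\mu\log(1/p_\mu)\le c$ in Step 3 is likewise fine.

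The gap is Step 2, which you correctly identify as the crux but do not prove --- and it is essentially the entire theorem. You need a tail bound on $\bra{s}\sigma\ket{s}$ for a uniformly random stabilizer state $\ket{s}$ that holds \emph{uniformly over all density matrices} $\sigma$ with $\norm{\sigma}_{op}\le\Lambda/2^n$ and that interpolates between a Gaussian regime (controlled by $\norm{\sigma}_F\le\sqrt{\Lambda/2^n}$) and a heavy-tailed regime (controlled by $\norm{\sigma}_{op}$), with the crossover producing exactly the two branches of the theorem. The available Clifford Schur--Weyl machinery (the paper's \Cref{lem:clifford_moments}, from the reduced-Pauli-monomial expansion of $\E\mbracket{(\ketbra{s})^{\otimes t}}$) yields only the worst-case moment bound $\E\mbracket{\bra{s}\sigma\ket{s}^t}\le 2^{-n}\prod_{i=0}^{t-2}\frac{2^i+1}{2^n+2^i}$, with no improvement for small $\norm{\sigma}_{op}$; that alone recovers the $\eps\, 2^{n-O(\sqrt n)}$ branch but not the $\Omega(\eps^2 2^n)$ branch, and establishing the $\Lambda$-dependent refinement of the non-identity monomial contributions is a genuinely open piece of your outline, not a routine computation. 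The paper sidesteps this entirely by putting all of the concentration on Alice's side: via \Cref{prop:haar_gaussian_equivalence} her Haar state is replaced by independent complex Gaussians, so $\braket{\psi|M(g_x;\UnitaryDist)|\psi}$ becomes an honest sum of independent exponential random variables whose means are the eigenvalues of the \emph{fixed} matrix $M(g_x;\UnitaryDist)$; a sharp sub-exponential Bernstein inequality plus a union bound over the $2^m$ messages (\Cref{thm:XEB_bound_from_norms}) then requires only the two norm bounds $\norm{M}_F^2\le 2/(2^n+1)$ (the $2$-design property, \Cref{lem:clifford_frob_norm}) and $\norm{M}_{op}\le 2^{O(\sqrt n)-n}$ (higher Clifford moments evaluated on a \emph{pure} state, \Cref{lem:clifford_op_norm}). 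If you want to salvage your route, the cleanest fix is to transplant the randomness the same way; as written, Step 2 is a conjecture, and the residual $\poly(n)$ loss in your first branch would additionally fall short of the stated $\Omega(\eps^2 2^n)$.
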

The Supplemental Material (\Cref{sec:norms}) also derives lower bounds for other measurement ensembles beyond random Cliffords.
We emphasize that the correlation of $\FXEB$ with device fidelity is not a necessary assumption for quantum information supremacy: the classical lower bound holds against any device demonstrating large $\FXEB$.

Asymptotically, \Cref{thm:classical-lb} represents a nearly quadratic improvement over the state of the art separations in one-way communication~\cite{BJK08-comm,BRSW12-bell}.
Moreover, the factors hidden in the big-$\Omega$ and big-$O$ are fully computable and reasonable in practice (see SM~\Cref{sec:bound_summary}): \Cref{thm:classical-lb} witnesses a quantum advantage with as few as $n = 7$ qubits, compared to $n = 9$ for the best previously-known separation (\Cref{fig:noiseless_bound}).

\begin{figure}
    \centering
    \includegraphics[width=\linewidth]{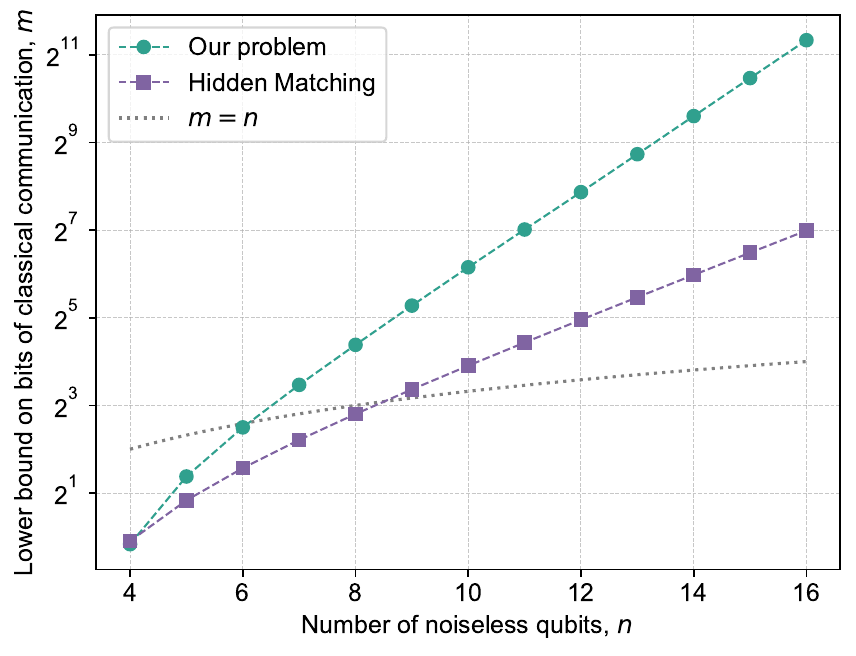}
    \caption{Comparison of classical one-way communication lower bounds for our problem (green, from \Cref{thm:classical-lb}) and Hidden Matching (purple, from~\cite{BRSW12-bell}), assuming implementation on a noiseless $n$-qubit quantum device (i.e., $\eps = 1$).
    Points above the gray curve ($m = n$) indicate quantum advantage.}
    \label{fig:noiseless_bound}
\end{figure}

Complementing \Cref{thm:classical-lb}, we additionally derive an \textit{upper} bound on the number of bits that suffice to achieve $\FXEB \ge \eps$ with classical communication.
Specifically, we exhibit a classical protocol that achieves $\FXEB \ge \eps$ using roughly $\frac{\eps 2^n}{\ln(2)^2 n}$ bits of communication for $\eps \le 1$ (see SM~\Cref{sec:classical_ub} for details).

\section*{Implementation and Experimental Results}  
We carry out an experimental realization of the protocol underlying \Cref{thm:classical-lb} on the Quantinuum H1-1 quantum computer \cite{expt_data_full}, which provides $20$ fully-connected trapped-ion qubits. The quantum charge-coupled device (QCCD) architecture~\cite{Pino2020} implemented by H1-1 supports arbitrary qubit connectivity and high-fidelity gate operations, making it well-suited for achieving large $\FXEB$. Performance metrics for H1-1 have been reported in prior work~\cite{Pino2020,RyanAnderson2021,RyanAnderson2022,github_spec}, and we report updated benchmarking specifications relevant to the experiment %
used in this study.

The dominant sources of error on H1-1 are two-qubit gates and memory errors (errors incurred during, e.g., qubit idling, transport, and cooling). 
Using a variation of randomized benchmarking~\cite{Magesan2011, Proctor2019}, we measure a two-qubit gate error that increases linearly with the gate angle, with an error of $5.9(7)\times 10^{-4}$ for the partially entangling gate at the typical (median) gate angle of $\theta \approx 0.213\pi$ used in the experiment. 

Several optimizations were employed to improve the quality of Alice's state preparation and Bob's measurement.
In the ideal (noiseless) version of the task, Alice would receive a complete description of a Haar-random state and prepare it exactly.
However, generating exact Haar-random states is infeasible on near-term quantum hardware.
Instead, we variationally train a parameterized quantum circuit to approximate Alice's Haar-random message $\ket{\psi_x}$ with high fidelity, using a loss function that accounts for the estimated noise from entangling gates and memory error.
Specifically, the circuit is a brickwork ansatz with periodic boundary conditions consisting of alternating layers of arbitrary $SU(2)$ one-qubit gates and two-qubit $ZZ(\theta) = \exp(-i(\theta / 2) Z \otimes Z)$ gates, with $86$ two-qubit layers in total.
An example of this circuit structure is shown in \Cref{subfiga:param-circuit}.
The expected error incurred by this method is significantly smaller than the expected error from implementing Haar-random states directly.
In particular, the average predicted fidelity with $\ket{\psi_x}$ for the trained circuits was $0.464$, whereas an implementation of state-of-the-art arbitrary state preparation~\cite{ICKHC16-isometries} would achieve fidelity below $0.004$ under our noise model.

Bob’s Clifford measurement admits a practically efficient implementation, which we derive in the Supplemental Material. As shown in \Cref{subfigb:clifford-meas}, any Clifford measurement can be expressed using a fixed circuit template, with variation introduced solely through classical control bits. This structure is well-suited to current hardware, which supports mid-circuit classically controlled operations but requires the locations of conditional gates to be fixed in advance to compile to low-level instructions.

In our implementation, Bob’s input string $y$ determines the measurement basis by specifying the values of these classical control bits. These controls are preloaded onto FPGA-based control hardware located several meters from the quantum processor. During execution, the FPGA reads the control bits after state preparation is complete and triggers gate operations accordingly via RF-modulated laser pulses. These pulses are routed through optical fibers and across free space to the trapped-ion QPU. 

\begin{figure}
\centering
  \begin{subfigure}[b]{0.45\textwidth}
    \centering
    \includegraphics[scale=0.52]{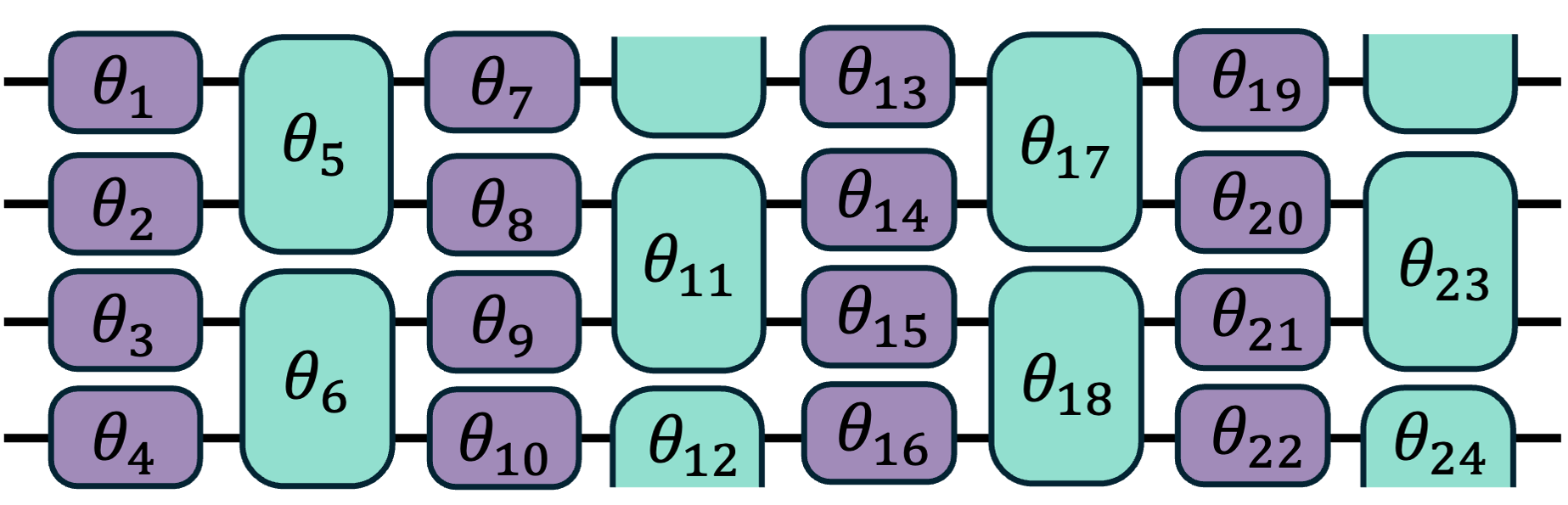}
    \caption{$4$-qubit example parameterized circuit used for quantum state preparation with $4$ $ZZ(\theta)$ layers. The ansatz consists of alternating layers of parameterized arbitrary one-qubit gates and brickwork two-qubit $ZZ(\theta)$ gates with periodic boundary conditions.
    In our implementation, this structure is scaled to $12$ qubits with $86$ $ZZ(\theta)$ layers.} 
    \label{subfiga:param-circuit}
  \end{subfigure}
  \hfill
  \begin{subfigure}[b]{0.45\textwidth}
    \centering
    \includegraphics[scale=0.52]{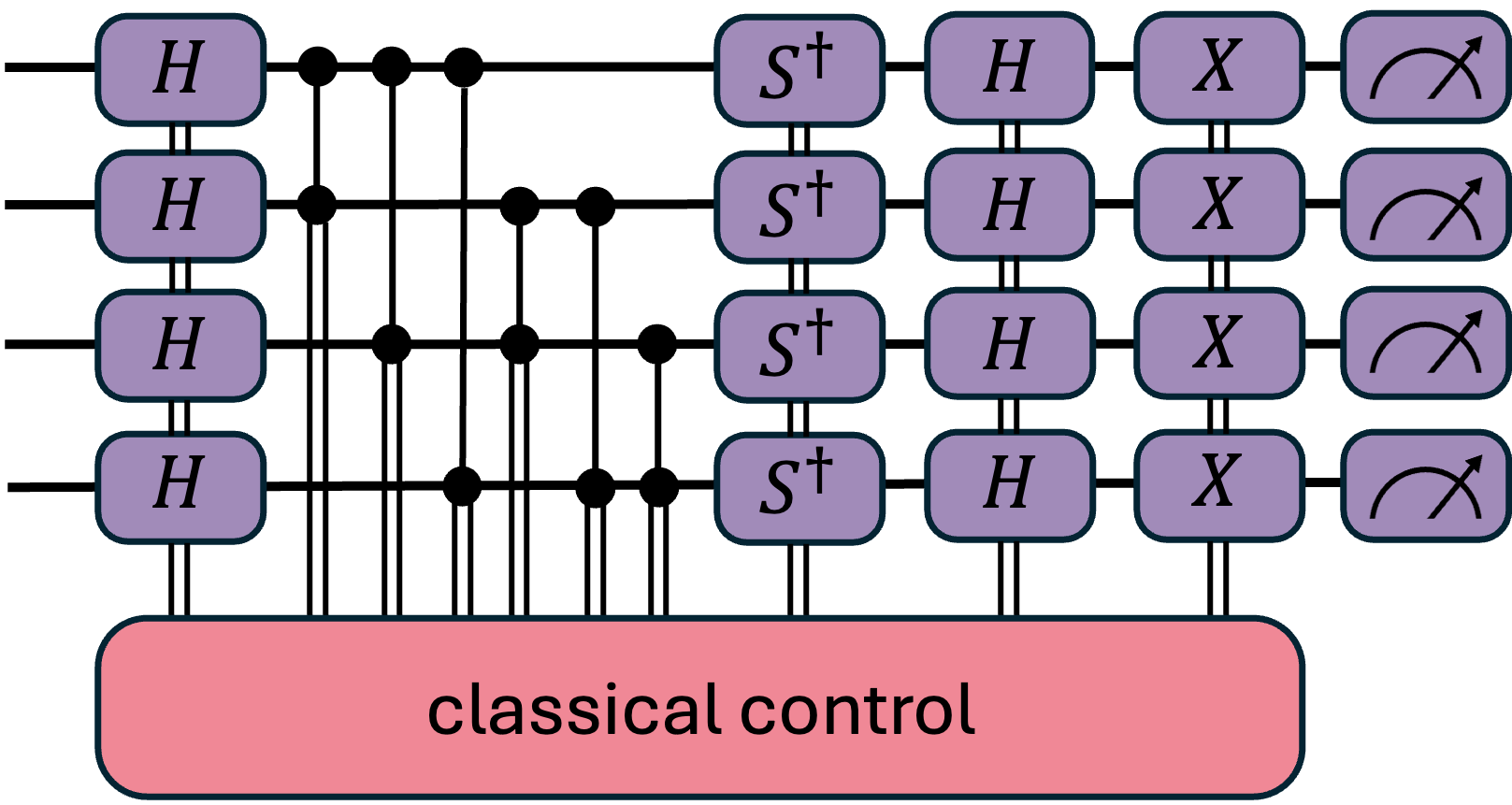}
    \caption{$4$-qubit example circuit template for implementing Clifford measurements.
    The structure is fixed, comprising layers of Hadamard, controlled-$Z$, inverse Phase ($S^\dagger$), Hadamard, and Pauli-$X$ gates.
    Classical controls are the only source of variation; by toggling these controls, the same circuit template can implement any Clifford measurement (see Supplemental Material).} 
    \label{subfigb:clifford-meas}
  \end{subfigure}
  \caption{$4$-qubit example circuits for implementing our protocol.
  The full implementation consists of concatenating the parameterized ansatz circuit (a) with the measurement circuit (b). 
  } 
  \label{fig:main-circuit}
\end{figure}

The experiment requires a source of true randomness to generate the inputs $x$ and $y$, because the classical lower bound (\Cref{thm:classical-lb}) only holds in the case where $x$ and $y$ are uniformly random. 
Thus, pseudorandom bits do not suffice to establish a \emph{provable} quantum advantage.
Every trial in our experiment used a fresh source of true randomness, which was obtained via a hardware-based randomness source as described in~\cite{Haw-true-randomness}.

Our experiment achieved a sample average XEB fidelity of $\hFXEB = 0.427(13)$, computed from 10,000 independent trials.
Applying \Cref{thm:classical-lb} with $n=12$ and $\eps = 0.427$ implies that any classical protocol achieving the same average score would require at least 78 bits of memory. 
To conservatively account for statistical uncertainty, we also consider $\eps = 0.362$, five standard errors below the mean, which yields a lower bound of 62 bits---well above the 12 qubits used in the quantum implementation.
Conversely, our classical upper bound (SM~\Cref{sec:classical_ub}) shows that a classical protocol using 330 bits of memory can achieve $\FXEB = 0.427$, while one using 382 bits can reach $\FXEB = 0.492$, five standard errors above the mean.
We deduce with high confidence that at least $62$ bits are necessary to replicate our experimental outcomes and at most $382$ bits suffice.
Our results are displayed in \cref{fig:achieved_xeb}, and a detailed methodology for obtaining these bounds appears in the Supplemental Material.

\begin{figure}
    \centering
    \includegraphics[width=\linewidth]{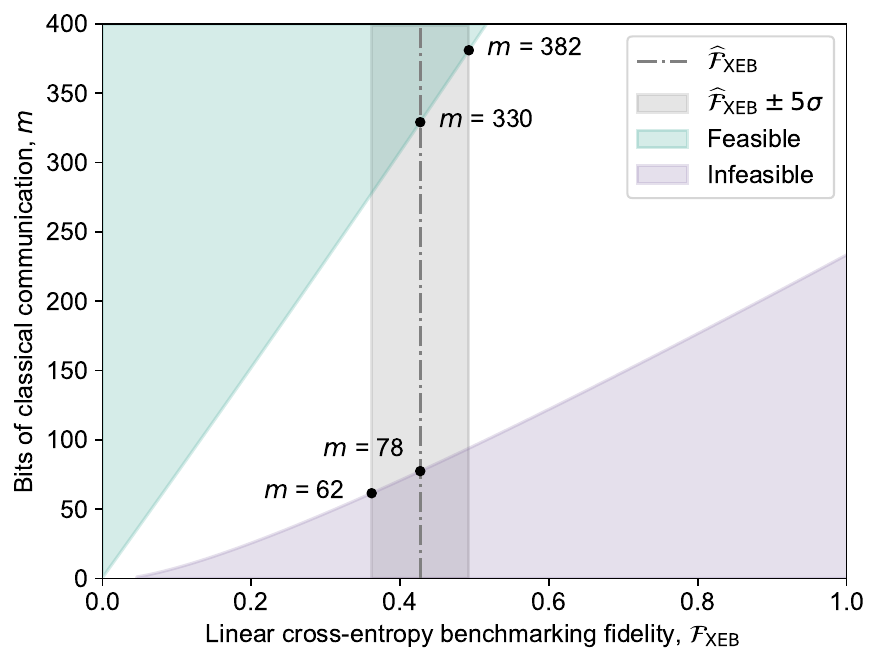}
    \caption{Classical communication bounds as a function of $\FXEB$ for the $n=12$ task.
    The lower shaded region (purple) indicates infeasibility, meaning that no $m$-bit classical protocol achieves the given $\FXEB$.
    Conversely, the upper shaded region (green) indicates feasibility, meaning that there exists an $m$-bit classical protocol to achieve the given $\FXEB$.
    The vertical line marks the sample mean $\hFXEB = 0.427$ achieved by our experiment, and the vertical shaded region (gray) marks $\hFXEB \pm 5\sigma$.
    }
    \label{fig:achieved_xeb}
\end{figure}

\section*{Discussion and Future Work}

We performed a task using 12 qubits on Quantinuum's H1-1 trapped-ion quantum computer, for which any classical protocol achieving comparable performance provably requires at least 62 bits of memory. This establishes an unconditional experimental separation between quantum and classical information resources. 
Unlike prior demonstrations of quantum advantage that rely on unproven complexity assumptions, our result is provable and permanent: no future development in classical algorithms can close this gap.%

Several recent advances made this experiment possible---indeed, this experiment could not have been performed even a few years ago.
First, the high-fidelity two-qubit gates available on the H1-1 trapped-ion quantum processor were crucial for reliably implementing our protocol; we estimate that with all else held equal, a maximum average two-qubit gate infidelity of $2.5\times 10^{-3}$ would be required to show a fairly modest separation of $2n$ classical bits vs.\ $n$ qubits.
Second, new theoretical developments (specifically, \Cref{thm:classical-lb}) identified a communication task with a sharp enough quantum–classical separation to enable an experimental test.
Third, the use of optimized parameterized circuits was essential for preparing Alice's message state with sufficiently high fidelity on near-term hardware.

Our experiment is not the first attempt to demonstrate quantum advantage via one-way communication complexity.
In particular, Kumar, Kerenidis, and Diamanti~\cite{KKD19-comm} reported an experiment that plausibly achieved a quantum information advantage for the Hidden Matching problem~\cite{BJK08-comm} using a linear optical setup.
However, their implementation used an exponentially large number of optical modes and no entanglement, so it did not probe whether $n$ physical qubits yield an exponential state space.
Additionally, while their performance exceeded \textit{known} classical protocols, it did not surpass the \textit{proven} classical lower bound, leaving quantum advantage unconfirmed (see SM~\Cref{sec:related-work} for further discussion).

Just as the first Bell inequality violations contained loopholes that were only closed in later experiments, our experiment is subject to similar caveats that future work may address or eliminate.
For example, a skeptic could object that our experiment did not maintain a true temporal separation between Alice and Bob's inputs: even though Bob's random Clifford measurement was generated from an independent hardware randomness source and stored separately from the QPU, the randomness generation occurred \textit{before} preparation of Alice's state on the QPU, and thus the state of the QPU could in principle have foreknowledge of the basis in which it would be measured.
This objection is analogous to the ``setting-independence'' loophole (sometimes called the ``measurement-independence'' or ``freedom-of-choice'' loophole) in Bell experiments \cite{scheidl-freedom-of-choice-2010,bell-nonlocality-brunner-2014}.
A future experiment could instead generate Bob’s measurement choice only \textit{after} Alice's state is fully prepared, which would require more sophisticated hardware capable of querying an online source of randomness mid-circuit.

Another source of skepticism lies in whether the QPU truly realizes a 12-qubit Hilbert space, or whether it is better modeled as evolving within a larger Hilbert space due to environmental interactions, leakage to non-computational spin states, or other uncontrolled degrees of freedom. 
This opens the door to alternative interpretations.
For instance, a skeptic might believe that what appears as a highly-entangled 12-qubit state is actually a low-entanglement state in a larger space, where additional degrees of freedom conspire to mimic the expected quantum behavior. 
A future experiment demonstrating a much larger separation could quell these doubts.
For example, at $n=26$, at least $1.1$~million classical bits would be necessary to achieve $\FXEB \approx 1$.
Such a demonstration would presumably require hardware with much higher two-qubit gate fidelity, which is currently the main bottleneck.
However, we also believe that there remains room to tighten the classical lower bound or improve the fidelity attained by variational state preparation, either of which could lead to larger separations on existing hardware.

\section*{Acknowledgments}
We acknowledge the entire Quantinuum team for their many contributions toward the successful operation of the H1-1 quantum computer. 
We also acknowledge Honeywell for fabricating the trap used in this experiment and Aqacia for providing access to their QRNG dataset. 
We thank H.~Buhrman, F.~Curchod, and S.~Dasu for useful comments on the manuscript.

W.K. is supported by the U.S. Department of Energy, Office of Science, National Quantum Information Science Research Centers, Quantum Systems Accelerator, and by NSF Grant CCF-231173.
S.G. is supported in part by an IBM Ph.D.\ Fellowship.
N.H-J. and S.A. are supported in part by DOE Grant DE-SC0025615.
This work was done in part while S.G., N.H-J., and S.A. were visiting the Simons Institute for the Theory of Computing, supported by NSF Grant QLCI-2016245.

\section*{Author Contributions}

W.K. and S.A. conceived the project.
W.K., S.G., and N.H.-J. contributed to the complexity-theoretic analysis underlying \Cref{thm:classical-lb}.
W.K., S.G., M.D., D.H., and S.A. designed the protocol for experimental implementation of the state preparation and random Clifford measurement.
W.K., S.G., and M.D. developed the classical and quantum programs required to implement the protocol and collected and analyzed the experimental data. 
M.D. and K.M. performed device benchmarking and collected and analyzed the resulting data. 
J.A.G., K.G., D.G., and B.N. maintained and operated the trapped-ion quantum hardware used in the experiment.
All authors contributed to the preparation of the manuscript and the Supplemental Material.

\section*{Data and Code Availability}

The full data presented in this work is available at \href{https://github.com/sabeegrewal/quantum-info-supremacy-data}{\texttt{quantum-info-supremacy-data}}.
The code required to verify and reproduce the results presented in this work is available at \href{https://github.com/sabeegrewal/quantum-info-supremacy}{\texttt{quantum-info-supremacy}}.

\section*{Competing Interests}

The authors declare no competing interests.

\clearpage
\onecolumngrid
\appendix

\setcounter{equation}{0}
\setcounter{figure}{0}
\renewcommand{\theequation}{S\arabic{equation}}
\renewcommand{\thefigure}{S\arabic{figure}}
\renewcommand{\appendixname}{}

\begin{center}
{\large \bf Supplemental Material for:\\ ``Demonstrating an unconditional separation between quantum and classical information resources''}
\end{center}

\tableofcontents

\clearpage
\section{Background}

Our goal in this work is to demonstrate quantum information supremacy. That is, we wish to conduct an experimental demonstration in which a quantum device solves a task using significantly fewer qubits than the number of bits required by any classical algorithm, according to a provable lower bound.
Our approach follows the high-level roadmap sketched by Aaronson, Buhrman, and Kretschmer \cite{ABK23-relation}, where they suggested leveraging unconditional quantum-classical separations in one-way communication complexity to perform such a demonstration.
In this section, we review how quantum-classical separations in communication complexity can be adapted into experimental protocols for demonstrating quantum information supremacy, and we explain why previously established separations are insufficient for realization on current quantum hardware.

\subsection{From Communication Complexity to Quantum Information Supremacy}

In a one-way communication problem, two parties, Alice and Bob, receive inputs $x$ and $y$, respectively.
Their goal is to perform some computational task $T$, which might be to evaluate a function $f(x, y)$, to output a string $z$ such that $((x,y),z)$ belongs to a specified relation, to sample from some distribution depending on $x$ and $y$, or otherwise. 
To do so, Alice sends a message $m_x$ to Bob that may depend on $x$, and then Bob computes an output that may depend on $m_x$ and $y$.
The cost of the protocol is measured by the length of Alice's message $m_x$ that is communicated between the parties; all other computation is irrelevant to the complexity measure.
The sole difference between the quantum and classical one-way communication models is whether or not Alice's message $m_x$ is comprised of quantum or classical bits.

Exponential quantum advantages in one-way communication complexity have long been known.
A simple example is the Hidden Matching (HM) problem due to Bar-Yossef, Jayram, and Kerenidis \cite{BJK08-comm}.
In HM, Alice receives a string $x \in \{0,1\}^N$, and Bob receives a perfect matching $y \subset [N] \times [N]$ on the set $[N]$.
Their task is to output a tuple $(i, j, z)$ such that $(i, j) \in y$ and $x_i \oplus x_j = z$.
When $N = 2^n$, \cite{BJK08-comm} showed that HM is solvable with unit probability by a quantum communication protocol in which Alice sends Bob only $n$ qubits of information.
On the other hand, they proved that any classical bounded-error randomized protocol for the same task requires $\Omega\mparen{2^{n/2}}$ bits.

Exponential separations between quantum and classical one-way communication can be leveraged to experimentally test the prediction that $n$ qubits require $\exp(n)$ bits to describe classically.
In particular, one could model Alice and Bob with two spatially-separated devices that have a communication bottleneck between them, and conduct the protocol by shuttling Alice's message from one device to the other.
If the protocol consistently uses $n$ qubits to solve a communication task that requires $\exp(n)$ classical bits, one concludes that the states sent by the protocol cannot be simulated or encoded using only a polynomial number of classical bits.
In particular, the states transmitted in this protocol must be physically realizing the exponential dimensionality of Hilbert space. 

Unfortunately, such a demonstration would prove quite challenging for existing quantum hardware, because it requires \textit{two} high-fidelity quantum devices \textit{and} the ability to communicate quantum information between them.
This is why Aaronson, Buhrman, and Kretschmer's proposal for quantum information supremacy~\cite{ABK23-relation} abstracts the protocol into one that can be performed on a single device.
Instead of two spatially-separated devices, one considers a device separated between two points \textit{in time}.
In other words, the device receives ``Alice's input'' $x$ at some initial time $t_0$, and ``Bob's input'' at a later time $t_1$.
Under this analogy, the message from Alice to Bob becomes the \textit{storage} of the device between $t_0$ and $t_1$.
And like above, a successful demonstration of the protocol shows that the state stored between $t_0$ and $t_1$ did not have a short classical description.

In complexity-theoretic language, this transformation from a communication protocol into a single-device protocol corresponds to mapping a communication complexity separation into a separation of complexity classes with quantum and classical \textit{advice}.
Informally, one can view advice as an arbitrary pre-computed auxiliary input meant to assist with a computation.
\cite{ABK23-relation} showed that the Hidden Matching lower bound of~\cite{BJK08-comm} lifts to a separation of the complexity classes $\mathsf{FBQP/qpoly}$ and $\mathsf{FBQP/poly}$, which are the classes of relational problems that can be solved efficiently by quantum algorithms with quantum or classical advice, respectively.
One can then interpret the preparation of Alice's message at time $t_0$ as preparing the quantum advice state used in the separation of $\mathsf{FBQP/qpoly}$ and $\mathsf{FBQP/poly}$.
Thus, this form of quantum information supremacy is also a physical realization of the separation $\mathsf{FBQP/qpoly} \neq \mathsf{FBQP/poly}$.

\subsection{Desiderata for Quantum Information Supremacy}\label{subsec:desiderata}

An exponential separation between quantum and classical one-way communication complexities does not, by itself, suffice to realize quantum information supremacy.
Any quantum-classical communication separation must also satisfy five additional requirements to be experimentally feasible on current hardware:

\begin{enumerate}
    \item \textbf{Tolerance for noise.} Quantum computers today are noisy.
    If we run a deep computation on an existing quantum device in an attempt to prepare a general state $\ket{\psi}$, the output state might only have some small fidelity $\eps$ relative to $\ket{\psi}$.
    Thus, the protocol intended to demonstrate quantum information supremacy might succeed with probability much less than $1$.
    Therefore, the lower bound must rule out the possibility that the protocol becomes efficiently classically simulable in the presence of noise.
    In other words, it is essential that the exponential classical communication lower bound also holds against protocols that succeed with the same probability as the noisy quantum device.

    \item \textbf{Average-case hardness.} To enable a meaningful experimental demonstration, the classical hardness must hold \textit{on average} over the input distributions sampled (not just for worst-case inputs).
    This is necessary for the feasibility of obtaining high confidence of a quantum advantage from finitely many samples.
    If there exist efficient classical protocols that succeed on all but a negligible $\delta$ fraction of inputs, then ruling them out would require at least $\Omega(1/\delta)$ trials.

    \item \textbf{Verifiability.} There must exist a benchmark for verifying that the quantum protocol succeeded in performing the task. 
    While this benchmark need not be computationally efficient, it should allow success to be certified from a reasonable number of trials.
    To give an example of a problem that lacks effective verifiability, consider the task of sampling from a distribution to small error in total variation distance.
    That is, the inputs $x$ and $y$ correspond to some distribution $\mathcal{D}_{x,y}$ such that the goal is for Bob to output a sample from a distribution that is close to $\mathcal{D}_{x,y}$.
    Then verifying statistical closeness to $\mathcal{D}_{x,y}$ requires a number of samples that scales with the support size of the distribution, which could be exponentially large. 

    \item \textbf{Simplicity of implementation.} 
    The noise rate of a circuit implemented on a quantum device without error correction scales with the complexity of its implementation.
    Thus, the quantum circuits used in the protocol should be as simple as possible.
    Unfortunately, for an exponential separation to hold, the preparation of Alice’s quantum message must inherently involve exponential complexity, for if it didn't, the description of the circuit that prepares Alice's message would yield an efficient classical protocol.
    However, the gate complexity of Bob's measurement---which might be as high as $\Omega(4^n)$ for a general $n$-qubit POVM---can and should be minimized.
    Ideally, the cost of Bob's measurement is small compared to the cost of state preparation.

    \item \textbf{Quantitative tightness.} To demonstrate a separation that remains observable in the presence of noise and with a limited number of qubits, the quantitative gap between quantum and classical communication should be as large as possible, ideally $n$ qubits vs.\ $\Omega(2^n)$ classical bits.
    Moreover, the constants in the asymptotic lower bound must be favorable enough to show a provable quantum advantage at realistic instance sizes.
    A stronger separation ensures that a meaningful advantage can be observed, even after accounting for experimental imperfections.  
\end{enumerate}

A natural first attempt at demonstrating quantum information supremacy is to leverage existing quantum-classical separations in communication complexity.
Leading candidates include the $n$ vs.\ $\Omega(2^{n/2})$ separations of (Boolean) Hidden Matching~\cite{BJK08-comm,GKK+09-comm} and the $n$ vs.\ $\Omega(2^n)$ separation of Distributed Quantum Sampling~\cite{Mon19-transmit}, which are the strongest-known quantum-classical separations in communication complexity.

Hidden Matching satisfies most of the requirements: Buhrman, Regev, Scarpa, and de Wolf established its tolerance for noise~\cite{BRSW12-bell}, the hardness is average case~\cite{BJK08-comm,BRSW12-bell}, membership in the relation is verifiable, and Aaronson, Buhrman, and Kretschmer~\cite{ABK23-relation} noted that the measurement is implementable with just $O(n)$ gates.
However, the classical lower bound falls short of quantitative tightness by a quadratic factor, and the advantage at practical instance sizes is too small.
Even in the noiseless setting, the best classical lower bound on communication complexity only exceeds $n$ bits for $n \ge 9$~\cite{BRSW12-bell}.

Montanaro's Distributed Quantum Sampling obtains a tighter separation~\cite{Mon19-transmit}.
But, as the name suggests, the task involves sampling from a distribution over $n$-bit strings to small error in total variation distance, necessitating an exponential number of trials to verify the separation. 
Additionally, Bob's measurement requires $\Omega(2^n/n)$ gates on average. 

Considering that the existing separations do not meet our requirements, our primary theoretical contribution is to introduce a new one-way communication problem that satisfies all five desiderata for quantum information supremacy.

\section{The Problem We Solve}

Our communication problem is inspired by the linear cross-entropy benchmark that has seen use in verifying quantum supremacy experiments based on random circuit sampling, such as those performed at Google \cite{AAB+19-google-supremacy, Sycamore70},  USTC \cite{WBC+21-ustc-superconducting, ZCZ21, ZCZ24}, and Quantinuum \cite{PhysRevX.15.021052}. The communication task is essentially a distributed version of the linear cross-entropy heavy output generation (XHOG) problem, which has been studied in other theoretical contexts~\cite{AG20-xeb,Kre21-tsirelson,AH23-random}.
In this distributed version of the problem, Alice receives a quantum state and Bob receives a measurement to perform on that state. Their shared goal is to achieve a high linear cross-entropy benchmark (XEB) with respect to the output distribution defined by applying Bob’s measurement to Alice’s state.
The problem is additionally average-case: it is parameterized by distributions over Alice's state and Bob's measurement.
We define it formally as follows.

\begin{definition}[$\eps$-DXHOG, or Distributed Linear Cross-Entropy Heavy Output Generation]
    Fix a distribution $\StateDist$ over (a classical description of) an $n$-qubit state $\ket{\psi}$ given to Alice.
    Fix an independent\footnote{The task remains well-defined even if the distributions $\StateDist$ and $\UnitaryDist$ are not independent, but they always will be independent for the distributions considered in this paper.} distribution $\UnitaryDist$ over (a classical description of) an $n$-qubit unitary $U$ given to Bob.
    Their task is to output a sample $z \in \{0,1\}^n$ such that
    \[
    \E_{\ket{\psi} \sim \StateDist, U \sim \UnitaryDist}\mbracket{ \abs{\braket{z|U|\psi}}^2} \ge \frac{1 + \eps}{2^n},
    \]
    where the expectation also averages over any source of randomness used by Alice and Bob in the protocol.
\end{definition}

Like in the ordinary XHOG task, the intuition is that achieving a large score for $\eps$ demonstrates that Bob's output $z$ is correlated with the larger-probability basis states of $\ket{\psi}$ in the $U$-basis.
An equivalent and sometimes more convenient way to define DXHOG is with respect to the average \textit{linear cross-entropy benchmarking fidelity}~\cite{AAB+19-google-supremacy}:
\[
\FXEB \coloneqq \E_{\ket{\psi} \sim \StateDist, U \sim \UnitaryDist}\mbracket{ 2^n \abs{\braket{z|U|\psi}}^2 - 1} = 2^n  \E_{\ket{\psi} \sim \StateDist, U \sim \UnitaryDist}\mbracket{ \abs{\braket{z|U|\psi}}^2} - 1,
\]
so that $\eps$-DXHOG means achieving $\FXEB \ge \eps$.
We emphasize that $\FXEB$ denotes the \emph{population} mean of the underlying random process that samples $z$.
In practice, to verify that quantum hardware samples from a distribution satisfying $\FXEB \ge \eps$, one computes the \emph{sample} mean $\hFXEB$ of a finite number of trials $\{\ket{\psi_i}, U_i, z_i\}_{i = 1,\ldots,k}$:
\[
\hFXEB \coloneqq \frac{1}{k} \sum_{i=1}^k 2^n\abs{\braket{z_i|U_i|\psi_i}}^2 - 1.
\]
Then, one can use a statistical hypothesis test based on $\hFXEB$ to deduce whether $\FXEB \ge \eps$ with high confidence.

A natural quantum protocol to consider for DXHOG is the following: Alice first sends the state $\ket{\psi}$ to Bob using $n$ qubits of communication. Then, Bob applies $U$ to $\ket{\psi}$ and measures in the computational basis to get $z$, which we denote by $z \sim U\ket{\psi}$.
Assuming that the measurement distribution of $U\ket{\psi}$ has Porter-Thomas statistics (or alternatively, either $\StateDist$ or $\UnitaryDist$ is a $2$-design), this protocol when implemented noiselessly achieves
\begin{align*}
    \FXEB &=
    2^n\E_{\ket{\psi} \sim \StateDist, U \sim \UnitaryDist} \mbracket{\E_{z \sim U\ket{\psi}} \mbracket{\abs{\braket{z|U|\psi}}^2}} - 1\\
    &= 2^n\E_{\ket{\psi} \sim \StateDist, U \sim \UnitaryDist} \mbracket{\sum_{z \in \{0,1\}^n} \abs{\braket{z|U|\psi}}^4} - 1\\
    &= 2^n \frac{2}{2^n + 1} - 1\\
    &\approx 1.
\end{align*}
There is also a trivial protocol that uses no communication whatsoever and achieves $\FXEB = 0$ for any $\StateDist$ and $\UnitaryDist$, by outputting $z$ uniformly at random:

\begin{align*}
    \FXEB &=
    2^n\E_{\ket{\psi} \sim \StateDist, U \sim \UnitaryDist} \mbracket{\E_{z \sim \{0,1\}^n} \mbracket{\abs{\braket{z|U|\psi}}^2}} - 1\\
    &= 2^n\E_{\ket{\psi} \sim \StateDist, U \sim \UnitaryDist} \mbracket{\sum_{z \in \{0,1\}^n} \frac{1}{2^n}\abs{\braket{z|U|\psi}}^2} - 1\\
    &= \frac{2^n}{2^n} - 1\\
    &= 0.
\end{align*}

Observe that if the quantum protocol is corrupted by some global depolarizing noise with rate $1 - \eps$, then the noisy protocol is equivalent to a probabilistic mixture of the quantum protocol with probability $\eps$ and the trivial protocol with probability $1 - \eps$.
Thus, by linearity of expectation, the noisy protocol achieves $\FXEB \approx \eps$.
So, our goal is to show that any classical protocol to achieve the same $\FXEB$ requires $\gg n$ bits of communication for realistic choices of $n$ and $\eps$. Our main theoretical result establishes this:

\begin{theorem}
\label{thm:main_informal}
    There exists a distribution $\StateDist$ over $n$-qubit states and a distribution $\UnitaryDist$ over $n$-qubit unitaries such that any randomized classical one-way communication protocol for $\eps$-DXHOG must use at least
    \[
    \Omega\mparen{\min \mbrace{\eps^2 2^n, \frac{\eps 2^n}{n}}}
    \]
    bits of communication from Alice to Bob. On the other hand, there is a noiseless $n$-qubit quantum protocol for $1$-DXHOG.
\end{theorem}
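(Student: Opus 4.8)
The quantum direction is the protocol already described in the excerpt: take $\StateDist$ and $\UnitaryDist$ to be the Haar measures on $n$-qubit states and unitaries (any exact state- and unitary-$2$-design also works), let Alice send $\ket{\psi}$ using $n$ qubits, and let Bob apply $U$ and measure in the computational basis. Since $U\ket{\psi}$ is Haar-random its outcome statistics are Porter-Thomas, so the computation reproduced in the excerpt gives $\FXEB = 2^n\cdot\frac{2}{2^n+1}-1 \to 1$, i.e.\ an $n$-qubit protocol for $(1-o(1))$-DXHOG.

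For the classical lower bound the plan is as follows. First, by Yao's principle and averaging over the parties' private coins it suffices to lower bound \emph{deterministic} protocols. A deterministic $c$-bit protocol is a partition of the state space into at most $2^c$ measurable cells $\{R_m\}$ (the message preimages), with Haar masses $p_m \coloneqq \mu(R_m)$, together with Bob's response map $z_m(U)$. Writing $\rho_m \coloneqq \frac{1}{p_m}\int_{R_m}\ketbra{\psi}\,d\mu(\psi)$ for the posterior average state and using that $\abs{\braket{z|U|\psi}}^2$ is \emph{linear} in $\ketbra{\psi}$, one gets $\FXEB = \sum_m p_m\bigl(2^n\,\E_U[\bra{z_m(U)}U\rho_m U^\dagger\ket{z_m(U)}]-1\bigr) \le \sum_m p_m F(\rho_m)$, where $F(\sigma) \coloneqq \E_U[2^n\max_z\bra{z}U\sigma U^\dagger\ket{z}]-1$ is the best benefit extractable from a fixed posterior $\sigma$; in particular only the \emph{average} of the states inside a cell matters.

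The heart of the proof is then two estimates. The first is an anticoncentration/tail bound: for Haar-random $U$, $2^n\bra{z}U\sigma U^\dagger\ket{z}-1 = \bra{v}(2^n\sigma-I)\ket{v}$ for a Haar-random unit vector $v = U^\dagger\ket{z}$, a mean-zero random variable; a Chernoff bound for weighted $\chi^2$-sums shows its upper tail is sub-Gaussian with variance proxy $\propto \mathrm{Tr}(\sigma^2)-2^{-n}$ up to a level set by $\norm{\sigma-I/2^n}_{\mathrm{op}}$, then sub-exponential with that scale, so a union bound over the $2^n$ values of $z$ yields $F(\sigma) = O\bigl(\max\{\sqrt{n\,\mathrm{Tr}(\sigma^2)},\, n\norm{\sigma-I/2^n}_{\mathrm{op}}\}\bigr)$. (For the random Clifford measurements of \Cref{thm:classical-lb}, which form only a $3$-design, the Chernoff step must be replaced by a bounded-moment estimate, and this is exactly where the $2^{O(\sqrt n)}$ loss in the second term of \Cref{thm:classical-lb} enters; for the present statement one can instead take $\UnitaryDist$ Haar, or a design of high enough order, and avoid it.) The second is a geometric bound on cells: for any cell $R$ of mass $p$ and any $\ket{\phi}$, $\bra{\phi}\rho\ket{\phi} = \frac1p\int_R\abs{\braket{\phi|\psi}}^2\,d\mu \le \frac1p\,\E_\psi[\abs{\braket{\phi|\psi}}^2\,\mathbf{1}[\abs{\braket{\phi|\psi}}^2\ge\tau]]$ with $\tau$ the $(1-p)$-quantile of $\abs{\braket{\phi|\psi}}^2\approx\mathrm{Exp}(2^n)$; evaluating the right-hand side gives $\norm{\rho}_{\mathrm{op}}\le(\ln(1/p)+O(1))/2^n$, hence $\mathrm{Tr}(\rho^2)\le\norm{\rho}_{\mathrm{op}}\le(\ell+O(1))/2^n$ with $\ell\coloneqq\ln(1/p)$.

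Finally I would combine the two estimates, splitting the messages into a ``concentrated'' family --- those $m$ whose $\rho_m$ has large operator norm, which by the geometric bound forces $p_m\le\exp(-\Omega(2^n/n))$, so that either $c\ge\Omega(2^n/n)$ already (and we are done) or the total contribution of these messages is negligible --- and a ``spread'' family, on which we plug in $\mathrm{Tr}(\rho_m^2)\le(\ell_m+O(1))/2^n$, apply Jensen's inequality to $\sum_m p_m\sqrt{\ell_m+O(1)}$, and invoke the entropy bound $\sum_m p_m\ell_m\le c\ln 2$ to obtain $\FXEB\le O(\sqrt{nc/2^n})+O(nc/2^n)$; hence $\FXEB\ge\eps$ forces $c=\Omega(\min\{\eps^2 2^n/n,\ \eps 2^n/n\})$. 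Upgrading the first term to the claimed $\Omega(\eps^2 2^n)$ requires the extra input that the extremal cells are unions of small metric balls, which makes $\rho_m-I/2^n$ effectively low rank so that the \emph{exponential} rather than sub-Gaussian branch of the tail bound applies, combined with the sharper relation between $\mathrm{Tr}(\rho_m^2)$ and $p_m$ implicit in the proof of the geometric bound. I expect this last refinement --- and, for \Cref{thm:classical-lb}, proving the tail bound under only $3$-design randomness --- to be the main obstacles; the rest is bookkeeping with the two estimates.
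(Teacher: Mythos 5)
Your architecture is coherent and genuinely different from the paper's, but as written it does not reach the stated bound. The paper fixes Bob's response family $\{g_x\}$, forms $M(g_x;\UnitaryDist)=\E_U\bigl[U^\dagger\ketbra{g_x(U)}U\bigr]$, passes to the Gaussian state ensemble so that $2^n\braket{\psi|M(g_x;\UnitaryDist)|\psi}$ is \emph{exactly} a sum of independent exponentials, and then pays a union bound only over the $2^m$ messages; crucially, the maximization over Bob's $2^n$ outputs is absorbed into $g$ and controlled by the second-moment identity $\norm{M}_F^2=\E_{U_1,U_2}\abs{\braket{g(U_1)|U_1U_2^\dagger|g(U_2)}}^2\le 2/(2^n+1)$ (a $2$-design computation, no union bound), so the factor $n\approx\log 2^n$ enters only through the operator-norm branch $\norm{M}_{op}\le H_{2^n}/2^n$. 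You instead concentrate over Bob's unitary for a fixed posterior $\rho_m$ and union-bound over the $2^n$ outputs $z$, which injects a $\sqrt{n}$ into the \emph{sub-Gaussian} branch as well; combined with the entropy bound this gives $\FXEB\le O\mparen{\sqrt{nc/2^n}}+O\mparen{nc/2^n}$, i.e.\ $c=\Omega\mparen{\eps^2 2^n/n}$ for $\eps\le 1$. That is an exponential bound, but it is weaker than the claimed $\Omega\mparen{\min\mbrace{\eps^2 2^n,\ \eps 2^n/n}}$ by a factor of up to $n$ (equivalently $\min\{1/\eps,n\}$), and since your minimum is always achieved by the first term for $\eps\le1$, the shortfall is not confined to a corner of the parameter space.

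The repair you sketch does not close this gap. The cells $R_m$ are arbitrary measurable sets; your quantile/bathtub argument correctly bounds each diagonal entry $\braket{\phi|\rho_m|\phi}$, but it gives no control on the spectral profile or effective rank of $\rho_m-I/2^n$, and no rearrangement principle is offered that would let you replace a cell by a union of caps. Without that, there is no justification for invoking the exponential rather than sub-Gaussian branch of the tail bound, and the extra $n$ from the union over $z$ remains. The structural fix is to move the union bound from Bob's $2^n$ outputs to Alice's $2^m$ messages, which is exactly what the paper's $M(g;\UnitaryDist)$ formalism accomplishes (and is also why the paper needs the Haar-to-Gaussian equivalence, \Cref{prop:haar_gaussian_equivalence}, so that the Bernstein-type inequality for independent exponentials applies on Alice's side). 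Everything else in your proposal checks out: the quantum direction, the reduction to deterministic protocols and to posterior averages, the linearity observation, the $\norm{\rho_m}_{op}\le(\ln(1/p_m)+O(1))/2^n$ estimate, the Jensen/entropy bookkeeping, and the correctly anticipated $2^{O(\sqrt n)}$ loss for Clifford measurements.
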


We claim that \Cref{thm:main_informal} satisfies all of our requirements:

\begin{enumerate}
    \item The tolerance for noise is captured in the dependence of the lower bound on the $\FXEB$, $\eps$.
    This assumes that $\FXEB$ is a good approximation to the fidelity, which is expected for sufficiently random states and unitaries~\cite{AAB+19-google-supremacy,GKC+21-xeb,DHJB24-noise}.
    We also numerically verified the correlation between $\FXEB$ and fidelity for the circuits used in experiment; see \Cref{fig:svsims} in \Cref{sec:variational}.
    \item The problem and lower bound are inherently average case, because they depend on distributions $\StateDist$ and $\UnitaryDist$ over the inputs.
    \item To verify correctness from $k$ samples $\{\ket{\psi_i}, U_i, z_i\}_{i = 1,\ldots,k}$ obtained from experiment, we compute the sample mean $\hFXEB$ and standard error $\sigma$ of the empirical scores $2^n \abs{\braket{z_i|U_i|\psi_i}}^2 - 1$.
    To deduce with high confidence (say, $5\sigma \approx 99.99994\%$) that the protocol is solving $\eps$-DXHOG, we simply check whether $\hFXEB - 5\sigma \ge \eps$.
    \item The state distribution $\StateDist$ will always be Haar-random, so the complexity of implementation depends only on the choice $\UnitaryDist$ of Bob's measurement distribution.
    We actually prove lower bounds for several different choices of $\UnitaryDist$, including random Cliffords, approximate $t$-designs, and the Haar measure.
    For the experimental demonstration, we pick $\UnitaryDist$ to be random Clifford, which admits implementation with $O(n^2 / \log n)$ gates~\cite{AG04-stabilizer}---much less than the cost of state preparation.
    \item Our separation between quantum and classical communication complexities can be as large as $n$ vs.\ $\Omega\mparen{2^n / n}$, at least if we consider noiseless quantum protocols ($\eps = 1$).
    This represents a near quadratic improvement over Hidden Matching~\cite{BJK08-comm}.
    Technically, our lower bound for Clifford $\UnitaryDist$ is slightly weaker: the $\frac{\eps 2^n}{n}$ term gets replaced by $\eps 2^{n - O(\sqrt{n})}$.
    However, the difference between the bounds is small at practical instance sizes; see \Cref{fig:ensemble_comparison} in \Cref{sec:bound_summary}.
    Additionally, a version of the bound with explicit constants is easily computable, though a bit cumbersome to write down (hence the big-$\Omega$ in \Cref{thm:main_informal}).
    The exact formula and further discussion appear in \Cref{sec:bound_summary}.
\end{enumerate}

Our main experimental result is a realization of quantum information supremacy on Quantinuum’s H1-1 trapped-ion quantum computer, based on \Cref{thm:main_informal}. We implemented the quantum protocol for DXHOG on 12 qubits and executed 10,000 independent instances, achieving a sample mean of $\hFXEB = 0.427$. By \Cref{thm:main_informal}, no classical protocol can match this score on average using fewer than $78$ bits of memory. Furthermore, even to achieve a population mean $\FXEB$ within $5\sigma$ of our sample mean $\hFXEB$ would require at least $62$ bits of classical memory. These results are unconditional, holding for any classical protocol, regardless of computation time. 
We formally prove \Cref{thm:main_informal} in \Cref{sec:classical_lb} and describe the details of the experiment in \Cref{sec:experiment}.

\section{Related Work}\label{sec:related-work}

On the theoretical side, we note again that prior work obtained an $n$ vs. $\Omega(2^{n/2})$ quantum-classical one-way communication separation for Hidden Matching ~\cite{BJK08-comm,GKK+09-comm,BRSW12-bell} and an $n$ vs. $\Omega(2^n)$ quantum-classical two-way communication separation for a sampling problem~\cite{Mon19-transmit}.
We discussed in \Cref{subsec:desiderata} why these separations do not suffice to experimentally realize quantum information supremacy on existing hardware, even though these were the strongest communication complexity separations known prior to this work.
We note that many variants and generalizations of Hidden Matching have seen study in communication complexity~\cite{GKK+09-comm,Mon11-comm,VY11-stream,SWY12-hyper,DM20-bhm}, but none obtained quantitatively stronger separations between quantum and classical one-way communication complexities than $n$ vs. $\Omega(2^{n/2})$.

Our work also bears a conceptual resemblance to earlier quantum communication protocols such as superdense coding~\cite{Bennett-superdense} and quantum random access codes (QRACs)~\cite{ANTV99-qrac,Nay99-qrac,ANTV02-qrac-combined}, which allow two communicating parties to encode classical information into quantum information. 
In superdense coding, two classical bits are communicated using only one qubit, provided that sender and receiver share entanglement in advance. 
In QRACs, one party encodes multiple classical bits into a small number of qubits, and the other party can probabilistically recover any one of the original bits.
Both protocols can be used to show an advantage of quantum over classical communication, but not an \textit{exponential} advantage: the factor of two in superdense coding is optimal~\cite{BSST02-capacity}, and quantum random access codes outperform classical encodings by at most a logarithmic factor~\cite{ANTV99-qrac,Nay99-qrac,ANTV02-qrac-combined,LdW25-qrac}.

The most related prior experiment is due to Kumar, Kerenidis, and Diamanti~\cite{KKD19-comm} who conducted a demonstration of the Bar-Yossef-Jayram-Kerenidis Hidden Matching protocol~\cite{BJK08-comm}. 
However, their demonstration simulated $n$ qubits using a number of optical modes that scaled exponentially in $n$.
Therefore, their experiment did not actually test the proposition that $n$ entangled qubits already give rise to states that require $\exp(n) \gg n$ classical bits to specify.
Moreover, while the experiment outperformed the best-known classical protocol, it did not exceed the performance of the proven classical lower bound for Hidden Matching, and thus their claim of quantum advantage remains plausible but unproven.

Another recent experiment~\cite{CKDK21-np-ver} claimed a demonstration of quantum advantage for an $\mathsf{NP}$ verification problem.
The demonstration was based on a result of Aaronson, Beigi, Drucker, Fefferman, and Shor~\cite{ABDFS09-unentanglement} that with a short quantum witness, a verifier can become efficiently convinced of the satisfiability of a size-$m$ 3SAT formula, assuming that the witness consists of roughly $\sqrt{m}$ unentangled quantum states of size $O(\log m)$ each.
By contrast, under a plausible computational assumption, any classical witness needs roughly $m$ bits to do the same.
This experiment is qualitatively incomparable to our demonstration of quantum information supremacy because it crucially relies on the promise of unentanglement between the states, and its advantage is based on computational efficiency, so it does not directly probe the states' information content.
Moreover, the experiment's claim to advantage depends on an unproven computational assumption, and the separation is only a quadratic (rather than exponential) quantum advantage, unlike ours.

A recent line of work~\cite{HKP21-info-bounds,ACQ22-measure, CCHL22-memory,HBC+22-experiments} has investigated the power of accessing multiple copies of an unknown quantum state in a machine learning context.
These results show that certain estimation tasks become exponentially easier when one can jointly manipulate several copies at once.
In particular, there are tasks that require $\exp(n)$ copies when only one copy is accessible at a time, but can be solved with 
$\poly(n)$ copies when collective access is allowed. This was recently demonstrated in proof-of-principle experiments~\cite{HBC+22-experiments}.
Compared to our work, these experiments demonstrate a different form of quantum advantage in storage.
They illustrate how expanding quantum memory---e.g., moving from access to a single copy of an unknown state to two copies---can reduce the number of copies required for specific estimation tasks.
However, such experiments do not establish a separation between quantum and classical information resources, nor do they test the exponentiality of Hilbert space.
In contrast, our experiment directly demonstrates a quantum-classical separation: we solve a computational task using significantly fewer qubits than the number of classical bits provably required, thereby establishing quantum information as a physically accessible resource that cannot be succinctly represented classically.

\section{Preliminaries}

The natural logarithm of $x$ is denoted $\ln(x)$, and $e \approx 2.718$ always represents the base of the natural logarithm.
$[n]$ denotes the set of integers $\{1, 2, \ldots, n\}$.
If $\PivotSet \subseteq [n]$, $\overline{\PivotSet}$ denotes its complement in $[n]$.
For a vector $v$, $\norm{v}$ denotes its Euclidean ($\ell_2$) norm.
The Frobenius norm of a matrix $M$ is $\norm{M}_F \coloneqq \sqrt{\Tr(M^\dagger M)}$, and its operator norm is $\norm{M}_{op} \coloneqq \sup_v \frac{\norm{Mv}}{\norm{v}}$.
Define $\erf:\C \to \C$ as $\erf(z) \coloneqq \frac{2}{\sqrt{\pi}}\int_0^z e^{-t^2}\mathrm{d}t$.

Below are definitions of standard quantum gates that appear frequently in this work:
\[
X \coloneqq \begin{bmatrix}
    0 & 1\\
    1 & 0
\end{bmatrix}
\qquad
Z \coloneqq \begin{bmatrix}
    1 & 0\\
    0 & -1
\end{bmatrix}
\qquad
H \coloneqq \frac{1}{\sqrt{2}}\begin{bmatrix}
    1 & 1\\
    1 & -1
\end{bmatrix}
\qquad
S \coloneqq \begin{bmatrix}
    1 & 0\\
    0 & i
\end{bmatrix}
\]
\[
CNOT \coloneqq \begin{bmatrix}
    1 & 0 & 0 & 0\\
    0 & 1 & 0 & 0\\
    0 & 0 & 0 & 1\\
    0 & 0 & 1 & 0\\
\end{bmatrix}
\qquad\qquad
CZ \coloneqq \begin{bmatrix}
    1 & 0 & 0 & 0\\
    0 & 1 & 0 & 0\\
    0 & 0 & 1 & 0\\
    0 & 0 & 0 & -1\\
\end{bmatrix}
\]

Another name for $H$ is the Hadamard gate, and another name for $S$ is the Phase gate.
The \textit{$n$-qubit Clifford group} is the group of $n$-qubit unitary transformations generated by $H$, $S$, and $CNOT$ gates. The \textit{$n$-qubit stabilizer states} are the set of $n$-qubit states reachable from $\ket{0^n}$ by applying a unitary transformation from the $n$-qubit Clifford group.

The group of $N$-dimensional unitary operators is denoted $\Unitaries(N)$, and its subgroup of special unitary matrices (i.e., unitaries that have determinant $1$) is denoted $\SUnitaries(N)$.
The \textit{Haar measure over $\Unitaries(N)$} is the unique probability distribution over $\Unitaries(N)$ that is invariant under left- or right-multiplication by any $U \in \Unitaries(N)$.
We sometimes informally call this distribution simply ``the Haar measure'' or ``a Haar-random unitary''.
The \textit{Haar measure over $\SUnitaries(N)$} is defined analogously.

An (approximate) unitary $t$-design is, informally speaking, a distribution over unitary transformations that mimics the first $t$ moments of the Haar measure.
Several definitions of approximate designs appear in the literature, but in this work we always use the following multiplicative/relative error definition introduced by Brand\~ao, Harrow, and Horodecki~\cite{BHH16-designs}:

\begin{definition}[{\cite{BHH16-designs}}]
\label{def:unitary_t_design}
A distribution $\UnitaryDist$ over $\Unitaries(N)$ is a \emph{$\delta$-approximate unitary $t$-design} if:
\[
(1 - \delta)\E_{U \sim \Unitaries(N)} \left[(U \cdot U^\dagger)^{\otimes t}\right] \preceq \E_{U \sim \UnitaryDist} \left[(U \cdot U^\dagger)^{\otimes t}\right] \preceq (1 + \delta)\E_{U \sim \Unitaries(N)} \left[(U \cdot U^\dagger)^{\otimes t}\right],
\]
where $U \sim \Unitaries(N)$ indicates that $U$ is sampled from the Haar measure over $\Unitaries(N)$, $U \cdot U^\dagger$ denotes the superoperator that maps a density matrix $\rho$ to $U\rho U^\dagger$, and $A \preceq B$ means that $B - A$ is a completely positive map.
\end{definition}

The unitary Haar measure also yields an induced distribution over quantum states: if $\UnitaryDist$ is the Haar measure over $\Unitaries(2^n)$ and $U \sim \UnitaryDist$, then for every $n$-qubit state $\ket{\psi}$, $U\ket{\psi}$ has the same distribution.
In a common abuse of terminology, we call this distribution the \textit{Haar measure over $n$-qubit states} (or in informal language, ``a Haar-random state'').

An equivalent and at times more convenient formulation of the Haar measure over $n$-qubit states is in terms of normalizing a random complex Gaussian vector.
We define the \textit{Gaussian $n$-qubit state distribution} to sample a $2^n$-dimensional vector $\ket{\psi}$ of the form 
\begin{equation}
\label{eq:def_gaussian_state}
\ket{\psi} \coloneqq \sum_{x \in \{0,1\}^n} (\alpha_x + i\beta_x )\ket{x}
\end{equation}
in which $\alpha_x$ and $\beta_x$ are each independent mean-zero normal random variables with variance $\frac{1}{2^{n+1}}$.
Then $\frac{\ket{\psi}}{\norm{\ket{\psi}}}$ is distributed according to the Haar measure over $n$-qubit states.
Note that a random Gaussian $n$-qubit state will have norm exponentially close to $1$ with overwhelming probability because of concentration of measure.
Thus, Gaussian states are a practically useful approximation to Haar-random states, thanks to the independence between coordinates.
Looking ahead, we will often use the two distributions as drop-in replacements for one another, though always with formal proof that such substitution is mathematically valid (see \Cref{prop:haar_gaussian_equivalence} in particular).

\section{Classical Lower Bound for DXHOG}
\label{sec:classical_lb}
\subsection{Lower Bound Proof Ideas}
\label{sec:proof_ideas}

Consider a randomized classical communication protocol for $\eps$-DXHOG with respect to a distribution $\StateDist$ over Alice's state and an independent distribution $\UnitaryDist$ over Bob's unitary.
We may assume without loss of generality that any optimal protocol for $\eps$-DXHOG is fully deterministic, making no use of randomness, because a randomized protocol is merely a convex combination (probabilistic mixture) of deterministic protocols.
So, we assume that the communication protocol takes the following form: given input $\ket{\psi}$, Alice sends Bob a string $x = f(\ket{\psi}) \in \{0,1\}^m$ for some deterministic function $f$.
Then, given $x$ and the unitary $U$, Bob outputs a string $z = g_x(U) \in \{0,1\}^n$ for some deterministic function $g_x$.

For convenience, given a distribution $\UnitaryDist$ over the $n$-qubit unitary group $\Unitaries(2^n)$ and a function $g: \Unitaries(2^n) \to \{0,1\}^n$, we define the following matrix that appears repeatedly in calculations:
\begin{equation}
\label{eq:def_MgU}
M(g; \UnitaryDist) \coloneqq \E_{U \sim \UnitaryDist}\mbracket{U^\dagger \ket{g(U)} \bra{g(U)} U}.
\end{equation}
Here, even though $U^\dagger$ acts on the left and $U$ on the right, it will result in the correct ordering of the unitaries when used in computations in \Cref{sec:norms}.
Because this matrix is an average of rank-$1$ projectors, it is always trace-$1$ and positive semidefinite.
Observe that the $\FXEB$ achieved by such a protocol can be upper bounded in terms of these matrices:
\begin{align}
\E_{\ket{\psi} \sim \StateDist, U \sim \UnitaryDist}\mbracket{\abs{\braket{g_{f(\ket{\psi})}(U)|U|\psi}}^2} &=
\E_{\ket{\psi} \sim \StateDist}\mbracket{ \braket{\psi|M(g_{f(\ket{\psi})}; \UnitaryDist)|\psi}}\nonumber\\
&\le \E_{\ket{\psi} \sim \StateDist}\mbracket{ \max_{x \in \{0,1\}^m}\braket{\psi|M(g_x; \UnitaryDist)|\psi}} \label{eq:avg_over_max}
\end{align}
This can be understood as showing that, in any optimal protocol, Alice's function $f$ is fully determined by Bob's collection of functions $\{g_x : x \in \{0,1\}^m\}$.
In particular, given $\ket{\psi}$, we can assume without loss of generality that Alice sends $f(\ket{\psi}) = x$ for whichever $x$ maximizes $\braket{\psi|M(g_{x}; \UnitaryDist)|\psi}$.
In this case, the bound in \Cref{eq:avg_over_max} is actually an equality.

To prove that any classical protocol for DXHOG requires exponential communication, we will bound \Cref{eq:avg_over_max} by showing that for any choice of $\{g_x\}$ and for any fixed $x \in \{0,1\}^m$, $\braket{\psi|M(g_{x}; \UnitaryDist)|\psi}$ is unlikely (over the distribution of $\ket{\psi}$) to deviate significantly from its mean.
Then, we will apply a union bound to upper bound the expectation of the maximization over all $x \in \{0,1\}^m$.

A key step in establishing the first bound is noticing that when $\StateDist$ is Haar-random, the distribution of $\braket{\psi|M(g_{x}; \UnitaryDist)|\psi}$ is (approximately\footnote{In fact, the approximation is exact if $\StateDist$ is the the Gaussian $n$-qubit state distribution instead of Haar-random.
We argue that it suffices to consider the Gaussian case by observing that $\FXEB$ lower bounds for DXHOG under these two distributions are equivalent (\Cref{prop:haar_gaussian_equivalence}).
}) a sum of independent exponential random variables whose means are proportional to the eigenvalues of $M(g_{x}; \UnitaryDist)$.
Using Bernstein-type concentration inequalities for sums of (sub)-exponential random variables~\cite{Ver18-hdp,Pin22-subexp}, we obtain a tail bound on $\braket{\psi|M(g_{x}; \UnitaryDist)|\psi}$ in terms of the Frobenius norm $\norm{M(g;\UnitaryDist)}_F$ and operator norm $\norm{M(g;\UnitaryDist)}_{op}$ of the matrix.
As a consequence, we show that the maximum $\FXEB$ achievable is bounded in terms of these norms:

\begin{restatable}{theorem}{xebthm}
\label{thm:XEB_bound_from_norms}
    Let $\StateDist$ be the Haar measure over $n$-qubit states. 
    Let $\UnitaryDist$ be a distribution over $\Unitaries(2^n)$ with the property that for every $g: \Unitaries(2^n) \to \{0,1\}^n$,
    \[
    \norm{M(g;\UnitaryDist)}_F \le A \qquad\text{and}\qquad \norm{M(g;\UnitaryDist)}_{op} \le B.
    \]
    Then, any $m$-bit classical protocol for $\eps$-DXHOG with respect to $\StateDist$ and $\UnitaryDist$ satisfies
    \[
    \eps \le O\mparen{\max\mbrace{\sqrt{m}A, mB}}.
    \]
\end{restatable}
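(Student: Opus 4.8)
The plan is to convert the bound \Cref{eq:avg_over_max} into an exponential-tail estimate for the quadratic form $\braket{\psi|M(g_x;\UnitaryDist)|\psi}$ and then union bound over Alice's $2^m$ possible messages. Write $M_x \coloneqq M(g_x;\UnitaryDist)$; each $M_x$ is positive semidefinite with $\Tr M_x = 1$, so $\E_{\ket{\psi}}\braket{\psi|M_x|\psi} = 2^{-n}$, and by hypothesis $\norm{M_x}_F \le A$, $\norm{M_x}_{op}\le B$. By \Cref{eq:avg_over_max} it suffices to show that for \emph{any} family of Bob-functions $\{g_x : x \in \{0,1\}^m\}$,
\[
2^n\,\E_{\ket{\psi}\sim\StateDist}\mbracket{\max_{x}\braket{\psi|M_x|\psi}} - 1 \;=\; O\mparen{\max\mbrace{\sqrt{m}\,A,\; mB}}.
\]
By \Cref{prop:haar_gaussian_equivalence}, it is enough to prove this when $\StateDist$ is the Gaussian $n$-qubit state distribution, which is the case where the quadratic form has an exact description.

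Fix $x$ and diagonalize $M_x = \sum_{i=1}^{2^n}\lambda_i\ket{e_i}\!\bra{e_i}$, with $\lambda_i \ge 0$, $\sum_i\lambda_i = 1$, $\max_i\lambda_i = \norm{M_x}_{op}$, and $\sum_i\lambda_i^2 = \norm{M_x}_F^2$. When $\ket{\psi}$ is Gaussian the amplitudes $\braket{e_i|\psi}$ are i.i.d.\ complex Gaussians, so $Z_i \coloneqq 2^n\abs{\braket{e_i|\psi}}^2$ are i.i.d.\ unit-mean exponential random variables and
\[
2^n\braket{\psi|M_x|\psi} - 1 \;=\; \sum_{i=1}^{2^n}\lambda_i\mparen{Z_i - 1}
\]
is a sum of independent, mean-zero, sub-exponential variables with $\norm{\lambda_i(Z_i-1)}_{\psi_1} = O(\lambda_i)$. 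Invoking the Bernstein-type inequality for such sums~\cite{Ver18-hdp,Pin22-subexp}, with variance parameter $\sum_i\lambda_i^2 = \norm{M_x}_F^2 \le A^2$ and sup-parameter $\max_i\lambda_i = \norm{M_x}_{op}\le B$, yields an absolute constant $c > 0$ with
\[
\Pr\mbracket{2^n\braket{\psi|M_x|\psi} - 1 \ge t} \;\le\; 2\exp\mparen{-c\min\mbrace{t^2/A^2,\; t/B}} \qquad\text{for all } t\ge 0.
\]

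To handle the maximum, take a union bound over the $2^m$ messages and integrate the resulting tail:
\[
2^n\,\E_{\ket{\psi}}\mbracket{\max_x \braket{\psi|M_x|\psi}} - 1 \;\le\; \int_0^\infty \min\mbrace{1,\; 2^{m+1}\exp\mparen{-c\min\mbrace{t^2/A^2,\; t/B}}}\,\mathrm{d}t.
\]
Pick the crossover point $t^\star = \Theta\mparen{\max\mbrace{\sqrt{m}\,A,\; mB}}$ at which the exponential term drops to $2^{-m-1}$; then the contribution of $t \le t^\star$ is at most $t^\star$, while for $t \ge t^\star$ the exponent $\min\{t^2/A^2,\,t/B\}$ grows at least linearly, so the bound decays at least geometrically on the scale $t^\star$ and the remaining tail contributes $O(t^\star)$ as well. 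Hence the left side is $O\mparen{\max\mbrace{\sqrt{m}\,A,\; mB}}$, giving $\eps \le \FXEB = O\mparen{\max\mbrace{\sqrt{m}\,A,\; mB}}$ (the explicit constants are tracked in \Cref{sec:bound_summary}). The only mild subtlety here is that $\braket{\psi|M_x|\psi}$ can fall below its mean, which is handled by bounding the expected positive part of the maximum deviation.

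I expect the crux to be the second step: the reduction to the Gaussian distribution together with the identification of $2^n\braket{\psi|M_x|\psi}$ as an \emph{exact} weighted sum of i.i.d.\ unit-mean exponentials, and then applying the sub-exponential Bernstein bound with precisely the parameters $\norm{M_x}_F^2$ and $\norm{M_x}_{op}$ --- it is this sharpness that makes the two separate regimes $\sqrt{m}\,A$ and $mB$ emerge, rather than a weaker combined estimate. Once the per-message tail bound is in place, the union bound and tail integration are routine.
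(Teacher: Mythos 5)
Your proposal is correct and follows essentially the same route as the paper's proof: reduce to the Gaussian state distribution via \Cref{prop:haar_gaussian_equivalence}, diagonalize $M(g_x;\UnitaryDist)$ so that $2^n\braket{\psi|M_x|\psi}$ becomes an exact weighted sum of i.i.d.\ unit-mean exponentials with weights the eigenvalues, apply a Bernstein-type sub-exponential tail bound parameterized by $\norm{M_x}_F$ and $\norm{M_x}_{op}$, and finish with a union bound over the $2^m$ messages and a tail integration with crossover at probability $2^{-m}$. The only difference is that the paper instantiates the Bernstein step with Pinelis's inequality and explicit Orlicz-norm constants (needed later for the experimentally relevant numerical bounds), whereas you invoke a generic constant $c$, which is fine for the stated $O(\cdot)$ bound.
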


\Cref{thm:XEB_bound_from_norms} is proved in \Cref{sec:lower_bound}.
Importantly, the proof of \Cref{thm:XEB_bound_from_norms} allows us to compute the constant factors inside of the big-$O$.
The full expression for the bound is complicated but easy for computers to evaluate (see \Cref{sec:bound_summary} for the full statement).

Once \Cref{thm:XEB_bound_from_norms} is shown, proving \Cref{thm:main_informal} reduces to bounding the norms of the matrices $M(g;\UnitaryDist)$ in terms of the unitary distribution $\UnitaryDist$.
We establish such bounds in \Cref{sec:norms} by way of computing the moments of the underlying distribution $\UnitaryDist$.
We obtain the strongest communication bounds when $\UnitaryDist$ is the Haar measure over $\Unitaries(2^n)$, but we also obtain exponential lower bounds for approximate $t$-designs, the Clifford group, and tensor products of single-qubit Clifford unitaries.
See the summary in \Cref{sec:bound_summary} for further comparison and discussion of the various bounds.

\subsection{Concentration Inequalities}
\label{app:subexp}

Here we state some concentration inequalities that will be useful throughout the proofs.
First, a standard bound on the tail probability of a standard Gaussian:

\begin{lemma}%
    \label{prop:erf_bound}
    For any $x > 0$,
    \[
    1 - \erf(x) \le e^{-x^2}.
    \]
\end{lemma}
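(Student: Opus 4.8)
The plan is to prove the equivalent statement that $1 - \erf(x) = \frac{2}{\sqrt{\pi}}\int_x^\infty e^{-t^2}\,\mathrm{d}t \le e^{-x^2}$ for all $x > 0$, splitting the argument at the threshold $x_\star \coloneqq \frac{1}{\sqrt{\pi}}$. The two regimes call for genuinely different (but both elementary) techniques: a monotonicity argument near the origin, and a crude tail-integral estimate in the tail.

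For $0 < x \le x_\star$, I would introduce the auxiliary function $h(x) \coloneqq e^{-x^2} - 1 + \erf(x)$, observe that $h(0) = 0$, and compute $h'(x) = -2x\,e^{-x^2} + \frac{2}{\sqrt{\pi}}e^{-x^2} = 2 e^{-x^2}\mparen{\frac{1}{\sqrt{\pi}} - x}$, using $\frac{\mathrm{d}}{\mathrm{d}x}\erf(x) = \frac{2}{\sqrt{\pi}}e^{-x^2}$. Since $h'(x) \ge 0$ on $[0, x_\star]$, $h$ is nondecreasing there, so $h(x) \ge h(0) = 0$, which rearranges to exactly the claimed inequality on this range. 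For $x \ge x_\star$, I would instead bound the Gaussian tail directly: since $t/x \ge 1$ for $t \ge x$, we get $\int_x^\infty e^{-t^2}\,\mathrm{d}t \le \frac{1}{x}\int_x^\infty t\,e^{-t^2}\,\mathrm{d}t = \frac{e^{-x^2}}{2x}$, hence $1 - \erf(x) \le \frac{e^{-x^2}}{x\sqrt{\pi}} \le e^{-x^2}$, where the last step uses $x\sqrt{\pi} \ge 1$. Together these cases cover every $x > 0$.

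I do not anticipate a real obstacle here; the only things to be careful about are keeping the $\frac{2}{\sqrt{\pi}}$ normalization of $\erf$ consistent throughout the derivative computation, and checking that the two ranges genuinely overlap (or at least meet) at $x_\star$, so that no value of $x > 0$ is missed. Notably, a pure monotonicity argument on all of $(0,\infty)$ does not work, since $h$ rises and then falls, so the two-regime split (or, equivalently, supplementing monotonicity with the asymptotic decay of $1-\erf$) is the essential idea — and the tail-integral trick sidesteps the need for any asymptotic analysis of the complementary error function.
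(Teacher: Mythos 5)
Your proof is correct, and it is worth noting that the paper itself supplies no proof of this lemma at all: it is stated as ``a standard bound on the tail probability of a standard Gaussian'' and used as a black box, so there is nothing in the paper to compare against line by line. Your two-regime argument is sound and complete: the derivative computation $h'(x) = 2e^{-x^2}\bigl(\tfrac{1}{\sqrt{\pi}} - x\bigr)$ correctly handles $0 < x \le \tfrac{1}{\sqrt{\pi}}$, the identity $\int_x^\infty t e^{-t^2}\,\mathrm{d}t = \tfrac{1}{2}e^{-x^2}$ correctly handles $x \ge \tfrac{1}{\sqrt{\pi}}$, and the two intervals meet at $x_\star$, so nothing is missed. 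One small simplification you essentially already flagged: since $h(x) \to 0$ as $x \to \infty$ and your own derivative computation shows $h$ is decreasing on $[x_\star,\infty)$, you get $h(x) \ge \lim_{y\to\infty} h(y) = 0$ on the tail for free, making the Gaussian tail-integral estimate unnecessary -- the single function $h$, which rises from $0$ and then decays back to $0$, carries the whole proof. Either way, what you have written is a valid, self-contained justification of a fact the paper leaves unproved.
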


We will use the next inequality to bound the expected maximum of several random variables in terms of their tail bounds:

\begin{lemma}
\label{lem:max_tails_exp}
Let $X_1,\ldots,X_k$ be (not necessarily independent) random variables that each satisfy $X_i \ge r$ for some $r \in \Reals$.
Suppose there exists a function $p(t)$ such that for all $i$, 
\begin{equation}
\Pr\left[X_i \ge t\right] \le p(t)\label{eq:def_p(t)}.
\end{equation}
Then for any $t^* \ge r$,
\[
\E\left[\max_i X_i \right] \le t^* + k\int_{t^*}^\infty p(t) \mathrm{d}t.
\]
Moreover, if $p$ is decreasing, this bound is minimized by choosing $t^* = p^{-1}(1/k)$.
\end{lemma}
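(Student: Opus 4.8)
The plan is to prove \Cref{lem:max_tails_exp} by the standard layer-cake (tail-integration) representation of expectation, split at the threshold $t^*$. Since each $X_i \ge r$ and we only care about $t^* \ge r$, I would write, for a single nonnegative-shifted variable,
\[
\E[\max_i X_i] = r + \int_r^\infty \Pr\mbracket{\max_i X_i \ge t}\, \mathrm{d}t,
\]
using that $\max_i X_i \ge r$ surely. Then I split the integral at $t^*$: on $[r, t^*]$ I bound $\Pr[\max_i X_i \ge t] \le 1$, contributing at most $t^* - r$; on $[t^*, \infty)$ I apply the union bound $\Pr[\max_i X_i \ge t] \le \sum_{i=1}^k \Pr[X_i \ge t] \le k\, p(t)$ using the hypothesis \eqref{eq:def_p(t)}. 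Adding the $r$ from the shift gives $\E[\max_i X_i] \le r + (t^* - r) + k\int_{t^*}^\infty p(t)\,\mathrm{d}t = t^* + k\int_{t^*}^\infty p(t)\,\mathrm{d}t$, which is exactly the claimed bound.

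For the ``moreover'' clause, I would treat the right-hand side as a function $\Phi(t^*) \coloneqq t^* + k\int_{t^*}^\infty p(t)\,\mathrm{d}t$ of the free parameter and minimize over $t^* \ge r$. Assuming $p$ is decreasing (and, implicitly, that the integral is finite so the expression makes sense), $\Phi$ is differentiable with $\Phi'(t^*) = 1 - k\, p(t^*)$, which is negative when $p(t^*) > 1/k$ and positive when $p(t^*) < 1/k$; since $p$ is decreasing, $\Phi$ is first decreasing then increasing, so the minimizer is the point where $p(t^*) = 1/k$, i.e. $t^* = p^{-1}(1/k)$ (interpreting $p^{-1}$ as the appropriate generalized inverse if $p$ is not strictly monotone). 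One should note this critical point may lie below $r$, in which case the constrained optimum is at $t^* = r$; but in the intended application $p^{-1}(1/k) \ge r$, so I would state the clean form and not belabor the boundary case.

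The only mild subtlety — not really an obstacle — is justifying the tail-integration identity when $\E[\max_i X_i]$ could a priori be infinite: if $\int_{t^*}^\infty p(t)\,\mathrm{d}t = \infty$ the asserted bound is vacuous, so we may assume it is finite, and then the union bound forces $\Pr[\max_i X_i \ge t] \le k\,p(t)$ to be integrable at infinity, so the layer-cake integral converges and the manipulation is rigorous. Everything else is a two-line calculation; there is no hard step here, just careful bookkeeping of the split point and the surely-true lower bound $\max_i X_i \ge r$.
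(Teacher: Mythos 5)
Your proof is correct and follows essentially the same route as the paper's: a layer-cake/tail-integration identity, a split at $t^*$ with the trivial bound $\Pr[\max_i X_i \ge t]\le 1$ below and a union bound plus the hypothesis $\Pr[X_i\ge t]\le p(t)$ above, and differentiation of $t^* + k\int_{t^*}^\infty p$ for the ``moreover'' clause. The only cosmetic difference is that the paper normalizes to $r=0$ by shifting while you carry $r$ explicitly; your extra remarks on integrability and the boundary case $p^{-1}(1/k)<r$ are fine but not needed.
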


\begin{proof}
Assume without loss of generality that $r = 0$, as otherwise we can shift $X_i$, $p$, and $t^*$ by $r$. Assuming this is the case, we have:
\begin{align*}
\E\left[\max_i X_i \right] &= \int_0^\infty \Pr[\max_i X_i \ge t] \mathrm{d}t\\
&\le t^* + \int_{t^*}^\infty \Pr[\max_i X_i \ge t] \mathrm{d}t && (\Pr[\max_i X_i \ge t] \le 1)\\
&\le t^* + \int_{t^*}^\infty \sum_{i=1}^k \Pr[X_i \ge t] \mathrm{d}t && (\text{Union bound})\\
&\le t^* + k \int_{t^*}^\infty p(t) \mathrm{d}t. &&(\mathrm{\Cref{eq:def_p(t)}})
\end{align*}
The ``moreover'' part follows by simple calculus, as the derivative of the above expression with respect to $t^*$ is $1 - kp(t^*)$.
\end{proof}

Next, we move on to deriving a concentration inequality for sums of independent exponential random variables, based on a bound due to Pinelis~\cite{Pin22-subexp}.
We require the following notation from~\cite{Pin22-subexp}.
A function $f: [0,\infty) \to [0,\infty)$ is called an \textit{Orlicz function} if $f$ is convex, increasing, unbounded, and $f(0) = 0$. %
For an Orlicz function $f$, a random variable $X$, and $\gamma > 0$, define
\begin{equation}
\label{eq:orlicz_norm}
\norm{X}_{f; \gamma} \coloneqq \inf \{a > 0 : \E\mbracket{f(|X| / a)}  \le \gamma\}.
\end{equation}
$\norm{\cdot}_{f; \gamma}$ defines a norm on the vector space of real random variables $X$ for which $\norm{X}_{f; \gamma}$ is finite.\\

Define the Orlicz function:
\[
\psi_{11}(u) \coloneqq e^u - 1 - u.
\]

As noted by Pinelis~\cite{Pin22-subexp}, the significance of the Orlicz norm $\norm{\cdot}_{\psi_{11}; \gamma}$ is that it captures a notion of sub-exponentiality: if $\norm{X}_{\psi_{11}; \gamma} < \infty$, then the tails of $X$ decay at least as fast as those of an exponential random variable with mean proportional to $\norm{X}_{\psi_{11}; \gamma}$.
This Orlicz norm is used in the bound below of Pinelis, which takes a similar form as standard Bernstein-type concentration inequalities for sums of independent sub-exponential random variables, e.g.~\cite{Ver18-hdp}. %
Intuitively speaking, these bounds show that sums of independent sub-exponential random variables concentrate like Gaussian random variables near the mean, and like exponential random variables far from the mean.

\begin{lemma}[{\cite[Theorem $2.\eps$]{Pin22-subexp}}]
\label{lem:pinelis}
Let $X_1,\ldots,X_n$ be independent mean-zero random variables, and let
\[
A^2 \coloneqq \sum_{i=1}^N \norm{X_i}_{\psi_{11}; \gamma}^2 \qquad\text{and}\qquad B \coloneqq \max_{i=1}^N \norm{X_i}_{\psi_{11}; \gamma}.
\]
Then if $A$ and $B$ are finite,
\[
\Pr\mbracket{\sum_{i=1}^N X_i \ge t} \le 
\begin{cases}
    \exp\mparen{-\frac{t^2}{4\gamma A^2}} & t \le \frac{2\gamma A^2}{B},\\
    \exp\mparen{-\mparen{\frac{t}{B} - \frac{\gamma A^2}{B^2}}} & t \ge \frac{2\gamma A^2}{B}.
\end{cases}
\]
\end{lemma}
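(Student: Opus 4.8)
The plan is to obtain \Cref{lem:pinelis} directly from the general sub-exponential concentration inequality of Pinelis~\cite{Pin22-subexp}, specialized to the Orlicz function $\psi_{11}(u) = e^u - 1 - u$, so that the only substantive work is reconciling normalization conventions and tracking constants. Pinelis's theorem controls $\Pr[\sum_i X_i \ge t]$ for independent mean-zero $X_i$ whose sizes are bounded in a general Orlicz norm $\norm{\cdot}_{f;\gamma}$, and it outputs exactly a two-regime bound of the type stated: a Gaussian-type exponent $-\Theta(t^2/(\gamma A^2))$ near the mean and an exponential-type exponent $-\Theta(t/B)$ in the far tail, with the crossover located at $t \sim \gamma A^2/B$. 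Matching his variance proxy to $A^2 = \sum_i \norm{X_i}_{\psi_{11};\gamma}^2$ and his scale parameter to $B = \max_i \norm{X_i}_{\psi_{11};\gamma}$, and then carefully propagating the constants through the specialization, should yield the stated exponents $-t^2/(4\gamma A^2)$ and $-(t/B - \gamma A^2/B^2)$ together with the threshold $t = 2\gamma A^2/B$.

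For completeness I would also sketch the underlying mechanism, which lets one sanity-check the constants independently of \cite{Pin22-subexp}. By the Chernoff bound, $\Pr[\sum_i X_i \ge t] \le \inf_{0 < \lambda < 1/B} \exp(-\lambda t)\prod_i \E[e^{\lambda X_i}]$. For a mean-zero $X_i$ with $\norm{X_i}_{\psi_{11};\gamma} = c_i$, one writes $\E[e^{\lambda X_i}] = 1 + \E[e^{\lambda X_i} - 1 - \lambda X_i]$ and uses $e^x - 1 - x \le e^{\abs{x}} - 1 - \abs{x} = \psi_{11}(\abs{x})$ together with convexity of $\psi_{11}$, $\psi_{11}(0)=0$, and the Orlicz-norm normalization $\E[\psi_{11}(\abs{X_i}/c_i)] \le \gamma$; Pinelis's refinement of this step gives $\E[e^{\lambda X_i}] \le \exp\mparen{\Theta(\gamma \lambda^2 c_i^2)}$ for $\lambda c_i \le 1$. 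Taking the product over $i$ yields $\Pr[\sum_i X_i \ge t] \le \exp\mparen{-\lambda t + \Theta(\gamma \lambda^2 A^2)}$ valid for all $\lambda \in (0, 1/B]$, and optimizing over $\lambda$ splits into two cases: the interior optimum $\lambda \propto t/(\gamma A^2)$ is admissible exactly when $t \lesssim \gamma A^2/B$, producing the Gaussian branch, while for larger $t$ the optimum is pinned at $\lambda = 1/B$, producing the exponential branch. This structural picture is precisely what Pinelis makes quantitative with sharp constants, so I would cite his bound verbatim rather than re-derive the optimal constants.

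The main obstacle, and essentially the only one, is constant bookkeeping: reconciling the Orlicz-norm convention and auxiliary parameters of \cite{Pin22-subexp} with the definition fixed in \Cref{eq:orlicz_norm}, and checking that specializing the general-$f$ statement to $f = \psi_{11}$ introduces no slack that would degrade the constants $4$, $2$, and the $\gamma A^2/B^2$ correction term. A convenient internal check is continuity at the crossover: at $t = 2\gamma A^2/B$ the Gaussian branch evaluates to $\exp(-\gamma A^2/B^2)$ and the exponential branch to $\exp\mparen{-\mparen{2\gamma A^2/B^2 - \gamma A^2/B^2}} = \exp(-\gamma A^2/B^2)$, so the two pieces agree there; any arithmetic slip in the exponents or in the threshold would break this agreement. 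Once the constants are verified against \cite{Pin22-subexp}, the lemma follows immediately.
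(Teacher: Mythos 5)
Your proposal matches the paper's treatment: \Cref{lem:pinelis} is imported verbatim from \cite{Pin22-subexp} with no independent proof, and the paper's later use of the equivalent form $\min_{0 \le h \le 1/B}\exp(-ht + h^2\gamma A^2)$ in the proof of \Cref{lem:exp_sums} confirms the Chernoff-optimization mechanism and crossover structure you sketch. Your continuity check at $t = 2\gamma A^2/B$ is a valid sanity check and the constants you report agree with the stated bound.
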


The bound in \Cref{lem:pinelis} is somewhat more complicated and less intuitive than the standard bounds~\cite{Ver18-hdp}, but we find it preferable because in practice it achieves much sharper constants in the exponent.
To make use of \Cref{lem:pinelis}, we compute the Orlicz norm $\norm{\cdot}_{\psi_{11}; \gamma}$ for exponential random variables.

\begin{lemma}
\label{lem:exp_orlicz_norm}
Fix $a > 1$, and define
\[
\gamma \coloneqq \frac{ae^{1/a}}{a+1} + \frac{2}{e(a^3 - a)} - 1.
\]
Let $Y$ be an exponential random variable with mean $\mu$, and let $X \coloneqq Y - \mu$ (i.e., $X$ is the shift of $Y$ that has mean $0$). Then
\[
\norm{X}_{\psi_{11}; \gamma} = a\mu.
\]
\end{lemma}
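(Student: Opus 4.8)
The plan is to show that $c = a\mu$ is precisely the value at which the defining expectation in \eqref{eq:orlicz_norm} meets the threshold $\gamma$. First I would reduce to the case $\mu = 1$: since $X/\mu = Y/\mu - 1$ and $Y/\mu$ is a mean-one exponential, homogeneity of the Orlicz norm gives $\norm{X}_{\psi_{11};\gamma} = \mu\,\norm{X'}_{\psi_{11};\gamma}$, where $X' = Y' - 1$ with $Y' \sim \mathrm{Exp}(1)$. So it suffices to prove $\norm{X'}_{\psi_{11};\gamma} = a$.

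Next I would establish that $c \mapsto \E[\psi_{11}(|X'|/c)]$ is a continuous, strictly decreasing bijection from $(1,\infty)$ onto $(0,\infty)$. Strict monotonicity is immediate because $\psi_{11}(u) = e^u - 1 - u$ is strictly increasing on $(0,\infty)$ and $|X'| > 0$ almost surely; the expectation is $+\infty$ for $c \le 1$ and diverges as $c \to 1^+$ because the integrand $e^{(y-1)/c}e^{-y}$ fails to be integrable near $y = \infty$ once $1/c \ge 1$; and the expectation tends to $0$ as $c \to \infty$ by dominated convergence. Consequently $\{c > 0 : \E[\psi_{11}(|X'|/c)] \le \gamma\} = [c^*, \infty)$, where $c^*$ is the unique solution of $\E[\psi_{11}(|X'|/c^*)] = \gamma$, so $\norm{X'}_{\psi_{11};\gamma} = c^*$. (This also shows $\gamma > 0$, which is automatic since $\psi_{11} \ge 0$ with positive probability of strict inequality, and is needed for the norm to be finite.)

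The core computation is then to evaluate $\E[\psi_{11}(|X'|/a)]$ in closed form and verify it equals the stated $\gamma$. Splitting the integral against the density $e^{-y}$ at $y = 1$ (where $|y-1|$ changes sign) reduces everything to elementary exponential integrals: $\int_0^\infty e^{-y}\,dy = 1$; $\int_0^\infty |y-1|\,e^{-y}\,dy = 2/e$ (each piece integrates to $e^{-1}$, using $\int (1-y)e^{-y}\,dy = y e^{-y} + C$); and the exponential moment $\int_0^\infty e^{|y-1|/a}e^{-y}\,dy = e^{1/a}\int_0^1 e^{-y(1+1/a)}\,dy + e^{-1/a}\int_1^\infty e^{-y(1-1/a)}\,dy$, the second integral being finite exactly because $a > 1$. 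Collecting terms and simplifying the rational functions of $a$ — in particular $\frac{a}{a-1} - \frac{a}{a+1} = \frac{2a}{a^2-1}$ and $\frac{a}{a^2-1} - \frac{1}{a} = \frac{1}{a^3-a}$ — yields $\E[\psi_{11}(|X'|/a)] = \frac{ae^{1/a}}{a+1} + \frac{2}{e(a^3-a)} - 1 = \gamma$. Combined with the previous paragraph, $c^* = a$, and the lemma follows by the reduction.

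There is no real conceptual obstacle here; the step most prone to a sign or constant slip is the assembly of the exponential-moment integral and the cancellation of the $1/e$ terms, so I would sanity-check it against the large-$a$ asymptotics, where $\psi_{11}(u) \sim u^2/2$ and $\E[|X'|^2] = \Var(Y') = 1$ force both the computed expression and $\gamma$ to behave like $\frac{1}{2a^2}$.
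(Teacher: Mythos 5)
Your proposal is correct and follows essentially the same route as the paper's proof: reduce to $\mu=1$ by homogeneity of the Orlicz norm, evaluate $\E\mbracket{\psi_{11}(|X'|/a)}$ in closed form by splitting the integral where $|Y'-1|$ changes sign (your elementary integrals and the simplification to $\frac{ae^{1/a}}{a+1}+\frac{2}{e(a^3-a)}-1$ all check out), and then use divergence for $c\le 1$ together with strict monotonicity in $c$ to identify $a$ as the infimum in the definition. Your treatment of the monotonicity/continuity step is if anything slightly more careful than the paper's, and the large-$a$ sanity check against $\psi_{11}(u)\sim u^2/2$ is a nice touch, but there is no substantive difference in approach.
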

\begin{proof}
    An exponential random variable with mean $\mu$ is equivalent to $\mu$ times an exponential random variable with mean $1$. Thus, because $\norm{\cdot}_{\psi_{11}; \gamma}$ is a norm, it suffices to consider the case where $Y$ has mean $\mu = 1$. In that case, we have:
    \begin{align*}
    \E\mbracket{\psi_{11}(|X| / a)} &= \int_{-1}^{0} e^{-(x+1)}\psi_{11}\mparen{-x/a}\mathrm{d}x + \int_{0}^{\infty} e^{-(x+1)}\psi_{11}\mparen{x/a}\mathrm{d}x\\
    &= \int_{-1}^{0} e^{-(x+1)}\mparen{e^{-x/a} - 1 + x/a}\mathrm{d}x + \int_{0}^{\infty} e^{-(x+1)}\mparen{e^{x/a} - 1 - x/a}\mathrm{d}x\\
    &= \mparen{\frac{e^\frac{a+1}{a} - 1}{e\frac{a+1}{a}} + \mparen{\frac{1}{e} - 1} - \frac{1}{ea}} + \mparen{\frac{a}{e(a-1)} - \frac{1}{e} - \frac{1}{ea}}\\
    &= \frac{ae^{1/a}}{a+1} - \frac{a}{e(a+1)} - \frac{2}{ea} + \frac{a}{e(a-1)} - 1\\
    &= \frac{ae^{1/a}}{a+1} + \frac{2}{e(a^3 - a)} - 1.\\
    &= \gamma.
\end{align*}
$a' = a$ is clearly the smallest $a' > 0$ for which $\E\mbracket{f(|X| / a')}  \le \gamma$, because for $a' \le 1$ the above integral does not converge, and otherwise $\E\mbracket{f(|X| / a')}$ is a decreasing function of $a'$. It follows from the definition (\Cref{eq:orlicz_norm}) that $\norm{X}_{\psi_{11}; \gamma} = a$.
\end{proof}

We are now ready to prove the main concentration inequality that we will use for exponential random variables.
Compared to \Cref{lem:pinelis}, it specializes to the case of exponential random variables, and operates under the weaker assumption that $A$ and $B$ are merely upper bounds on (rather than exact values of) the quantities of interest.

\begin{lemma}
\label{lem:exp_sums}
Fix $a > 1$, and define
\[
\gamma \coloneqq \frac{ae^{1/a}}{a+1} + \frac{2}{e(a^3 - a)} - 1.
\]
Let $X_1,\ldots,X_N$ be independent exponentially-distributed random variables with means $\mu_1,\ldots,\mu_N$, and suppose that
\[
\sum_{i=1}^N \mu_i^2 \le A^2 \qquad\text{and}\qquad \max_{i=1}^N \mu_i \le B.
\]
Then
\[
\Pr\mbracket{\sum_{i=1}^N X_i \ge t + \sum_{i=1}^N \mu_i} \le 
\begin{cases}
    \exp\mparen{-\frac{t^2}{4\gamma a^2 A^2}} & t \le \frac{2\gamma a A^2}{B},\\
    \exp\mparen{-\mparen{\frac{t}{aB} - \frac{\gamma A^2}{B^2}}} & t \ge \frac{2\gamma a A^2}{B}.
\end{cases}
\]
\end{lemma}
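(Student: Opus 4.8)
The plan is to deduce \Cref{lem:exp_sums} from Pinelis's bound (\Cref{lem:pinelis}) by applying it to the centered variables $\widetilde{X}_i \coloneqq X_i - \mu_i$, which are independent and mean-zero. Each $\widetilde{X}_i$ is exactly the mean-zero shift of an exponential random variable of mean $\mu_i$, so \Cref{lem:exp_orlicz_norm} identifies its Orlicz norm as $\norm{\widetilde{X}_i}_{\psi_{11};\gamma} = a\mu_i$ --- this is the step that forces the particular definition of $\gamma$ in terms of $a$. Consequently the two parameters that feed into \Cref{lem:pinelis} for the family $\{\widetilde{X}_i\}$ evaluate to $\bigl(\sum_i \norm{\widetilde{X}_i}_{\psi_{11};\gamma}^2\bigr)^{1/2} = a\bigl(\sum_i \mu_i^2\bigr)^{1/2}$ and $\max_i \norm{\widetilde{X}_i}_{\psi_{11};\gamma} = a\max_i \mu_i$, which by hypothesis are at most $aA$ and $aB$ respectively.

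First I would invoke \Cref{lem:pinelis} for $\{\widetilde{X}_i\}$, then replace the exact Orlicz-norm quantities by the upper bounds $aA$ and $aB$ (the one step that is not a direct appeal to a prior result, justified below), and finally rewrite $\Pr\bigl[\sum_i \widetilde{X}_i \ge t\bigr] = \Pr\bigl[\sum_i X_i \ge t + \sum_i \mu_i\bigr]$. Carrying out the substitution $A \mapsto aA$, $B \mapsto aB$ in the two cases of \Cref{lem:pinelis}: the threshold $2\gamma A^2/B$ becomes $2\gamma(aA)^2/(aB) = 2\gamma a A^2/B$, the Gaussian exponent becomes $t^2/(4\gamma a^2 A^2)$, and $\gamma(aA)^2/(aB)^2 = \gamma A^2/B^2$ so the exponential exponent becomes $t/(aB) - \gamma A^2/B^2$. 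This reproduces exactly the two bounds claimed in the statement.

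The step I expect to require the most care is legitimizing the replacement of the exact Orlicz-norm quantities by the upper bounds $aA$ and $aB$, since \Cref{lem:pinelis} is stated with the exact values. I would show that the right-hand side of \Cref{lem:pinelis}, regarded as a function of its parameters $A$ and $B$ with $t$ fixed, is non-decreasing in each argument on $A,B>0$; heuristically, a larger Orlicz norm means heavier tails and hence a weaker bound. Concretely: in the Gaussian regime $t \le 2\gamma A^2/B$ the bound $\exp\mparen{-t^2/(4\gamma A^2)}$ is non-decreasing in $A$ and constant in $B$; in the exponential regime $t \ge 2\gamma A^2/B$ the exponent $t/B - \gamma A^2/B^2$ satisfies $\partial/\partial A = -2\gamma A/B^2 \le 0$ and $\partial/\partial B = B^{-2}\mparen{2\gamma A^2/B - t} \le 0$, so the bound is non-decreasing in both; and since the two pieces agree at the common threshold $t = 2\gamma A^2/B$, the function is globally continuous and these piecewise monotonicities combine into monotonicity in $A$ and in $B$ (increasing $A$ or $B$ only moves the threshold, with the crossover handled by continuity). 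Granting this, replacing the exact quantities by $aA$ and $aB$ only enlarges the right-hand side, which is all that is needed. I would also dispose of the trivial case in which every $\mu_i = 0$ separately, and note in passing that for the exact quantities one has $B \le A$ automatically since $\max_i\mu_i \le (\sum_i\mu_i^2)^{1/2}$, though the monotonicity argument does not use this.
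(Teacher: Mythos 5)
Your proposal is correct and follows essentially the same route as the paper: center the variables, compute $\norm{X_i - \mu_i}_{\psi_{11};\gamma} = a\mu_i$ via \Cref{lem:exp_orlicz_norm}, feed the resulting quantities $aA$, $aB$ into \Cref{lem:pinelis}, and then justify replacing the exact norm quantities by their upper bounds. The only (cosmetic) difference is in that last step: the paper establishes the monotonicity by rewriting the piecewise bound as $\min_{0 \le h \le 1/(aB)}\exp(-ht + h^2\gamma a^2 A^2)$, whereas you verify it directly by checking the sign of the partial derivatives in each regime and using continuity at the crossover threshold --- both arguments are valid and yield the same conclusion.
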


\begin{proof}
    Let $X'_i \coloneqq X_i - \mu_i$ be the shift of $X_i$ with mean $0$.
    Define
    \[
    A' \coloneqq \sqrt{\sum_{i=1}^N \mu_i^2} \qquad\text{and}\qquad B' \coloneqq \max_{i=1}^N \mu_i,
    \]
    so that $A'^2 \le A^2$ and $B' \le B$.
    \Cref{lem:exp_orlicz_norm} shows that $\norm{X'_i}_{\psi_{11}; \gamma} = a\mu_i$, and plugging this into \Cref{lem:pinelis} gives:
    \begin{equation}
    \label{eq:A'_vs_A}
    \Pr\mbracket{\sum_{i=1}^N X_i \ge t + \sum_{i=1}^N \mu_i} \le 
    \begin{cases}
        \exp\mparen{-\frac{t^2}{4\gamma a^2 A'^2}} & t \le \frac{2\gamma a A'^2}{B'},\\
        \exp\mparen{-\mparen{\frac{t}{aB'} - \frac{\gamma A'^2}{B'^2}}} & t \ge \frac{2\gamma a A'^2}{B'}.
    \end{cases}
    \end{equation}
    Our only remaining goal, then, is to show that \Cref{eq:A'_vs_A} still holds when we replace $A'$ by $A$ and $B'$ by $B$.
    To see that this is possible, we observe that the right side of \Cref{eq:A'_vs_A} is equivalent to:\footnote{Expressing the piecewise function in terms of this minimization, 
    as done in Pinelis' proof~\cite[Theorem 2]{Pin22-subexp}, is convenient for establishing the validity of the replacement $A' \to A$ and $B' \to B$ in this context.}
    \[
    \min_{0 \le h \le \frac{1}{aB'}} \exp\mparen{-ht + h^2\gamma a^2A'^2},
    \]
    which is at most
    \[
    \min_{0 \le h \le \frac{1}{aB}} \exp\mparen{-ht + h^2\gamma a^2A^2} = \begin{cases}
    \exp\mparen{-\frac{t^2}{4\gamma a^2 A^2}} & t \le \frac{2\gamma a A^2}{B},\\
    \exp\mparen{-\mparen{\frac{t}{aB} - \frac{\gamma A^2}{B^2}}} & t \ge \frac{2\gamma a A^2}{B}.
\end{cases}\qedhere
    \]
\end{proof}

\subsection{Proof of \texorpdfstring{\Cref{thm:XEB_bound_from_norms}}{Main Theorem}}
\label{sec:lower_bound}

This section is dedicated to the proof of \Cref{thm:XEB_bound_from_norms}.
Within the proof, we will find it more convenient to work with the Gaussian $n$-qubit state distribution instead of the Haar measure over $n$-qubit states.
The next proposition establishes the validity of such substitution: lower bounds for DXHOG under these two distributions are in fact equivalent.
(Note that DXHOG remains well-defined even if the state $\ket{\psi} \sim \StateDist$ is not normalized.)

\begin{proposition}
    \label{prop:haar_gaussian_equivalence}
    Let $\StateDist$ be the Gaussian $n$-qubit state distribution, let $\overline{\StateDist}$ be the Haar measure over $n$-qubit states, and let $\UnitaryDist$ be any distribution over $\Unitaries(2^n)$.
    Then every $m$-bit classical protocol for DXHOG with respect to $(\StateDist, \UnitaryDist)$ achieves $\FXEB$ at most $\eps$ \emph{if and only if} every $m$-bit classical protocol for DXHOG with respect to $(\overline{\StateDist}, \UnitaryDist)$ achieves $\FXEB$ at most $\eps$.
\end{proposition}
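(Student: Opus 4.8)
The plan is to prove something slightly stronger and cleaner than the stated equivalence: that for each fixed family of Bob's decoding functions, the optimal $\FXEB$ (over Alice's encoders) is exactly the same real number under the Gaussian $n$-qubit state distribution $\StateDist$ and under the Haar measure $\overline{\StateDist}$, from which the ``if and only if'' is immediate. First I would invoke the reductions already established around \Cref{eq:avg_over_max}: it suffices to consider deterministic protocols (randomized ones are convex mixtures of deterministic ones and achieve no larger $\FXEB$); a deterministic $m$-bit protocol is a pair $(f, \{g_x\}_{x \in \{0,1\}^m})$; and for a \emph{fixed} choice of $\{g_x\}$, the best encoder is $f(\ket{\psi}) = \argmax_{x} \braket{\psi|M(g_x;\UnitaryDist)|\psi}$. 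Hence the supremum of $\FXEB$ over all $m$-bit protocols for $(\StateDist, \UnitaryDist)$ equals $\sup_{\{g_x\}}\mparen{2^n\, \E_{\ket{\psi} \sim \StateDist}\mbracket{\max_{x} \braket{\psi|M(g_x;\UnitaryDist)|\psi}} - 1}$, and the analogous statement holds with $\overline{\StateDist}$ in place of $\StateDist$. So the whole proposition reduces to proving the identity $\E_{\ket{\psi} \sim \StateDist}\mbracket{\max_{x} \braket{\psi|M(g_x;\UnitaryDist)|\psi}} = \E_{\ket{\psi} \sim \overline{\StateDist}}\mbracket{\max_{x} \braket{\psi|M(g_x;\UnitaryDist)|\psi}}$ for every fixed family $\{g_x\}$.

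For that identity I would use the polar decomposition of an i.i.d.\ complex Gaussian vector. Write $\ket{\psi} \sim \StateDist$ as $\ket{\psi} = R\ket{\phi}$ with $R \coloneqq \norm{\ket{\psi}}$ and $\ket{\phi} \coloneqq \ket{\psi}/\norm{\ket{\psi}}$; the unitary invariance of the Gaussian measure gives that $\ket{\phi}$ is Haar-distributed, i.e.\ $\ket{\phi} \sim \overline{\StateDist}$, and that $R$ and $\ket{\phi}$ are independent. Since each $M(g_x;\UnitaryDist)$ is a fixed matrix, the quadratic form is homogeneous of degree two, so $\braket{\psi|M(g_x;\UnitaryDist)|\psi} = R^2\braket{\phi|M(g_x;\UnitaryDist)|\phi}$, and because the maximum runs over the finite set $\{0,1\}^m$, the common factor $R^2$ pulls out of the maximum. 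Taking expectations and using independence, $\E_{\ket{\psi} \sim \StateDist}\mbracket{\max_{x}\braket{\psi|M(g_x;\UnitaryDist)|\psi}} = \E\mbracket{R^2} \cdot \E_{\ket{\phi} \sim \overline{\StateDist}}\mbracket{\max_{x}\braket{\phi|M(g_x;\UnitaryDist)|\phi}}$, and the one-line computation $\E\mbracket{R^2} = \E\mbracket{\sum_{x}(\alpha_x^2 + \beta_x^2)} = 2^n \cdot \tfrac{2}{2^{n+1}} = 1$ completes the argument.

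I expect the only mildly delicate point to be the independence of the norm $R$ from the direction $\ket{\phi}$, and it matters precisely because Alice's encoder $f$ is allowed to depend on the \emph{unnormalized} vector $\ket{\psi}$, hence on $R$ as well as on $\ket{\phi}$; one cannot naively ``substitute'' a Haar state for a Gaussian state inside the protocol. The $\argmax$ reformulation of \Cref{eq:avg_over_max} is what neutralizes this: it collapses all of Alice's dependence on $\ket{\psi}$ into a degree-two homogeneous functional, after which homogeneity together with the norm/direction independence (itself just the rotational symmetry of the Gaussian measure---worth spelling out by conditioning on $R = r$ and noting the conditional law of $\ket{\psi}$ is uniform on the radius-$r$ sphere) reduces everything to the normalization constant $\E[R^2] = 1$. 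The remaining steps are routine bookkeeping.
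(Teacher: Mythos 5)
Your proof is correct, and it takes a genuinely different route from the paper's. The paper argues operationally in two separate directions: for one implication it converts a protocol for the Haar ensemble into one for the Gaussian ensemble by having Alice normalize her input and then factoring out $\norm{\ket{\psi}}^2$; for the other it takes an arbitrary Gaussian-ensemble protocol $Q$, defines the radius-indexed family $P_r(\ket{\psi},U) \coloneqq Q(r\ket{\psi},U)$ of Haar-ensemble protocols, conditions on the radius, and applies the assumed bound to each $P_r$. You instead collapse both directions into a single identity on the optimal achievable $\FXEB$: using the reduction to deterministic protocols and the $\argmax$ characterization of Alice's optimal encoder from \Cref{eq:avg_over_max}, the supremum over protocols becomes $\sup_{\{g_x\}}\bigl(2^n \E\bigl[\max_x \braket{\psi|M(g_x;\UnitaryDist)|\psi}\bigr]-1\bigr)$ under either state distribution, and then degree-two homogeneity of the quadratic forms, positivity of $R^2$ (which lets the factor pass through the finite maximum), independence of radius and direction, and $\E[R^2]=1$ give term-by-term equality of the two suprema. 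Both arguments rest on the same two probabilistic facts (rotational invariance of the Gaussian measure and the normalization $\E[\norm{\ket{\psi}}^2]=1$); what yours buys is a unified one-shot argument and the explicit observation that the optimal encoder is scale-invariant, so Alice's apparent extra freedom to depend on the norm is vacuous. What the paper's version buys is self-containedness --- it works directly with arbitrary protocols and does not lean on the $M(g_x)$/$\argmax$ machinery --- and it makes the simulation between the two ensembles explicit. The only points worth spelling out in a polished write-up are the measurability/tie-breaking of the $\argmax$ encoder and the standard conditioning argument for the independence of $R$ and $\ket{\phi}$, both of which you already flag.
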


\begin{proof}
    We prove the forward implication by showing that every protocol achieving $\FXEB = \eps$ with respect to the normalized distribution $(\overline{\StateDist}, \UnitaryDist)$ implies the existence of a protocol for the unnormalized distribution $(\StateDist, \UnitaryDist)$ achieving the same $\FXEB = \eps$.
    Suppose that the protocol $P$ for the normalized distribution $(\overline{\StateDist}, \UnitaryDist)$ computes a function $z = P(\ket{\psi}, U)$.
    Define a protocol $Q$ that first computes $\ket{\overline{\psi}} \coloneqq \frac{\ket{\psi}}{\norm{\ket{\psi}}}$ and then outputs $z = Q(\ket{\psi}, U) \coloneqq P\mparen{\ket{\overline{\psi}}, U}$ (i.e., Alice normalizes her input state and then runs $P$).
    Then this protocol $Q$ satisfies
    \begin{align*}
        \E_{\ket{\psi} \sim \StateDist, U \sim \UnitaryDist}\mbracket{\abs{\braket{Q(U, \ket{\psi})|U|\psi}}^2}
        &= \E_{\ket{\psi} \sim \StateDist, U \sim \UnitaryDist}\mbracket{\abs{\braket{P(\ket{\overline{\psi}}, U)|U|\overline{\psi}}}^2 \norm{\ket{\psi}}^2}\\
        &= \E_{\ket{\psi} \sim \StateDist, U \sim \UnitaryDist}\mbracket{\abs{\braket{P(\ket{\overline{\psi}}, U)|U|\overline{\psi}}}^2}\E_{\ket{\psi} \sim \StateDist}\mbracket{\norm{\ket{\psi}}^2}\\
        &= \E_{\ket{\overline{\psi}} \sim \overline{\StateDist}, U \sim \UnitaryDist}\mbracket{\abs{\braket{P(\ket{\overline{\psi}}, U)|U|\overline{\psi}}}^2}\E_{\ket{\psi} \sim \StateDist}\mbracket{\norm{\ket{\psi}}^2}\\
        &= \E_{\ket{\overline{\psi}} \sim \overline{\StateDist}, U \sim \UnitaryDist}\mbracket{\abs{\braket{P(\ket{\overline{\psi}}, U)|U|\overline{\psi}}}^2}\E_{\ket{\psi} \sim \StateDist}\mbracket{\sum_{x \in \{0,1\}^n} a_x^2 + b_x^2}\\
        &= \E_{\ket{\overline{\psi}} \sim \overline{\StateDist}, U \sim \UnitaryDist}\mbracket{\abs{\braket{P(\ket{\overline{\psi}}, U)|U|\overline{\psi}}}^2}\\
        &= \frac{1 + \eps}{2^n}.
    \end{align*}
    The first line substitutes the definition of $\ket{\overline{\psi}}$, the second line uses the independence between $\ket{\overline{\psi}}$ and $\norm{\ket{\psi}}$, the third line observes that letting $\ket{\overline{\psi}}$ be the normalization of $\ket{\psi} \sim \StateDist$ is the same as sampling $\ket{\overline{\psi}} \sim \overline{\StateDist}$ directly,
    the fourth and fifth lines apply the definition of $\StateDist$ (\Cref{eq:def_gaussian_state}),
    and the last line is by assumption that $P$ achieves $\FXEB = \eps$.
    This establishes the first part of the proof.

    For the reverse implication, suppose that every $m$-bit classical protocol for DXHOG with respect to the normalized $(\overline{S}, \UnitaryDist)$ achieves $\FXEB$ at most $\eps$.
    Now consider an $m$-bit classical protocol $Q(\ket{\psi}, U)$ for DXHOG with respect to the unnormalized $(S, \UnitaryDist)$.
    From $Q$, one may define an infinite family $\{P_r\}_{r \ge 0}$ of $m$-bit classical \textit{normalized} protocols by evaluating $Q$ on states of norm $r$:
    \[
    P_r(\ket{\psi}, U) \coloneqq Q(r\ket{\psi}, U)
    \]
    By assumption, each $P_r$ achieves $\FXEB$ at most $\eps$ with respect to $(\overline{S}, \UnitaryDist)$.
    This allows us to upper bound the performance of $Q$ (again using the notation $\ket{\overline{\psi}} \coloneqq \frac{\ket{\psi}}{\norm{\ket{\psi}}}$):
    \begin{align*}
    \E_{\ket{\psi} \sim \StateDist, U \sim \UnitaryDist}\mbracket{\abs{\braket{Q(U, \ket{\psi})|U|\psi}}^2}
    &= \E_{\ket{\psi} \sim \StateDist, U \sim \UnitaryDist}\mbracket{\abs{\braket{P_{\norm{\ket{\psi}}}(U, \ket{\overline{\psi}})|U|\psi}}^2}\\
    &= \E_{\ket{\psi} \sim \StateDist, U \sim \UnitaryDist}\mbracket{\abs{\braket{P_{\norm{\ket{\psi}}}(U, \ket{\overline{\psi}})|U|\overline{\psi}}}^2 \norm{\ket{\psi}}^2}\\
    &= \E_{\norm{\ket{\psi}} \sim \norm{\StateDist}}\mbracket{\E_{\ket{\overline{\psi}} \sim \overline{\StateDist}, U \sim \UnitaryDist}\mbracket{\abs{\braket{P_{\norm{\psi}}(U, \ket{\overline{\psi}})|U|\overline{\psi}}}^2} \norm{\ket{\psi}}^2}\\
    &\le \E_{\ket{\psi} \sim \StateDist}\mbracket{\frac{1 + \eps}{2^n} \norm{\ket{\psi}}^2}\\
    &= \frac{1 + \eps}{2^n} \E_{\ket{\psi} \sim \StateDist}\mbracket{\sum_{x \in \{0,1\}^n} a_x^2 + b_x^2}\\
    &= \frac{1 + \eps}{2^n}.
    \end{align*}
    The first two lines substitute the definition of $P_{\norm{\ket{\psi}}}$ and $\ket{\overline{\psi}}$, the third line observes that sampling $\ket{\psi} \sim \StateDist$ can be accomplished by independently sampling its radius $\norm{\ket{\psi}}$ (which is the meaning of $\norm{\ket{\psi}} \sim \norm{\StateDist}$) and normalization $\ket{\overline{\psi}}$, the fourth line uses the assumed bound on the $\FXEB$ of $P_{\norm{\ket{\psi}}}$, and the last two lines apply the definition of $\StateDist$ (\Cref{eq:def_gaussian_state}).
    This completes the proof in both directions.
\end{proof}

We now move onto the proof of \Cref{thm:XEB_bound_from_norms}, which is restated here for convenience:

\xebthm*

\begin{proof}
    In light of \Cref{prop:haar_gaussian_equivalence}, it suffices to instead consider the case where $\StateDist$ is the Gaussian $n$-qubit state distribution, which we assume henceforth.

    Let the communication protocol be parameterized by Bob's functions $\{g_x : x \in \{0,1\}^m\}$ as in \Cref{sec:proof_ideas}.
    In light of \Cref{eq:avg_over_max}, the matrices $M(g_x; \UnitaryDist)$ must satisfy
    \begin{equation}
    \label{eq:eps_at_most}
    \E_{\ket{\psi} \sim \StateDist} \mbracket{2^n \max_x \braket{\psi|M(g_x; \UnitaryDist)|\psi} - 1} \ge \eps,
    \end{equation}
    so we turn to upper bounding the left side of the above.    
    Observe that $M(g_x; \UnitaryDist)$ is positive semidefinite with trace $1$, because it is a convex combination of rank-1 projectors.
    Consider an eigendecomposition
    \[
    M(g_x; \UnitaryDist) = \sum_{i=1}^{2^n} \mu_i \ketbra{v_i}
    \]
    with eigenvalues $\{\mu_i\}$ and unit-norm eigenvectors $\{\ket{v_i}\}$.
    By the unitary invariance of $\StateDist$, $\braket{\psi|M(g_x; \UnitaryDist)|\psi}$ has the same distribution as the random variable
    \[
    \bra{\psi} \mparen{\sum_{i=1}^{2^n} \mu_i \ketbra{i} }\ket{\psi} = \sum_{i=1}^{2^n} \mu_i (a_i^2 + b_i^2),
    \]
    where $\{\ket{i}\}$ indexes the strings in $\{0,1\}^n$, and $a_i$ and $b_i$ are the real and imaginary components of $\braket{i|\psi}$.
    But this is simply a sum of of independent exponentially-distributed random variables with means $\frac{\mu_1}{2^n}, \ldots, \frac{\mu_{2^n}}{2^n}$, because the squared norm of a complex Gaussian random variable is an exponential random variable.
    Hence, \Cref{lem:exp_sums} and the assumed bounds on the Frobenius and operator norms of $M(g_x; \UnitaryDist)$ imply that for any $a > 1$ and
    \[
    \gamma \coloneqq \frac{ae^{1/a}}{a+1} + \frac{2}{e(a^3 - a)} - 1,
    \]
    we have
    \begin{equation}
    \label{eq:blah}
    \Pr_{\ket{\psi} \sim \StateDist}\mbracket{2^n\braket{\psi|M(g_x; \UnitaryDist)|\psi} \ge t + 1} \le p(t) \coloneqq 
    \begin{cases}
    \exp\mparen{-\frac{t^2}{4\gamma a^2 A^2}} & t \le \frac{2\gamma a A^2}{B},\\
    \exp\mparen{-\mparen{\frac{t}{aB} - \frac{\gamma A^2}{B^2}}} & t \ge \frac{2\gamma a A^2}{B}.
    \end{cases}
    \end{equation}
    We may pick $a$ to be any constant (say, $a$ = 2) so that $\gamma$ is also a constant.
    Set
    \[
    t^* \coloneqq \begin{cases}
    \sqrt{\ln(2)m \cdot 4\gamma a^2 A^2} & m \le \frac{\gamma A^2}{\ln(2) B^2},\\
        aB (\ln(2) m  + \gamma A^2 / B^2) & m > \frac{\gamma A^2}{ \ln(2) B^2},
    \end{cases}
    \]
    so that $p(t^*) = \frac{1}{2^m}$. Then \Cref{lem:max_tails_exp} combined with the bound from \Cref{eq:eps_at_most} gives
    \begin{align*}
        \eps \le \E_{\ket{\psi} \sim \StateDist} \mbracket{2^n \max_x \braket{\psi|M(g_x; \UnitaryDist)|\psi} - 1} \le t^* + 2^m \int_{t^*}^\infty p(t)  \mathrm{d}t.
    \end{align*}
    We bound $\eps$ by breaking into cases depending on $m$. If $m \le \frac{\gamma A^2}{\ln(2) B^2}$ then
    \begin{align}
    \eps
    &\le t^* + 2^m \int_{t^*}^{\frac{2\gamma aA^2}{B}} \exp\mparen{-\frac{t^2}{4\gamma a^2A^2}}\mathrm{d}t + 2^m \int_{\frac{2\gamma aA^2}{B}}^{\infty} \exp\mparen{-\mparen{\frac{t}{aB} - \frac{\gamma A^2}{B^2}}}\mathrm{d}t\nonumber\\
    &= t^* + 2^m \sqrt{\pi \gamma} aA \mparen{\erf\mparen{\frac{\sqrt{\gamma} A}{B}} - \erf\left(\frac{t^*}{2\sqrt{\gamma} aA}\right)} + 2^{m} aB\exp\left(- \frac{\gamma A^2}{B^2}\right)\label{eq:frob_bound_exact}\\
    &\le t^* + 2^m \sqrt{\pi \gamma} aA \mparen{1 - \erf\mparen{\frac{t^*}{2\sqrt{\gamma}aA}}} + aB\nonumber\\
    &\le t^* + 2^m \sqrt{\pi \gamma} aA \exp\mparen{-\frac{t^{*2}}{4\gamma a^2 A^2}} + aB\nonumber\\
    &= t^* + \sqrt{\pi \gamma}aA + aB\nonumber\\
    &= \mparen{2\sqrt{\ln(2) m} + \sqrt{\pi \gamma}}aA + aB\nonumber\\
    &\le O\mparen{\sqrt{m}A + B}\nonumber\\
    &\le O\mparen{\max\mbrace{\sqrt{m}A, mB}},\label{eq:xeb_bound_m_small}
    \end{align}
    where in the second line we apply basic calculus, in the third line we use the bound $\erf(x) \le 1$ for all $x$ along with $\exp(-\gamma A^2/B^2) \le \exp(-m \ln(2)) \le 2^{-m}$, in the fourth line we apply \Cref{prop:erf_bound}, and in the fifth and sixth lines we substitute the definition of $t^*$.

    In the complementary case, we suppose that $m > \frac{\gamma A^2}{\ln(2) B^2}$, which yields the bound
    \begin{align}
    \eps
    &\le t^* + 2^m \int_{t^*}^{\infty} \exp\mparen{-\mparen{\frac{t}{aB} - \frac{\gamma A^2}{B^2}}}\mathrm{d}t \nonumber\\
    &= t^* + 2^m aB \exp\mparen{-\mparen{\frac{t^*}{aB} - \frac{\gamma A^2}{B^2}}}\nonumber\\
    &= t^* + aB\label{eq:op_bound_exact}\\
    &= aB(\ln(2) m + \gamma A^2 / B^2 + 1)\nonumber\\
    &\le O\mparen{mB}\nonumber\\
    &\le O\mparen{\max\mbrace{\sqrt{m}A, mB}},\label{eq:xeb_bound_m_large}
    \end{align}
    where in the second line we apply basic calculus and in the third and fourth lines we substitute the definition of $t^*$.

    Combining \Cref{eq:xeb_bound_m_small,eq:xeb_bound_m_large} yields $\eps \le O\mparen{\max\mbrace{\sqrt{m}A, mB}}$, as desired.
\end{proof}

A simple rearrangement of parameters yields the following:

\begin{corollary}
    \label{cor:classical_lower_bound}
    Let $\StateDist$ be the Haar measure over $n$-qubit states. 
    Let $\UnitaryDist$ be a distribution over $\Unitaries(2^n)$ with the property that for every $g: \Unitaries(2^n) \to \{0,1\}^n$,
    \[
    \norm{M(g;\UnitaryDist)}_F \le A \qquad\text{and}\qquad \norm{M(g;\UnitaryDist)}_{op} \le B.
    \]
    Then, any classical protocol for $\eps$-DXHOG with respect to $\StateDist$ and $\UnitaryDist$ must use at least
    \[
    m = \Omega\mparen{\min\mbrace{
    \frac{\eps^2}{A^2}, \frac{\eps}{B}
    }}
    \]
    bits of communication.
\end{corollary}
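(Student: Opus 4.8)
The plan is to derive \Cref{cor:classical_lower_bound} as an immediate algebraic consequence of \Cref{thm:XEB_bound_from_norms}, with no new probabilistic input. By \Cref{thm:XEB_bound_from_norms} there is an absolute constant $C > 0$ such that every $m$-bit classical protocol achieving $\FXEB \ge \eps$ with respect to the Haar state distribution $\StateDist$ and the unitary distribution $\UnitaryDist$ satisfies $\eps \le C \max\mbrace{\sqrt{m}\,A,\, mB}$. Fixing such a protocol, the only remaining task is to solve this inequality for $m$.

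First I would split into cases according to which term attains the maximum. If the maximum is $\sqrt{m}\,A$, then $\eps \le C\sqrt{m}\,A$, which rearranges to $m \ge \eps^2/(C^2 A^2)$. If instead the maximum is $mB$, then $\eps \le CmB$, which rearranges to $m \ge \eps/(CB)$. Since one of these two cases must hold, in all cases $m \ge \min\mbrace{\eps^2/(C^2 A^2),\, \eps/(CB)}$, and absorbing the constants $C$ and $C^2$ into the $\Omega(\cdot)$ gives $m = \Omega\mparen{\min\mbrace{\eps^2/A^2,\, \eps/B}}$, as claimed.

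I do not expect any real obstacle here: \Cref{thm:XEB_bound_from_norms} has already done all of the analytic work---the Gaussian-versus-Haar reduction of \Cref{prop:haar_gaussian_equivalence}, the sub-exponential concentration of \Cref{lem:exp_sums}, and the union bound over Alice's $2^m$ possible messages via \Cref{lem:max_tails_exp}---so this corollary is purely a restatement in the form most convenient for downstream use, i.e., the form into which one plugs the Frobenius- and operator-norm bounds on $M(g;\UnitaryDist)$ computed for specific ensembles in \Cref{sec:norms}. The only point requiring a moment's care is that $A, B > 0$ always, since $M(g;\UnitaryDist)$ is a nonzero positive semidefinite matrix of trace one, so the divisions above are legitimate.
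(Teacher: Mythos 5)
Your proposal is correct and matches the paper's treatment: the paper derives \Cref{cor:classical_lower_bound} from \Cref{thm:XEB_bound_from_norms} by exactly this ``simple rearrangement of parameters,'' i.e., casing on which term attains the maximum in $\eps \le O(\max\{\sqrt{m}A, mB\})$ and solving for $m$ in each case. Nothing further is needed.
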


\subsection{Bounding Matrix Norms}
\label{sec:norms}

In this section, we bound the Frobenius and operator norms of the matrices $M(g;\UnitaryDist)$ from \Cref{eq:def_MgU} for several natural unitary ensembles $\UnitaryDist$. To establish these bounds, we first prove a pair of lemmas that give simpler bounds on these norms.

\begin{lemma}
\label{lem:frob_norm_bound}
    Let $\UnitaryDist$ be a distribution over $\Unitaries(2^n)$ and let $g: \Unitaries(2^n) \to \{0,1\}^n$. Then\footnote{Note that $\max_{z_1, z_2 \in \{0,1\}^n} \abs{\braket{z_1|U_1U_2^\dagger|z_2}}^2$ is simply the largest squared magnitude of any entry in $U_1U_2^\dagger$.}
    \[\norm{M(g; \UnitaryDist)}_F^2 \le \E_{U_1,U_2 \sim \UnitaryDist}\mbracket{\max_{z_1, z_2 \in \{0,1\}^n} \abs{\braket{z_1|U_1U_2^\dagger|z_2}}^2 }.\]
\end{lemma}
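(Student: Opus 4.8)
The plan is to expand the squared Frobenius norm directly using the definition $\norm{M}_F^2 = \Tr(M^\dagger M) = \Tr(M^2)$, exploiting that $M(g;\UnitaryDist)$ is Hermitian. Writing $M(g;\UnitaryDist) = \E_{U \sim \UnitaryDist}[U^\dagger \ketbra{g(U)} U]$, I would introduce two independent copies $U_1, U_2 \sim \UnitaryDist$ and write
\[
\norm{M(g;\UnitaryDist)}_F^2 = \E_{U_1, U_2 \sim \UnitaryDist}\mbracket{\Tr\mparen{U_1^\dagger \ketbra{g(U_1)} U_1 \, U_2^\dagger \ketbra{g(U_2)} U_2}}.
\]
The cyclic property of the trace collapses the right-hand side to $\E_{U_1,U_2}\mbracket{\abs{\braket{g(U_1)|U_1 U_2^\dagger|g(U_2)}}^2}$, since $\Tr(\ket{a}\bra{a}\,V\,\ket{b}\bra{b}\,V^\dagger) = \abs{\braket{a|V|b}}^2$ for a unitary $V = U_1 U_2^\dagger$.

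From there the bound is immediate: for each fixed pair $(U_1, U_2)$, the string-valued quantities $z_1 = g(U_1)$ and $z_2 = g(U_2)$ are particular choices of computational basis strings, so
\[
\abs{\braket{g(U_1)|U_1 U_2^\dagger|g(U_2)}}^2 \le \max_{z_1, z_2 \in \{0,1\}^n} \abs{\braket{z_1|U_1 U_2^\dagger|z_2}}^2,
\]
and taking expectations over $U_1, U_2 \sim \UnitaryDist$ on both sides yields the claimed inequality. (The footnote observation that this maximum is just the largest squared entry magnitude of $U_1 U_2^\dagger$ is what makes the bound useful when later specializing to concrete ensembles.)

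I do not expect any genuine obstacle here — the lemma is essentially a bookkeeping exercise built on the trace identity and a pointwise bound on a maximum. The only point requiring a modicum of care is confirming that one may legitimately split $\norm{M}_F^2$ into an expectation over two \emph{independent} copies of $U$: this follows because $M$ itself is linear in the random projector, so $\Tr(M^2) = \E_{U_1}\E_{U_2}[\,\cdots]$ by linearity of expectation and independence of the two draws, with no positivity or convexity subtlety needed. Everything else is routine.
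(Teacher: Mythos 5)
Your argument is correct and follows exactly the same route as the paper's proof: expand $\norm{M}_F^2 = \Tr(M^2)$ with two independent copies $U_1, U_2 \sim \UnitaryDist$, use cyclicity of the trace to obtain $\E_{U_1,U_2}\mbracket{\abs{\braket{g(U_1)|U_1U_2^\dagger|g(U_2)}}^2}$, and then bound pointwise by the maximum over $z_1, z_2$. No gaps.
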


\begin{proof}
    \begin{align*}
        \norm{M(g; \UnitaryDist)}_F^2 &= \Tr\mparen{M(g; \UnitaryDist)^2}\\
        &= \Tr\mparen{\E_{U_1 \sim \UnitaryDist}\mbracket{U_1^\dagger \ket{g(U_1)} \bra{g(U_1)} U_1} \E_{U_2 \sim \UnitaryDist}\mbracket{U_2^\dagger \ket{g(U_2)} \bra{g(U_2)} U_2}} && (\text{Definition from \Cref{eq:def_MgU}})\\
        &= \E_{U_1, U_2 \sim \UnitaryDist} \mbracket{\abs{\braket{g(U_1)|U_1U_2^\dagger|g(U_2)}}^2}\\
        &\le \E_{U_1, U_2 \sim \UnitaryDist} \mbracket{\max_{z_1,z_2 \in \{0,1\}^n} \abs{\braket{z_1|U_1U_2^\dagger|z_2}}^2}. && (\text{Replacing } g(U_i) \text{ with max over } z_i) &&\qedhere
    \end{align*}
\end{proof}

\begin{lemma}
\label{lem:op_norm_bound}
    Let $\UnitaryDist$ be a distribution over $\Unitaries(2^n)$ and let $g: \Unitaries(2^n) \to \{0,1\}^n$. Then
    \[\norm{M(g; \UnitaryDist)}_{op} \le \max_{\ket{\psi}} \E_{U \sim \UnitaryDist} \mbracket{\max_{z \in \{0,1\}^n} \abs{\braket{\psi|U^\dagger|z}}^2 }\]
\end{lemma}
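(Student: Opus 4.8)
The plan is to use the variational characterization of the operator norm of a positive semidefinite matrix. As noted immediately after \Cref{eq:def_MgU}, the matrix $M(g;\UnitaryDist)$ is positive semidefinite, being an average of rank-one projectors $U^\dagger\ketbra{g(U)}U$. Hence its operator norm equals its largest eigenvalue, which admits the form
\[
\norm{M(g;\UnitaryDist)}_{op} = \max_{\ket{\psi}}\ \braket{\psi|M(g;\UnitaryDist)|\psi},
\]
where the maximum is over unit vectors $\ket{\psi}$.

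Next I would expand this quadratic form using the definition of $M(g;\UnitaryDist)$ together with linearity of expectation:
\[
\braket{\psi|M(g;\UnitaryDist)|\psi} = \E_{U \sim \UnitaryDist}\mbracket{\braket{\psi|U^\dagger|g(U)}\braket{g(U)|U|\psi}} = \E_{U \sim \UnitaryDist}\mbracket{\abs{\braket{\psi|U^\dagger|g(U)}}^2}.
\]
The final step is a pointwise bound inside the expectation: for each fixed $U$, the string $g(U)$ is a particular element of $\{0,1\}^n$, so $\abs{\braket{\psi|U^\dagger|g(U)}}^2 \le \max_{z \in \{0,1\}^n}\abs{\braket{\psi|U^\dagger|z}}^2$. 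Taking the expectation over $U \sim \UnitaryDist$ and then the maximum over unit vectors $\ket{\psi}$ yields exactly the claimed inequality.

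There is no serious obstacle here: the only points that warrant a word of justification are the positive semidefiniteness of $M(g;\UnitaryDist)$ (needed to invoke the variational formula for the operator norm without absolute values or a supremum over $\bra{\psi}M\ket{\phi}$), and the observation that pulling the per-$U$ maximum over $z$ inside the expectation is always legitimate, since it is a pointwise upper bound. The same template---reduce to a quadratic form, expand by linearity of expectation, then replace $g(U)$ by the worst-case $z$---also drives the Frobenius-norm estimate of \Cref{lem:frob_norm_bound}, so the two proofs are structurally parallel and the reader can treat this one as the ``operator-norm analogue.''
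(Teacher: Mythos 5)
Your proposal is correct and matches the paper's proof essentially line for line: both use the variational characterization $\norm{M}_{op} = \max_{\ket{\psi}}\braket{\psi|M|\psi}$ (valid since $M(g;\UnitaryDist)$ is positive semidefinite), expand via \Cref{eq:def_MgU} and linearity of expectation, and then replace $g(U)$ by the pointwise maximum over $z \in \{0,1\}^n$. No gaps.
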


\begin{proof}
    \begin{align*}
        \norm{M(g;\UnitaryDist)}_{op} &= \max_{\ket{\psi}} \braket{\psi|M(g;\UnitaryDist)|\psi} \\
        &= \max_{\ket{\psi}}\E_{U \sim \UnitaryDist} \mbracket{\abs{\braket{\psi|U^\dagger|g(U)}}^2 } && (\text{Definition from \Cref{eq:def_MgU}})\\
        &\le \max_{\ket{\psi}}\E_{U \sim \UnitaryDist} \mbracket{\max_{z \in \{0,1\}^n} \abs{\braket{\psi|U^\dagger|z}}^2 } &&(\text{Replacing } g(U) \text{ with max over } z). \qedhere
    \end{align*}
\end{proof}

\subsubsection{Clifford Product Unitaries}

In this section, we consider the case in which $\UnitaryDist$ is uniform over tensor products of random single-qubit Clifford unitaries.

\begin{lemma}
\label{lem:product_clifford_frob_norm}
    Let $\UnitaryDist$ be the uniform distribution over $n$-qubit tensor products of single-qubit Clifford unitaries, and let $g: \Unitaries(2^n) \to \{0,1\}^n$. Then
    \[\norm{M(g; \UnitaryDist)}_F^2 \le \mparen{\frac{2}{3}}^n.\]
\end{lemma}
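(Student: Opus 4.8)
The plan is to reduce the claim to a single-qubit computation by combining \Cref{lem:frob_norm_bound} with the tensor-product structure of $\UnitaryDist$. First I would apply \Cref{lem:frob_norm_bound}, which gives
\[
\norm{M(g; \UnitaryDist)}_F^2 \le \E_{U_1, U_2 \sim \UnitaryDist}\mbracket{\max_{z_1, z_2 \in \{0,1\}^n} \abs{\braket{z_1|U_1 U_2^\dagger|z_2}}^2},
\]
so it suffices to bound the right-hand side by $(2/3)^n$. Writing $U_1 = C_1 \otimes \cdots \otimes C_n$ and $U_2 = D_1 \otimes \cdots \otimes D_n$ with the $C_i, D_i$ independent uniform single-qubit Cliffords, we get $U_1 U_2^\dagger = V_1 \otimes \cdots \otimes V_n$ with $V_i \coloneqq C_i D_i^\dagger$; since right-multiplication by $D_i^\dagger$ permutes the single-qubit Clifford group, each $V_i$ is again uniform over that group, and the $V_i$ are mutually independent.

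Next I would exploit the product form: for any $z_1, z_2 \in \{0,1\}^n$ we have $\abs{\braket{z_1|U_1 U_2^\dagger|z_2}}^2 = \prod_{i=1}^n \abs{\braket{(z_1)_i|V_i|(z_2)_i}}^2$, where $(z_j)_i$ denotes the $i$-th bit of $z_j$. Hence the inner maximum factorizes, $\max_{z_1, z_2} \abs{\braket{z_1|U_1 U_2^\dagger|z_2}}^2 = \prod_{i=1}^n \max_{a, b \in \{0,1\}} \abs{\braket{a|V_i|b}}^2$, and by independence of the $V_i$,
\[
\E_{U_1, U_2 \sim \UnitaryDist}\mbracket{\max_{z_1, z_2 \in \{0,1\}^n} \abs{\braket{z_1|U_1 U_2^\dagger|z_2}}^2} = \prod_{i=1}^n \E_{V_i}\mbracket{\max_{a, b \in \{0,1\}} \abs{\braket{a|V_i|b}}^2}.
\]
It then remains to show that for a uniform single-qubit Clifford $V$, $\E_V\mbracket{\max_{a,b \in \{0,1\}} \abs{\braket{a|V|b}}^2} = \frac{2}{3}$, after which multiplying over the $n$ qubits yields $\norm{M(g; \UnitaryDist)}_F^2 \le \mparen{\frac{2}{3}}^n$.

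For this last step I would classify the $24$ single-qubit Cliffords by the action $V \mapsto V Z V^\dagger$ on signed Pauli axes. Because the single-qubit Clifford group acts transitively on the six signed axes $\{\pm X, \pm Y, \pm Z\}$, each axis is hit by exactly $24/6 = 4$ Cliffords, so $V Z V^\dagger \in \{\pm Z\}$ with probability $\frac{1}{3}$ and $V Z V^\dagger \in \{\pm X, \pm Y\}$ with probability $\frac{2}{3}$. In the first case $V$ is diagonal or antidiagonal, giving $\max_{a,b}\abs{\braket{a|V|b}}^2 = 1$; in the second case $V\ket{0}$ and $V\ket{1}$ are eigenstates of $X$ or $Y$ (states of the form $\ket{\pm}$ or $\ket{\pm i}$), all of which satisfy $\abs{\braket{a|V|b}}^2 = \frac{1}{2}$ for every $a,b$. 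Thus the single-qubit expectation equals $\frac{1}{3}\cdot 1 + \frac{2}{3}\cdot\frac{1}{2} = \frac{2}{3}$, as needed. The only place requiring any care — and it is elementary, provable either by transitivity plus orbit counting or by direct enumeration of the $24$-element group — is the equidistribution of $V Z V^\dagger$ over the six signed Pauli axes; everything else is a routine factorization.
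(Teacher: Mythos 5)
Your proof is correct and follows essentially the same route as the paper: apply \Cref{lem:frob_norm_bound}, use the group and tensor-product structure to reduce to independent single-qubit Cliffords, and evaluate the per-qubit expectation as $\tfrac{2}{3}$. The only cosmetic difference is that the paper first fixes $z_2 = 0^n$ (using that all nonzero entries of a Clifford unitary share the same magnitude) and averages over the six stabilizer states $U_i\ket{0}$, whereas you keep the max over both bits and count via the conjugation action on signed Pauli axes; both yield $\tfrac{1}{3}\cdot 1 + \tfrac{2}{3}\cdot\tfrac{1}{2} = \tfrac{2}{3}$.
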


\begin{proof}
For $U \sim \UnitaryDist$, write $U = U_1 \otimes \cdots \otimes U_n$ where each $U_i$ is a single-qubit Clifford unitary. Then:
    \begin{align*}
    \norm{M(g; \UnitaryDist)}_F^2
    &\le \E_{U_1,U_2 \sim \UnitaryDist}\mbracket{\max_{z_1, z_2 \in \{0,1\}^n} \abs{\braket{z_1|U_1U_2^\dagger|z_2}}^2 }\\
    &= \E_{U \sim \UnitaryDist}\mbracket{\max_{z_1, z_2 \in \{0,1\}^n} \abs{\braket{z_1|U|z_2}}^2 }\\
    &= \E_{U \sim \UnitaryDist}\mbracket{\max_{z \in \{0,1\}^n} \abs{\braket{z|U|0^n}}^2 }\\
    &= \E_{U \sim \UnitaryDist}\mbracket{\prod_{i=1}^n \max_{z_i \in \{0,1\}} \abs{\braket{z_i|U_i|0}}^2 }\\
    &= \prod_{i=1}^n \E_{U \sim \UnitaryDist}\mbracket{\max_{z_i \in \{0,1\}} \abs{\braket{z_i|U_i|0}}^2 }\\
    &= \prod_{i=1}^n \frac{1}{6}\mparen{\abs{\braket{0|0}}^2 + \abs{\braket{1|1}}^2 + \abs{\braket{0|+}}^2 + \abs{\braket{0|-}}^2 + \abs{\braket{0|i}}^2 + \abs{\braket{0|{-i}}}^2}\\
    &= \mparen{\frac{2}{3}}^n.
\end{align*}
Above, in the first line we apply \Cref{lem:frob_norm_bound}, in the second line we use the group structure of the Clifford group, and in the third line we use the fact that all nonzero entries in a Clifford unitary $U$ have the same magnitude. The fourth line uses the tensor product structure of the unitary $U$, and the fifth line is valid because the product of random variables within the expectation are all independent. The sixth line uses the fact that $U_i\ket{0}$ is uniformly distributed over the six single-qubit stabilizer states $\{\ket{0}, \ket{1}, \ket{+}, \ket{-}, \ket{i}, \ket{-i}\}$, and the last line is a straightforward computation.
\end{proof}

\begin{corollary}
\label{cor:product_clifford_op_norm}
    Let $\UnitaryDist$ be the uniform distribution over $n$-qubit tensor products of single-qubit Clifford unitaries, and let $g: \Unitaries(2^n) \to \{0,1\}^n$. Then
    \[\norm{M(g; \UnitaryDist)}_{op} \le \mparen{\frac{2}{3}}^{n/2}.\]
\end{corollary}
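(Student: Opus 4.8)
The plan is to obtain the operator-norm bound as an immediate corollary of the Frobenius-norm bound just established in \Cref{lem:product_clifford_frob_norm}, using the fact that $M(g;\UnitaryDist)$ is positive semidefinite. Recall from the discussion following \Cref{eq:def_MgU} (see \Cref{sec:proof_ideas}) that $M(g;\UnitaryDist)$, being an average of rank-one projectors, is PSD with trace $1$. For any PSD matrix $M$ with (nonnegative) eigenvalues $\lambda_1,\ldots,\lambda_{2^n}$, the operator norm is $\norm{M}_{op} = \max_i \lambda_i$, while $\norm{M}_F = \sqrt{\sum_i \lambda_i^2} \ge \sqrt{\lambda_{\max}^2} = \lambda_{\max}$. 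Hence $\norm{M}_{op} \le \norm{M}_F$ for every PSD matrix.

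Applying this comparison to $M = M(g;\UnitaryDist)$ and then invoking \Cref{lem:product_clifford_frob_norm} gives
\[
\norm{M(g;\UnitaryDist)}_{op} \le \norm{M(g;\UnitaryDist)}_F \le \sqrt{\mparen{\frac{2}{3}}^n} = \mparen{\frac{2}{3}}^{n/2},
\]
which is the claimed bound. One could instead attempt to bound the operator norm directly through \Cref{lem:op_norm_bound}, but the maximization over states $\ket{\psi}$ there obstructs the clean tensor-product factorization exploited in the proof of \Cref{lem:product_clifford_frob_norm}; the PSD comparison is simpler and loses essentially nothing, since for these ensembles $M(g;\UnitaryDist)$ is dominated by a single large eigenvalue of roughly the size of its Frobenius norm.

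Accordingly, there is no real obstacle: the only points requiring care are (i) correctly citing that $M(g;\UnitaryDist) \succeq 0$, which was already noted when the matrix was introduced, and (ii) the elementary inequality $\norm{\cdot}_{op} \le \norm{\cdot}_F$ on the cone of PSD matrices. Everything else is contained in the preceding lemma.
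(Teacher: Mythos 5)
Your proof is correct and is essentially the paper's own argument: the corollary is obtained directly from \Cref{lem:product_clifford_frob_norm} via the inequality $\norm{\cdot}_{op} \le \norm{\cdot}_F$ (which in fact holds for all matrices via singular values, so the PSD restriction is unnecessary but harmless).
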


\begin{proof}
    This follows from \Cref{lem:product_clifford_frob_norm} and the fact that the Frobenius norm upper bounds the operator norm.
\end{proof}

\subsubsection{Clifford Group and Designs}

In this section, we consider the case in which $\UnitaryDist$ is uniform over the Clifford group, or alternatively drawn from an approximate unitary design.

\begin{lemma}
\label{lem:clifford_frob_norm}
    Let $\UnitaryDist$ be the uniform distribution over the $n$-qubit Clifford group and let $g: \Unitaries(2^n) \to \{0,1\}^n$. Then
    \[\norm{M(g; \UnitaryDist)}_F^2 \le \frac{2}{2^n + 1}.\]
\end{lemma}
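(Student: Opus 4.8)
The plan is to feed the Clifford ensemble into \Cref{lem:frob_norm_bound} and then reduce the resulting quantity to a second moment of a random stabilizer state. First I would apply \Cref{lem:frob_norm_bound} to obtain $\norm{M(g;\UnitaryDist)}_F^2 \le \E_{U_1,U_2 \sim \UnitaryDist}\mbracket{\max_{z_1,z_2 \in \{0,1\}^n}\abs{\braket{z_1|U_1U_2^\dagger|z_2}}^2}$. Since $\UnitaryDist$ is uniform over a group, for each fixed $U_2$ the product $U_1 U_2^\dagger$ is again uniform over the $n$-qubit Clifford group, so the double expectation collapses to $\E_{U \sim \UnitaryDist}\mbracket{\max_{z_1,z_2}\abs{\braket{z_1|U|z_2}}^2}$, i.e., the expected largest squared magnitude of an entry of a uniformly random Clifford matrix.

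The key step is to rewrite this maximum-entry quantity as a fourth-moment sum. For a Clifford $U$, the column $U\ket{0^n}$ is a stabilizer state, so its nonzero amplitudes all share a common magnitude; and in fact (as used in the proof of \Cref{lem:product_clifford_frob_norm}) \emph{every} nonzero entry of $U$ has that same magnitude. Writing $c(U)$ for the number of nonzero entries of the column $U\ket{0^n}$, normalization forces the common magnitude to be $c(U)^{-1/2}$, whence
\[
\max_{z_1,z_2 \in \{0,1\}^n}\abs{\braket{z_1|U|z_2}}^2 = \frac{1}{c(U)} = \sum_{z_1 \in \{0,1\}^n}\abs{\braket{z_1|U|0^n}}^4,
\]
since the column $U\ket{0^n}$ contributes exactly $c(U)$ terms each equal to $c(U)^{-2}$. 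Taking the expectation over $U$ and setting $\ket{\phi} \coloneqq U\ket{0^n}$ — which is a uniformly random $n$-qubit stabilizer state as $U$ ranges over the Clifford group — reduces the problem to evaluating $\E_{\ket{\phi}}\mbracket{\sum_{z_1}\abs{\braket{z_1|\phi}}^4}$.

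Finally I would invoke the fact that the uniform distribution over $n$-qubit stabilizer states is a state $2$-design, so $\E_{\ket{\phi}}\mbracket{\ketbra{\phi}^{\otimes 2}} = \E_{\ket{\psi} \sim \text{Haar}}\mbracket{\ketbra{\psi}^{\otimes 2}} = \frac{2}{2^n(2^n+1)}\,\Pi_{\mathrm{sym}}$, where $\Pi_{\mathrm{sym}}$ is the projector onto the symmetric subspace of $(\C^{2^n})^{\otimes 2}$. Because $\ket{z_1}\otimes\ket{z_1}$ lies in the symmetric subspace, $\E_{\ket{\phi}}\mbracket{\abs{\braket{z_1|\phi}}^4} = \frac{2}{2^n(2^n+1)}$, and summing over the $2^n$ values of $z_1$ gives $\E_{\ket{\phi}}\mbracket{\sum_{z_1}\abs{\braket{z_1|\phi}}^4} = \frac{2}{2^n+1}$. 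Chaining this with the two reductions above yields $\norm{M(g;\UnitaryDist)}_F^2 \le \frac{2}{2^n+1}$, as claimed.

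I expect the middle step to be the only real subtlety: justifying that the maximum-magnitude entry equals the fourth-moment sum $\sum_{z_1}\abs{\braket{z_1|U|0^n}}^4$, which rests on the equal-magnitude property of Clifford matrices and the consequent fact that every column has the same support size. The opening reduction via the group structure and the closing $2$-design computation are routine.
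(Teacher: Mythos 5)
Your proposal is correct and follows essentially the same route as the paper: apply \Cref{lem:frob_norm_bound}, collapse the double expectation via the group structure, use the equal-magnitude property of Clifford matrix entries to identify the maximum squared entry with the fourth-moment sum $\sum_{z}\abs{\braket{z|U|0^n}}^4$, and evaluate via the $2$-design property. The paper phrases the middle step as rewriting the max as $\E_{z \sim U\ket{0^n}}\abs{\braket{z|U|0^n}}^2$ rather than via the support-size count $c(U)$, but this is the same calculation.
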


\begin{proof}
\begin{align*}
    \norm{M(g; \UnitaryDist)}_F^2
    &\le \E_{U_1,U_2 \sim \UnitaryDist}\mbracket{\max_{z_1, z_2 \in \{0,1\}^n} \abs{\braket{z_1|U_1U_2^\dagger|z_2}}^2 }\\
    &= \E_{U \sim \UnitaryDist}\mbracket{\max_{z_1, z_2 \in \{0,1\}^n} \abs{\braket{z_1|U|z_2}}^2 }\\
    &= \E_{U \sim \UnitaryDist}\mbracket{\max_{z \in \{0,1\}^n} \abs{\braket{z|U|0^n}}^2 }\\
    &= \E_{U \sim \UnitaryDist}\mbracket{\E_{z \sim U\ket{0^n}} \abs{\braket{z|U|0^n}}^2 }\\
    &= \E_{U \sim \UnitaryDist}\mbracket{\sum_{z \in \{0,1\}^n} \abs{\braket{z|U|0^n}}^4 }\\
    &= \frac{2}{2^n + 1},
\end{align*}
where in the first line we apply \Cref{lem:frob_norm_bound}, in the second line we use the group structure of the Clifford group, in the third and fourth lines we use the fact that all nonzero entries in a Clifford unitary $U$ have the same magnitude ($z \sim U\ket{0^n}$ denotes that $z$ is sampled by measuring $U\ket{0^n}$ in the computational basis), and in the last line we use the fact that the Clifford group is a $2$-design~\cite{DCEL09-design,Web16-design}.
\end{proof}

To bound $\norm{M(g;\UnitaryDist)}_{op}$ for the Clifford group, we utilize a bound on the moments of output probabilities of random Clifford circuits.
The first three moments of these probabilities are readily computed, as Cliffords form exact unitary 3-designs~\cite{DCEL09-design,Web16-design}.
Although Cliffords do not form exact (or even approximate) unitary 4-designs \cite{Zhu16gracefully}, analytic expressions for higher moments of Cliffords and stabilizers are known \cite{Zhu16gracefully,Gross17SWClifford}, which we can use to prove the following.

\begin{lemma}
\label{lem:clifford_moments}
Let $\UnitaryDist$ be the uniform distribution over the $n$-qubit Clifford group, let $\ket{\psi}$ be an arbitrary $n$-qubit state, and let $\ket{z}$ be a fixed $n$-qubit computational basis state.
Then for $t \in \Naturals$, the $t$-th moments of the output probabilities are bounded by 
    \[
        \E_{U\sim \UnitaryDist} \big[ |\vev{\psi|U|z}|^{2t}\big] \leq \frac{1}{2^n}\prod_{i=0}^{t-2}\frac{2^i+1}{2^n+2^i}\,.
    \]
\end{lemma}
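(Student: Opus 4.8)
The plan is to reduce the claimed bound to a statement about the $t$-th moment operator of random stabilizer states, and then invoke the known closed form for that operator from the Clifford Weingarten calculus \cite{Zhu16gracefully, Gross17SWClifford}.

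First I would use the defining symmetry of the Clifford group: the orbit of the fixed computational basis state $\ket{z}$ under $\UnitaryDist$ is exactly the set of $n$-qubit stabilizer states, and the induced distribution on that set is uniform. Hence $\E_{U \sim \UnitaryDist}\mbracket{\abs{\braket{\psi|U|z}}^{2t}} = \E_{\ket{\phi}}\mbracket{\abs{\braket{\psi|\phi}}^{2t}}$, where $\ket{\phi}$ is uniform over $n$-qubit stabilizer states. Writing $\rho_t \coloneqq \E_{\ket{\phi}}\mbracket{\mparen{\ketbra{\phi}}^{\otimes t}}$, this moment equals the quadratic form $\bra{\psi}^{\otimes t}\rho_t\ket{\psi}^{\otimes t}$. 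Each $\ket{\phi}^{\otimes t}$ lies in the symmetric subspace of $\mparen{\C^{2^n}}^{\otimes t}$, so the support of $\rho_t$ does too; since $\ket{\psi}^{\otimes t}$ is a unit vector, $\bra{\psi}^{\otimes t}\rho_t\ket{\psi}^{\otimes t} \le \norm{\rho_t}_{op}$, a quantity independent of $\ket{\psi}$. Thus it suffices to show $\norm{\rho_t}_{op}$ is at most the right-hand side of the lemma.

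Next I would invoke the known structure of $\rho_t$ (equivalently, of the commutant of $\mathrm{Cl}(n)^{\otimes t}$) established by Zhu et al.\ and Gross--Nezami--Walter: for $n$ in the relevant range (roughly $n \ge t-1$, which always holds in our application), $\rho_t$ is a normalized sum of ``generalized permutation'' operators $r(T)$ indexed by the stochastic Lagrangian subspaces $T$, of which there are exactly $\prod_{i=0}^{t-2}(2^i+1)$ --- precisely the numerator in the lemma. Each permutation-type $T$ contributes $\bra{\psi}^{\otimes t}r(T)\ket{\psi}^{\otimes t}=1$ because $\ket\psi$ is normalized, and one checks (using that $\sum_T r(T)\propto\rho_t$ is positive semidefinite, together with a mild loosening of one normalization factor from $2^n+2^{t-1}$ down to $2^n$) that the denominator may be taken to be $2^n\prod_{i=0}^{t-2}(2^n+2^i)$. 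Combining, $\norm{\rho_t}_{op} \le \frac{\prod_{i=0}^{t-2}(2^i+1)}{2^n\prod_{i=0}^{t-2}(2^n+2^i)} = \frac{1}{2^n}\prod_{i=0}^{t-2}\frac{2^i+1}{2^n+2^i}$, as claimed. An alternative route, avoiding the full Schur--Weyl machinery, is to set up a recursion in $t$ for $\bra{\psi}^{\otimes t}\rho_t\ket{\psi}^{\otimes t}$ by peeling off one tensor copy at a time and tracking the resulting cross terms.

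I expect the main obstacle to be the bookkeeping in the last step: pinning down the exact count $\prod_{i=0}^{t-2}(2^i+1)$ of stochastic Lagrangian subspaces and the correct normalization, and --- most delicately --- controlling the contributions of the non-permutation operators $r(T)$ to the quadratic form for an arbitrary, possibly non-stabilizer, state $\ket{\psi}$, since these contributions need not individually be real or nonnegative even though their weighted sum is. One must also confirm we are in the parameter regime where the sum-over-all-$T$ formula for $\rho_t$ is exact, or otherwise argue that the small-$n$ case only decreases the moment.
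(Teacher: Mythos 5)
Your first step (replacing the Clifford average by a uniform average over stabilizer states, and writing the moment as $\bra{\psi}^{\otimes t}\rho_t\ket{\psi}^{\otimes t}$ for $\rho_t = \E_{\ket{\phi}}\mbracket{\ketbra{\phi}^{\otimes t}}$) matches the paper, as does the count $\prod_{i=0}^{t-2}(2^i+1)$ and the normalization $2^n\prod_{i=0}^{t-2}(2^n+2^i)$. However, the pivotal reduction --- ``it suffices to show $\norm{\rho_t}_{op}$ is at most the right-hand side'' --- discards exactly the structure that makes the lemma true, and the resulting intermediate claim is \emph{false} for $t \ge 4$. The operator norm of $\rho_t$ is attained on vectors that are entangled across the $t$ copies, not on product vectors $\ket{\psi}^{\otimes t}$. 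Concretely, take $t=4$ and the unit vector $\ket{w} = \ket{\Phi}_{12}\otimes\ket{\Phi}_{34}$ with $\ket{\Phi} = 2^{-n/2}\sum_x\ket{x}\ket{x}$. Then $\braket{w|s^{\otimes 4}} = \mparen{2^{-n/2}\sum_x \braket{x|s}^2}^2$, and for a stabilizer state $\ket{s}$ the quantity $\sum_x\braket{x|s}^2 = \braket{s^*|s}$ equals $\pm 1$ when $\ket{s}$ is real up to sign (a $\frac{2}{2^n+1}$ fraction of stabilizer states) and $0$ otherwise. Hence
\begin{equation*}
\bra{w}\rho_4\ket{w} \;=\; \E_{\ket{s}\sim\CS}\mbracket{\abs{\braket{w|s^{\otimes 4}}}^2} \;=\; 2^{-2n}\cdot\frac{2}{2^n+1} \;\approx\; 2^{1-3n},
\end{equation*}
which exceeds the lemma's right-hand side $\approx 30\cdot 2^{-4n}$ for all $n\ge 4$. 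So $\norm{\rho_t}_{op}$ genuinely violates the bound you are trying to prove for it, and no amount of bookkeeping over the stochastic Lagrangian subspaces will rescue that route.

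The second gap is the one you flag yourself but do not resolve: bounding the contribution of the non-permutation operators $r(T)$ (equivalently, the non-permutation ``reduced Pauli monomials'') to the quadratic form. These operators have operator norm as large as $2^{n}$ or more, so the only reason the lemma holds is a termwise bound that is special to \emph{product} vectors: for every state $\rho$ and every such operator $\Omega$, $\Tr\mparen{\Omega\,\rho^{\otimes t}} \le 1$. The paper imports this directly as a quoted lemma from the Clifford Schur--Weyl literature and then simply sums the $\prod_{i=0}^{t-2}(2^i+1)$ terms against the normalization $2^n\prod_{i=0}^{t-2}(2^n+2^i)$; there is no operator-norm step and no cancellation argument. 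To repair your proof you should drop the passage through $\norm{\rho_t}_{op}$ entirely, keep $\ket{\psi}^{\otimes t}$ in the quadratic form throughout, and either cite or prove the product-state bound $\abs{\bra{\psi}^{\otimes t}\Omega\ket{\psi}^{\otimes t}} \le 1$ for each commutant basis element.
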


\begin{proof}
    We want to compute the $t$-th moments of the output probabilities of Clifford unitaries. Similar to the Weingarten formula for Haar-random unitaries and expressions for moments of Haar-random states, Schur-Weyl duality for the Clifford group~\cite{Gross17SWClifford,BELMO25-clifford,LOHEB25-clifford} gives us exact expressions for the moments of random stabilizer states. For some $t\geq 1$, the $t$-th moment of the uniform distribution over $n$-qubit stabilizer states, %
    denoted as $\CS$, is given as~\cite[Lemma 10]{LOHEB25-clifford}
    \[
        \E_{\ket{s}\sim \CS} \big[\ketbra{s}^{\otimes t}] = \frac{1}{2^n \prod_{i=0}^{t-2}(2^n+2^i)} \sum_{\Omega \in \mathcal{P}} \Omega,
    \]
    where $\mathcal{P}$ is the set of so-called ``reduced Pauli monomials'', which are operators on $\C^{2^{n t}}$.
    From \cite[Lemma 8]{LOHEB25-clifford}, we know that $|\mathcal{P}| = \prod_{i=0}^{t - 2}(2^i + 1)$ and $\Tr\mparen{\Omega \rho^{\otimes t}} \le 1$ for every $\Omega \in \mathcal{P}$ and $n$-qubit state $\rho$.
    Proceeding, we have
    \begin{align*}
        \E_{U\sim \UnitaryDist} \mbracket{ |\vev{\psi|U|z}|^{2t}}
        &= \E_{\ket{s}\sim \CS} \mbracket{|\vev{\psi|s}|^{2t}}\\
        &= \frac{1}{2^n \prod_{i=0}^{t-2}(2^n+2^i)} \sum_{\Omega \in \mathcal{P}} \bra{\psi}^{\otimes t}\Omega \ket{\psi}^{\otimes t}\\
        &\le \frac{1}{2^n}\prod_{i=0}^{t-2}\frac{2^i+1}{2^n+2^i}\,. \qedhere
    \end{align*}
\end{proof}

This in turn straightforwardly bounds the operator norm:

\begin{lemma}
\label{lem:clifford_op_norm}
    Let $\UnitaryDist$ be the uniform distribution over the $n$-qubit Clifford group and let $g: \Unitaries(2^n) \to \{0,1\}^n$. Then for any $t \in \Naturals$, we have
    \begin{align*}
    \norm{M(g; \UnitaryDist)}_{op} &\le \mparen{\prod_{i=0}^{t-2}\frac{2^i + 1}{2^n + 2^i}}^{1/t}\\
    &\le 2^{(t-1)/2 + n/t - n}.
    \end{align*}
\end{lemma}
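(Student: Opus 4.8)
The plan is to feed the Clifford moment bound from \Cref{lem:clifford_moments} into the operator-norm estimate of \Cref{lem:op_norm_bound}, using a standard $\ell_{2t}$ trick to convert the maximum over computational basis strings into a quantity the moment bound can control. First I would invoke \Cref{lem:op_norm_bound} to get
\[
\norm{M(g; \UnitaryDist)}_{op} \le \max_{\ket{\psi}} \E_{U \sim \UnitaryDist} \mbracket{\max_{z \in \{0,1\}^n} \abs{\braket{\psi|U^\dagger|z}}^2 },
\]
and then bound the inner maximum over the $2^n$ nonnegative quantities $\abs{\braket{\psi|U^\dagger|z}}^2$ by their $\ell_t$-norm, i.e.\ $\max_{z} \abs{\braket{\psi|U^\dagger|z}}^2 \le \mparen{\sum_{z} \abs{\braket{\psi|U^\dagger|z}}^{2t}}^{1/t}$, valid for any $t \in \Naturals$.

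Since $x \mapsto x^{1/t}$ is concave on $[0,\infty)$, Jensen's inequality lets me pull the expectation over $U$ inside the $t$-th root:
\[
\E_{U \sim \UnitaryDist}\mbracket{\mparen{\sum_{z} \abs{\braket{\psi|U^\dagger|z}}^{2t}}^{1/t}} \le \mparen{\sum_{z} \E_{U \sim \UnitaryDist}\mbracket{\abs{\braket{\psi|U^\dagger|z}}^{2t}}}^{1/t}.
\]
Because the uniform distribution over the Clifford group is invariant under $U \mapsto U^\dagger$ (and $\abs{\cdot}$ is unaffected by conjugation), each summand equals $\E_{U}\mbracket{\abs{\braket{\psi|U|z}}^{2t}}$, which \Cref{lem:clifford_moments} bounds by $\frac{1}{2^n}\prod_{i=0}^{t-2}\frac{2^i+1}{2^n+2^i}$ uniformly in $\ket\psi$ and $z$. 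Summing the $2^n$ identical terms cancels the prefactor $\frac{1}{2^n}$, and taking the $t$-th root yields the first inequality $\norm{M(g; \UnitaryDist)}_{op} \le \mparen{\prod_{i=0}^{t-2}\frac{2^i+1}{2^n+2^i}}^{1/t}$.

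For the second inequality I would bound each factor by $\frac{2^i+1}{2^n+2^i} \le \frac{2^i+1}{2^n} \le \frac{2^{i+1}}{2^n} = 2^{i+1-n}$, so that the product is at most $2^{\sum_{i=0}^{t-2}(i+1-n)}$. Evaluating $\sum_{i=0}^{t-2}(i+1-n) = \frac{t(t-1)}{2} - (t-1)n$ and dividing the exponent by $t$ gives $\frac{t-1}{2} + \frac{n}{t} - n$, as claimed. This argument is short and contains no real obstacle; the only points needing care are using concavity (hence $t \ge 1$) so that Jensen runs in the right direction, and the inversion-invariance of the Clifford ensemble to line up with the form in which \Cref{lem:clifford_moments} is stated. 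The genuinely substantive input---the moment bound of \Cref{lem:clifford_moments}---has already been established, so the present lemma is essentially bookkeeping on top of it.
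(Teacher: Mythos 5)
Your proposal is correct and follows essentially the same route as the paper: apply \Cref{lem:op_norm_bound}, bound the max by the $\ell_t$-norm of the vector of output probabilities, pull the expectation inside the $t$-th root via Jensen/concavity, and then invoke \Cref{lem:clifford_moments} before simplifying the product. The one place you are actually more careful than the paper is in explicitly invoking the inversion-invariance of the uniform Clifford ensemble to reconcile the $U^\dagger$ appearing in \Cref{lem:op_norm_bound} with the $U$ in the statement of \Cref{lem:clifford_moments}, a step the paper's proof elides.
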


\begin{proof}
    For any $n$-qubit state $\ket{\psi}$ we have
    \begin{align*}
        \E_{U \sim \UnitaryDist} \mbracket{\max_{z \in \{0,1\}^n} \abs{\braket{\psi|U^\dagger|z}}^2 }
        &\le \E_{U \sim \UnitaryDist} \mbracket{\mparen{\sum_{z \in \{0,1\}^n} \abs{\braket{\psi|U^\dagger|z}}^{2t}}^{1/t}}\\
        &\le \mparen{\E_{U \sim \UnitaryDist} \mbracket{\sum_{z \in \{0,1\}^n} \abs{\braket{\psi|U^\dagger|z}}^{2t}}}^{1/t},
    \end{align*}
    by the monotonicity of vector norms and Jensen's inequality. We can bound this using the Clifford moments from \Cref{lem:clifford_moments}, via
    \begin{align*}
        \E_{U \sim \UnitaryDist} \mbracket{\sum_{z \in \{0,1\}^n} \abs{\braket{\psi|U^\dagger|z}}^{2t}}
        &= \sum_{z \in \{0,1\}^n} \E_{U \sim \UnitaryDist} \mbracket{ \abs{\braket{\psi|U^\dagger|z}}^{2t}}\\
        &\le \prod_{i=0}^{t-2}\frac{2^i + 1}{2^n + 2^i}
    \end{align*}
    So, by \Cref{lem:op_norm_bound} we conclude that
    \begin{align*}
        \norm{M(g;\UnitaryDist)}_{op} &\le \max_{\ket{\psi}} \E_{U \sim \UnitaryDist} \mbracket{\max_{z \in \{0,1\}^n} \abs{\braket{\psi|U^\dagger|z}}^2 }\\
        &\le \max_{\ket{\psi}} \mparen{\E_{U \sim \UnitaryDist} \mbracket{\sum_{z \in \{0,1\}^n} \abs{\braket{\psi|U^\dagger|z}}^{2t}}}^{1/t}\\
        &\le \mparen{\prod_{i=0}^{t-2}\frac{2^i + 1}{2^n + 2^i}}^{1/t}\\
        &\le \mparen{\prod_{i=0}^{t-2}\frac{2^{i + 1}}{2^n}}^{1/t}\\
        &= 2^{\mparen{\binom{t}{2} - (t-1)n}/t}\\
        &= 2^{(t-1)/2 + n/t - n}.\qedhere
    \end{align*}
\end{proof}

Asymptotically, \Cref{lem:clifford_op_norm} is optimized by choosing $t \approx \sqrt{n}$, which yields the bound $2^{O(\sqrt{n}) - n}$.

The same proof strategy also gives a bound on the operator norm of any approximate unitary design.

\begin{lemma}
\label{lem:design_op_norm}
    Let $\UnitaryDist$ be a $\delta$-approximate $t$-design and let $g: \Unitaries(2^n) \to \{0,1\}^n$. Then
    \begin{align*}
    \norm{M(g; \UnitaryDist)}_{op}
    &\le \mparen{\frac{(1 + \delta)t!}{(2^n + 1)(2^n + 2) \cdots (2^n + t - 1)}}^{1/t}\\
    &\le \frac{\delta + t}{2^{n(t-1)/t}}.
    \end{align*}
\end{lemma}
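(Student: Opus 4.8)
The plan is to follow the same template as the proof of \Cref{lem:clifford_op_norm}, replacing the explicit Clifford moments with the defining property of an approximate $t$-design. First I would invoke \Cref{lem:op_norm_bound} to reduce to bounding $\max_{\ket\psi}\E_{U\sim\UnitaryDist}\mbracket{\max_{z\in\{0,1\}^n}\abs{\braket{\psi|U^\dagger|z}}^2}$. Then, exactly as in \Cref{lem:clifford_op_norm}, I would bound the pointwise maximum by the $\ell_{2t}$ norm via $\max_z\abs{\braket{\psi|U^\dagger|z}}^2 \le \mparen{\sum_z \abs{\braket{\psi|U^\dagger|z}}^{2t}}^{1/t}$ (monotonicity of vector norms), pull the expectation inside using Jensen's inequality, and thereby reduce to estimating $\E_{U\sim\UnitaryDist}\mbracket{\sum_{z} \abs{\braket{\psi|U^\dagger|z}}^{2t}}$.

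The next step is to convert the operator-inequality definition of a $\delta$-approximate $t$-design (\Cref{def:unitary_t_design}) into a scalar moment comparison. Writing $\abs{\braket{\psi|U^\dagger|z}}^{2t} = \bra{z}^{\otimes t}\mbracket{(U\cdot U^\dagger)^{\otimes t}\mparen{\ketbra{\psi}^{\otimes t}}}\ket{z}^{\otimes t}$ and noting that $\ketbra{\psi}^{\otimes t}$ is positive semidefinite, the complete positivity of $(1+\delta)\E_{U\sim\Unitaries(2^n)}\mbracket{(U\cdot U^\dagger)^{\otimes t}} - \E_{U\sim\UnitaryDist}\mbracket{(U\cdot U^\dagger)^{\otimes t}}$ forces $\E_{U\sim\UnitaryDist}\mbracket{\abs{\braket{\psi|U^\dagger|z}}^{2t}} \le (1+\delta)\E_{U\sim\Unitaries(2^n)}\mbracket{\abs{\braket{\psi|U^\dagger|z}}^{2t}}$. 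The Haar moment is standard: by unitary invariance $U\ket\psi$ is Haar-random, so $\abs{\braket{z|U|\psi}}^2$ is $\mathrm{Beta}(1,2^n-1)$-distributed and $\E\mbracket{\abs{\braket{z|U|\psi}}^{2t}} = \frac{t!}{2^n(2^n+1)\cdots(2^n+t-1)}$; summing over the $2^n$ values of $z$ gives $\frac{t!}{(2^n+1)\cdots(2^n+t-1)}$. Substituting back yields the first claimed bound, $\norm{M(g;\UnitaryDist)}_{op} \le \mparen{\frac{(1+\delta)t!}{(2^n + 1)(2^n + 2) \cdots (2^n + t - 1)}}^{1/t}$.

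For the second, cruder inequality I would bound each factor $2^n+i \ge 2^n$, so the denominator product is at least $2^{n(t-1)}$ and contributes a factor $2^{-n(t-1)/t}$ after taking the $t$-th root. For the numerator, $(t!)^{1/t}\le t$, and by concavity of $x\mapsto(1+x)^{1/t}$ for $t\ge 1$ we have $(1+\delta)^{1/t}\le 1+\delta/t$, so $\mparen{(1+\delta)t!}^{1/t}\le t\mparen{1+\delta/t} = t+\delta$, giving $\norm{M(g;\UnitaryDist)}_{op}\le\frac{\delta + t}{2^{n(t-1)/t}}$.

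The only real subtlety---and it is a minor one---is the passage from the completely-positive-map ordering in \Cref{def:unitary_t_design} to the scalar moment comparison: one must observe that applying a completely positive map to the PSD operator $\ketbra{\psi}^{\otimes t}$ and then taking the $\ket{z}^{\otimes t}$ expectation value preserves nonnegativity. Everything else is routine and parallels \Cref{lem:clifford_op_norm}.
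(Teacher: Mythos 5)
Your proposal is correct and follows essentially the same route as the paper's proof: reduce via \Cref{lem:op_norm_bound}, pass to the $2t$-th moment with monotonicity of vector norms and Jensen, compare to the Haar moment using the design property, and evaluate the Haar moment exactly. The only (harmless) differences are that you unpack the completely-positive-map ordering of \Cref{def:unitary_t_design} into the scalar moment inequality directly rather than citing it, and you compute the Haar moment via the Beta distribution rather than quoting the standard formula.
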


\begin{proof}
    Following the same lines as the proof of \Cref{lem:clifford_op_norm}, we know that the moments of an approximate design satisfy:
    \begin{align*}
        \E_{U \sim \UnitaryDist} \mbracket{\sum_{z \in \{0,1\}^n} \abs{\braket{\psi|U^\dagger|z}}^{2t}}
        &= \sum_{z \in \{0,1\}^n} \E_{U \sim \UnitaryDist} \mbracket{ \abs{\braket{\psi|U^\dagger|z}}^{2t}}\\
        &\le (1 + \delta) 2^n\frac{t!}{(2^n)(2^n + 1) \cdots (2^n + t - 1)}. %
    \end{align*}

    This in turn implies:
    \begin{align*}
        \norm{M(g;\UnitaryDist)}_{op} &\le \max_{\ket{\psi}} \E_{U \sim \UnitaryDist} \mbracket{\max_{z \in \{0,1\}^n} \abs{\braket{\psi|U^\dagger|z}}^2 }\\
        &\le \max_{\ket{\psi}} \mparen{\E_{U \sim \UnitaryDist} \mbracket{\sum_{z \in \{0,1\}^n} \abs{\braket{\psi|U^\dagger|z}}^{2t}}}^{1/t}\\
        &\le \max_{\ket{\psi}} \mparen{(1 + \delta) \E_{U \sim \Unitaries(N)} \mbracket{\sum_{z \in \{0,1\}^n} \abs{\braket{\psi|U^\dagger|z}}^{2t}}}^{1/t}\\
        &= \max_{\ket{\psi}} \mparen{(1 + \delta) 2^n \E_{U \sim \Unitaries(N)} \mbracket{\abs{\braket{\psi|U^\dagger|0^n}}^{2t}}}^{1/t}\\
        &= \mparen{\frac{(1 + \delta)t!}{(2^n + 1)(2^n + 2) \cdots (2^n + t - 1)}}^{1/t}\\
        &\le \frac{(1 + \delta)^{1/t}t}{2^{(t-1)n/t}}\\
        &\le \frac{\delta + t}{2^{n(t-1)/t}}
        .
    \end{align*}
    Above, the third line follows from \Cref{def:unitary_t_design}, because the $t$th moments of a $\delta$-approximate $t$-design are within a $1 \pm \delta$ factor of the Haar moments (see, e.g., \cite[Lemma 22]{Kre21-pseudorandom}).
    The fourth line uses linearity of expectation and the invariance of the Haar measure, because $U^\dagger\ket{z}$ is distributed as a Haar-random state for any $z$.
    The fifth line is a standard Haar moment calculation; the invariance of the Haar measure means that the quantity in parentheses is the same for all $\ket{\psi}$.
\end{proof}

Note that if we choose $t = n$ and $\delta$ a small constant, the bound derived above is $O\mparen{\frac{n}{2^n}}$.

\subsubsection{Haar Measure}

Next we consider the case where $\UnitaryDist$ is the $n$-qubit Haar measure, which allows us to deduce a better upper bound on the operator norm. First, we prove a more general statement about Clifford-invariant unitary ensembles.

\begin{lemma}
\label{lem:clifford_invariant_frob_norm}
    Let $\UnitaryDist$ be any distribution over $\Unitaries(2^n)$ that is invariant under left multiplication by a random $n$-qubit Clifford, and let $g: \Unitaries(2^n) \to \{0,1\}^n$. Then
    \[\norm{M(g; \UnitaryDist)}_F^2 \le \frac{2}{2^n + 1}.\]
\end{lemma}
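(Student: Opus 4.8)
The plan is to reduce to the uniform-Clifford case already settled in \Cref{lem:clifford_frob_norm} by \emph{factoring out} a uniformly random Clifford from the ensemble $\UnitaryDist$. Let $C$ be a uniformly random $n$-qubit Clifford drawn independently of $U \sim \UnitaryDist$. By the hypothesized invariance, $CU$ is again distributed as $\UnitaryDist$, so for any function of the unitary we may replace $U$ by $CU$; applied to the definition in \Cref{eq:def_MgU},
\[
M(g;\UnitaryDist) = \E_{U \sim \UnitaryDist}\mbracket{U^\dagger\ket{g(U)}\bra{g(U)}U} = \E_{C,\,U}\mbracket{U^\dagger C^\dagger\ket{g(CU)}\bra{g(CU)}CU}.
\]

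Next I would condition on $U$. For each fixed $U$, define $h_U \colon \Unitaries(2^n)\to\{0,1\}^n$ by $h_U(V)\coloneqq g(VU)$. Then the inner expectation over $C$ is precisely $U^\dagger M(h_U;\,\mathcal C)\,U$, where $\mathcal C$ denotes the uniform distribution over the $n$-qubit Clifford group, so that
\[
M(g;\UnitaryDist) = \E_{U \sim \UnitaryDist}\mbracket{U^\dagger M(h_U;\mathcal C)\,U}.
\]
From here the bound follows from convexity: $\norm{\cdot}_F^2$ is convex, so by Jensen's inequality $\norm{M(g;\UnitaryDist)}_F^2 \le \E_U\mbracket{\norm{U^\dagger M(h_U;\mathcal C)U}_F^2}$, and since the Frobenius norm is unitarily invariant this equals $\E_U\mbracket{\norm{M(h_U;\mathcal C)}_F^2}$. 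Finally, \Cref{lem:clifford_frob_norm} applies to each $h_U$ (it is stated for an arbitrary function $\Unitaries(2^n)\to\{0,1\}^n$), giving $\norm{M(h_U;\mathcal C)}_F^2 \le \frac{2}{2^n+1}$ uniformly in $U$, and hence $\norm{M(g;\UnitaryDist)}_F^2 \le \frac{2}{2^n+1}$.

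The only substantive step is the factoring identity; everything afterward is routine. The one place to be careful is that the statement assumes invariance under left multiplication by a \emph{random} Clifford, which is exactly (and only) what licenses the substitution $U \mapsto CU$ with $C$ independent and uniform on the Clifford group — a weaker hypothesis than invariance under every fixed Clifford. This is also presumably why the lemma is phrased for left-Clifford-invariant ensembles rather than for the Clifford group specifically: in the same breath it yields the identical Frobenius-norm bound for the Haar measure (and for any left-Clifford-invariant design), which is what the subsequent operator-norm refinement for $\UnitaryDist$ Haar builds on.
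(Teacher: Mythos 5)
Your proof is correct and follows essentially the same route as the paper's: factor out an independent uniformly random Clifford using the invariance hypothesis, condition on $U$ to recognize the inner average as $U^\dagger M(g_U;\mathcal{C})U$ (your $h_U$ is the paper's $g_U$), then apply Jensen's inequality for the convex $\norm{\cdot}_F^2$, unitary invariance of the Frobenius norm, and \Cref{lem:clifford_frob_norm}. Nothing to add.
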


\begin{proof}
Let $\mathcal{C}$ denote the uniform distribution over the Clifford group, and for a Clifford unitary $C$ denote by $g_U(C) \coloneqq g(CU)$. Then observe that:
\begin{align*}
    M(g;\UnitaryDist)
    &= \E_{U \sim \UnitaryDist}\mbracket{U^\dagger \ket{g(U)} \bra{g(U)} U}\\
    &= \E_{U \sim \UnitaryDist, C \sim \mathcal{C}}\mbracket{(CU)^\dagger \ket{g(CU)} \bra{g(CU)} CU}\\
    &= \E_{U \sim \UnitaryDist}\mbracket{U^\dagger \E_{C \sim \mathcal{C}}\mbracket{C^\dagger \ket{g_U(C)} \bra{g_U(C)} C}U}\\
    &= \E_{U \sim \UnitaryDist}\mbracket{U^\dagger M(g_U ; \mathcal{C})U},
\end{align*}
where in the first and last lines we apply the definition of $M(\cdot)$ (\Cref{eq:def_MgU}), and in the second line we use the Clifford invariance of $\UnitaryDist$. Finally, notice that
\begin{align*}
    \norm{M(g;\UnitaryDist)}_F^2
    &= \norm{\E_{U \sim \UnitaryDist}\mbracket{U^\dagger M(g_U ; \mathcal{C})U}}_F^2\\
    &\le \E_{U \sim \UnitaryDist}\mbracket{\norm{U^\dagger M(g_U ; \mathcal{C})U}_F^2}\\
    &= \E_{U \sim \UnitaryDist}\mbracket{\norm{M(g_U ; \mathcal{C})}_F^2}\\
    &\le \frac{2}{2^n + 1},
\end{align*}
where the second line follows from Jensen's inequality and the convexity of the Frobenius norm, the third line holds by the unitary invariance of the Frobenius norm, and the last line appeals to \Cref{lem:clifford_frob_norm}.
\end{proof}

Because the Haar measure is invariant under multiplication by \textit{any} unitary, and therefore left-Clifford invariant, \Cref{lem:clifford_invariant_frob_norm} implies following:

\begin{corollary}
\label{cor:haar_frob_norm}
    Let $\UnitaryDist$ be the Haar measure over $\Unitaries(2^n)$ and let $g: \Unitaries(2^n) \to \{0,1\}^n$. Then
    \[\norm{M(g; \UnitaryDist)}_F^2 \le \frac{2}{2^n + 1}.\]
\end{corollary}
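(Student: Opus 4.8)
The plan is to derive this as an immediate consequence of \Cref{lem:clifford_invariant_frob_norm}. That lemma bounds $\norm{M(g;\UnitaryDist)}_F^2$ by $\frac{2}{2^n+1}$ for \emph{any} distribution $\UnitaryDist$ over $\Unitaries(2^n)$ that is invariant under left multiplication by a uniformly random $n$-qubit Clifford unitary. So the only thing to verify is that the Haar measure satisfies this invariance hypothesis, after which the conclusion is just a restatement of the lemma.

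By definition, the Haar measure over $\Unitaries(2^n)$ is invariant under left multiplication by \emph{any} fixed unitary $V \in \Unitaries(2^n)$: if $U \sim \UnitaryDist$, then $VU$ has the same distribution as $U$. In particular this holds whenever $V = C$ is a fixed Clifford. Since it holds for every fixed Clifford separately, it continues to hold after averaging $C$ over the uniform distribution on the $n$-qubit Clifford group: for $C$ a random Clifford drawn independently of $U \sim \UnitaryDist$, the product $CU$ is still Haar-distributed. Hence the Haar measure is invariant under left multiplication by a random Clifford, and \Cref{lem:clifford_invariant_frob_norm} applies directly to give $\norm{M(g;\UnitaryDist)}_F^2 \le \frac{2}{2^n+1}$.

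There is essentially no obstacle here; the corollary is a one-line specialization, and in particular it requires no further Weingarten-type computation beyond what already went into \Cref{lem:clifford_frob_norm}. The only point perhaps worth a clause is noting that full unitary invariance of the Haar measure is strictly stronger than the left-Clifford invariance demanded by \Cref{lem:clifford_invariant_frob_norm} (the Clifford group being a subgroup of $\Unitaries(2^n)$), so the hypothesis is trivially met. I would therefore simply write: ``Immediate from \Cref{lem:clifford_invariant_frob_norm}, since the Haar measure is invariant under left multiplication by any unitary, and hence in particular by a uniformly random Clifford.''
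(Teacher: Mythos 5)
Your proposal is correct and matches the paper's argument exactly: the paper also derives \Cref{cor:haar_frob_norm} in one line from \Cref{lem:clifford_invariant_frob_norm}, noting that the Haar measure is invariant under left multiplication by any unitary and hence in particular by a random Clifford. No further comment is needed.
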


We next move to bounding the operator norm for the unitary Haar measure.
Obtaining the sharpest possible bound will make use of the following probability calculation, whose proof is reproduced from~\cite{Kab18-simplex}.

\begin{lemma}[{\cite{Kab18-simplex}}]
\label{lem:simplex_max}
    Let $X = (X_1,\ldots,X_N)$ be the vector of a random probability distribution on $N$ variables (i.e., $X$ is a random vector over the $(N-1)$-simplex, or equivalently a Dirichlet distribution $\mathrm{Dir}(1,\ldots,1)$).
    Then
    \[
    \E\mbracket{\max_i X_i} = \frac{H_N}{N},
    \]
    where $H_N \coloneqq \sum_{i=1}^N \frac{1}{i}$ is the $N$th harmonic number.
\end{lemma}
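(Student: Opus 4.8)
The plan is to pass from the Dirichlet representation to i.i.d.\ exponentials, where the maximum becomes a classical quantity. First I would let $E_1, \ldots, E_N$ be i.i.d.\ rate-$1$ exponential random variables and set $S \coloneqq \sum_{j=1}^N E_j$. It is standard that the normalized vector $\mparen{E_1/S, \ldots, E_N/S}$ is distributed as $\mathrm{Dir}(1,\ldots,1)$, i.e.\ uniform on the $(N-1)$-simplex, and moreover that this vector is \emph{independent} of the sum $S$ (since $S$ is $\mathrm{Gamma}(N,1)$ and the Gamma--Dirichlet factorization decouples the ``radius'' $S$ from the ``direction'' $X$). Hence $\max_i X_i$ has the same distribution as $\mparen{\max_i E_i}/S$ with numerator independent of $S$, and since $\max_i E_i = S \cdot \max_i X_i$ we get
\[
\E\mbracket{\max_i E_i} = \E[S] \cdot \E\mbracket{\max_i X_i}.
\]

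Next I would evaluate the two factors on the left. Clearly $\E[S] = N$. For the maximum, I would invoke R\'enyi's representation: writing $E_{(1)} \le \cdots \le E_{(N)}$ for the order statistics, the spacings $E_{(1)}, E_{(2)} - E_{(1)}, \ldots, E_{(N)} - E_{(N-1)}$ are independent exponentials of rates $N, N-1, \ldots, 1$ respectively, an immediate consequence of memorylessness together with the fact that the minimum of $k$ i.i.d.\ rate-$1$ exponentials has rate $k$. Summing their means gives $\E\mbracket{\max_i E_i} = \E\mbracket{E_{(N)}} = \sum_{k=1}^N \frac{1}{k} = H_N$. Dividing yields $\E\mbracket{\max_i X_i} = H_N / N$, as claimed.

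A second, more computational route I could take instead uses the classical formula for the distribution of the largest spacing: dropping $N-1$ i.i.d.\ uniform points in $[0,1]$ creates $N$ gaps whose maximum exceeds $t$ with probability $\sum_{k=1}^N (-1)^{k+1}\binom{N}{k}(1-kt)_+^{N-1}$, by inclusion--exclusion over which gaps are ``long''. Integrating over $t \in [0,1]$ gives $\E\mbracket{\max_i X_i} = \sum_{k=1}^N (-1)^{k+1}\binom{N}{k}\frac{1}{kN}$, and one finishes with the identity $\sum_{k=1}^N \frac{(-1)^{k+1}}{k}\binom{N}{k} = H_N$.

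The main obstacle --- really the only subtlety --- is cleanly justifying the independence of $S$ from the normalized vector in the first approach (or, in the second approach, justifying the inclusion--exclusion formula for the largest spacing and evaluating the alternating harmonic sum); both are textbook facts, so I would cite them rather than reprove them, following the treatment in \cite{Kab18-simplex}.
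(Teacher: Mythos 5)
Your argument is correct, and both of your routes reach the right answer; the first one differs from the paper's proof in one interesting step. The paper also passes to i.i.d.\ exponentials $Y_1,\dots,Y_N$ and also invokes R\'enyi's spacing representation, but it never uses the independence of the normalized vector from the sum. Instead it writes $\E\bigl[\max_i X_i\bigr] = \E\bigl[\tfrac{\sum_i Z_i/i}{\sum_i Z_i}\bigr]$ (with $Z_i$ the i.i.d.\ spacing variables) and evaluates this ratio by an exchangeability trick: permuting the $Z_i$ in the numerator leaves the expectation unchanged, so averaging over all permutations replaces the coefficients $1/i$ by their mean $H_N/N$ and the ratio collapses to $H_N/N$ exactly. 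Your version instead factors $\E[\max_i E_i] = \E[S]\,\E[\max_i X_i]$ via the Gamma--Dirichlet independence of the ``direction'' from the ``radius,'' and then computes the two factors separately. Both are sound; yours is arguably the more standard textbook decomposition but leans on citing the independence fact, while the paper's symmetrization is self-contained given R\'enyi's representation (and, pleasantly, never needs to know the distribution of $S$ at all). Your second, inclusion--exclusion route through the largest uniform spacing is also valid and fully elementary, at the cost of the alternating-sum identity $\sum_{k=1}^N \frac{(-1)^{k+1}}{k}\binom{N}{k} = H_N$ at the end. One small point worth making explicit in your first route: you need the independence of $S$ from the \emph{entire} normalized vector (hence from its maximum), not merely coordinatewise independence; the Gamma--Dirichlet factorization does give you this, so the step is fine once stated carefully.
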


\begin{proof}
    It is well-known that $X$ has the same distribution as
    \[
    \mparen{\frac{Y_1}{Y_1 + \dots + Y_N}, \frac{Y_2}{Y_1 + \dots + Y_N}, \ldots, \frac{Y_N}{Y_1 + \dots + Y_N}},
    \]
    where $Y = (Y_1,\ldots,Y_N)$ is a vector of $N$ independent mean-$1$ exponential random variables.
    A result of R\'enyi~\cite{Ren53-order} shows that the vector $(B_N, B_{N-1},\ldots,B_1)$ of order statistics of $Y$ (meaning $B_i$ is the $i$th-largest coordinate of $Y$) has the same distribution as
    \[
    \mparen{\frac{Z_N}{N}, \frac{Z_N}{N} + \frac{Z_{N-1}}{N-1}, \ldots, \frac{Z_N}{N} + \frac{Z_{N-1}}{N-1} + \dots + \frac{Z_1}{1}},
    \]
    where $Z_1,\ldots,Z_N$ are independent mean-$1$ exponential random variables.
    Hence
    \[
    \E\mbracket{\max_i X_i} = \E\mbracket{\frac{\max_i Y_i}{Y_1 + \dots + Y_N}} = 
    \E\mbracket{\frac{\sum_{i=1}^N Z_i / i}{\sum_{i=1}^N Z_i}}.
    \]
    Notice that, because the variables $Z_i$ are independent and identically distributed, the above expression is invariant under permutations.
    That is, for any permutation $\pi: [N] \to [N]$ we have
    \[
    \E\mbracket{\frac{\sum_{i=1}^N Z_i / i}{\sum_{i=1}^N Z_i}} = \E\mbracket{\frac{\sum_{i=1}^N Z_{\pi(i)} / i}{\sum_{i=1}^N Z_i}}.
    \]
    Averaging over a uniformly random $\pi$ makes the coefficient in front of each $Z_i$ the same, which yields
    \begin{align*}
    \E\mbracket{\max_i X_i}
    &= \E_\pi\mbracket{\frac{\sum_{i=1}^N Z_{\pi(i)} / i}{\sum_{i=1}^N Z_i}}\\
    &= \mparen{\frac{\sum_{i=1}^N 1/i}{N}}\E\mbracket{\frac{\sum_{i=1}^N Z_{i}}{\sum_{i=1}^N Z_i}}\\
    &= \frac{H_N}{N}.\qedhere
    \end{align*}
\end{proof}

From the above lemma, an operator norm bound easily follows.

\begin{lemma}
\label{lem:haar_op_norm}
    Let $\UnitaryDist$ be the Haar measure over $\Unitaries(2^n)$ and let $g: \Unitaries(2^n) \to \{0,1\}^n$. Then
    \[\norm{M(g; \UnitaryDist)}_{op} \le \frac{H_{2^n}}{2^n},\]
    where $H_N \coloneqq \sum_{i=1}^N \frac{1}{i} \le \ln(N) + 1$ is the $N$th harmonic number.
\end{lemma}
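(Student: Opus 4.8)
The plan is to reduce the operator-norm bound to the simplex-maximum computation of \Cref{lem:simplex_max}. By \Cref{lem:op_norm_bound},
\[
\norm{M(g;\UnitaryDist)}_{op} \le \max_{\ket{\psi}} \E_{U \sim \UnitaryDist}\mbracket{\max_{z \in \{0,1\}^n} \abs{\braket{\psi|U^\dagger|z}}^2},
\]
so it suffices to show that for every fixed $n$-qubit state $\ket{\psi}$ the inner expectation equals exactly $H_{2^n}/2^n$, since then the outer maximization is vacuous.

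First I would fix $\ket{\psi}$ and note that $\abs{\braket{\psi|U^\dagger|z}}^2 = \abs{\braket{z|U|\psi}}^2$, and that by left-invariance of the Haar measure the state $\ket{\phi} \coloneqq U\ket{\psi}$ is itself Haar-random over $n$-qubit states. Hence the random vector $p \coloneqq \mparen{\abs{\braket{z|\phi}}^2}_{z \in \{0,1\}^n}$ is distributed as the squared-magnitude vector of a Haar-random pure state in dimension $N = 2^n$. Next I would identify this with the uniform distribution on the $(N-1)$-simplex: using the Gaussian $n$-qubit state representation of \Cref{eq:def_gaussian_state}, write $\ket{\phi} = \ket{\xi}/\norm{\ket{\xi}}$ where $\ket{\xi} = \sum_x (\alpha_x + i\beta_x)\ket{x}$ has i.i.d.\ mean-zero Gaussian coordinates; then each $\abs{\braket{x|\xi}}^2 = \alpha_x^2 + \beta_x^2$ is (up to a common scale) an independent mean-$1$ exponential random variable, and normalizing a vector of $N$ i.i.d.\ exponentials produces precisely a point distributed uniformly on the simplex, i.e.\ $p \sim \mathrm{Dir}(1,\ldots,1)$.

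Applying \Cref{lem:simplex_max} then gives $\E\mbracket{\max_z p_z} = H_N/N = H_{2^n}/2^n$, which is independent of $\ket{\psi}$, so combining with \Cref{lem:op_norm_bound} yields $\norm{M(g;\UnitaryDist)}_{op} \le H_{2^n}/2^n$. The stated closed form follows from the standard harmonic-number estimate $H_N = \sum_{i=1}^N \frac{1}{i} \le \ln(N) + 1$.

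The only step that is not pure bookkeeping is the identification of the squared-amplitude distribution of a Haar-random state with $\mathrm{Dir}(1,\ldots,1)$; but this is a well-known fact that drops out immediately from the Gaussian-vector description of the Haar measure already set up in the Preliminaries, so I expect no real obstacle here.
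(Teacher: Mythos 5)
Your proof is correct and follows essentially the same route as the paper: reduce via \Cref{lem:op_norm_bound}, use Haar invariance to eliminate the dependence on $\ket{\psi}$, identify the squared amplitudes of a Haar-random state with a uniform point on the simplex, and apply \Cref{lem:simplex_max}. The only difference is that you spell out the Gaussian-normalization argument for why the amplitude vector is $\mathrm{Dir}(1,\ldots,1)$, which the paper simply asserts as a known fact.
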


\begin{proof}
    \begin{align*}
        \norm{M(g; \UnitaryDist)}_{op}
        &\le \max_{\ket{\psi}} \E_{U \sim \UnitaryDist} \mbracket{\max_{z \in \{0,1\}^n} \abs{\braket{\psi|U^\dagger|z}}^2 }\\
        &= \E_{U \sim \UnitaryDist} \mbracket{\max_{z \in \{0,1\}^n} \abs{\braket{0^n|U^\dagger|z}}^2 }\\
        &= \frac{H_{2^n}}{2^n}.
    \end{align*}
    Above, the first line is a restatement of \Cref{lem:op_norm_bound}, and the second line uses the invariance of the Haar measure---the choice of $\ket{\psi}$ is arbitrary, so we might as well set $\ket{\psi} = \ket{0^n}$. The last line follows from \Cref{lem:simplex_max} and the fact that the measurement distribution of a Haar-random quantum state in any fixed basis is uniform over the unit simplex.
\end{proof}

\section{Classical Upper Bound for DXHOG}\label{sec:classical_ub}

In this section, we obtain a reasonable upper bound on the number of classical bits needed to solve $\eps$-DXHOG.
Although this is not necessary for the interpretation of the experiment, it lets us evaluate the tightness of our lower bounds.
The protocol is most easily described as a \textit{randomized} classical communication protocol, though by convexity, there must exist some fixing of the randomness that yields a deterministic protocol achieving the same $\FXEB$.
We start by describing a protocol for DXHOG in the case when $\UnitaryDist$ is the Haar measure.
We explain later (in \Cref{cor:classical_upper_bound}) how to further randomize this protocol to make it achieve the same $\FXEB$ for \textit{any} measurement distribution $\UnitaryDist$.

The Haar-random protocol is defined as follows.
Assuming Alice and Bob have $m$ bits of communication, they initialize the randomness of the protocol by choosing $2^m$ independent states $\{\ket{\varphi_x}\}_{x \in \{0,1\}^m}$ from the Haar measure over $n$-qubit states.
Alice, given $\ket{\psi} \sim \StateDist$, communicates to Bob the string $x \in \{0,1\}^m$ that maximizes the corresponding fidelity $\abs{\braket{\varphi_x|\psi}}^2$.
Bob, given $x$ and $U \sim \UnitaryDist$, outputs the string $z \in \{0,1\}^n$ that maximizes the inner product $\abs{\braket{z|U|\varphi_x}}^2$ with respect to $U\ket{\varphi_x}$.

We analyze the $\FXEB$ achieved by this protocol in a similar fashion to the lower bound, by using the matrices $M(g;\UnitaryDist)$.
Define $g_x: \Unitaries(2^n) \to \{0,1\}^n$ by
\begin{equation}
\label{def:g_x}
g_x(U) \coloneqq \argmax_{z \in \{0,1\}^n}\  \abs{\braket{z|U|\varphi_x}}^2,
\end{equation}
which is the function Bob evaluates to output $z$.
Observe that after fixing the randomness $\{\ket{\varphi_x}\}_{x \in \{0,1\}^m}$, the state $\ket{\psi}$ given to Alice, and the message $x$ from Alice to Bob, the expected $\FXEB$ averaged over Bob's Haar-random unitary $U \sim \UnitaryDist$ is
\begin{align*}
2^n\E_{U\sim \UnitaryDist}\mbracket{\abs{\braket{z|U|\psi}}^2} - 1
&= 2^n\E_{U\sim \UnitaryDist}\mbracket{\abs{\braket{g_x(U)|U|\psi}}^2} - 1\\
&= 2^n \braket{\psi|M(g_x;\UnitaryDist)|\psi} - 1,
\end{align*}
recalling the definition of $M(g; \UnitaryDist)$ (\Cref{eq:def_MgU}).
Our first lemma shows that this quantity is a linear function of the fidelity $\abs{\braket{\varphi_x|\psi}}^2$:

\begin{lemma}
    \label{lem:fidelity_harmonic}
    Given some $n$-qubit state $\ket{\varphi}$, define $g: \Unitaries(2^n) \to \{0,1\}^n$ by
    \[
    g(U) \coloneqq \argmax_{z \in \{0,1\}^n}\  \abs{\braket{z|U|\varphi}}^2.
    \]
    If $\UnitaryDist$ is the Haar measure over $\Unitaries(2^n)$, then for any $n$-qubit state $\ket{\psi}$,
    \[
    \braket{\psi|M(g; \UnitaryDist)|\psi} = \frac{H_{2^n} - 1}{2^n - 1} \abs{\braket{\varphi|\psi}}^2 + \frac{2^n - H_{2^n}}{2^n(2^n - 1)},
    \]
    where $H_N \coloneqq \sum_{i=1}^N \frac{1}{i}$ is the $N$th harmonic number.
\end{lemma}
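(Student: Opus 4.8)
The plan is to exploit the unitary invariance of the Haar measure to replace $U\ket{\varphi}$ by a Haar-random state, and then to track how $U$ acts on the part of $\ket{\psi}$ orthogonal to $\ket{\varphi}$. First I would write $\ket{\psi} = \alpha\ket{\varphi} + \beta\ket{\varphi^\perp}$, where $\ket{\varphi^\perp}$ is a unit vector orthogonal to $\ket{\varphi}$ and $\abs{\alpha}^2 = \abs{\braket{\varphi|\psi}}^2$, $\abs{\beta}^2 = 1 - \abs{\braket{\varphi|\psi}}^2$. Since $\mparen{U\ket{\varphi}, U\ket{\varphi^\perp}}$ has the same distribution as the first two columns of a Haar-random unitary, I can generate $U$ by first sampling $\ket{\phi} \coloneqq U\ket{\varphi}$ from the Haar measure over $n$-qubit states, and then sampling $\ket{\phi^\perp} \coloneqq U\ket{\varphi^\perp}$ from the Haar measure on unit vectors in the orthogonal complement of $\ket{\phi}$. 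The key observation is that $z^* \coloneqq g(U) = \argmax_{z \in \{0,1\}^n} \abs{\braket{z|\phi}}^2$ depends only on $\ket{\phi}$.

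Next I would expand
\[
\braket{\psi|M(g;\UnitaryDist)|\psi} = \E_U\mbracket{\abs{\braket{z^*|U|\psi}}^2} = \E\mbracket{\abs{\alpha\braket{z^*|\phi} + \beta\braket{z^*|\phi^\perp}}^2},
\]
which splits into $\abs{\alpha}^2\,\E[\abs{\braket{z^*|\phi}}^2]$, $\abs{\beta}^2\,\E[\abs{\braket{z^*|\phi^\perp}}^2]$, and a cross term. The cross term vanishes: conditioning on $\ket{\phi}$ fixes both $z^*$ and $\braket{z^*|\phi}$, while $\ket{\phi^\perp}$ still carries a uniformly random global phase, so $\E[\braket{z^*|\phi^\perp}\mid\ket{\phi}] = 0$. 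For the first surviving term, $\E[\abs{\braket{z^*|\phi}}^2] = \E[\max_z \abs{\braket{z|\phi}}^2] = \frac{H_{2^n}}{2^n}$ by \Cref{lem:simplex_max}, since the vector of measurement probabilities of a Haar-random $n$-qubit state is uniform on the simplex.

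For the second surviving term I would again condition on $\ket{\phi}$ and decompose $\ket{z^*} = c\ket{\phi} + s\ket{w}$, where $\ket{w}$ is a unit vector in $\ket{\phi}^\perp$ and $\abs{c}^2 = \abs{\braket{z^*|\phi}}^2$. Then $\braket{z^*|\phi^\perp} = \bar{s}\braket{w|\phi^\perp}$ because $\braket{\phi|\phi^\perp} = 0$, and since $\ket{\phi^\perp}$ is Haar-random in a space of dimension $2^n - 1$, the elementary identity $\E[\abs{\braket{w|\phi^\perp}}^2\mid\ket{\phi}] = \frac{1}{2^n - 1}$ gives $\E[\abs{\braket{z^*|\phi^\perp}}^2\mid\ket{\phi}] = \frac{1 - \abs{\braket{z^*|\phi}}^2}{2^n - 1}$. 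Averaging over $\ket{\phi}$ and reusing the value $\frac{H_{2^n}}{2^n}$ from the previous step yields $\E[\abs{\braket{z^*|\phi^\perp}}^2] = \frac{2^n - H_{2^n}}{2^n(2^n - 1)}$. Combining the two terms gives $\braket{\psi|M(g;\UnitaryDist)|\psi} = \abs{\braket{\varphi|\psi}}^2\frac{H_{2^n}}{2^n} + \mparen{1 - \abs{\braket{\varphi|\psi}}^2}\frac{2^n - H_{2^n}}{2^n(2^n-1)}$, and simplifying the coefficient of $\abs{\braket{\varphi|\psi}}^2$ produces the stated $\frac{H_{2^n} - 1}{2^n - 1}$, completing the proof.

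I expect the main subtlety to be justifying that $U$ can be generated as $(\ket{\phi}, \ket{\phi^\perp})$ with exactly the claimed conditional law, and—relatedly—handling the fact that $\ket{z^*}$ is \emph{not} orthogonal to $\ket{\phi}$, so the second term genuinely requires the orthogonal decomposition of $\ket{z^*}$ rather than a naive Haar moment formula. Everything else is routine bookkeeping: the cross term dies by phase invariance, and the two non-trivial expectations reduce to \Cref{lem:simplex_max} and $\E[\abs{\braket{w|v}}^2] = 1/\dim$ for a Haar-random $v$.
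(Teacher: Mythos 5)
Your proof is correct, and it arrives at the right formula, but it takes a genuinely different route from the paper. The paper's proof is a symmetry argument: since $g$ depends only on $U\ket{\varphi}$, it is invariant under right-multiplication of $U$ by any unitary $V$ fixing $\ket{\varphi}$, and Haar invariance then shows $M(g;\UnitaryDist)$ commutes with every such $V$. By Schur's lemma this forces $M(g;\UnitaryDist) = p\ketbra{\varphi} + (1-p)\frac{I}{2^n}$, so the whole matrix is pinned down by the single scalar $\braket{\varphi|M(g;\UnitaryDist)|\varphi} = \E\mbracket{\max_z \abs{\braket{z|U|\varphi}}^2} = \frac{H_{2^n}}{2^n}$ from \Cref{lem:simplex_max}. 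You instead compute $\braket{\psi|M(g;\UnitaryDist)|\psi}$ head-on: decompose $\ket{\psi}$ along and orthogonal to $\ket{\varphi}$, realize $(U\ket{\varphi}, U\ket{\varphi^\perp})$ as the columns of a Haar isometry, kill the cross term by the conditional phase invariance of $\ket{\phi^\perp}$, evaluate the diagonal term along $\ket{\varphi}$ via \Cref{lem:simplex_max}, and evaluate the orthogonal term via the conditional decomposition of $\ket{z^*}$ and the first-moment identity $\E\mbracket{\abs{\braket{w|\phi^\perp}}^2} = \frac{1}{2^n-1}$; all of these steps check out, and the final algebra matches the stated coefficients. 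The trade-off is that the paper's argument is shorter and makes the structure transparent (one invariance observation reduces everything to a single scalar, which is also how it immediately yields \Cref{eq:varphi_decomposition_M} for reuse), whereas your computation is more elementary --- it needs no appeal to irreducibility --- at the cost of the extra bookkeeping you correctly identified: justifying the two-stage sampling of $(\ket{\phi},\ket{\phi^\perp})$ and handling the fact that $\ket{z^*}$ is not orthogonal to $\ket{\phi}$.
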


\begin{proof}
    Observe that $g$ is invariant under right multiplication by unitaries that fix $\ket{\varphi}$, meaning that for any $V \in \Unitaries(2^n)$ satisfying $V\ket{\varphi} = \ket{\varphi}$, we have $g(UV) = g(U)$.
    It follows that $M(g;\UnitaryDist)$ is also invariant under conjugation by such unitaries $V$:
    \begin{align*}
    V^\dagger M(g; \UnitaryDist) V
    &= V^\dagger \E_{U \sim \UnitaryDist}\mbracket{U^\dagger \ket{g(U)} \bra{g(U)} U}V && (\mathrm{\Cref{eq:def_MgU}})\\
    &= \E_{U \sim \UnitaryDist}\mbracket{V^\dagger U^\dagger \ket{g(U)} \bra{g(U)} UV} && (\text{Linearity of expectation})\\
    &= \E_{U \sim \UnitaryDist}\mbracket{V^\dagger U^\dagger \ket{g(UV)} \bra{g(UV)} UV} && (g(U) = g(UV))\\
    &= \E_{U \sim \UnitaryDist}\mbracket{U^\dagger \ket{g(U)} \bra{g(U)} U} && (\text{Invariance of Haar measure})\\
    &= M(g; \UnitaryDist). && (\mathrm{\Cref{eq:def_MgU}})
    \end{align*}

Because $M(g; \UnitaryDist)$ is invariant under conjugation by \textit{every} such $V$, $M(g; \UnitaryDist)$ must decompose as a linear combination:
\[
M(g; \UnitaryDist) = \alpha \ketbra{\varphi} + \beta I_\perp,
\]
where $\alpha, \beta > 0$ and $I_\perp$ is the identity matrix on the subspace orthogonal to $\ket{\varphi}$.
Since $M(g; \UnitaryDist)$ has trace $1$, we can also express it as a probabilistic mixture of trace-$1$ matrices:
\begin{equation}
M(g; \UnitaryDist) = p\ketbra{\varphi} + (1 - p)\frac{I}{2^n},
\label{eq:varphi_decomposition_M}
\end{equation}
where now $I$ is the identity on the entire $2^n$-dimensional space, and $p \in [0, 1]$.
To compute $p$, it suffices to compute the expectation of $M(g; \UnitaryDist)$ with respect to $\ket{\varphi}$:
\begin{align*}
    p + \frac{(1 - p)}{2^n} &= \braket{\varphi|M(g;\UnitaryDist)|\varphi} && (\mathrm{\Cref{eq:varphi_decomposition_M}})\\
    &= \E_{U \sim \UnitaryDist}\mbracket{\abs{\braket{g(U)|U|\varphi}}^2} && (\mathrm{\Cref{eq:def_MgU}})\\
    &= \E_{U \sim \UnitaryDist}\mbracket{\max_{z \in \{0,1\}^n}\abs{\braket{z|U|\varphi}}^2} && (\text{Definition of $g$})\\
    &= \frac{H_{2^n}}{2^n}. && (\mathrm{\Cref{lem:simplex_max}})
\end{align*}
Therefore,
\[
p = \frac{H_{2^n} - 1}{2^n - 1}.
\]
Using \Cref{eq:varphi_decomposition_M}, it follows that
\[
\braket{\psi|M(g;\UnitaryDist)|\psi} = \frac{H_{2^n} - 1}{2^n - 1} \abs{\braket{\varphi|\psi}}^2 + \frac{2^n - H_{2^n}}{2^n(2^n - 1)},
\]
as desired.
\end{proof}

We can use this to compute the $\FXEB$ achieved by our classical protocol for Haar-random measurements:

\begin{theorem}
    \label{thm:classical_upper_bound}
    Let $\StateDist$ be \emph{any} distribution over $n$-qubit states, and let $\UnitaryDist$ be the Haar measure over $\Unitaries(2^n)$.
    Then with $m$ bits of classical communication, the protocol above solves $\eps$-DXHOG for
    \[
    \eps = (H_{2^n} - 1) \mparen{1 - \frac{2^n}{2^n - 1}\int_0^1 (1 - u^{2^n - 1})^{2^m} \mathrm{d}u},
    \]
    where $H_N \coloneqq \sum_{i=1}^N \frac{1}{i}$ is the $N$th harmonic number.
\end{theorem}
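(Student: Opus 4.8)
The plan is to combine \Cref{lem:fidelity_harmonic} with a direct computation of the maximum fidelity Alice attains over her $2^m$ candidate states. First I would observe that, after conditioning on the shared randomness $\{\ket{\varphi_x}\}_{x \in \{0,1\}^m}$, on Alice's input $\ket{\psi}$, and on her message $x$, the conditional $\FXEB$ equals $2^n\braket{\psi|M(g_x;\UnitaryDist)|\psi} - 1$, which by \Cref{lem:fidelity_harmonic} simplifies to
\[
2^n\mparen{\frac{H_{2^n}-1}{2^n-1}\abs{\braket{\varphi_x|\psi}}^2 + \frac{2^n - H_{2^n}}{2^n(2^n-1)}} - 1 = \frac{H_{2^n}-1}{2^n-1}\mparen{2^n\abs{\braket{\varphi_x|\psi}}^2 - 1}.
\]
Since this is an increasing affine function of the scalar $\abs{\braket{\varphi_x|\psi}}^2$, Alice's rule of transmitting the index maximizing this fidelity is precisely what maximizes the conditional score, so by linearity of expectation the overall $\FXEB$ of the protocol equals $\frac{H_{2^n}-1}{2^n-1}\mparen{2^n\,\E\mbracket{\max_{x} \abs{\braket{\varphi_x|\psi}}^2} - 1}$, the expectation being over $\ket{\psi}\sim\StateDist$ and the independent Haar-random $\{\ket{\varphi_x}\}$.

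Next I would evaluate this expected maximum. Because the $\ket{\varphi_x}$ are Haar-random and independent of $\ket{\psi}$, unitary invariance of the Haar measure lets me fix $\ket{\psi}$ to an arbitrary state without changing the inner expectation; in particular the average over $\StateDist$ plays no role, which is why the theorem holds for \emph{any} $\StateDist$. For a fixed $\ket{\psi}$ and a single Haar-random $\ket{\varphi}$, the fidelity $\abs{\braket{\varphi|\psi}}^2$ is a coordinate of a uniformly random point on the $(2^n-1)$-simplex --- the same fact invoked in \Cref{lem:haar_op_norm} --- hence it follows the $\mathrm{Beta}(1, 2^n-1)$ law with $\Pr\mbracket{\abs{\braket{\varphi|\psi}}^2 \le u} = 1 - (1-u)^{2^n-1}$ on $[0,1]$. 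Taking the maximum of $2^m$ i.i.d.\ copies, $\Pr\mbracket{\max_x \abs{\braket{\varphi_x|\psi}}^2 \le u} = \mparen{1 - (1-u)^{2^n-1}}^{2^m}$, so
\[
\E\mbracket{\max_x \abs{\braket{\varphi_x|\psi}}^2} = \int_0^1 \mparen{1 - \mparen{1 - (1-u)^{2^n-1}}^{2^m}} \mathrm{d}u = 1 - \int_0^1 \mparen{1 - u^{2^n-1}}^{2^m}\mathrm{d}u,
\]
where the last equality substitutes $u \mapsto 1-u$.

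Finally I would substitute this into the expression for $\FXEB$ and simplify, using $2^n(1 - I) - 1 = (2^n - 1) - 2^n I$ for $I \coloneqq \int_0^1 (1-u^{2^n-1})^{2^m}\mathrm{d}u$, to obtain
\[
\FXEB = (H_{2^n}-1)\mparen{1 - \frac{2^n}{2^n-1}\int_0^1 (1-u^{2^n-1})^{2^m}\mathrm{d}u} = \eps,
\]
as claimed. The only points demanding care are the interchange of the finite maximization with the expectation over the shared randomness and the identification of the single-shot fidelity distribution as $\mathrm{Beta}(1,2^n-1)$; neither poses a genuine obstacle, and this computation is considerably lighter than the matching lower bound of \Cref{thm:XEB_bound_from_norms}.
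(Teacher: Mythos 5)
Your proposal is correct and follows essentially the same route as the paper's proof: apply \Cref{lem:fidelity_harmonic} to reduce $\FXEB$ to an affine function of $\max_x \abs{\braket{\varphi_x|\psi}}^2$, identify the single-shot fidelity as $\mathrm{Beta}(1,2^n-1)$, and integrate the CDF of the maximum of $2^m$ i.i.d.\ copies. The algebraic simplifications and the final substitution match the paper's computation exactly.
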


\begin{proof}
    For convenience, let $N = 2^n$ and $M = 2^m$.
    Because Alice's message $x$ is chosen to be the string that maximizes $\abs{\braket{\varphi_x|\psi}}^2$, \Cref{lem:fidelity_harmonic} implies that the protocol achieves $\FXEB$ exactly
    \begin{align}
    N \E_{\substack{\ket{\psi} \sim \StateDist\\ \{\ket{\varphi_x}\}_{x \in \{0,1\}^m} \sim \StateDist}}\mbracket{\braket{\psi|M(g_x; \UnitaryDist)|\psi}} - 1
    &= N \E_{\substack{\ket{\psi} \sim \StateDist\\ \{\ket{\varphi_x}\}_{x \in \{0,1\}^m} \sim \StateDist}}\mbracket{\max_{x \in \{0,1\}^m} \frac{H_{N} - 1}{N - 1} \abs{\braket{\varphi|\psi}}^2 + \frac{N - H_{N}}{N(N - 1)}} - 1\nonumber\\
    &= N \frac{H_{N} - 1}{N - 1} \E_{\substack{\ket{\psi} \sim \StateDist\\ \{\ket{\varphi_x}\}_{x \in \{0,1\}^m} \sim \StateDist}}\mbracket{\max_{x \in \{0,1\}^m} \abs{\braket{\varphi|\psi}}^2} + \frac{1 - H_{N}}{N - 1}\nonumber\\
    &= \frac{H_{N} - 1}{N - 1}\mparen{ N \E_{\substack{\ket{\psi} \sim \StateDist\\ \{\ket{\varphi_x}\}_{x \in \{0,1\}^m} \sim \StateDist}}\mbracket{\max_{x \in \{0,1\}^m} \abs{\braket{\varphi_x|\psi}}^2} - 1}.\label{eq:classical_lb_intermediate}
    \end{align}
    
    We turn to computing the expectation inside of the expression above.
    For any fixed $\ket{\psi}$ (and therefore, for $\ket{\psi} \sim \StateDist$), the random variables $\{\abs{\braket{\varphi_x|\psi}}^2\}$ are i.i.d.\ with probability density function
    \[
    p(\abs{\braket{\varphi_x|\psi}}^2 = t) = \frac{(1 - t)^{N - 2}}{N - 1},
    \]
    i.e., they are $\mathrm{Beta}(1, N - 1)$ random variables.
    Their cumulative distribution function is
    \[
    \Pr[\abs{\braket{\varphi_x|\psi}}^2 \le t] = 1 - (1 - t)^{N - 1}.
    \]
    Because of independence, the variables $\{\ket{\varphi_x}\}$ further satisfy
    \[
    \Pr[\max_{x \in \{0,1\}^m} \abs{ \braket{\varphi_x|\psi}}^2 \le t] = (1 - (1 - t)^{N - 1})^M.
    \]
    Therefore,
    \begin{align*}
    \E\mbracket{\max_{x \in \{0,1\}^m} \abs{ \braket{\varphi_x|\psi}}^2}
    &= \int_0^1 1 - (1 - (1 - t)^{N - 1})^M \mathrm{d}t\\
    &= 1 - \int_0^1 (1 - u^{N - 1})^M \mathrm{d}u. && (u = 1 - t)
    \end{align*}
    Plugging back into \Cref{eq:classical_lb_intermediate}, we obtain:
    \begin{align*}
        \FXEB
        &= \frac{H_{N} - 1}{N - 1}\mparen{ N \mparen{1 - \int_0^1 (1 - u^{N - 1})^M \mathrm{d}u} - 1}\\
        &= (H_N - 1) \mparen{1 - \frac{N}{N - 1}\int_0^1 (1 - u^{N - 1})^M \mathrm{d}u}.\qedhere
    \end{align*}
\end{proof}

By a standard re-randomization trick, we can make $\StateDist$ Haar-random and $\UnitaryDist$ arbitrary:

\begin{corollary}
    \label{cor:classical_upper_bound}
    Let $\StateDist$ be the Haar measure over $n$-qubit states, and let $\UnitaryDist$ be \emph{any} distribution over $\Unitaries(2^n)$.
    Then with $m$ bits of classical communication, there is a protocol that solves $\eps$-DXHOG for
    \[
    \eps = (H_{2^n} - 1) \mparen{1 - \frac{2^n}{2^n - 1}\int_0^1 (1 - u^{2^n - 1})^{2^m} \mathrm{d}u},
    \]
    where $H_N \coloneqq \sum_{i=1}^N \frac{1}{i}$ is the $N$th harmonic number.
\end{corollary}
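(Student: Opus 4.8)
The plan is to reduce to \Cref{thm:classical_upper_bound} by a standard symmetrization (re-randomization) argument that turns an adversarial measurement distribution $\UnitaryDist$ into the Haar measure, at the cost of randomizing Alice's state distribution --- which is harmless, since \Cref{thm:classical_upper_bound} already permits $\StateDist$ to be arbitrary.

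First I would set up the reduction. Let $P$ be the public-coin, $m$-bit protocol of \Cref{thm:classical_upper_bound}, which achieves the stated $\eps$ whenever Alice's state distribution is arbitrary and Bob's unitary distribution is the Haar measure over $\Unitaries(2^n)$ (with the two independent). Define a new protocol $Q$ for $(\StateDist, \UnitaryDist)$, where $\StateDist$ is the $n$-qubit Haar measure and $\UnitaryDist$ is arbitrary, as follows: the shared public randomness of $Q$ consists of the public coins of $P$ together with one extra Haar-random unitary $W \in \Unitaries(2^n)$, drawn independently. On input $\ket{\psi}$, Alice forms $\ket{\psi'} \coloneqq W^\dagger\ket{\psi}$ and sends exactly the message Alice-in-$P$ would send on input $\ket{\psi'}$; on input $U$ and that message, Bob forms $U' \coloneqq UW$ and outputs whatever Bob-in-$P$ would output. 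The message length is unchanged, so $Q$ uses $m$ bits.

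Second I would verify the two facts that make the reduction work. (i) The DXHOG score is preserved pointwise: for any output $z$ we have $\braket{z|U'|\psi'} = \braket{z|UWW^\dagger|\psi} = \braket{z|U|\psi}$, so $\abs{\braket{z|U|\psi}}^2 = \abs{\braket{z|U'|\psi'}}^2$ identically, and hence the $\FXEB$ of $Q$ with respect to $(\StateDist,\UnitaryDist)$ equals the $\FXEB$ that $P$ attains on the induced distribution of $(\ket{\psi'}, U')$. (ii) That induced distribution is exactly the product of the Haar measure on states with the Haar measure on unitaries: $U' = UW$ is Haar for each fixed $U$ by right-invariance, hence Haar marginally; $\ket{\psi'} = W^\dagger\ket{\psi}$ is Haar; and, crucially, $\ket{\psi'}$ and $U'$ are independent --- conditioned on $W=w$, the variables $w^\dagger\ket{\psi}$ and $Uw$ are independent, and $\Pr[w^\dagger\ket{\psi} \in A] = \Pr[\ket{\psi} \in A]$ does not depend on $w$ precisely because $\ket{\psi}$ is Haar-invariant, so averaging over $W$ preserves independence. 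Therefore \Cref{thm:classical_upper_bound} applies to $P$ on this input (with its ``arbitrary'' state distribution instantiated as the Haar measure), yielding that $Q$ achieves the claimed $\eps$.

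Finally, since $Q$ is a public-coin protocol, the usual averaging argument (a public-coin protocol is a mixture of deterministic ones, at least one of which matches its average performance) produces a deterministic $m$-bit protocol with the same $\FXEB$. I do not expect a genuine obstacle here; the only step needing care is checking that the symmetrization truly decouples $\ket{\psi'}$ from $U'$, which rests on the unitary invariance of the Haar measure on states and explains why the hypothesis that $\StateDist$ is Haar cannot be removed.
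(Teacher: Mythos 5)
Your proposal is correct and is essentially the same re-randomization argument as the paper's proof (with your $W$ playing the role of the paper's $V^\dagger$): both conjugate by a shared Haar-random unitary so that Bob's input becomes Haar while Alice's Haar input stays Haar and independent, then invoke \Cref{thm:classical_upper_bound} with its arbitrary state distribution. Your explicit verification of the independence of $\ket{\psi'}$ and $U'$ is a point the paper merely asserts, but the route is the same.
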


\begin{proof}
    The idea is for Alice and Bob to self-randomize their inputs to the protocol from \Cref{thm:classical_upper_bound}.
    Before execution of the protocol, Alice and Bob jointly agree on a random unitary $V$ drawn from the Haar measure over $\Unitaries(2^n)$.
    Alice, given input $\ket{\psi} \sim \StateDist$, executes the protocol from \Cref{thm:classical_upper_bound} as if her input were $V\ket{\psi}$.
    Bob, given input $U \sim \Unitaries(N)$, executes the protocol from \Cref{thm:classical_upper_bound} as if his input were $UV^\dagger$.
    By the invariance of the unitary Haar measure, $UV^\dagger$ is distributed as the Haar measure over $\Unitaries(2^n)$.
    Additionally, because $\StateDist$ is Haar-random, $V\ket{\psi}$ is distributed as the Haar measure over $n$-qubit states, and crucially $V\ket{\psi}$ is \textit{independent} of $UV^\dagger$.
    So, $V\ket{\psi}$ and $UV^\dagger$ are valid inputs for the protocol from \Cref{thm:classical_upper_bound} to apply.
    The output of \Cref{thm:classical_upper_bound} is a string $z$ satisfying, in expectation,
    \[
    \frac{1 + \eps}{2^n} = 
    \E_{\ket{\psi} \sim \StateDist, U \sim \UnitaryDist, V \sim \Unitaries(2^n)}\mbracket{\abs{\braket{z|UV^\dagger V|\psi}}^2} = \E_{\ket{\psi} \sim \StateDist, U \sim \UnitaryDist, V \sim \Unitaries(2^n)}\mbracket{\abs{\braket{z|U|\psi}}^2}.
    \]
    The right equality above holds because $V^\dagger V = I$, and this demonstrates that the re-randomization of the protocol achieves $\FXEB$ $\eps$ over $(\StateDist, \UnitaryDist)$.
\end{proof}

The quantity $\eps$ in \Cref{thm:classical_upper_bound} and \Cref{cor:classical_upper_bound} can be computed numerically.
In practice, however, the integrand $\mparen{1 - u^{2^n - 1}}^{2^m}$ is difficult to evaluate with sufficient numerical stability because $1 - u^{2^n - 1}$ can simultaneously be very close to $1$ when the exponent $2^m$ is extraordinarily large.
To circumvent this difficulty, one can instead compute a valid lower bound on $\eps$ by making use of the exponential inequality $1 - x \le e^{-x}$:
\begin{equation}
\label{eq:eps_lower_bound}
\eps \ge (H_{2^n} - 1) \mparen{1 - \frac{2^n}{2^n - 1}\int_0^1 \exp(-2^m u^{2^n - 1}) \mathrm{d}u}.
\end{equation}
The difference between the two sides of \Cref{eq:eps_lower_bound} becomes negligible as $m$ increases, because $(1 - x)^M \approx e^{-xM}$ for $x \in [0, 1]$ and sufficiently large $M$.

To give some intuition for the asymptotic behavior of the bound, note that the random variables $\abs{\braket{\varphi_x|x}}^2$ are close to exponential random variables with mean $\frac{1}{2^n}$.
Thus, when $m$ is not too large, the expectation of their maximum is approximately~\cite{Ren08-qkd}
\[
\E\mbracket{\max_{x \in \{0,1\}^m} \abs{ \braket{\varphi_x|\psi}}^2} \approx \frac{H_{2^m}}{2^n} \approx \frac{\ln(2) m}{2^n}.
\]
Substituting into the proof of \Cref{thm:classical_upper_bound}, one expects
\[
\eps \approx \frac{\ln(2)^2 mn}{2^n}
\]
for small $m$.
However, this approximation cannot be correct for large $m$, because the largest $\FXEB$ achievable for Haar-random states is $H_{2^n} - 1 \approx \ln(2) n$, which is obtained by always outputting the string $z$ that maximizes $\abs{\braket{z|U|\psi}}^2$ (cf.\ \Cref{lem:haar_op_norm}).
Indeed, the $\eps$ in \Cref{thm:classical_upper_bound} and \Cref{cor:classical_upper_bound} asymptotes towards $H_{2^n} - 1$ for large $m$.
One obtains a better bound for large $m$ by observing that the integrand $\mparen{1 - u^{2^n - 1}}^{2^m}$ drops sharply from $\sim 1$ to $\sim 0$ in a small neighborhood of $u = 2^{-m/(2^n - 1)} \approx 2^{-m/2^n}$.
In the regime of large $m$, one therefore expects
\[
\eps \approx (H_{2^n} - 1)\mparen{1 - \frac{2^n}{2^n - 1}2^{-m/2^n}} \approx (H_{2^n} - 1)\mparen{1 - 2^{-m/2^n}}.
\]
In either case, setting $m = \frac{2^n}{n}$ should suffice to obtain $\eps \ge \Omega(1)$, because
\[
(H_{2^n} - 1)\mparen{1 - 2^{-{1/n}}} = (H_{2^n} - 1)\mparen{1 - e^{-{\ln(2)/n}}} \approx (H_{2^n} - 1)\frac{\ln(2)}{n} \approx \ln(2)^2.
\]

\section{Summary of Classical Communication Complexity Bounds}
\label{sec:bound_summary}

In \Cref{sec:classical_lb}, we showed an exponential lower bound on the classical communication complexity of DXHOG for four different measurement ensembles, by way of bounding the matrix norms that appear in \Cref{thm:XEB_bound_from_norms}.
The resulting asymptotic bounds on classical communication complexity as a function of $\eps$ (\Cref{cor:classical_lower_bound}) are summarized in \Cref{tab:complexities}.

Haar-random measurements achieve the tightest quantum-classical separation in communication complexity, but asymptotically match approximate $n$-designs, which require exponentially fewer gates to implement.
Random Clifford measurements perform nearly as well as both designs and the Haar measure: all three give classical lower bounds of the form $2^{n - o(n)}$.
Each of these quantum-classical separations improve upon the separation for the Hidden Matching (HM) problem by roughly a quadratic factor~\cite{BJK08-comm}.
As a function of the fidelity $\eps$, the best-known classical lower bound for $n$-qubit noisy HM is precisely~\cite[Theorem 3.3]{BRSW12-bell}:\footnote{\cite{BRSW12-bell} expresses the communication bound in terms of the success probability $p$, which is related to the fidelity $\eps$ by $p = \frac{1 + \eps}{2}$ because the success probability of random guessing is $\frac{1}{2}$ for Hidden Matching.}
\[
\eps \frac{\sqrt{2^n} - 1}{2} - 1 = \Omega\mparen{\eps\sqrt{2^n}}.
\]
On the other hand, the separation we prove for product Cliffords is worse than the separation achievable by HM.

\renewcommand{\arraystretch}{1.4}
\setlength\tabcolsep{1ex}
\begin{table}[h]
\centering

\renewcommand{\thempfootnote}{\arabic{footnote}}
\addtocounter{footnote}{1}

\newcommand{\mycell}[2]{
\begin{tabular}{@{}c@{}}#1\\ #2\end{tabular}
}
\begin{tabular}{r||c|c|c}
\textbf{Ensemble} & \textbf{2Q Complexity} & \textbf{Bound, small $\eps$}
& \textbf{Bound, large $\eps$}\\\hline\hline
Random product Clifford
& \mycell{0}{\ }
& \mycell{$\Omega\mparen{\eps^2 (3/2)^n}$}{\Cref{lem:product_clifford_frob_norm}}
& \mycell{$\Omega\mparen{\eps (3/2)^{n/2}}$}{\Cref{cor:product_clifford_op_norm}}\\\hline
Random Clifford
& \mycell{$O\mparen{n^2 / \log n}$}{\cite{AG04-stabilizer}}
& \mycell{$\Omega\mparen{\eps^2 2^n}$}{\Cref{lem:clifford_frob_norm}}
& \mycell{$\eps 2^{n - O(\sqrt{n})}$}{\Cref{lem:clifford_op_norm}}\\\hline
Approximate $n$-design
& \mycell{$O\mparen{n^2 \polylog(n)}$}{\cite{SHH24-designs}}
& \mycell{$\Omega\mparen{\eps^2 2^n}$}{\Cref{lem:clifford_invariant_frob_norm}\Footnotemark}
& \mycell{$\Omega\mparen{\eps 2^n / n}$}{\Cref{lem:design_op_norm}}\\\hline
Haar measure
& \mycell{$O(4^n)$}{\cite{KA24-zxz}}
& \mycell{$\Omega\mparen{\eps^2 2^n}$}{\Cref{lem:clifford_invariant_frob_norm}}
& \mycell{$\Omega\mparen{\eps 2^n / n}$}{\Cref{lem:haar_op_norm}}\\
\end{tabular}
\caption{
Comparison of classical communication complexity lower bounds for $n$-qubit $\eps$-DXHOG, parameterized by Bob's ensemble of measurement unitaries and assuming Alice's state is Haar-random.
2Q complexity is the number of two-qubit gates required to implement Bob's measurement.
} \label{tab:complexities}
\end{table}
\addtocounter{footnote}{-1}\Footnotetext{This assumes that the approximate design is Clifford-invariant.
Any design can be transformed into a Clifford-invariant one, via left-multiplication by a random Clifford unitary.
This increases the gate complexity by at most an additive $O\mparen{n^2 / \log n}$~\cite{AG04-stabilizer}, while preserving the design property.} %

From an experimental standpoint, asymptotic bounds do not suffice to demonstrate a provable quantum advantage via DXHOG---we need explicit expressions with computable constants for system sizes of interest.
Fortunately, the proof of \Cref{thm:XEB_bound_from_norms} derives such expressions already.
For completeness, we reproduce them here.
Assume, as in \Cref{thm:XEB_bound_from_norms}, that we have norm bounds
\[
    \norm{M(g;\UnitaryDist)}_F \le A \qquad\text{and}\qquad \norm{M(g;\UnitaryDist)}_{op} \le B.
\]
Then \Cref{thm:XEB_bound_from_norms} upper bounds the largest $\FXEB$ $\eps$ achievable by any $m$-bit protocol for $\eps$-DXHOG.
The bound depends on an auxiliary quantity $a$, and is valid for all $a > 1$.
In practice, $a = 1.5$ works well, but the choice of $a$ can be optimized numerically to obtain the smallest bound on $\eps$ as a function of $m$.
The quantity $\gamma$ is defined as
\[
\gamma \coloneqq \frac{ae^{1/a}}{a+1} + \frac{2}{e(a^3 - a)} - 1.
\]
Then the quantity $t^*$ is computed via
\[
t^* \coloneqq \begin{cases}
\sqrt{\ln(2)m \cdot 4\gamma a^2 A^2} & m \le \frac{\gamma A^2}{\ln(2) B^2},\\
    aB (\ln(2) m  + \gamma A^2 / B^2) & m > \frac{\gamma A^2}{ \ln(2) B^2}.
\end{cases}
\]
Finally, the exact bound on $\eps$ derives from \Cref{eq:frob_bound_exact,eq:op_bound_exact}:
\[
\eps(m, a) =
\begin{cases}
    t^* + 2^m \sqrt{\pi \gamma} aA \mparen{\erf\mparen{\frac{\sqrt{\gamma} A}{B}} - \erf\left(\frac{t^*}{2\sqrt{\gamma} aA}\right)} + 2^{m} aB\exp\left(- \frac{\gamma A^2}{B^2}\right) & m \le \frac{\gamma A^2}{\ln(2) B^2},\\
    t^* + aB & m > \frac{\gamma A^2}{\ln(2)B^2}.
\end{cases}
\]

Let us briefly explain how to instantiate the bound with the specific parameters of our experiment.
Namely, to show that at least $62$ (resp.\ $78$) classical bits are required to solve $\eps$-DXHOG for $\eps = \hFXEB - 5\sigma = 0.362$ (resp.\ $\eps = \hFXEB = 0.427$) when $n = 12$ and Bob performs a random Clifford measurement.
From \Cref{lem:clifford_frob_norm} we have
$A = \sqrt{\frac{2}{2^{12} + 1}} \approx 2.2094 \times 10^{-2}$.
The bound from \Cref{lem:clifford_op_norm} is optimized at $t = 5$ and gives $B = \mparen{\prod_{i=0}^3 \frac{2^i + 1}{2^{12} + 2^i}}^{1/5}  \approx 3.9452 \times 10^{-3}$.
Plugging in $m = 61$ and $a = 1.53$ gives $\eps(m, a) < 0.360 < \hFXEB - 5\sigma$, whereas $m = 77$ and $a = 1.47$ gives $\eps(m, a) < 0.426 < \hFXEB$.
Hence, any classical protocol whose population mean $\FXEB$ equals the observed sample mean $\hFXEB$ of the experiment requires $m \ge 78$ bits, and even achieving $\FXEB$ five standard errors below the sample mean requires $m \ge 62$ bits.

\Cref{fig:ensemble_comparison} plots $\min_{a > 1} \eps(m, a)$ for $n = 12$ and each of the Clifford, design, and Haar-random ensembles, where the minimization over $a$ is performed numerically.
Even though the Clifford bound is somewhat weaker, it remains competitive with the others---for $n=12$ it is within a multiplicative factor of $1.4$ of the Haar bound.
This small quantitative difference in classical bounds, along with the relative simplicity of implementing Clifford unitaries in practice, explain the choice of using random Clifford measurements for our experimental demonstration of quantum information supremacy.
Indeed, either $t$-designs or the Haar measure would prove prohibitively expensive to implement in terms of gate count: a Haar measurement requires quadratically more gates ($\sim 4^n$) to implement than even worst-case state preparation ($\sim 2^n$), and the constant factors in the complexity of implementing approximate $t$-designs are astronomical even for small $t$.
In contrast, we will see in \Cref{sec:experiment} that the gate complexity of a Clifford measurement is quite small compared to the overall cost of the protocol, which is dominated by state preparation.

\begin{figure}
    \centering
    \includegraphics[width=0.7\textwidth]{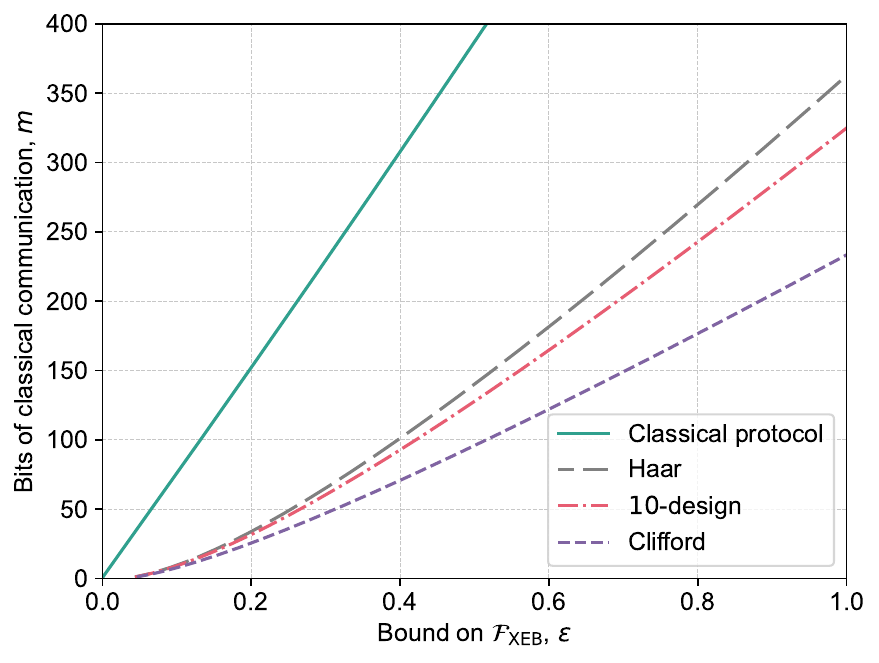}
    \caption{Comparison of upper bounds on $\FXEB$ derived from \Cref{thm:XEB_bound_from_norms} with $n = 12$ and three different measurement ensembles: random Clifford (purple), exact $t$-design (red), and the Haar measure (gray).
    For the $t$-design ensemble, $t = 10$ optimizes the bound from \Cref{lem:design_op_norm}.
    Also shown is the lower bound on $\FXEB$ known to be achievable by classical protocols (green), due to \Cref{cor:classical_upper_bound}.
    Although $\eps$ is the dependent variable (i.e., it is computed as a function of $m$), the figure is transposed to indicate the bounds on communication complexity as a function of the average linear cross-entropy benchmark.
    }
    \label{fig:ensemble_comparison}
\end{figure}

\Cref{fig:ensemble_comparison} also plots the largest $\FXEB$ known to be achievable by classical protocols, which we derived in \Cref{cor:classical_lower_bound}.\footnote{Technically, \Cref{fig:ensemble_comparison} displays the lower bound on this $\FXEB$ from \Cref{eq:eps_lower_bound}, but the difference between the two quantities is too small to be perceptible on the figure.}
We find it remarkable that the gap between our lower and upper bounds on communication is within an order of magnitude for reasonable parameters, despite the presumed sub-optimality of our proof techniques based on generic concentration inequalities and union bounding.
For example, our bounds imply that matching the noiseless $12$-qubit quantum protocol, which achieves $\FXEB \approx 1$, requires at least $234$ bits with Clifford measurements, $325$ bits with $10$-design measurements, or $363$ bits with Haar-random measurements, but never more than $801$ bits for any measurement ensemble.
Evaluating the bound from \Cref{cor:classical_lower_bound} at $m = 330$ gives $\eps > 0.428 > \hFXEB$, and $m = 382$ gives $\eps > 0.493 > \hFXEB + 5\sigma$.
Hence, there exists an $m = 330$-bit classical protocol whose population mean $\FXEB$ exceeds the observed sample mean $\hFXEB$ of the experiment, and $m = 382$ bits suffice to achieve $\FXEB$ five standard errors above the sample mean.
Combining our lower and upper bounds, we deduce with high confidence that the results of the experiment would require, at minimum, between $62$ and $382$ bits to spoof classically.

A natural question for future work is to close the gap between our lower and upper bounds.
This seems challenging, as our upper bound on classical communication (\Cref{thm:classical_upper_bound} and \Cref{cor:classical_lower_bound}) is essentially solving a packing problem in high dimensions: we want to find an arrangement of vectors $\{\ket{\varphi_x}\}_{x \in \{0,1\}^m}$ such that $\max_x \abs{\braket{\varphi_x|\psi}}^2$ is large for most $\ket{\psi}$.
Improving the randomized construction might require obtaining sharper bounds on the sizes of high-dimensional spherical codes, which is a major unresolved question in mathematics.

\section{Experimental Setup and Circuit Synthesis}
\label{sec:experiment}

The experiment in this work was performed on the Quantinuum H1-1 trapped-ion quantum computer, based on the quantum charge-coupled device (QCCD) architecture \cite{Pino2020, PhysRevX.13.041052}. The architecture consists of qubit and sympathetic coolant ions ($\mbox{}^{171}\text{Yb}^+$ and $\mbox{}^{138}\text{Ba}^+$, respectively) confined in a linear RF trap. Qubits are encoded in the hyperfine states $|0\rangle = |F = 0, m_F = 0\rangle$, $|1\rangle = |F = 1, m_F = 0\rangle$ of the $\mbox{}^{171}Yb^+$ ion and initialized in the $|0\rangle$ state by optical pumping \cite{PhysRevA.76.052314}. $20$ total qubits ($40$ total ions) are loaded in the trap at all times, though only $12$ were used in this experiment. Arbitrary connectivity between qubits is achieved by physically transporting ions in the trap, including operations that linearly shift the whole chain of ions, split or combine potential wells holding the ions, and swap the locations of qubits along with their accompanying coolant ions. Gating is performed in five distinct regions of the trap referred to as ``gate zones", supporting up to five single-qubit (1Q) or two-qubit (2Q) operations performed in parallel. 1Q rotations about an arbitrary axis in the XY-plane of the Bloch sphere are accomplished via stimulated Raman transitions \cite{Pino2020} using co-propagating laser beams. As 1Q rotations about the Z-axis of the Bloch sphere can be implemented fully in software by tracking the phases of each individual qubit, arbitrary single-qubit unitaries in $\SUnitaries(2)$,
\[
U3(\theta,\phi,\lambda) \coloneqq \begin{bmatrix}
    \cos \frac{\theta\phi}{2} & -e^{i\pi\lambda} \sin\frac{\pi\theta}{2}\\
    e^{i\pi\phi} \sin\frac{\pi\theta}{2} & e^{i\pi(\lambda+\phi)} \cos\frac{\pi\theta}{2}
\end{bmatrix},
\]
are implemented with only a single physical 1Q gate on hardware by decomposing into a Z-axis rotation and a rotation about an axis in the XY-plane. 2Q gates are implemented via a phase-sensitive M{\o}lmer-Sorenson interaction dressed with global 1Q wrapper pulses \cite{PhysRevA.62.022311, Lee_2005, PhysRevResearch.2.013317}, yielding the phase-insensitive parameterized $ZZ$ rotation gate
\[
ZZ(\theta) \coloneqq e^{-i\frac{\pi}{2}\theta (Z \otimes Z)} = \begin{bmatrix}
    e^{-\frac12 i \pi\theta} & 0 & 0 & 0 \\
    0 & e^{\frac12 i \pi\theta} & 0 & 0 \\
    0 & 0 & e^{\frac12 i \pi\theta} & 0 \\
    0 & 0 & 0 & e^{-\frac12 i \pi\theta}
\end{bmatrix}.
\]
Finally, measurement proceeds via a state-dependent fluorescence measurement read out into a multichannel PMT array. The long coherence times, low crosstalk achieved by physical separation of the qubits, and precisely controllable interactions in the QCCD architecture have enabled Quantinuum's trapped-ion quantum computers to achieve and maintain the highest reported commercial gate fidelities across all architectures to date \cite{github_spec}.

\subsection{Hardware Benchmarking}
\label{sec:benchmarking}

In the main text and in \Cref{sec:variational} below, we describe a variational state preparation protocol for approximating Haar-random states. This protocol must account both for the infidelity of the approximate state preparation, as well as for the expected infidelity introduced by the choice of gates and decoherence introduced over the time of circuit implementation. To that end, it is important to have an accurate model for the expected state fidelity given each choice of variational parameters that can be used in training those parameters. In this section we describe standard benchmarking protocols that were used to obtain such a model.

To estimate the fidelity of the state prepared by a fixed circuit ansatz we follow \cite{PhysRevX.15.021052} in using a gate-counting model in which the primary contributions to circuit infidelity are expected to be two-qubit gate error as well as memory errors (other errors incurred over the duration of the circuit that are not due to two-qubit gates). We compute the expected average infidelity of the two-qubit gate $ZZ(\theta)$ as a function of angle $\theta$ using 2Q parameterized gate randomized benchmarking (RB) as described in \cite{PhysRevX.13.041052}. We include a description of the procedure as described therein here for completeness.

In general, RB experiments proceed by applying sequences of random unitary operators of increasing sequence length to a set of qubits. The overall sequence of random unitary operators is inverted at the end so as to apply the identity operator in aggregate, up to a final Pauli operator applied to randomize the measurement basis to remove any possible measurement bias. For an RB experiment on $N$ qubits, and the fraction of qubits that end in the correct state across all shots is measured (the \emph{average survival}). The average survival is fit to a decay curve of the form
\begin{align}
    p(\ell) = Ar^{\ell} + 1/2^N
\end{align}
where $p(\ell)$ is the average survival at sequence length $\ell$, $A$ is a fit parameter corresponding to SPAM (state preparation and measurement) error, and $r$ is a fit parameter corresponding to the depolarizing rate, and $N$ is the number of qubits. Standard RB theory \cite{PhysRevLett.106.180504, PhysRevA.99.052350, NIELSEN2002249} relates $r$ to the average infidelity of the component operation being studied by
\begin{align}
    \eps = \frac{2^N - 1}{2^N} (1 - r)
\end{align}

In the case of 2Q parameterized gate RB, at each of the $\ell$ circuit layers a $ZZ(\theta)$ gate with $\theta$ some fixed angle is applied to a pair of qubits in each of the five gate zones, interleaved with Haar-random $\SUnitaries(2)$ gates on each qubit in between 2Q gate rounds. The overall $\SUnitaries(4)$ operator prepared on each pair of qubits is inverted at the end of the circuit using three $ZZ(\pi / 2)$ gates following the method of \cite{Hanneke2010}, after which a random Pauli operator is applied to each qubit. Since the sequences of random unitaries are applied to pairs of qubits, the decay curve fit to the average survival follows the $N=2$ case above, $p(\ell) = Ar^{\ell} + 1/4$ with $\eps = \frac{3}{4}(1-r)$ the corresponding average infidelity. We plot the average infidelity $\eps$ as a function of angle $\theta \in \{\frac{\pi}{100}, \frac{\pi}{8}, \frac{\pi}{4}, \frac{3\pi}{8}, \frac{\pi}{2}\}$ in \Cref{fig:arb_rb}.\footnote{For the near-zero angle $\theta = \frac{\pi}{100}$, the 2Q entangling gate is sufficiently weak that the overall unitary applied to each pair of qubits approximately lies in $\SUnitaries(2)\otimes \SUnitaries(2)$. This leads to a more complicated RB analysis in which a sum of multiple exponential decay curves must be fit, as can be derived from the representation theory of $\SUnitaries(2)\otimes \SUnitaries(2)$ \cite{PhysRevX.13.041052}.} Linearly fitting $\eps(\theta)$ we obtain a model for average 2Q gate infidelity as a function of $\theta$ in units of $\pi$, $\eps_{2Q}(\theta) \approx [(14.8 \pm 1.9)\theta + (2.7 \pm 0.6)]\times 10^{-4}$.

Memory errors were quantified using a transport 1Q RB experiment. In a transport 1Q RB experiment, rounds of Haar-random $\SUnitaries(2)$ unitaries on each qubit are interspersed with rounds of transport on all qubits in the trap. To accurately represent the memory errors that qubits would incur during transport required for the variational state preparation of \Cref{sec:variational}, each transport round consisted of the exact operations required to accomplish execution of a periodic 1D brickwork circuit, with no 2Q gate applied. A final round of 1Q unitary operators is applied to each qubit to invert the cumulative operation of the $\SUnitaries(2)$ unitaries applied over the course of the circuit, and a random bit flip is applied to each qubit to remove measurement bias. The average survival for each qubit is then aggregated across all shots as many copies of an $N=1$ RB experiment, from which one can compute an average infidelity per qubit per circuit layer due to memory error. Note that this estimate of memory error includes not only errors incurred during transport and idling but also the (relatively small) 1Q gate error introduced by the RB sequence, so 1Q gate errors do not need to be separately accounted for when modeling overall circuit fidelity.

As only $12$ qubits were used in this experiment out of the $20$ total loaded into the trap, and brickwork circuits require transport that is substantially easier to accomplish than that for random circuits, one expects the total memory error for $N=12$ brickwork circuits to be substantially better than the overall system specification quoted at \cite{github_spec} which is computed for $N=20$ random circuits. Indeed, preliminary time estimates obtained from the Quantinuum hardware compiler indicated that $N=12$ brickwork circuits can be executed about 42\% faster than $N=20$ random circuits on H1-1. The resulting average infidelity per qubit per circuit layer computed from the data displayed in \Cref{fig:brickwork_tsqrb} was $\eps_{\text{mem}} = (8 \pm 2)\times 10^{-5}$, substantially lower than the general system specification for $N=20$ random circuits of $(2.1 \pm 0.2)\times 10^{-4}$.

\begin{figure*}[!t]
\centering
\subfloat[\label{fig:arb_rb}]{\includegraphics[width = 0.45\textwidth]{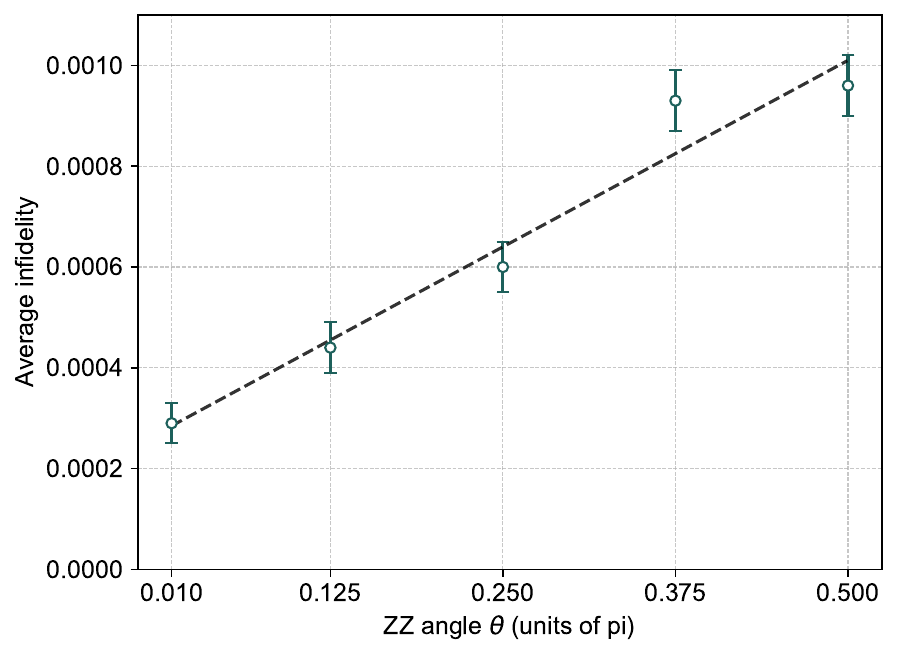}}
\subfloat[\label{fig:brickwork_tsqrb}]{\includegraphics[width = 0.45\textwidth]{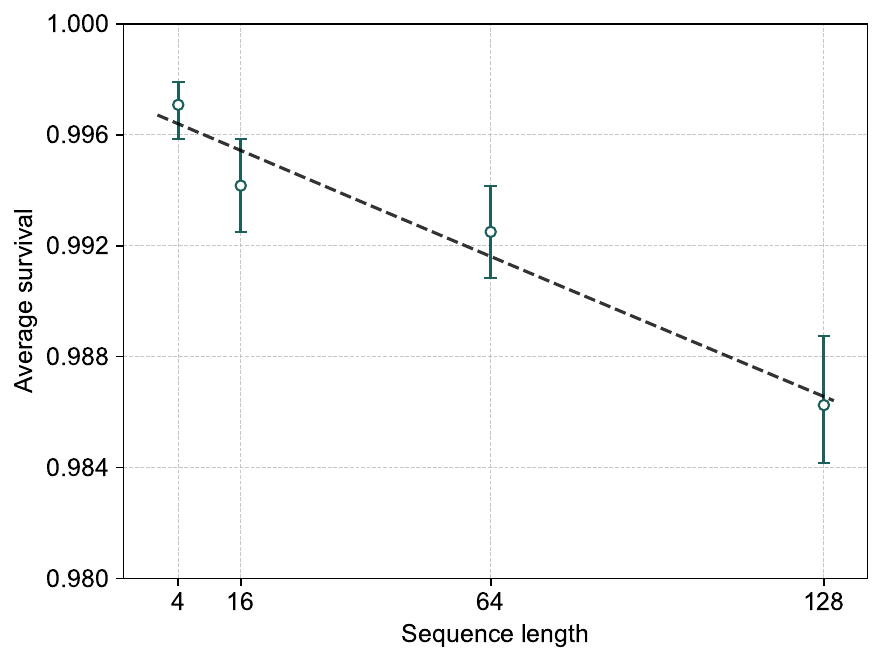}}
\caption{(a) Average infidelity as a function of ZZ angle $\theta$ in 2Q parameterized gate RB. Each experiment used sequence lengths $\ell = 4,60,120$, with 10 random circuits per sequence length and 100 shots per circuit \cite{arb_rb_expt}.
(b)  Average survival as a function of sequence length $\ell$ in a transport 1Q RB experiment for 12-qubit brickwork circuits. The experiment used sequence lengths $\ell = 4,16,64,128$ with 20 random circuits per sequence length and 10 shots per circuit.  $(8\pm 2) \times 10^{-5}$ memory error\label{benchmarking} }
\end{figure*}

\subsection{Variational State Preparation} \label{sec:variational}
The best-known provable upper bound on the number of CNOT gates needed to prepare a general $n$-qubit pure state $\ket{\psi}$ (with costless single-qubit gates) is
$\frac{23}{24}2^n - \frac{3}{2}2^\frac{n+1}{2} + 4/3$
for $n$ odd and $\frac{23}{24}2^n - \frac{3}{2}2^{\frac{n}{2}+1} + 5/3$
for $n$ even~\cite[Remark 5]{ICKHC16-isometries} (see also \cite{BVMS05-circuit,PB11-prep}).
This bound is impractical for our purposes: with $n=12$, the resulting circuit requires $3735$ CNOT gates.
Running this circuit on H1-1, %
we would expect to produce a state having fidelity roughly $0.0036$ with $\ket{\psi}$, which is too small to achieve a quantum advantage over our provable classical lower bounds.

We instead prepare $n$-qubit states using a heuristic approach inspired by variational quantum algorithms.
We take a quantum circuit ansatz $C(\Vec{\theta})$ parameterized by a vector $\Vec{\theta}$, and use classical numerical optimization to approximately maximize the fidelity of its output state with the desired target state $\ket{\psi}$.
Our circuit ansatz utilizes parameterized gates that can be implemented on H1-1 with one physical gate each.
Specifically, the ansatz consists of arbitrary 1Q unitaries in $\SUnitaries(2)$ of the form $U3(\theta,\phi,\lambda)$ above
as well as the $ZZ(\theta)$ parameterized entangling gates. Our chosen ansatz alternates between layers of $U3$ and $ZZ$ gates, with the $ZZ$ gates arranged in a brickwork architecture with periodic boundary conditions as depicted in \Cref{fig:main-circuit} in the main text. The variational parameters to be classically trained are the collection of all angles $\vec{\theta}$ that parameterize both the 1Q and 2Q gates. Although the H1-1 quantum computer can execute circuits with arbitrary gate connectivities, numerical experiments with circuits of more general connectivity did not display a clear advantage in terms of improved ansatz fidelity, likely due to the increased difficulty of optimizing the parameters in a more general ansatz.

\begin{figure*}[!ht]
\centering
\includegraphics[width = 0.7\textwidth]{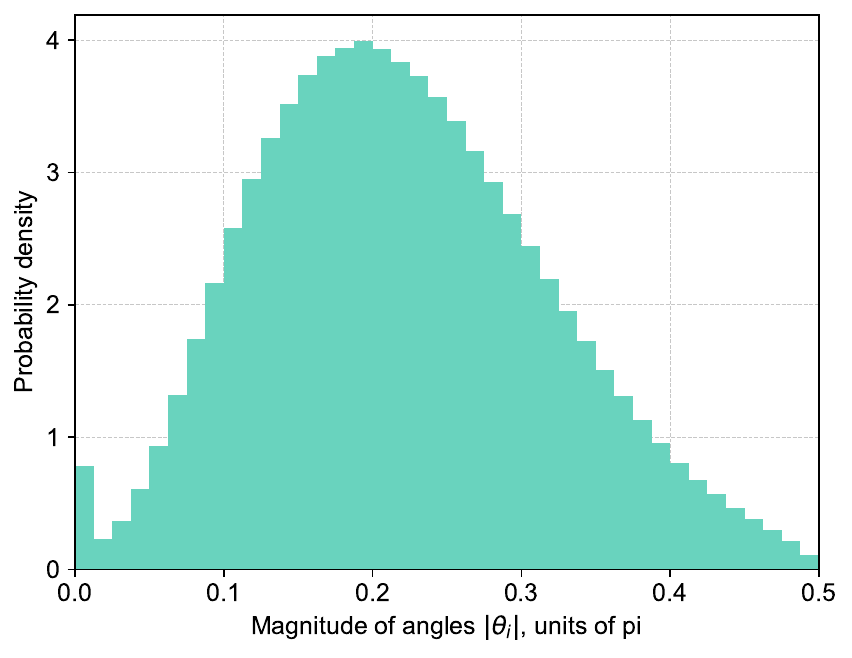}
\caption{Distribution of magnitudes of the variationally optimized angles $\theta_i$ of 2Q gates used to estimate the overall fidelity of state preparation, in units of $\pi$.\label{fig:angles} }
\end{figure*}

In the absence of noise, the fidelity of the ansatz circuit output with $\ket{\psi}$ is simply $\abs{\braket{\psi|C(\Vec{\theta})|0^n}}^2$.
We model the effect of experimental noise on the fidelity using a gate-counting / ``digital error" model as described in \cite{PhysRevX.15.021052}. In this model, each entangling gate with angle $\theta_i$ reduces the overall circuit fidelity by a factor of $1 - \frac{5}{4}\eps_{2Q}(\theta_i) - \frac{3}{2} \times 2 \eps_{\text{mem}}$. Here the numerical factors of $\frac54$ and $\frac32$ correspond to standard conversions between average and process fidelity \cite{Nielsen2002}, the memory error is doubled to account for the fact that there are two qubits per 2Q gate, and $\eps_{2Q}(\theta_i)$, $\eps_{\text{mem}}$ are the models for parameterized 2Q gate error and memory error computed in \Cref{sec:benchmarking}. Note that the memory error as measured by transport 1Q RB also includes the cumulative effect of 1Q gate errors introduced over the course of the circuit.

The overall fidelity estimate of state preparation is then given by
\[
F(\Vec{\theta}) = \abs{\braket{\psi|C(\Vec{\theta})|0^n}}^2 \prod_{i=1}^{\lfloor n/2 \rfloor d} \mparen{1 - \frac{5}{4}\eps_{2Q}(\theta_i) - 3 \eps_{\text{mem}}},
\]
Here, the product is taken over the index $i$ that parameterizes each 2Q gate angle $\theta_i$, $d$ is the number of two-qubit layers (i.e., circuit depth), $\eps_{\text{mem}} = 8\times 10^{-5}$ is the memory error, and $\eps_{2Q}(\theta) = (14.8\times |\theta| + 2.7)\times 10^{-4}$ is the two-qubit gate error as a function of the angle $\theta$ in $ZZ(\theta)$.
Based on internal testing, we take the error model for 2Q gates as a function of gate angle to be identical for positive and negative angles; on hardware, the angles $\theta_i$ are implemented in the range $[-\pi/2, \pi/2]$.\footnote{The gate $ZZ(\theta)$ na\"ively has periodicity $4\pi$. However, the angle can be taken modulo $2\pi$ and shifted into the range $[-\pi, \pi]$ at the expense of only a global phase. Further reduction to the range $[-\pi/2, \pi/2]$ is accomplished using only $Z$-axis 1Q rotations that are performed in software.} In \Cref{fig:angles} we plot the distribution of magnitudes of variationally trained angles obtained by this procedure, the median of which is approximately $0.213\pi \approx 0.67$ (slightly below half the maximal entangler). This median value corresponds to a partial entangler fidelity of $5.9(7)\times 10^{-4}$.

We numerically optimize the function $F(\Vec{\theta})$ using the L-BGFS-B method~\cite{BLNZ95-l-bfgs-b,ZBLN97-l-bfgs-b} in \texttt{scipy}~\cite{SciPy} with a maximum of $10000$ iterations.
For this optimization procedure, we initialize the $U3(\theta,\phi,\lambda)$ parameters pseudorandomly from the Haar measure over $\SUnitaries(2)$ using \texttt{numpy}'s default random number generator~\cite{NumPy,oneill:pcg2014}, and initialize all $ZZ(\theta)$ parameters to $\theta = 0$.
$F(\Vec{\theta})$ and its gradient are computed using a jit-compiled \texttt{jax} implementation~\cite{jax2018github} with 32-bit floating point precision. Notably, \Cref{fig:angles} shows a small peak near zero angle that is likely a relic of the initial conditions of the optimizer; due to the high dimensionality of this parameter space we expect that some directions in parameter space are nearly flat and difficult to optimize numerically in finite time.

For $n=12$, we chose depth $d = 86$ for the ansatz circuit.
The average of $F(\Vec{\theta})$ for the $10000$ circuits we optimized for the experiment (with corresponding Haar-random states $\ket{\psi}$) was $0.4638$, with a standard deviation of $3.1\times 10^{-3}$.

While \Cref{thm:main_informal} shows that any quantum protocol that can achieve $\eps$-DXHOG demonstrates an asymptotic separation in communication complexity over the corresponding classical protocol, it is important to note that a very high-fidelity quantum computer is required to achieve the $\FXEB \sim \Theta(1)$ cross-entropy fidelities that are required to witness the separation in a practical experiment. This is essentially because the cross-entropy fidelity $\FXEB \ge \eps$ in the $\eps$-DXHOG protocol is expected to be a good estimator for the overall circuit fidelity~\cite{AAB+19-google-supremacy,GKC+21-xeb,DHJB24-noise}, or at least strongly correlated with overall circuit fidelity. At a fixed circuit depth, however, this assumption is known to fail for sufficiently weakly entangling 2Q gates~\cite{Boixo2018}. In \Cref{fig:svsims} we compare the gate-counting estimates of fidelity with $\hFXEB$ and the true fidelity $F$ for $N=12$ qubit circuits by executing statevector simulations at a range of circuit depths, where the 2Q gate angle is taken to be $0.2 \pi$, slightly below the median of the variationally trained angles. The three metrics generally agree for depths above $d\sim 60$, well below the $d = 86$ used in our brickwork circuits, showing that for a quantum computer to achieve $\eps$-DXHOG it is required to prepare the state $|\psi\rangle$ with fidelity approximately $\varepsilon$.

\begin{figure*}[!t]
\centering
\includegraphics[width = 0.7\textwidth]{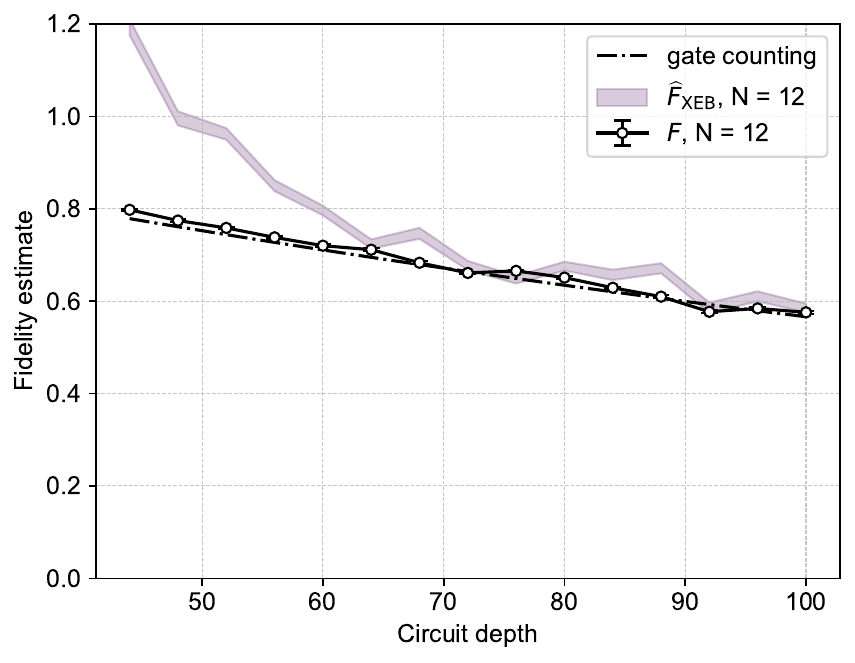}
\caption{Comparing fidelity estimators for $N=12$ brickwork circuits as a function of circuit depth in noisy simulation. The three metrics are comparable for circuit depths roughly exceeding $d\sim 60$. Each circuit consisted of a brickwork circuit as described above, alternating layers of $ZZ(\theta)$ entangling gates with $\theta = 0.2 \pi$ with random $\SUnitaries(2)$ unitaries on each qubit, and was simulated using $2^{14} = 16384$ shots. Uncertainties on the fidelity $F$ are computed via a non-parametric bootstrap resampling \cite{EfroTibs1993} with $100$ resamples, and are plotted above but are smaller than the plotted point size.\label{fig:svsims} }
\end{figure*}

\subsection{Clifford Measurement}
\label{sec:clifford_meas}

It is well-known that every $n$-qubit Clifford unitary admits a decomposition into at most $O(n^2 / \log n)$ one- and two-qubit gates~\cite{AG04-stabilizer}.
The method we choose for performing a Clifford measurement is asymptotically less efficient, requiring $\Theta(n^2)$ gates, but it yields a circuit in a particularly simple form that has favorable constants in practice.

A first observation that reduces the complexity of implementation is to notice that a Clifford measurement can be performed with fewer gates than a general Clifford unitary, because a Clifford unitary contains extra degrees of freedom.
In particular, if $U$ is a Clifford unitary and we wish to measure in the basis $B \coloneqq \{U\ket{x} : x \in \{0,1\}^n\}$, then the basis is invariant under right-multiplication of $U$ by any diagonal or permutation operator.
So, the information in $U$ is a redundant way to specify $B$.
In fact, if $U'$ is \textit{any} Clifford that maps the all-zeros state into $B$ (i.e., $U'\ket{0^n} \in B$), then $U'$ also maps the entire computational basis into $B$ (i.e., $B = \{U'\ket{x} : x \in \{0,1\}^n\}$).
This is a simple consequence of the fact that all states in any stabilizer basis share the same stabilizer group, modulo phase.
Hence, to perform a measurement in a random Clifford basis, it instead suffices to sample a circuit $U$ such that $U\ket{0^n}$ is a random stabilizer state, and then measure in the basis $B = \{U\ket{x} : x \in \{0,1\}^n\}$.
So, henceforth in this section, we focus on the task of generating a circuit to produce a random stabilizer state $\ket{\psi}$ from $\ket{0^n}$.

Every stabilizer state has an expression in the following canonical form: 
\begin{equation}
\label{eq:stab_canonical_form}
\sum_{x \in y + V} (-1)^{q(x)} i^{\ell(x)} \ket{x}, 
\end{equation}
where $V$ is a linear subspace of $\F_2^n$, $y \in \F_2^n / V$ is an affine shift, $q$ is an $\F_2$-quadratic polynomial over $y + V$, and $\ell$ is an $\F_2$-linear polynomial over $y + V$~\cite{DDM03-quadratic,GV08-LU-LC,VDN10-gk}.
Aaronson and Gottesman~\cite[Lemma 6]{AG04-stabilizer} further observe that any stabilizer state may be easily transformed into one where the subspace $V$ is all of $\F_2^n$.
In particular, they show that for any stabilizer state $\ket{\psi}$, there is some subset $\overline{\PivotSet}$ of qubits for which applying Hadamard gates to the subset yields a different stabilizer state that has nonzero support over all $2^n$ basis states.
Using the above expansion of a stabilizer state in terms of a quadratic and linear forms in the phases, it follows that there exist $s, z \in \{0,1\}^n$ and an upper-triangular matrix $M \in \{0,1\}^{n \times n}$ with zero diagonal such that\footnote{According to~\cite{DDM03-quadratic,GV08-LU-LC,VDN10-gk}, the exponent $x \cdot s$ in the term $i^{x \cdot s}$ should be computed mod $2$.
However, our analysis below shows that it is always possible (and arguably more natural) to use a representation where the exponent is taken mod $4$.
}
\begin{align*}
H^{\otimes \overline{\PivotSet}}\ket{\psi} &= \frac{1}{\sqrt{2^n}}\sum_{x \in \{0,1\}^n} i^{x \cdot s} (-1)^{x^\top M x + x \cdot z} \ket{x}\\
&= \mparen{\prod_{i < j : M_{i,j} = 1} CZ_{i,j}} \mparen{\prod_{i: z_i = 1} Z_i} \mparen{\prod_{i: s_i = 1} S_i} \frac{1}{\sqrt{2^n}}\sum_{x\in\{0,1\}^n}\ket{x}\\
&= \mparen{\prod_{i < j : M_{i,j} = 1} CZ_{i,j}} \mparen{\prod_{i: z_i = 1} Z_i} \mparen{\prod_{i: s_i = 1} S_i} H^{\otimes n} \ket{0^n}\\
&= \mparen{\prod_{i < j : M_{i,j} = 1} CZ_{i,j}} \mparen{\prod_{i: s_i = 1} S_i} H^{\otimes n} \mparen{\prod_{i: z_i = 1} X_i}\ket{0^n}.
\end{align*}
In the last line, we used the identity $ZH = HX$.
Thus, any stabilizer state may be prepared from $\ket{0^n}$ by a circuit of the form $X$--$H$--$S$--$CZ$--$H$, where $X$, $H$, $S$, and $CZ$ represent respectively layers of Pauli-$X$, Hadamard, Phase, and controlled-$Z$ gates.
This is the circuit form with which we perform our random stabilizer measurements.\footnote{Actually, the $X$ layer is redundant when performing a stabilizer measurement because it simply permutes the basis, as noted above.
Our implementation did not make use of this optimization.}

A few algorithmic approaches are possible for the task of sampling a random stabilizer state in $X$--$H$--$S$--$CZ$--$H$ form.
For example, Bravyi and Gosset~\cite{BG16-clifford} give an $O(n^2)$ time algorithm for sampling a random stabilizer state in the canonical form of \Cref{eq:stab_canonical_form}, which can then be transformed into the above form using standard techniques for manipulating stabilizer tableaux~\cite{AG04-stabilizer}.
However, we find that there is a particularly elegant algorithm (\Cref{alg:random_stab} below) to \textit{directly} sample a random stabilizer state in the form of an $X$--$H$--$S$--$CZ$--$H$ circuit that prepares it.
Technically, \Cref{alg:random_stab} produces a circuit of the form $H$--$CNOT$--$X$--$S$--$Z$--$CZ$, but we show in \Cref{prop:random_stab_transform} below the algorithm how to easily transform into an $X$--$H$--$S$--$CZ$--$H$ circuit.

\normalem %
\begin{algorithm}[H]

\DontPrintSemicolon
\caption{\label{alg:random_stab} Sampling a random stabilizer state}
\KwInput{$n \in \Naturals$}

Initialize the state to $\ket{0^n}$.

Sample $d \in \{0,1,\ldots,n\}$ according to the probability distribution
\[
P(d) \coloneqq \frac{\eta(d)}{\sum_{m=0}^n \eta(m)},
\]
where $\eta(0) \coloneqq 1$ and
\[
\eta(d) \coloneqq 2^{-d(d+1)/2} \cdot \prod_{a=1}^d \frac{1 - 2^{d - n - a}}{1 - 2^{-a}}.
\]

Sample a random $k$-dimensional linear subspace $V$ of $\mathbb{F}_2^n$

Let $M \in \mathbb{F}_2^{k \times n}$ be a generator matrix for $V$ in reduced row echelon form 

Let $\PivotSet \subseteq [n]$ be the set of column indices of $M$ that contain the leading $1$s (i.e., the pivot columns) \tcc*{$|\PivotSet| = k$}

Apply Hadamard gates to all qubits in $\PivotSet$

\ForEach{$i \in \PivotSet$}{
    \For{$j = i + 1$ \KwTo $n$}{
        \If{$M_{ij} = 1$}{
            Apply $CNOT_{i \to j}$
        }
    }
}\tcc*{Now the state is a uniform superposition over $V$}

Apply $X$ gates to each qubit in $\overline{\PivotSet}$ independently with probability $\frac{1}{2}$ \tcc*{Add a random affine shift to $V$}
\label{line:affine_superposition}

Apply $S$ gates to each qubit in $\PivotSet$ independently with probability $\frac{1}{2}$
\label{line:s_gates} \tcc*{Randomize the imaginary phase}

Apply $Z$ gates to each qubit in $\PivotSet$ independently with probability $\frac{1}{2}$
\label{line:z_gates} \tcc*{Randomize the real linear phase}

Apply $CZ$ gates to each pair of qubits in $\PivotSet$ independently with probability $\frac{1}{2}$
\label{line:cz_gates}
\tcc*{Randomize the quadratic phase}

\end{algorithm}
\ULforem %

Before proving the correctness of our algorithm, let us prove that it can indeed be modified into an $X$--$H$--$S$--$CZ$--$H$ circuit.

\begin{proposition}\label{prop:random_stab_transform}
    The output of \Cref{alg:random_stab} can be made into an $X$--$H$--$S$--$CZ$--$H$ circuit.
\end{proposition}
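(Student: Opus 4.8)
The plan is to transform the circuit produced by \Cref{alg:random_stab}---which, reading the steps in order, has the shape $H$--$CNOT$--$X$--$S$--$Z$--$CZ$ acting on $\ket{0^n}$---into $X$--$H$--$S$--$CZ$--$H$ form by a short chain of elementary Clifford identities. The crucial structural fact I will use is that, because $M$ is in reduced row echelon form, every $CNOT_{i\to j}$ generated has control $i\in\PivotSet$ (a pivot column) and target $j\in\overline{\PivotSet}$ (a non-pivot column). First I would collapse the $CNOT$ layer: using $CNOT_{c\to t}=H_t\,CZ_{c,t}\,H_t$, and cancelling the inner Hadamards of consecutive $CNOT$s sharing a target (since $H_t^2=I$), the $CNOT$ layer becomes $H_{\overline{\PivotSet}'}\,U_{CZ}^{cn}\,H_{\overline{\PivotSet}'}$, where $\overline{\PivotSet}'\subseteq\overline{\PivotSet}$ is the set of qubits appearing as targets and $U_{CZ}^{cn}$ is a layer of $CZ$ gates on $\PivotSet$--$\overline{\PivotSet}'$ pairs; the conjugations on distinct targets may be gathered in this way because all $CZ$ gates are diagonal and the various $H_t$ act on disjoint qubits. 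Composing with the algorithm's opening Hadamard layer $H_{\PivotSet}$---which acts on $\PivotSet$, disjoint from $\overline{\PivotSet}'$---merges the two into a single Hadamard layer $H_{\PivotSet\cup\overline{\PivotSet}'}$, so the opening of the circuit has become $H_{\overline{\PivotSet}'}\,U_{CZ}^{cn}\,H_{\PivotSet\cup\overline{\PivotSet}'}$.

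Next I would tidy up the remaining layers. The random $X$ gates of \Cref{alg:random_stab} act on $\overline{\PivotSet}$ and so commute with every $CNOT$ (a Pauli $X$ on a target commutes with $CNOT$) and with $H_{\PivotSet}$; I push this $X$ layer to the very front, where it merely prepares a computational basis state---the first layer of the target form. The $S$, $Z$, and $CZ$ layers of the algorithm are all diagonal and supported on $\PivotSet$, hence commute with the newly exposed left Hadamard layer $H_{\overline{\PivotSet}'}$ (disjoint support), which can therefore be slid all the way to the outside. Merging $U_{CZ}^{cn}$ into the algorithm's $CZ$ layer yields one $CZ$ layer, while the $S$ and $Z$ layers combine into a layer applying $S_i^{a_i}$ with $a_i\in\{0,1,2,3\}$ on each $i\in\PivotSet$; using $S^2=Z$ and $S^3=SZ$ this is a layer of $S$ gates times a layer of $Z$ gates. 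That residual $Z$ layer, being supported on $\PivotSet$ on which the remaining Hadamard layer $H_{\PivotSet\cup\overline{\PivotSet}'}$ acts, is pushed through that Hadamard layer by $ZH=HX$, becoming $X$ gates that fuse with the front $X$ layer. Reading the result in circuit order now gives exactly $X$--$H$--$S$--$CZ$--$H$ (choosing instead $S^3=S^\dagger$ and $S=S^\dagger Z$ yields the $S^\dagger$ convention of \Cref{subfigb:clifford-meas}).

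I expect the one genuine subtlety to be this extra Hadamard layer $H_{\overline{\PivotSet}'}$ that the $CNOT\to CZ$ rewrite generates: the whole argument closes only because $\overline{\PivotSet}'\subseteq\overline{\PivotSet}$ is disjoint from $\PivotSet$, the common support of all the diagonal ($S$, $Z$, $CZ$) layers, so $H_{\overline{\PivotSet}'}$ can be commuted past them and out to the boundary of the circuit. This disjointness is exactly what the reduced-row-echelon structure buys us---controls are pivots, targets are non-pivots. Everything else is routine: the cancellations $H_t^2=I$, the commutation of diagonal gates, the reduction of the combined $S$/$Z$ layer to a single phase gate per qubit via arithmetic in $\mathbb{Z}/4$ together with $SZ=S^\dagger$, and the Pauli identities $ZH=HX$ and ``$X$ on a $CNOT$ target commutes.''
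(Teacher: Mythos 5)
Your proposal is correct and follows essentially the same route as the paper's proof: the same key structural fact (reduced row echelon form forces $CNOT$ controls into $\PivotSet$ and targets into $\overline{\PivotSet}$), the same rewrite $CNOT = (I\otimes H)\,CZ\,(I\otimes H)$ on the target, and the same commutation identities ($ZH = HX$, diagonal gates commute, $X$ on a target commutes with $CNOT$) to push the $X$ and $Z$ layers to the front and the stray Hadamards to the boundary. The only differences are cosmetic---you perform the $CNOT\to CZ$ rewrite before relocating the $X$ and $Z$ layers rather than after, and you track the subset $\overline{\PivotSet}'$ of actual targets slightly more explicitly---neither of which changes the argument.
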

\begin{proof}
   The output of \Cref{alg:random_stab} has the form 
\[
\Qcircuit @C=1em @R=1em {
\lstick{\PivotSet}          & \gate{H} & \ctrl{1} & \gate{S} & \gate{Z} & \gate{CZ} & \qw \\
\lstick{\overline{\PivotSet}} & \qw      & \targ    & \gate{X} & \qw      & \qw & \qw
}
\]   
where the two wires represent qubits in $\PivotSet$ and $\overline{\PivotSet}$, respectively, and a gate on one of the wires represents a layer of those gates applied to the qubits in that set.
The $CNOT$ gates have controls in $\PivotSet$ and targets in $\overline{\PivotSet}$, due to the structure of the reduced row echelon form: each pivot column contains a single nonzero entry.

We begin by moving the $X$ layer to the front of the circuit, because it is disjoint from the $H$ layer and commutes with the $CNOT$ gates, whose targets are only on $\overline{\PivotSet}$.
This puts the circuit in the form:
\[
\Qcircuit @C=1em @R=1em {
\lstick{\PivotSet} & \gate{H} & \ctrl{1} & \gate{S} & \gate{Z} & \gate{CZ} & \qw \\
\lstick{\overline{\PivotSet}} & \gate{X}      & \targ    & \qw & \qw   & \qw   & \qw
}
\]  

Next, we move the $Z$ layer to the front of the circuit.
These gates commute with the $S$ and $CNOT$ layers, as the $Z$ gates are confined to $\PivotSet$ while the $CNOT$ targets are confined to $\overline{\PivotSet}$. Finally, using $HZ = XH$, we pull the $Z$ layer through $H$, transforming it into $X$. The circuit then becomes:
\[
\Qcircuit @C=1em @R=1em {
\lstick{\PivotSet} & \gate{X} & \gate{H} & \ctrl{1} & \gate{S} & \gate{CZ} & \qw \\
\lstick{\overline{\PivotSet}} & \gate{X} & \qw      & \targ    & \qw & \qw      & \qw
}
\]  

We then rewrite each $CNOT$ using the identity $CNOT = (I \otimes H)CZ(I \otimes H)$, which transforms the circuit into: 
\[
\Qcircuit @C=1em @R=1em {
\lstick{\PivotSet} & \gate{X} & \gate{H} &\multigate{1}{CZ} & \gate{S} & \gate{CZ} & \qw \\
\lstick{\overline{\PivotSet}} & \gate{X} & \gate{H}      & \ghost{CZ}    & \gate{H} & \qw      & \qw
}
\] 

Finally, we move the $CZ$ and rightmost $H$ layers to the end of the circuit, which is valid because $CZ$ and $S$ gates are diagonal (and thus commute), and because the $H$ layer acts on disjoint qubits from $S$ and $CZ$.
The final circuit has the form
\[
\Qcircuit @C=1em @R=1em {
\lstick{\PivotSet} & \gate{X} & \gate{H} & \gate{S} &\multigate{1}{CZ} & \qw & \qw\\
\lstick{\overline{\PivotSet}} & \gate{X} & \gate{H} & \qw     & \ghost{CZ}    & \gate{H} & \qw   
}
\qedhere
\] 

\end{proof}

We now prove the correctness of the algorithm.

\begin{theorem}
    The output of \Cref{alg:random_stab} is a uniformly random $n$-qubit stabilizer state.
\end{theorem}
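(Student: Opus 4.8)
The plan is to show that Algorithm~\ref{alg:random_stab} (followed by the rewriting of \Cref{prop:random_stab_transform}) places exactly equal probability on each of the $2^n\prod_{k=1}^n(2^k+1)$ $n$-qubit stabilizer states. I would stratify the analysis by the dimension $d$ of the affine support of the output state: first argue that, conditioned on the value $d$ sampled in Line~2, the algorithm is uniform over the set of stabilizer states whose support is a $d$-dimensional affine subspace; then check that the sampling weight $\eta(d)$ is proportional to the size $N_d$ of that set, so that the unconditional distribution is uniform.

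For the conditional-uniformity step I would use the canonical form of \Cref{eq:stab_canonical_form}: the support of every stabilizer state is an affine subspace $y+V$, and once we impose the phase convention that the amplitude on the representative of $y$ supported on $\overline{\PivotSet}$ equals $+1$, the state is determined by the triple consisting of $V$, the coset, and a phase function. Expressed in terms of the pivot coordinates of $V$ --- which form a coordinate system for $V$ because $M$ is in reduced row echelon form --- this phase function is exactly a map $x\mapsto i^{L(x)}$ with $L\colon\mathbb F_2^d\to\mathbb Z_4$ a quadratic form vanishing at $0$, i.e.\ $L(x)=\sum_i c_i x_i+2\sum_{i<j}b_{ij}x_ix_j$ with $c_i\in\mathbb Z_4$, $b_{ij}\in\mathbb Z_2$; here I would invoke the observation of the footnote to \Cref{eq:stab_canonical_form} that the linear exponent may be taken mod $4$ rather than mod $2$, since the discrepancy between the two reductions of $\sum s_ix_i$ is itself quadratic in $x$ and can be absorbed into the quadratic part. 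Tracing the gates: the Hadamards on $\PivotSet$ build the uniform superposition over $\operatorname{span}\{e_i:i\in\PivotSet\}$; the (pairwise-commuting) CNOT layer, whose controls lie in $\PivotSet$ and whose targets lie in $\overline{\PivotSet}$ because each pivot column of $M$ is the unique nonzero entry of its column, turns this into $\frac{1}{\sqrt{2^d}}\sum_{v\in V}\ket v$; since $M$ is in RREF we have $\mathbb F_2^n=V\oplus W$ with $W$ the strings supported on $\overline{\PivotSet}$, so the random $X$-layer hits each of the $2^{n-d}$ cosets of $V$ exactly once; and the diagonal $S$, $Z$, $CZ$ layers (on $\PivotSet$, $\PivotSet$, and pairs of $\PivotSet$) multiply the state by $i^{L(x)}$ with $c_i=\mathds 1[S_i\text{ applied}]+2\,\mathds 1[Z_i\text{ applied}]$ and $b_{ij}=\mathds 1[CZ_{ij}\text{ applied}]$. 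The map from gate configurations to quadratic forms $L$ is a bijection onto all such $L$, because a $\mathbb Z_4$-quadratic form is determined by $L(0)=0$, the singleton values $L(e_i)=c_i$, and the pair values $L(e_i+e_j)=c_i+c_j+2b_{ij}$, and inverting these relations uniquely recovers which $S$, $Z$, $CZ$ gates were applied. Hence, for fixed $d$ the algorithm produces each of the $N_d=\binom nd_2\cdot 2^{\,n-d}\cdot 2^{\,d(d+3)/2}$ stabilizer states of support dimension $d$ exactly once; and $\sum_{d=0}^n N_d=2^n\prod_{k=1}^n(2^k+1)$, so these $N_d$ partition all stabilizer states.

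Finally I would compute the output distribution. By construction $d$ is drawn with probability $P(d)=\eta(d)/\sum_m\eta(m)$, and then $V$, the $X$-layer, the $S$-layer, the $Z$-layer, and the $CZ$-layer are each sampled uniformly and independently, so a fixed stabilizer state of support dimension $d$ is output with probability $P(d)\cdot\binom nd_2^{-1}\cdot 2^{-(n-d)}\cdot 2^{-d}\cdot 2^{-d}\cdot 2^{-\binom d2}=P(d)/N_d$. The output is therefore uniform over all stabilizer states precisely when $\eta(d)$ is proportional to $N_d$, equivalently to $\binom nd_2\,2^{d(d+1)/2}$ (absorbing the common factor $2^n$). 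This last step is a direct computation using the identity $\binom nd_2=2^{d(n-d)}\prod_{a=1}^d\frac{1-2^{d-n-a}}{1-2^{-a}}$, obtained by clearing powers of $2$ from the Gaussian binomial evaluated at $1/2$; this rewrites the product appearing in $\eta(d)$ into the required closed form, and summing over $d$ reproduces $2^n\prod_{k=1}^n(2^k+1)$ in the normalization.

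The step I expect to be the main obstacle is establishing the bijection between gate configurations and stabilizer states --- specifically, pinning the canonical form of \Cref{eq:stab_canonical_form} down to match exactly the expressive power of the $X$--$H$--$S$--$CZ$--$H$ gate set: handling the mod-$2$ versus mod-$4$ ambiguity in the linear phase, verifying that the reduced-row-echelon structure of $M$ yields both the $V\oplus W$ splitting and the pivot-coordinate parametrization of the phase function, and confirming surjectivity, i.e.\ that every canonical-form state is actually reached. A cleaner alternative that avoids quoting the external canonical form is to prove surjectivity directly: given an arbitrary stabilizer state, bring its stabilizer tableau to a standard form \`a la Aaronson--Gottesman~\cite{AG04-stabilizer}, read off $d$, the subspace $V$, the coset, and the phase data, and match them to a configuration; injectivity then comes for free from the cardinality count $\#\{\text{configurations with parameter }d\}=N_d$.
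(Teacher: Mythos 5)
Your overall strategy coincides with the paper's: stratify by the dimension of the affine support, show that conditioned on the subspace data the $X$/$S$/$Z$/$CZ$ randomness hits each stabilizer state supported on that affine subspace exactly once, and then match the sampling weight $\eta(d)$ against the stratum sizes. Where the paper cites Bravyi--Gosset for the statement that $P(d)$ is the correct marginal and argues conditional uniformity by a one-line distinctness-plus-counting argument, you propose to verify the weight-matching explicitly via a Gaussian binomial identity and to exhibit the configuration-to-state map as a bijection onto $\mathbb{Z}_4$-valued quadratic phase functions. Both refinements are sound and make the proof more self-contained: the identity $\binom{n}{d}_2 = 2^{d(n-d)}\prod_{a=1}^d\frac{1-2^{d-n-a}}{1-2^{-a}}$ is correct, and your count $N_d=\binom{n}{d}_2\, 2^{n-d}\, 2^{d(d+3)/2}$ of stabilizer states with $d$-dimensional support agrees with the paper's $2^{2d}\cdot 2^{d(d-1)/2}$ states per affine subspace.

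The gap is in the last step, which you call ``a direct computation'' but do not carry out: with your identification of the sampled parameter $d$ with $\dim V$, the required proportionality $\eta(d)\propto\binom{n}{d}_2 2^{d(d+1)/2}$ is \emph{false}. Substituting the Gaussian binomial identity gives $\eta(d)=2^{-d(d+1)/2-d(n-d)}\binom{n}{d}_2$, so the ratio of $\eta(d)$ to $\binom{n}{d}_2 2^{d(d+1)/2}$ is $2^{-d(n+1)}$, which depends on $d$. Concretely, for $n=1$ one gets $P(0)=2/3$ and $P(1)=1/3$, whereas the fractions of one-qubit stabilizer states with support dimension $0$ and $1$ are $1/3$ and $2/3$; under your reading the algorithm would output $\ket{0}$ or $\ket{1}$ with probability $2/3$ and would not be uniform. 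The resolution is that the parameter $d$ drawn in \Cref{alg:random_stab} must be the \emph{codimension} of $V$, i.e.\ the undefined $k$ in the subsequent subspace-sampling step is $k=n-d$ (the pseudocode never relates $k$ to $d$, and the paper's own phrase ``the dimension $d$ of $V$ is distributed according to $P(d)$'' carries the same off-by-reflection slip). With $k=n-d$ one checks $\eta(d)=2^{-n(n+1)/2}\binom{n}{d}_2 2^{(n-d)(n-d+1)/2}=2^{-n(n+3)/2}N_{n-d}$, the normalization $\sum_d\eta(d)=2^{-n(n+3)/2}\cdot 2^n\prod_{j=1}^n(2^j+1)$ comes out right, and the rest of your argument (with the conditional-uniformity step carried out on a $k$-dimensional $V$) closes the proof.
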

\begin{proof}
    It is clear that the state produced at the end of \Cref{alg:random_stab} is a stabilizer state, because it is obtained from $\ket{0^n}$ by applying a sequence of Clifford gates.
    Hence, we just need to show that the distribution of states sampled is uniform over all stabilizer states.

    Recall from \Cref{eq:stab_canonical_form} that for every $n$-qubit stabilizer state $\ket{\psi}$, there is a linear subspace $V \subseteq \mathbb{F}_2^n$ and an affine shift $x \in \mathbb{F}_2^n / V$ such that $\ket{\psi}$ has uniform support over the affine subspace $x + V$.
    For a randomly chosen stabilizer state, Bravyi and Gosset~\cite[Proof of Lemma 5]{BG16-clifford} show that the dimension $d$ of $V$ is distributed according to $P(d)$, and the choice of $x + V$ is uniformly random over all dimension-$d$ affine subspaces.
    Furthermore, there are exactly $2^{2d} \cdot 2^{d(d-1)/2}$ stabilizer states with support over $x + V$~\cite{BG16-clifford}.

    Observe that each step of \Cref{alg:random_stab} (indirectly) randomizes one of the components in the canonical form from \Cref{eq:stab_canonical_form}.
    After \Cref{line:affine_superposition} and until the end of the algorithm, the support of the state is a random affine subspace $x + V$ with the correct distribution.
    Moreover, notice that each choice of the randomization in \Cref{line:s_gates,line:z_gates,line:cz_gates} yields a distinct state supported over $x + V$, because the restriction of $x + V$ to $\PivotSet$ contains all $2^k$ possible strings.
    \Cref{line:s_gates,line:z_gates,line:cz_gates} use exactly $2^{2d} \cdot 2^{d(d-1)/2}$ bits of randomness, so conditioned on the choice of affine subspace $x + V$, the output distribution is uniform over the $2^{2d} \cdot 2^{d(d-1)/2}$ stabilizer states supported over $x + V$.
    Therefore, the distribution sampled by the algorithm is uniform over all stabilizer states.
\end{proof}

\bibliography{MainBibliography}

\begin{thebibliography}{97}%
\makeatletter
\providecommand \@ifxundefined [1]{%
 \@ifx{#1\undefined}
}%
\providecommand \@ifnum [1]{%
 \ifnum #1\expandafter \@firstoftwo
 \else \expandafter \@secondoftwo
 \fi
}%
\providecommand \@ifx [1]{%
 \ifx #1\expandafter \@firstoftwo
 \else \expandafter \@secondoftwo
 \fi
}%
\providecommand \natexlab [1]{#1}%
\providecommand \enquote  [1]{``#1''}%
\providecommand \bibnamefont  [1]{#1}%
\providecommand \bibfnamefont [1]{#1}%
\providecommand \citenamefont [1]{#1}%
\providecommand \href@noop [0]{\@secondoftwo}%
\providecommand \href [0]{\begingroup \@sanitize@url \@href}%
\providecommand \@href[1]{\@@startlink{#1}\@@href}%
\providecommand \@@href[1]{\endgroup#1\@@endlink}%
\providecommand \@sanitize@url [0]{\catcode `\\12\catcode `\$12\catcode
  `\&12\catcode `\#12\catcode `\^12\catcode `\_12\catcode `\%12\relax}%
\providecommand \@@startlink[1]{}%
\providecommand \@@endlink[0]{}%
\providecommand \url  [0]{\begingroup\@sanitize@url \@url }%
\providecommand \@url [1]{\endgroup\@href {#1}{\urlprefix }}%
\providecommand \urlprefix  [0]{URL }%
\providecommand \Eprint [0]{\href }%
\providecommand \doibase [0]{https://doi.org/}%
\providecommand \selectlanguage [0]{\@gobble}%
\providecommand \bibinfo  [0]{\@secondoftwo}%
\providecommand \bibfield  [0]{\@secondoftwo}%
\providecommand \translation [1]{[#1]}%
\providecommand \BibitemOpen [0]{}%
\providecommand \bibitemStop [0]{}%
\providecommand \bibitemNoStop [0]{.\EOS\space}%
\providecommand \EOS [0]{\spacefactor3000\relax}%
\providecommand \BibitemShut  [1]{\csname bibitem#1\endcsname}%
\let\auto@bib@innerbib\@empty
\bibitem [{\citenamefont {Shor}(1997)}]{Sho99-factoring}%
  \BibitemOpen
  \bibfield  {author} {\bibinfo {author} {\bibfnamefont {P.~W.}\ \bibnamefont
  {Shor}},\ }\bibfield  {title} {\bibinfo {title} {Polynomial-time algorithms
  for prime factorization and discrete logarithms on a quantum computer},\
  }\href {https://doi.org/10.1137/S0097539795293172} {\bibfield  {journal}
  {\bibinfo  {journal} {SIAM Journal on Computing}\ }\textbf {\bibinfo {volume}
  {26}},\ \bibinfo {pages} {1484} (\bibinfo {year} {1997})}\BibitemShut
  {NoStop}%
\bibitem [{\citenamefont {Levin}(2003)}]{levin2003tale}%
  \BibitemOpen
  \bibfield  {author} {\bibinfo {author} {\bibfnamefont {L.~A.}\ \bibnamefont
  {Levin}},\ }\bibfield  {title} {\bibinfo {title} {{The Tale of One-Way
  Functions}},\ }\href {https://doi.org/10.1023/A:1023634616182} {\bibfield
  {journal} {\bibinfo  {journal} {Problems of Information Transmission}\
  }\textbf {\bibinfo {volume} {39}},\ \bibinfo {pages} {92} (\bibinfo {year}
  {2003})}\BibitemShut {NoStop}%
\bibitem [{\citenamefont {Dyakonov}(2019)}]{dyakonov2019}%
  \BibitemOpen
  \bibfield  {author} {\bibinfo {author} {\bibfnamefont {M.}~\bibnamefont
  {Dyakonov}},\ }\bibfield  {title} {\bibinfo {title} {When will useful quantum
  computers be constructed? not in the foreseeable future, this physicist
  argues. here's why: The case against: Quantum computing},\ }\href
  {https://doi.org/10.1109/MSPEC.2019.8651931} {\bibfield  {journal} {\bibinfo
  {journal} {IEEE Spectrum}\ }\textbf {\bibinfo {volume} {56}},\ \bibinfo
  {pages} {24} (\bibinfo {year} {2019})}\BibitemShut {NoStop}%
\bibitem [{\citenamefont {Kalai}(2020)}]{kalai2020argument}%
  \BibitemOpen
  \bibfield  {author} {\bibinfo {author} {\bibfnamefont {G.}~\bibnamefont
  {Kalai}},\ }\bibfield  {title} {\bibinfo {title} {The argument against
  quantum computers},\ }in\ \href
  {https://doi.org/10.1007/978-3-030-34316-3_18} {\emph {\bibinfo {booktitle}
  {Quantum, probability, logic: The work and influence of Itamar Pitowsky}}}\
  (\bibinfo  {publisher} {Springer},\ \bibinfo {year} {2020})\ pp.\ \bibinfo
  {pages} {399--422}\BibitemShut {NoStop}%
\bibitem [{\citenamefont {'t~Hooft}(1999)}]{Hooft_1999}%
  \BibitemOpen
  \bibfield  {author} {\bibinfo {author} {\bibfnamefont {G.}~\bibnamefont
  {'t~Hooft}},\ }\bibfield  {title} {\bibinfo {title} {Quantum gravity as a
  dissipative deterministic system},\ }\href
  {https://doi.org/10.1088/0264-9381/16/10/316} {\bibfield  {journal} {\bibinfo
   {journal} {Classical and Quantum Gravity}\ }\textbf {\bibinfo {volume}
  {16}},\ \bibinfo {pages} {3263} (\bibinfo {year} {1999})}\BibitemShut
  {NoStop}%
\bibitem [{\citenamefont {Wolfram}(2002)}]{Wolfram2002}%
  \BibitemOpen
  \bibfield  {author} {\bibinfo {author} {\bibfnamefont {S.}~\bibnamefont
  {Wolfram}},\ }\href@noop {} {{\selectlanguage {English}\emph {\bibinfo
  {title} {A New Kind of Science}}}}\ (\bibinfo  {publisher} {Wolfram Media},\
  \bibinfo {year} {2002})\BibitemShut {NoStop}%
\bibitem [{\citenamefont {Fuchs}(2011)}]{Fuchs_2011}%
  \BibitemOpen
  \bibfield  {author} {\bibinfo {author} {\bibfnamefont {C.~A.}\ \bibnamefont
  {Fuchs}},\ }\href {https://doi.org/10.1017/CBO9780511762789} {\emph {\bibinfo
  {title} {Coming of Age With Quantum Information: Notes on a Paulian Idea}}}\
  (\bibinfo  {publisher} {Cambridge University Press},\ \bibinfo {year}
  {2011})\BibitemShut {NoStop}%
\bibitem [{\citenamefont {Freedman}\ and\ \citenamefont
  {Clauser}(1972)}]{Clauser-BellInequalityViolation}%
  \BibitemOpen
  \bibfield  {author} {\bibinfo {author} {\bibfnamefont {S.~J.}\ \bibnamefont
  {Freedman}}\ and\ \bibinfo {author} {\bibfnamefont {J.~F.}\ \bibnamefont
  {Clauser}},\ }\bibfield  {title} {\bibinfo {title} {Experimental test of
  local hidden-variable theories},\ }\href
  {https://doi.org/10.1103/PhysRevLett.28.938} {\bibfield  {journal} {\bibinfo
  {journal} {Phys. Rev. Lett.}\ }\textbf {\bibinfo {volume} {28}},\ \bibinfo
  {pages} {938} (\bibinfo {year} {1972})}\BibitemShut {NoStop}%
\bibitem [{\citenamefont {Aspect}\ \emph {et~al.}(1982)\citenamefont {Aspect},
  \citenamefont {Grangier},\ and\ \citenamefont
  {Roger}}]{Aspect-BellInequalityViolation}%
  \BibitemOpen
  \bibfield  {author} {\bibinfo {author} {\bibfnamefont {A.}~\bibnamefont
  {Aspect}}, \bibinfo {author} {\bibfnamefont {P.}~\bibnamefont {Grangier}},\
  and\ \bibinfo {author} {\bibfnamefont {G.}~\bibnamefont {Roger}},\ }\bibfield
   {title} {\bibinfo {title} {Experimental realization of
  einstein-podolsky-rosen-bohm gedankenexperiment: A new violation of bell's
  inequalities},\ }\href {https://doi.org/10.1103/PhysRevLett.49.91} {\bibfield
   {journal} {\bibinfo  {journal} {Phys. Rev. Lett.}\ }\textbf {\bibinfo
  {volume} {49}},\ \bibinfo {pages} {91} (\bibinfo {year} {1982})}\BibitemShut
  {NoStop}%
\bibitem [{\citenamefont {Weihs}\ \emph {et~al.}(1998)\citenamefont {Weihs},
  \citenamefont {Jennewein}, \citenamefont {Simon}, \citenamefont
  {Weinfurter},\ and\ \citenamefont
  {Zeilinger}}]{Zeilinger-BellInequalityViolation}%
  \BibitemOpen
  \bibfield  {author} {\bibinfo {author} {\bibfnamefont {G.}~\bibnamefont
  {Weihs}}, \bibinfo {author} {\bibfnamefont {T.}~\bibnamefont {Jennewein}},
  \bibinfo {author} {\bibfnamefont {C.}~\bibnamefont {Simon}}, \bibinfo
  {author} {\bibfnamefont {H.}~\bibnamefont {Weinfurter}},\ and\ \bibinfo
  {author} {\bibfnamefont {A.}~\bibnamefont {Zeilinger}},\ }\bibfield  {title}
  {\bibinfo {title} {Violation of bell's inequality under strict einstein
  locality conditions},\ }\href {https://doi.org/10.1103/PhysRevLett.81.5039}
  {\bibfield  {journal} {\bibinfo  {journal} {Phys. Rev. Lett.}\ }\textbf
  {\bibinfo {volume} {81}},\ \bibinfo {pages} {5039} (\bibinfo {year}
  {1998})}\BibitemShut {NoStop}%
\bibitem [{\citenamefont {Arute}\ \emph {et~al.}(2019)\citenamefont {Arute}
  \emph {et~al.}}]{AAB+19-google-supremacy}%
  \BibitemOpen
  \bibfield  {author} {\bibinfo {author} {\bibfnamefont {F.}~\bibnamefont
  {Arute}} \emph {et~al.},\ }\bibfield  {title} {\bibinfo {title} {Quantum
  supremacy using a programmable superconducting processor},\ }\href
  {https://doi.org/10.1038/s41586-019-1666-5} {\bibfield  {journal} {\bibinfo
  {journal} {Nature}\ }\textbf {\bibinfo {volume} {574}},\ \bibinfo {pages}
  {505} (\bibinfo {year} {2019})}\BibitemShut {NoStop}%
\bibitem [{\citenamefont {Wu}\ \emph {et~al.}(2021)\citenamefont {Wu} \emph
  {et~al.}}]{WBC+21-ustc-superconducting}%
  \BibitemOpen
  \bibfield  {author} {\bibinfo {author} {\bibfnamefont {Y.}~\bibnamefont {Wu}}
  \emph {et~al.},\ }\bibfield  {title} {\bibinfo {title} {Strong quantum
  computational advantage using a superconducting quantum processor},\ }\href
  {https://doi.org/10.1103/PhysRevLett.127.180501} {\bibfield  {journal}
  {\bibinfo  {journal} {Phys. Rev. Lett.}\ }\textbf {\bibinfo {volume} {127}},\
  \bibinfo {pages} {180501} (\bibinfo {year} {2021})}\BibitemShut {NoStop}%
\bibitem [{\citenamefont {Zhu}\ \emph {et~al.}(2022)\citenamefont {Zhu} \emph
  {et~al.}}]{ZCZ21}%
  \BibitemOpen
  \bibfield  {author} {\bibinfo {author} {\bibfnamefont {Q.}~\bibnamefont
  {Zhu}} \emph {et~al.},\ }\bibfield  {title} {\bibinfo {title} {Quantum
  computational advantage via 60-qubit 24-cycle random circuit sampling},\
  }\href {https://doi.org/10.1016/j.scib.2021.10.017} {\bibfield  {journal}
  {\bibinfo  {journal} {Science Bulletin}\ }\textbf {\bibinfo {volume} {67}},\
  \bibinfo {pages} {240} (\bibinfo {year} {2022})}\BibitemShut {NoStop}%
\bibitem [{\citenamefont {Morvan}\ \emph {et~al.}(2024)\citenamefont {Morvan}
  \emph {et~al.}}]{Sycamore70}%
  \BibitemOpen
  \bibfield  {author} {\bibinfo {author} {\bibfnamefont {A.}~\bibnamefont
  {Morvan}} \emph {et~al.},\ }\bibfield  {title} {\bibinfo {title} {Phase
  transitions in random circuit sampling},\ }\href
  {https://doi.org/10.1038/s41586-024-07998-6} {\bibfield  {journal} {\bibinfo
  {journal} {Nature}\ }\textbf {\bibinfo {volume} {634}},\ \bibinfo {pages}
  {328} (\bibinfo {year} {2024})}\BibitemShut {NoStop}%
\bibitem [{\citenamefont {Gao}\ \emph {et~al.}(2025)\citenamefont {Gao} \emph
  {et~al.}}]{ZCZ24}%
  \BibitemOpen
  \bibfield  {author} {\bibinfo {author} {\bibfnamefont {D.}~\bibnamefont
  {Gao}} \emph {et~al.},\ }\bibfield  {title} {\bibinfo {title} {Establishing a
  new benchmark in quantum computational advantage with 105-qubit zuchongzhi
  3.0 processor},\ }\href {https://doi.org/10.1103/PhysRevLett.134.090601}
  {\bibfield  {journal} {\bibinfo  {journal} {Phys. Rev. Lett.}\ }\textbf
  {\bibinfo {volume} {134}},\ \bibinfo {pages} {090601} (\bibinfo {year}
  {2025})}\BibitemShut {NoStop}%
\bibitem [{\citenamefont {DeCross}\ \emph {et~al.}(2025)\citenamefont {DeCross}
  \emph {et~al.}}]{PhysRevX.15.021052}%
  \BibitemOpen
  \bibfield  {author} {\bibinfo {author} {\bibfnamefont {M.}~\bibnamefont
  {DeCross}} \emph {et~al.},\ }\bibfield  {title} {\bibinfo {title}
  {Computational power of random quantum circuits in arbitrary geometries},\
  }\href {https://doi.org/10.1103/PhysRevX.15.021052} {\bibfield  {journal}
  {\bibinfo  {journal} {Phys. Rev. X}\ }\textbf {\bibinfo {volume} {15}},\
  \bibinfo {pages} {021052} (\bibinfo {year} {2025})}\BibitemShut {NoStop}%
\bibitem [{\citenamefont {Madsen}\ \emph {et~al.}(2022)\citenamefont {Madsen}
  \emph {et~al.}}]{madsen2022quantum}%
  \BibitemOpen
  \bibfield  {author} {\bibinfo {author} {\bibfnamefont {L.}~\bibnamefont
  {Madsen}} \emph {et~al.},\ }\bibfield  {title} {\bibinfo {title} {Quantum
  computational advantage with a programmable photonic processor},\ }\href
  {https://doi.org/10.1038/s41586-022-04725-x} {\bibfield  {journal} {\bibinfo
  {journal} {Nature}\ }\textbf {\bibinfo {volume} {606}},\ \bibinfo {pages}
  {75} (\bibinfo {year} {2022})}\BibitemShut {NoStop}%
\bibitem [{\citenamefont {Aaronson}\ and\ \citenamefont
  {Arkhipov}(2013)}]{AA13-boson-sampling}%
  \BibitemOpen
  \bibfield  {author} {\bibinfo {author} {\bibfnamefont {S.}~\bibnamefont
  {Aaronson}}\ and\ \bibinfo {author} {\bibfnamefont {A.}~\bibnamefont
  {Arkhipov}},\ }\bibfield  {title} {\bibinfo {title} {The computational
  complexity of linear optics},\ }\href
  {https://doi.org/10.4086/toc.2013.v009a004} {\bibfield  {journal} {\bibinfo
  {journal} {Theory of Computing}\ }\textbf {\bibinfo {volume} {9}},\ \bibinfo
  {pages} {143} (\bibinfo {year} {2013})}\BibitemShut {NoStop}%
\bibitem [{\citenamefont {Aaronson}\ and\ \citenamefont
  {Gunn}(2020)}]{AG20-xeb}%
  \BibitemOpen
  \bibfield  {author} {\bibinfo {author} {\bibfnamefont {S.}~\bibnamefont
  {Aaronson}}\ and\ \bibinfo {author} {\bibfnamefont {S.}~\bibnamefont
  {Gunn}},\ }\bibfield  {title} {\bibinfo {title} {On the classical hardness of
  spoofing linear cross-entropy benchmarking},\ }\href
  {https://doi.org/10.4086/toc.2020.v016a011} {\bibfield  {journal} {\bibinfo
  {journal} {Theory of Computing}\ }\textbf {\bibinfo {volume} {16}},\ \bibinfo
  {pages} {1} (\bibinfo {year} {2020})}\BibitemShut {NoStop}%
\bibitem [{\citenamefont {Aaronson}\ \emph {et~al.}(2024)\citenamefont
  {Aaronson}, \citenamefont {Buhrman},\ and\ \citenamefont
  {Kretschmer}}]{ABK23-relation}%
  \BibitemOpen
  \bibfield  {author} {\bibinfo {author} {\bibfnamefont {S.}~\bibnamefont
  {Aaronson}}, \bibinfo {author} {\bibfnamefont {H.}~\bibnamefont {Buhrman}},\
  and\ \bibinfo {author} {\bibfnamefont {W.}~\bibnamefont {Kretschmer}},\
  }\bibfield  {title} {\bibinfo {title} {{A Qubit, a Coin, and an Advice String
  Walk into a Relational Problem}},\ }in\ \href
  {https://doi.org/10.4230/LIPIcs.ITCS.2024.1} {\emph {\bibinfo {booktitle}
  {15th Innovations in Theoretical Computer Science Conference (ITCS 2024)}}},\
  \bibinfo {series} {Leibniz International Proceedings in Informatics
  (LIPIcs)}, Vol.\ \bibinfo {volume} {287},\ \bibinfo {editor} {edited by\
  \bibinfo {editor} {\bibfnamefont {V.}~\bibnamefont {Guruswami}}}\ (\bibinfo
  {publisher} {Schloss Dagstuhl -- Leibniz-Zentrum f{\"u}r Informatik},\
  \bibinfo {address} {Dagstuhl, Germany},\ \bibinfo {year} {2024})\ pp.\
  \bibinfo {pages} {1:1--1:24}\BibitemShut {NoStop}%
\bibitem [{\citenamefont {Bar-Yossef}\ \emph {et~al.}(2008)\citenamefont
  {Bar-Yossef}, \citenamefont {Jayram},\ and\ \citenamefont
  {Kerenidis}}]{BJK08-comm}%
  \BibitemOpen
  \bibfield  {author} {\bibinfo {author} {\bibfnamefont {Z.}~\bibnamefont
  {Bar-Yossef}}, \bibinfo {author} {\bibfnamefont {T.~S.}\ \bibnamefont
  {Jayram}},\ and\ \bibinfo {author} {\bibfnamefont {I.}~\bibnamefont
  {Kerenidis}},\ }\bibfield  {title} {\bibinfo {title} {Exponential separation
  of quantum and classical one-way communication complexity},\ }\href
  {https://doi.org/10.1137/060651835} {\bibfield  {journal} {\bibinfo
  {journal} {SIAM Journal on Computing}\ }\textbf {\bibinfo {volume} {38}},\
  \bibinfo {pages} {366} (\bibinfo {year} {2008})}\BibitemShut {NoStop}%
\bibitem [{\citenamefont {Gavinsky}\ \emph {et~al.}(2009)\citenamefont
  {Gavinsky}, \citenamefont {Kempe}, \citenamefont {Kerenidis}, \citenamefont
  {Raz},\ and\ \citenamefont {de~Wolf}}]{GKK+09-comm}%
  \BibitemOpen
  \bibfield  {author} {\bibinfo {author} {\bibfnamefont {D.}~\bibnamefont
  {Gavinsky}}, \bibinfo {author} {\bibfnamefont {J.}~\bibnamefont {Kempe}},
  \bibinfo {author} {\bibfnamefont {I.}~\bibnamefont {Kerenidis}}, \bibinfo
  {author} {\bibfnamefont {R.}~\bibnamefont {Raz}},\ and\ \bibinfo {author}
  {\bibfnamefont {R.}~\bibnamefont {de~Wolf}},\ }\bibfield  {title} {\bibinfo
  {title} {Exponential separation for one-way quantum communication complexity,
  with applications to cryptography},\ }\href
  {https://doi.org/10.1137/070706550} {\bibfield  {journal} {\bibinfo
  {journal} {SIAM Journal on Computing}\ }\textbf {\bibinfo {volume} {38}},\
  \bibinfo {pages} {1695} (\bibinfo {year} {2009})}\BibitemShut {NoStop}%
\bibitem [{\citenamefont {Montanaro}(2019)}]{Mon19-transmit}%
  \BibitemOpen
  \bibfield  {author} {\bibinfo {author} {\bibfnamefont {A.}~\bibnamefont
  {Montanaro}},\ }\bibfield  {title} {\bibinfo {title} {Quantum states cannot
  be transmitted efficiently classically},\ }\href
  {https://doi.org/10.22331/q-2019-06-28-154} {\bibfield  {journal} {\bibinfo
  {journal} {{Quantum}}\ }\textbf {\bibinfo {volume} {3}},\ \bibinfo {pages}
  {154} (\bibinfo {year} {2019})}\BibitemShut {NoStop}%
\bibitem [{\citenamefont {Buhrman}\ \emph {et~al.}(2012)\citenamefont
  {Buhrman}, \citenamefont {Regev}, \citenamefont {Scarpa},\ and\ \citenamefont
  {Wolf}}]{BRSW12-bell}%
  \BibitemOpen
  \bibfield  {author} {\bibinfo {author} {\bibfnamefont {H.}~\bibnamefont
  {Buhrman}}, \bibinfo {author} {\bibfnamefont {O.}~\bibnamefont {Regev}},
  \bibinfo {author} {\bibfnamefont {G.}~\bibnamefont {Scarpa}},\ and\ \bibinfo
  {author} {\bibfnamefont {R.~d.}\ \bibnamefont {Wolf}},\ }\bibfield  {title}
  {\bibinfo {title} {Near-optimal and explicit bell inequality violations},\
  }\href {https://doi.org/10.4086/toc.2012.v008a027} {\bibfield  {journal}
  {\bibinfo  {journal} {Theory of Computing}\ }\textbf {\bibinfo {volume}
  {8}},\ \bibinfo {pages} {623} (\bibinfo {year} {2012})}\BibitemShut {NoStop}%
\bibitem [{\citenamefont {Quantinuum\mbox{\:}H1-1}(2025)}]{expt_data_full}%
  \BibitemOpen
  \bibfield  {author} {\bibinfo {author} {\bibnamefont
  {Quantinuum\mbox{\:}H1-1}},\ }\href {https://www.quantinuum.com/} {}
  (\bibinfo {year} {March 28, 2025 - April 17, 2025})\BibitemShut {NoStop}%
\bibitem [{\citenamefont {Pino}\ \emph {et~al.}(2021)\citenamefont {Pino} \emph
  {et~al.}}]{Pino2020}%
  \BibitemOpen
  \bibfield  {author} {\bibinfo {author} {\bibfnamefont {J.~M.}\ \bibnamefont
  {Pino}} \emph {et~al.},\ }\bibfield  {title} {\bibinfo {title} {Demonstration
  of the trapped-ion quantum ccd computer architecture},\ }\href
  {https://doi.org/10.1038/s41586-021-03318-4} {\bibfield  {journal} {\bibinfo
  {journal} {Nature}\ }\textbf {\bibinfo {volume} {592}},\ \bibinfo {pages}
  {209} (\bibinfo {year} {2021})}\BibitemShut {NoStop}%
\bibitem [{\citenamefont {Ryan-Anderson}\ \emph {et~al.}(2021)\citenamefont
  {Ryan-Anderson} \emph {et~al.}}]{RyanAnderson2021}%
  \BibitemOpen
  \bibfield  {author} {\bibinfo {author} {\bibfnamefont {C.}~\bibnamefont
  {Ryan-Anderson}} \emph {et~al.},\ }\bibfield  {title} {\bibinfo {title}
  {Realization of real-time fault-tolerant quantum error correction},\ }\href
  {https://link.aps.org/doi/10.1103/PhysRevX.11.041058} {\bibfield  {journal}
  {\bibinfo  {journal} {Phys. Rev. X}\ }\textbf {\bibinfo {volume} {11}},\
  \bibinfo {pages} {041058} (\bibinfo {year} {2021})}\BibitemShut {NoStop}%
\bibitem [{\citenamefont {Ryan-Anderson}\ \emph {et~al.}(2022)\citenamefont
  {Ryan-Anderson} \emph {et~al.}}]{RyanAnderson2022}%
  \BibitemOpen
  \bibfield  {author} {\bibinfo {author} {\bibfnamefont {C.}~\bibnamefont
  {Ryan-Anderson}} \emph {et~al.},\ }\bibfield  {title} {\bibinfo {title}
  {Implementing fault-tolerant entangling gates on the five-qubit code and the
  color code},\ }\href {https://arxiv.org/abs/2208.01863} {\bibfield  {journal}
  {\bibinfo  {journal} {arxiv:2208.01863}\ } (\bibinfo {year}
  {2022})}\BibitemShut {NoStop}%
\bibitem [{git(2025)}]{github_spec}%
  \BibitemOpen
  \href {https://github.com/CQCL/quantinuum-hardware-specifications} {\bibinfo
  {title} {Quantinuum hardware specificiations}} (\bibinfo {year}
  {2025})\BibitemShut {NoStop}%
\bibitem [{\citenamefont {Magesan}\ \emph
  {et~al.}(2011{\natexlab{a}})\citenamefont {Magesan}, \citenamefont
  {Gambetta},\ and\ \citenamefont {Emerson}}]{Magesan2011}%
  \BibitemOpen
  \bibfield  {author} {\bibinfo {author} {\bibfnamefont {E.}~\bibnamefont
  {Magesan}}, \bibinfo {author} {\bibfnamefont {J.~M.}\ \bibnamefont
  {Gambetta}},\ and\ \bibinfo {author} {\bibfnamefont {J.}~\bibnamefont
  {Emerson}},\ }\bibfield  {title} {\bibinfo {title} {Scalable and robust
  randomized benchmarking of quantum processes},\ }\href
  {https://doi.org/10.1103/PhysRevLett.106.180504} {\bibfield  {journal}
  {\bibinfo  {journal} {Phys. Rev. Lett.}\ }\textbf {\bibinfo {volume} {106}},\
  \bibinfo {pages} {180504} (\bibinfo {year} {2011}{\natexlab{a}})}\BibitemShut
  {NoStop}%
\bibitem [{\citenamefont {Proctor}\ \emph {et~al.}(2019)\citenamefont {Proctor}
  \emph {et~al.}}]{Proctor2019}%
  \BibitemOpen
  \bibfield  {author} {\bibinfo {author} {\bibfnamefont {T.~J.}\ \bibnamefont
  {Proctor}} \emph {et~al.},\ }\bibfield  {title} {\bibinfo {title} {Direct
  randomized benchmarking for multiqubit devices},\ }\href
  {https://doi.org/10.1103/PhysRevLett.123.030503} {\bibfield  {journal}
  {\bibinfo  {journal} {Phys. Rev. Lett.}\ }\textbf {\bibinfo {volume} {123}},\
  \bibinfo {pages} {030503} (\bibinfo {year} {2019})}\BibitemShut {NoStop}%
\bibitem [{\citenamefont {Iten}\ \emph {et~al.}(2016)\citenamefont {Iten},
  \citenamefont {Colbeck}, \citenamefont {Kukuljan}, \citenamefont {Home},\
  and\ \citenamefont {Christandl}}]{ICKHC16-isometries}%
  \BibitemOpen
  \bibfield  {author} {\bibinfo {author} {\bibfnamefont {R.}~\bibnamefont
  {Iten}}, \bibinfo {author} {\bibfnamefont {R.}~\bibnamefont {Colbeck}},
  \bibinfo {author} {\bibfnamefont {I.}~\bibnamefont {Kukuljan}}, \bibinfo
  {author} {\bibfnamefont {J.}~\bibnamefont {Home}},\ and\ \bibinfo {author}
  {\bibfnamefont {M.}~\bibnamefont {Christandl}},\ }\bibfield  {title}
  {\bibinfo {title} {Quantum circuits for isometries},\ }\href
  {https://doi.org/10.1103/PhysRevA.93.032318} {\bibfield  {journal} {\bibinfo
  {journal} {Phys. Rev. A}\ }\textbf {\bibinfo {volume} {93}},\ \bibinfo
  {pages} {032318} (\bibinfo {year} {2016})}\BibitemShut {NoStop}%
\bibitem [{\citenamefont {Haw}\ \emph {et~al.}(2015)\citenamefont {Haw} \emph
  {et~al.}}]{Haw-true-randomness}%
  \BibitemOpen
  \bibfield  {author} {\bibinfo {author} {\bibfnamefont {J.~Y.}\ \bibnamefont
  {Haw}} \emph {et~al.},\ }\bibfield  {title} {\bibinfo {title} {Maximization
  of extractable randomness in a quantum random-number generator},\ }\href
  {https://doi.org/10.1103/PhysRevApplied.3.054004} {\bibfield  {journal}
  {\bibinfo  {journal} {Phys. Rev. Appl.}\ }\textbf {\bibinfo {volume} {3}},\
  \bibinfo {pages} {054004} (\bibinfo {year} {2015})}\BibitemShut {NoStop}%
\bibitem [{\citenamefont {Kumar}\ \emph {et~al.}(2019)\citenamefont {Kumar},
  \citenamefont {Kerenidis},\ and\ \citenamefont {Diamanti}}]{KKD19-comm}%
  \BibitemOpen
  \bibfield  {author} {\bibinfo {author} {\bibfnamefont {N.}~\bibnamefont
  {Kumar}}, \bibinfo {author} {\bibfnamefont {I.}~\bibnamefont {Kerenidis}},\
  and\ \bibinfo {author} {\bibfnamefont {E.}~\bibnamefont {Diamanti}},\
  }\bibfield  {title} {\bibinfo {title} {Experimental demonstration of quantum
  advantage for one-way communication complexity surpassing best-known
  classical protocol},\ }\href {https://doi.org/10.1038/s41467-019-12139-z}
  {\bibfield  {journal} {\bibinfo  {journal} {Nature Communications}\ }\textbf
  {\bibinfo {volume} {10}},\ \bibinfo {pages} {4152} (\bibinfo {year}
  {2019})}\BibitemShut {NoStop}%
\bibitem [{\citenamefont {Scheidl}\ \emph {et~al.}(2010)\citenamefont {Scheidl}
  \emph {et~al.}}]{scheidl-freedom-of-choice-2010}%
  \BibitemOpen
  \bibfield  {author} {\bibinfo {author} {\bibfnamefont {T.}~\bibnamefont
  {Scheidl}} \emph {et~al.},\ }\bibfield  {title} {\bibinfo {title} {Violation
  of local realism with freedom of choice},\ }\href
  {https://doi.org/10.1073/pnas.1002780107} {\bibfield  {journal} {\bibinfo
  {journal} {Proceedings of the National Academy of Sciences}\ }\textbf
  {\bibinfo {volume} {107}},\ \bibinfo {pages} {19708} (\bibinfo {year}
  {2010})}\BibitemShut {NoStop}%
\bibitem [{\citenamefont {Brunner}\ \emph {et~al.}(2014)\citenamefont
  {Brunner}, \citenamefont {Cavalcanti}, \citenamefont {Pironio}, \citenamefont
  {Scarani},\ and\ \citenamefont {Wehner}}]{bell-nonlocality-brunner-2014}%
  \BibitemOpen
  \bibfield  {author} {\bibinfo {author} {\bibfnamefont {N.}~\bibnamefont
  {Brunner}}, \bibinfo {author} {\bibfnamefont {D.}~\bibnamefont {Cavalcanti}},
  \bibinfo {author} {\bibfnamefont {S.}~\bibnamefont {Pironio}}, \bibinfo
  {author} {\bibfnamefont {V.}~\bibnamefont {Scarani}},\ and\ \bibinfo {author}
  {\bibfnamefont {S.}~\bibnamefont {Wehner}},\ }\bibfield  {title} {\bibinfo
  {title} {Bell nonlocality},\ }\href
  {https://doi.org/10.1103/RevModPhys.86.419} {\bibfield  {journal} {\bibinfo
  {journal} {Rev. Mod. Phys.}\ }\textbf {\bibinfo {volume} {86}},\ \bibinfo
  {pages} {419} (\bibinfo {year} {2014})}\BibitemShut {NoStop}%
\bibitem [{\citenamefont {Kretschmer}(2021{\natexlab{a}})}]{Kre21-tsirelson}%
  \BibitemOpen
  \bibfield  {author} {\bibinfo {author} {\bibfnamefont {W.}~\bibnamefont
  {Kretschmer}},\ }\bibfield  {title} {\bibinfo {title} {The {Q}uantum
  {S}upremacy {T}sirelson {I}nequality},\ }\href
  {https://doi.org/10.22331/q-2021-10-07-560} {\bibfield  {journal} {\bibinfo
  {journal} {{Quantum}}\ }\textbf {\bibinfo {volume} {5}},\ \bibinfo {pages}
  {560} (\bibinfo {year} {2021}{\natexlab{a}})}\BibitemShut {NoStop}%
\bibitem [{\citenamefont {Aaronson}\ and\ \citenamefont
  {Hung}(2023)}]{AH23-random}%
  \BibitemOpen
  \bibfield  {author} {\bibinfo {author} {\bibfnamefont {S.}~\bibnamefont
  {Aaronson}}\ and\ \bibinfo {author} {\bibfnamefont {S.-H.}\ \bibnamefont
  {Hung}},\ }\bibfield  {title} {\bibinfo {title} {Certified randomness from
  quantum supremacy},\ }in\ \href {https://doi.org/10.1145/3564246.3585145}
  {\emph {\bibinfo {booktitle} {Proceedings of the 55th Annual ACM Symposium on
  Theory of Computing}}},\ \bibinfo {series and number} {STOC 2023}\ (\bibinfo
  {publisher} {Association for Computing Machinery},\ \bibinfo {address} {New
  York, NY, USA},\ \bibinfo {year} {2023})\ pp.\ \bibinfo {pages}
  {933--944}\BibitemShut {NoStop}%
\bibitem [{\citenamefont {Gao}\ \emph {et~al.}(2024)\citenamefont {Gao},
  \citenamefont {Kalinowski}, \citenamefont {Chou}, \citenamefont {Lukin},
  \citenamefont {Barak},\ and\ \citenamefont {Choi}}]{GKC+21-xeb}%
  \BibitemOpen
  \bibfield  {author} {\bibinfo {author} {\bibfnamefont {X.}~\bibnamefont
  {Gao}}, \bibinfo {author} {\bibfnamefont {M.}~\bibnamefont {Kalinowski}},
  \bibinfo {author} {\bibfnamefont {C.-N.}\ \bibnamefont {Chou}}, \bibinfo
  {author} {\bibfnamefont {M.~D.}\ \bibnamefont {Lukin}}, \bibinfo {author}
  {\bibfnamefont {B.}~\bibnamefont {Barak}},\ and\ \bibinfo {author}
  {\bibfnamefont {S.}~\bibnamefont {Choi}},\ }\bibfield  {title} {\bibinfo
  {title} {Limitations of linear cross-entropy as a measure for quantum
  advantage},\ }\href {https://doi.org/10.1103/PRXQuantum.5.010334} {\bibfield
  {journal} {\bibinfo  {journal} {PRX Quantum}\ }\textbf {\bibinfo {volume}
  {5}},\ \bibinfo {pages} {010334} (\bibinfo {year} {2024})}\BibitemShut
  {NoStop}%
\bibitem [{\citenamefont {Dalzell}\ \emph {et~al.}(2024)\citenamefont
  {Dalzell}, \citenamefont {Hunter-Jones},\ and\ \citenamefont
  {Brand{\~a}o}}]{DHJB24-noise}%
  \BibitemOpen
  \bibfield  {author} {\bibinfo {author} {\bibfnamefont {A.~M.}\ \bibnamefont
  {Dalzell}}, \bibinfo {author} {\bibfnamefont {N.}~\bibnamefont
  {Hunter-Jones}},\ and\ \bibinfo {author} {\bibfnamefont {F.~G. S.~L.}\
  \bibnamefont {Brand{\~a}o}},\ }\bibfield  {title} {\bibinfo {title} {Random
  quantum circuits transform local noise into global white noise},\ }\href
  {https://doi.org/10.1007/s00220-024-04958-z} {\bibfield  {journal} {\bibinfo
  {journal} {Communications in Mathematical Physics}\ }\textbf {\bibinfo
  {volume} {405}},\ \bibinfo {pages} {78} (\bibinfo {year} {2024})}\BibitemShut
  {NoStop}%
\bibitem [{\citenamefont {Aaronson}\ and\ \citenamefont
  {Gottesman}(2004)}]{AG04-stabilizer}%
  \BibitemOpen
  \bibfield  {author} {\bibinfo {author} {\bibfnamefont {S.}~\bibnamefont
  {Aaronson}}\ and\ \bibinfo {author} {\bibfnamefont {D.}~\bibnamefont
  {Gottesman}},\ }\bibfield  {title} {\bibinfo {title} {Improved simulation of
  stabilizer circuits},\ }\href {https://doi.org/10.1103/PhysRevA.70.052328}
  {\bibfield  {journal} {\bibinfo  {journal} {Phys. Rev. A}\ }\textbf {\bibinfo
  {volume} {70}},\ \bibinfo {pages} {052328} (\bibinfo {year}
  {2004})}\BibitemShut {NoStop}%
\bibitem [{\citenamefont {Montanaro}(2011)}]{Mon11-comm}%
  \BibitemOpen
  \bibfield  {author} {\bibinfo {author} {\bibfnamefont {A.}~\bibnamefont
  {Montanaro}},\ }\bibfield  {title} {\bibinfo {title} {A new exponential
  separation between quantum and classical one-way communication complexity},\
  }\href {https://doi.org/10.26421/QIC11.7-8-3} {\bibfield  {journal} {\bibinfo
   {journal} {Quantum Info. Comput.}\ }\textbf {\bibinfo {volume} {11}},\
  \bibinfo {pages} {574} (\bibinfo {year} {2011})}\BibitemShut {NoStop}%
\bibitem [{\citenamefont {Verbin}\ and\ \citenamefont
  {Yu}(2011)}]{VY11-stream}%
  \BibitemOpen
  \bibfield  {author} {\bibinfo {author} {\bibfnamefont {E.}~\bibnamefont
  {Verbin}}\ and\ \bibinfo {author} {\bibfnamefont {W.}~\bibnamefont {Yu}},\
  }\bibfield  {title} {\bibinfo {title} {The streaming complexity of cycle
  counting, sorting by reversals, and other problems},\ }in\ \href
  {https://doi.org/10.1137/1.9781611973082.2} {\emph {\bibinfo {booktitle}
  {Proceedings of the Twenty-Second Annual ACM-SIAM Symposium on Discrete
  Algorithms}}},\ \bibinfo {series and number} {SODA '11}\ (\bibinfo
  {publisher} {Society for Industrial and Applied Mathematics},\ \bibinfo
  {address} {USA},\ \bibinfo {year} {2011})\ pp.\ \bibinfo {pages}
  {11--25}\BibitemShut {NoStop}%
\bibitem [{\citenamefont {Shi}\ \emph {et~al.}(2012)\citenamefont {Shi},
  \citenamefont {Wu},\ and\ \citenamefont {Yu}}]{SWY12-hyper}%
  \BibitemOpen
  \bibfield  {author} {\bibinfo {author} {\bibfnamefont {Y.}~\bibnamefont
  {Shi}}, \bibinfo {author} {\bibfnamefont {X.}~\bibnamefont {Wu}},\ and\
  \bibinfo {author} {\bibfnamefont {W.}~\bibnamefont {Yu}},\ }\href@noop {}
  {\bibinfo {title} {Limits of quantum one-way communication by matrix
  hypercontractive inequality}},\ \bibinfo {howpublished} {Online} (\bibinfo
  {year} {2012})\BibitemShut {NoStop}%
\bibitem [{\citenamefont {Doriguello}\ and\ \citenamefont
  {Montanaro}(2020)}]{DM20-bhm}%
  \BibitemOpen
  \bibfield  {author} {\bibinfo {author} {\bibfnamefont {J.~F.}\ \bibnamefont
  {Doriguello}}\ and\ \bibinfo {author} {\bibfnamefont {A.}~\bibnamefont
  {Montanaro}},\ }\bibfield  {title} {\bibinfo {title} {{Exponential Quantum
  Communication Reductions from Generalizations of the Boolean Hidden Matching
  Problem}},\ }in\ \href {https://doi.org/10.4230/LIPIcs.TQC.2020.1} {\emph
  {\bibinfo {booktitle} {15th Conference on the Theory of Quantum Computation,
  Communication and Cryptography (TQC 2020)}}},\ \bibinfo {series} {Leibniz
  International Proceedings in Informatics (LIPIcs)}, Vol.\ \bibinfo {volume}
  {158},\ \bibinfo {editor} {edited by\ \bibinfo {editor} {\bibfnamefont
  {S.~T.}\ \bibnamefont {Flammia}}}\ (\bibinfo  {publisher} {Schloss Dagstuhl
  -- Leibniz-Zentrum f{\"u}r Informatik},\ \bibinfo {address} {Dagstuhl,
  Germany},\ \bibinfo {year} {2020})\ pp.\ \bibinfo {pages}
  {1:1--1:16}\BibitemShut {NoStop}%
\bibitem [{\citenamefont {Bennett}\ and\ \citenamefont
  {Wiesner}(1992)}]{Bennett-superdense}%
  \BibitemOpen
  \bibfield  {author} {\bibinfo {author} {\bibfnamefont {C.~H.}\ \bibnamefont
  {Bennett}}\ and\ \bibinfo {author} {\bibfnamefont {S.~J.}\ \bibnamefont
  {Wiesner}},\ }\bibfield  {title} {\bibinfo {title} {Communication via one-
  and two-particle operators on einstein-podolsky-rosen states},\ }\href
  {https://doi.org/10.1103/PhysRevLett.69.2881} {\bibfield  {journal} {\bibinfo
   {journal} {Phys. Rev. Lett.}\ }\textbf {\bibinfo {volume} {69}},\ \bibinfo
  {pages} {2881} (\bibinfo {year} {1992})}\BibitemShut {NoStop}%
\bibitem [{\citenamefont {Ambainis}\ \emph {et~al.}(1999)\citenamefont
  {Ambainis}, \citenamefont {Nayak}, \citenamefont {Ta-Shma},\ and\
  \citenamefont {Vazirani}}]{ANTV99-qrac}%
  \BibitemOpen
  \bibfield  {author} {\bibinfo {author} {\bibfnamefont {A.}~\bibnamefont
  {Ambainis}}, \bibinfo {author} {\bibfnamefont {A.}~\bibnamefont {Nayak}},
  \bibinfo {author} {\bibfnamefont {A.}~\bibnamefont {Ta-Shma}},\ and\ \bibinfo
  {author} {\bibfnamefont {U.}~\bibnamefont {Vazirani}},\ }\bibfield  {title}
  {\bibinfo {title} {Dense quantum coding and a lower bound for 1-way quantum
  automata},\ }in\ \href {https://doi.org/10.1145/301250.301347} {\emph
  {\bibinfo {booktitle} {Proceedings of the Thirty-First Annual ACM Symposium
  on Theory of Computing}}},\ \bibinfo {series and number} {STOC '99}\
  (\bibinfo  {publisher} {Association for Computing Machinery},\ \bibinfo
  {address} {New York, NY, USA},\ \bibinfo {year} {1999})\ pp.\ \bibinfo
  {pages} {376--383}\BibitemShut {NoStop}%
\bibitem [{\citenamefont {Nayak}(1999)}]{Nay99-qrac}%
  \BibitemOpen
  \bibfield  {author} {\bibinfo {author} {\bibfnamefont {A.}~\bibnamefont
  {Nayak}},\ }\bibfield  {title} {\bibinfo {title} {Optimal lower bounds for
  quantum automata and random access codes},\ }in\ \href
  {https://doi.org/10.1109/SFFCS.1999.814608} {\emph {\bibinfo {booktitle}
  {40th Annual Symposium on Foundations of Computer Science (Cat.
  No.99CB37039)}}}\ (\bibinfo {year} {1999})\ pp.\ \bibinfo {pages}
  {369--376}\BibitemShut {NoStop}%
\bibitem [{\citenamefont {Ambainis}\ \emph {et~al.}(2002)\citenamefont
  {Ambainis}, \citenamefont {Nayak}, \citenamefont {Ta-Shma},\ and\
  \citenamefont {Vazirani}}]{ANTV02-qrac-combined}%
  \BibitemOpen
  \bibfield  {author} {\bibinfo {author} {\bibfnamefont {A.}~\bibnamefont
  {Ambainis}}, \bibinfo {author} {\bibfnamefont {A.}~\bibnamefont {Nayak}},
  \bibinfo {author} {\bibfnamefont {A.}~\bibnamefont {Ta-Shma}},\ and\ \bibinfo
  {author} {\bibfnamefont {U.}~\bibnamefont {Vazirani}},\ }\bibfield  {title}
  {\bibinfo {title} {Dense quantum coding and quantum finite automata},\ }\href
  {https://doi.org/10.1145/581771.581773} {\bibfield  {journal} {\bibinfo
  {journal} {J. ACM}\ }\textbf {\bibinfo {volume} {49}},\ \bibinfo {pages}
  {496} (\bibinfo {year} {2002})}\BibitemShut {NoStop}%
\bibitem [{\citenamefont {Bennett}\ \emph {et~al.}(2002)\citenamefont
  {Bennett}, \citenamefont {Shor}, \citenamefont {Smolin},\ and\ \citenamefont
  {Thapliyal}}]{BSST02-capacity}%
  \BibitemOpen
  \bibfield  {author} {\bibinfo {author} {\bibfnamefont {C.}~\bibnamefont
  {Bennett}}, \bibinfo {author} {\bibfnamefont {P.}~\bibnamefont {Shor}},
  \bibinfo {author} {\bibfnamefont {J.}~\bibnamefont {Smolin}},\ and\ \bibinfo
  {author} {\bibfnamefont {A.}~\bibnamefont {Thapliyal}},\ }\bibfield  {title}
  {\bibinfo {title} {Entanglement-assisted capacity of a quantum channel and
  the reverse {S}hannon theorem},\ }\href
  {https://doi.org/10.1109/TIT.2002.802612} {\bibfield  {journal} {\bibinfo
  {journal} {IEEE Transactions on Information Theory}\ }\textbf {\bibinfo
  {volume} {48}},\ \bibinfo {pages} {2637} (\bibinfo {year}
  {2002})}\BibitemShut {NoStop}%
\bibitem [{\citenamefont {Lin}\ and\ \citenamefont
  {de~Wolf}(2025)}]{LdW25-qrac}%
  \BibitemOpen
  \bibfield  {author} {\bibinfo {author} {\bibfnamefont {H.-H.}\ \bibnamefont
  {Lin}}\ and\ \bibinfo {author} {\bibfnamefont {R.}~\bibnamefont {de~Wolf}},\
  }\href@noop {} {\bibinfo {title} {Getting almost all the bits from a quantum
  random access code}} (\bibinfo {year} {2025}),\ \Eprint
  {https://arxiv.org/abs/2506.01903} {arXiv:2506.01903 [quant-ph]} \BibitemShut
  {NoStop}%
\bibitem [{\citenamefont {Centrone}\ \emph {et~al.}(2021)\citenamefont
  {Centrone}, \citenamefont {Kumar}, \citenamefont {Diamanti},\ and\
  \citenamefont {Kerenidis}}]{CKDK21-np-ver}%
  \BibitemOpen
  \bibfield  {author} {\bibinfo {author} {\bibfnamefont {F.}~\bibnamefont
  {Centrone}}, \bibinfo {author} {\bibfnamefont {N.}~\bibnamefont {Kumar}},
  \bibinfo {author} {\bibfnamefont {E.}~\bibnamefont {Diamanti}},\ and\
  \bibinfo {author} {\bibfnamefont {I.}~\bibnamefont {Kerenidis}},\ }\bibfield
  {title} {\bibinfo {title} {Experimental demonstration of quantum advantage
  for {NP} verification with limited information},\ }\href
  {https://doi.org/10.1038/s41467-021-21119-1} {\bibfield  {journal} {\bibinfo
  {journal} {Nature Communications}\ }\textbf {\bibinfo {volume} {12}},\
  \bibinfo {pages} {850} (\bibinfo {year} {2021})}\BibitemShut {NoStop}%
\bibitem [{\citenamefont {Aaronson}\ \emph {et~al.}(2009)\citenamefont
  {Aaronson}, \citenamefont {Beigi}, \citenamefont {Drucker}, \citenamefont
  {Fefferman},\ and\ \citenamefont {Shor}}]{ABDFS09-unentanglement}%
  \BibitemOpen
  \bibfield  {author} {\bibinfo {author} {\bibfnamefont {S.}~\bibnamefont
  {Aaronson}}, \bibinfo {author} {\bibfnamefont {S.}~\bibnamefont {Beigi}},
  \bibinfo {author} {\bibfnamefont {A.}~\bibnamefont {Drucker}}, \bibinfo
  {author} {\bibfnamefont {B.}~\bibnamefont {Fefferman}},\ and\ \bibinfo
  {author} {\bibfnamefont {P.}~\bibnamefont {Shor}},\ }\bibfield  {title}
  {\bibinfo {title} {The power of unentanglement},\ }\href
  {https://doi.org/10.4086/toc.2009.v005a001} {\bibfield  {journal} {\bibinfo
  {journal} {Theory of Computing}\ }\textbf {\bibinfo {volume} {5}},\ \bibinfo
  {pages} {1} (\bibinfo {year} {2009})}\BibitemShut {NoStop}%
\bibitem [{\citenamefont {Huang}\ \emph {et~al.}(2021)\citenamefont {Huang},
  \citenamefont {Kueng},\ and\ \citenamefont {Preskill}}]{HKP21-info-bounds}%
  \BibitemOpen
  \bibfield  {author} {\bibinfo {author} {\bibfnamefont {H.-Y.}\ \bibnamefont
  {Huang}}, \bibinfo {author} {\bibfnamefont {R.}~\bibnamefont {Kueng}},\ and\
  \bibinfo {author} {\bibfnamefont {J.}~\bibnamefont {Preskill}},\ }\bibfield
  {title} {\bibinfo {title} {Information-theoretic bounds on quantum advantage
  in machine learning},\ }\href
  {https://doi.org/10.1103/PhysRevLett.126.190505} {\bibfield  {journal}
  {\bibinfo  {journal} {Phys. Rev. Lett.}\ }\textbf {\bibinfo {volume} {126}},\
  \bibinfo {pages} {190505} (\bibinfo {year} {2021})}\BibitemShut {NoStop}%
\bibitem [{\citenamefont {Aharonov}\ \emph {et~al.}(2022)\citenamefont
  {Aharonov}, \citenamefont {Cotler},\ and\ \citenamefont
  {Qi}}]{ACQ22-measure}%
  \BibitemOpen
  \bibfield  {author} {\bibinfo {author} {\bibfnamefont {D.}~\bibnamefont
  {Aharonov}}, \bibinfo {author} {\bibfnamefont {J.}~\bibnamefont {Cotler}},\
  and\ \bibinfo {author} {\bibfnamefont {X.-L.}\ \bibnamefont {Qi}},\
  }\bibfield  {title} {\bibinfo {title} {Quantum algorithmic measurement},\
  }\href {https://doi.org/10.1038/s41467-021-27922-0} {\bibfield  {journal}
  {\bibinfo  {journal} {Nature Communications}\ }\textbf {\bibinfo {volume}
  {13}},\ \bibinfo {pages} {887} (\bibinfo {year} {2022})}\BibitemShut
  {NoStop}%
\bibitem [{\citenamefont {Chen}\ \emph {et~al.}(2022)\citenamefont {Chen},
  \citenamefont {Cotler}, \citenamefont {Huang},\ and\ \citenamefont
  {Li}}]{CCHL22-memory}%
  \BibitemOpen
  \bibfield  {author} {\bibinfo {author} {\bibfnamefont {S.}~\bibnamefont
  {Chen}}, \bibinfo {author} {\bibfnamefont {J.}~\bibnamefont {Cotler}},
  \bibinfo {author} {\bibfnamefont {H.-Y.}\ \bibnamefont {Huang}},\ and\
  \bibinfo {author} {\bibfnamefont {J.}~\bibnamefont {Li}},\ }\bibfield
  {title} {\bibinfo {title} {Exponential separations between learning with and
  without quantum memory},\ }in\ \href
  {https://doi.org/10.1109/FOCS52979.2021.00063} {\emph {\bibinfo {booktitle}
  {2021 IEEE 62nd Annual Symposium on Foundations of Computer Science
  (FOCS)}}}\ (\bibinfo {year} {2022})\ pp.\ \bibinfo {pages}
  {574--585}\BibitemShut {NoStop}%
\bibitem [{\citenamefont {Huang}\ \emph {et~al.}(2022)\citenamefont {Huang}
  \emph {et~al.}}]{HBC+22-experiments}%
  \BibitemOpen
  \bibfield  {author} {\bibinfo {author} {\bibfnamefont {H.-Y.}\ \bibnamefont
  {Huang}} \emph {et~al.},\ }\bibfield  {title} {\bibinfo {title} {Quantum
  advantage in learning from experiments},\ }\href
  {https://doi.org/10.1126/science.abn7293} {\bibfield  {journal} {\bibinfo
  {journal} {Science}\ }\textbf {\bibinfo {volume} {376}},\ \bibinfo {pages}
  {1182} (\bibinfo {year} {2022})}\BibitemShut {NoStop}%
\bibitem [{\citenamefont {Brand{\~a}o}\ \emph {et~al.}(2016)\citenamefont
  {Brand{\~a}o}, \citenamefont {Harrow},\ and\ \citenamefont
  {Horodecki}}]{BHH16-designs}%
  \BibitemOpen
  \bibfield  {author} {\bibinfo {author} {\bibfnamefont {F.~G. S.~L.}\
  \bibnamefont {Brand{\~a}o}}, \bibinfo {author} {\bibfnamefont {A.~W.}\
  \bibnamefont {Harrow}},\ and\ \bibinfo {author} {\bibfnamefont
  {M.}~\bibnamefont {Horodecki}},\ }\bibfield  {title} {\bibinfo {title} {Local
  random quantum circuits are approximate polynomial-designs},\ }\href
  {https://doi.org/10.1007/s00220-016-2706-8} {\bibfield  {journal} {\bibinfo
  {journal} {Communications in Mathematical Physics}\ }\textbf {\bibinfo
  {volume} {346}},\ \bibinfo {pages} {397} (\bibinfo {year}
  {2016})}\BibitemShut {NoStop}%
\bibitem [{\citenamefont {Vershynin}(2018)}]{Ver18-hdp}%
  \BibitemOpen
  \bibfield  {author} {\bibinfo {author} {\bibfnamefont {R.}~\bibnamefont
  {Vershynin}},\ }\href {https://doi.org/10.1017/9781108231596} {\emph
  {\bibinfo {title} {High-Dimensional Probability: An Introduction with
  Applications in Data Science}}},\ Cambridge Series in Statistical and
  Probabilistic Mathematics\ (\bibinfo  {publisher} {Cambridge University
  Press},\ \bibinfo {year} {2018})\BibitemShut {NoStop}%
\bibitem [{\citenamefont {Pinelis}(2022)}]{Pin22-subexp}%
  \BibitemOpen
  \bibfield  {author} {\bibinfo {author} {\bibfnamefont {I.}~\bibnamefont
  {Pinelis}},\ }\bibfield  {title} {\bibinfo {title} {Improved concentration
  bounds for sums of independent sub-exponential random variables},\ }\href
  {https://doi.org/10.1016/j.spl.2022.109666} {\bibfield  {journal} {\bibinfo
  {journal} {Statistics \& Probability Letters}\ }\textbf {\bibinfo {volume}
  {191}},\ \bibinfo {pages} {109666} (\bibinfo {year} {2022})}\BibitemShut
  {NoStop}%
\bibitem [{\citenamefont {Dankert}\ \emph {et~al.}(2009)\citenamefont
  {Dankert}, \citenamefont {Cleve}, \citenamefont {Emerson},\ and\
  \citenamefont {Livine}}]{DCEL09-design}%
  \BibitemOpen
  \bibfield  {author} {\bibinfo {author} {\bibfnamefont {C.}~\bibnamefont
  {Dankert}}, \bibinfo {author} {\bibfnamefont {R.}~\bibnamefont {Cleve}},
  \bibinfo {author} {\bibfnamefont {J.}~\bibnamefont {Emerson}},\ and\ \bibinfo
  {author} {\bibfnamefont {E.}~\bibnamefont {Livine}},\ }\bibfield  {title}
  {\bibinfo {title} {Exact and approximate unitary 2-designs and their
  application to fidelity estimation},\ }\href
  {https://doi.org/10.1103/PhysRevA.80.012304} {\bibfield  {journal} {\bibinfo
  {journal} {Phys. Rev. A}\ }\textbf {\bibinfo {volume} {80}},\ \bibinfo
  {pages} {012304} (\bibinfo {year} {2009})}\BibitemShut {NoStop}%
\bibitem [{\citenamefont {Webb}(2016)}]{Web16-design}%
  \BibitemOpen
  \bibfield  {author} {\bibinfo {author} {\bibfnamefont {Z.}~\bibnamefont
  {Webb}},\ }\bibfield  {title} {\bibinfo {title} {The {C}lifford group forms a
  unitary 3-design},\ }\href {https://doi.org/10.26421/QIC16.15-16-8}
  {\bibfield  {journal} {\bibinfo  {journal} {Quantum Info. Comput.}\ }\textbf
  {\bibinfo {volume} {16}},\ \bibinfo {pages} {1379} (\bibinfo {year}
  {2016})}\BibitemShut {NoStop}%
\bibitem [{\citenamefont {{Zhu}}\ \emph {et~al.}(2016)\citenamefont {{Zhu}},
  \citenamefont {{Kueng}}, \citenamefont {{Grassl}},\ and\ \citenamefont
  {{Gross}}}]{Zhu16gracefully}%
  \BibitemOpen
  \bibfield  {author} {\bibinfo {author} {\bibfnamefont {H.}~\bibnamefont
  {{Zhu}}}, \bibinfo {author} {\bibfnamefont {R.}~\bibnamefont {{Kueng}}},
  \bibinfo {author} {\bibfnamefont {M.}~\bibnamefont {{Grassl}}},\ and\
  \bibinfo {author} {\bibfnamefont {D.}~\bibnamefont {{Gross}}},\ }\href@noop
  {} {\bibinfo {title} {{The Clifford group fails gracefully to be a unitary
  4-design}}} (\bibinfo {year} {2016}),\ \Eprint
  {https://arxiv.org/abs/1609.08172} {arXiv:1609.08172 [quant-ph]} \BibitemShut
  {NoStop}%
\bibitem [{\citenamefont {{Gross}}\ \emph {et~al.}(2021)\citenamefont
  {{Gross}}, \citenamefont {{Nezami}},\ and\ \citenamefont
  {{Walter}}}]{Gross17SWClifford}%
  \BibitemOpen
  \bibfield  {author} {\bibinfo {author} {\bibfnamefont {D.}~\bibnamefont
  {{Gross}}}, \bibinfo {author} {\bibfnamefont {S.}~\bibnamefont {{Nezami}}},\
  and\ \bibinfo {author} {\bibfnamefont {M.}~\bibnamefont {{Walter}}},\
  }\bibfield  {title} {\bibinfo {title} {{Schur-Weyl Duality for the Clifford
  Group with Applications: Property Testing, a Robust Hudson Theorem, and de
  Finetti Representations}},\ }\href
  {https://doi.org/10.1007/s00220-021-04118-7} {\bibfield  {journal} {\bibinfo
  {journal} {Communications in Mathematical Physics}\ }\textbf {\bibinfo
  {volume} {385}},\ \bibinfo {pages} {1325} (\bibinfo {year}
  {2021})}\BibitemShut {NoStop}%
\bibitem [{\citenamefont {Bittel}\ \emph {et~al.}(2025)\citenamefont {Bittel},
  \citenamefont {Eisert}, \citenamefont {Leone}, \citenamefont {Mele},\ and\
  \citenamefont {Oliviero}}]{BELMO25-clifford}%
  \BibitemOpen
  \bibfield  {author} {\bibinfo {author} {\bibfnamefont {L.}~\bibnamefont
  {Bittel}}, \bibinfo {author} {\bibfnamefont {J.}~\bibnamefont {Eisert}},
  \bibinfo {author} {\bibfnamefont {L.}~\bibnamefont {Leone}}, \bibinfo
  {author} {\bibfnamefont {A.~A.}\ \bibnamefont {Mele}},\ and\ \bibinfo
  {author} {\bibfnamefont {S.~F.~E.}\ \bibnamefont {Oliviero}},\ }\href@noop {}
  {\bibinfo {title} {A complete theory of the clifford commutant}} (\bibinfo
  {year} {2025}),\ \Eprint {https://arxiv.org/abs/2504.12263} {arXiv:2504.12263
  [quant-ph]} \BibitemShut {NoStop}%
\bibitem [{\citenamefont {Leone}\ \emph {et~al.}(2025)\citenamefont {Leone},
  \citenamefont {Oliviero}, \citenamefont {Hamma}, \citenamefont {Eisert},\
  and\ \citenamefont {Bittel}}]{LOHEB25-clifford}%
  \BibitemOpen
  \bibfield  {author} {\bibinfo {author} {\bibfnamefont {L.}~\bibnamefont
  {Leone}}, \bibinfo {author} {\bibfnamefont {S.~F.~E.}\ \bibnamefont
  {Oliviero}}, \bibinfo {author} {\bibfnamefont {A.}~\bibnamefont {Hamma}},
  \bibinfo {author} {\bibfnamefont {J.}~\bibnamefont {Eisert}},\ and\ \bibinfo
  {author} {\bibfnamefont {L.}~\bibnamefont {Bittel}},\ }\href@noop {}
  {\bibinfo {title} {The non-{C}lifford cost of random unitaries}} (\bibinfo
  {year} {2025}),\ \Eprint {https://arxiv.org/abs/2505.10110} {arXiv:2505.10110
  [quant-ph]} \BibitemShut {NoStop}%
\bibitem [{\citenamefont
  {Kretschmer}(2021{\natexlab{b}})}]{Kre21-pseudorandom}%
  \BibitemOpen
  \bibfield  {author} {\bibinfo {author} {\bibfnamefont {W.}~\bibnamefont
  {Kretschmer}},\ }\bibfield  {title} {\bibinfo {title} {{Quantum
  Pseudorandomness and Classical Complexity}},\ }in\ \href
  {https://doi.org/10.4230/LIPIcs.TQC.2021.2} {\emph {\bibinfo {booktitle}
  {16th Conference on the Theory of Quantum Computation, Communication and
  Cryptography (TQC 2021)}}},\ \bibinfo {series} {Leibniz International
  Proceedings in Informatics (LIPIcs)}, Vol.\ \bibinfo {volume} {197},\
  \bibinfo {editor} {edited by\ \bibinfo {editor} {\bibfnamefont {M.-H.}\
  \bibnamefont {Hsieh}}}\ (\bibinfo  {publisher} {Schloss Dagstuhl --
  Leibniz-Zentrum f{\"u}r Informatik},\ \bibinfo {address} {Dagstuhl,
  Germany},\ \bibinfo {year} {2021})\ pp.\ \bibinfo {pages} {2:1--2:20},\
  \Eprint {https://arxiv.org/abs/2103.09320v5} {2103.09320v5} \BibitemShut
  {NoStop}%
\bibitem [{\citenamefont
  {{Kaban-5\mbox{\:}(https://mathoverflow.net/users/126017/kaban-5)}}(2018)}]{Kab18-simplex}%
  \BibitemOpen
  \bibfield  {author} {\bibinfo {author} {\bibnamefont
  {{Kaban-5\mbox{\:}(https://mathoverflow.net/users/126017/kaban-5)}}},\
  }\href@noop {} {\bibinfo {title} {What is the probability distribution of the
  $k$th largest coordinate chosen over a simplex?}},\ \bibinfo {howpublished}
  {MathOverflow} (\bibinfo {year} {2018}),\ \Eprint
  {https://arxiv.org/abs/https://mathoverflow.net/q/312594}
  {https://mathoverflow.net/q/312594} \BibitemShut {NoStop}%
\bibitem [{\citenamefont {R{\'e}nyi}(1953)}]{Ren53-order}%
  \BibitemOpen
  \bibfield  {author} {\bibinfo {author} {\bibfnamefont {A.}~\bibnamefont
  {R{\'e}nyi}},\ }\bibfield  {title} {\bibinfo {title} {On the theory of order
  statistics},\ }\href {https://doi.org/10.1007/BF02127580} {\bibfield
  {journal} {\bibinfo  {journal} {Acta Mathematica Academiae Scientiarum
  Hungarica}\ }\textbf {\bibinfo {volume} {4}},\ \bibinfo {pages} {191}
  (\bibinfo {year} {1953})}\BibitemShut {NoStop}%
\bibitem [{\citenamefont {Renner}(2008)}]{Ren08-qkd}%
  \BibitemOpen
  \bibfield  {author} {\bibinfo {author} {\bibfnamefont {R.}~\bibnamefont
  {Renner}},\ }\bibfield  {title} {\bibinfo {title} {Security of quantum key
  distribution},\ }\href {https://doi.org/10.1142/S0219749908003256} {\bibfield
   {journal} {\bibinfo  {journal} {International Journal of Quantum
  Information}\ }\textbf {\bibinfo {volume} {06}},\ \bibinfo {pages} {1}
  (\bibinfo {year} {2008})}\BibitemShut {NoStop}%
\bibitem [{\citenamefont {Schuster}\ \emph {et~al.}(2025)\citenamefont
  {Schuster}, \citenamefont {Haferkamp},\ and\ \citenamefont
  {Huang}}]{SHH24-designs}%
  \BibitemOpen
  \bibfield  {author} {\bibinfo {author} {\bibfnamefont {T.}~\bibnamefont
  {Schuster}}, \bibinfo {author} {\bibfnamefont {J.}~\bibnamefont
  {Haferkamp}},\ and\ \bibinfo {author} {\bibfnamefont {H.-Y.}\ \bibnamefont
  {Huang}},\ }\bibfield  {title} {\bibinfo {title} {Random unitaries in
  extremely low depth},\ }\href {https://doi.org/10.1126/science.adv8590}
  {\bibfield  {journal} {\bibinfo  {journal} {Science}\ }\textbf {\bibinfo
  {volume} {389}},\ \bibinfo {pages} {92} (\bibinfo {year} {2025})}\BibitemShut
  {NoStop}%
\bibitem [{\citenamefont {Krol}\ and\ \citenamefont {Al-Ars}(2024)}]{KA24-zxz}%
  \BibitemOpen
  \bibfield  {author} {\bibinfo {author} {\bibfnamefont {A.~M.}\ \bibnamefont
  {Krol}}\ and\ \bibinfo {author} {\bibfnamefont {Z.}~\bibnamefont {Al-Ars}},\
  }\bibfield  {title} {\bibinfo {title} {Beyond quantum {S}hannon
  decomposition: Circuit construction for $n$-qubit gates based on
  block-{$ZXZ$} decomposition},\ }\href
  {https://doi.org/10.1103/PhysRevApplied.22.034019} {\bibfield  {journal}
  {\bibinfo  {journal} {Phys. Rev. Appl.}\ }\textbf {\bibinfo {volume} {22}},\
  \bibinfo {pages} {034019} (\bibinfo {year} {2024})}\BibitemShut {NoStop}%
\bibitem [{\citenamefont {Moses}\ \emph {et~al.}(2023)\citenamefont {Moses}
  \emph {et~al.}}]{PhysRevX.13.041052}%
  \BibitemOpen
  \bibfield  {author} {\bibinfo {author} {\bibfnamefont {S.~A.}\ \bibnamefont
  {Moses}} \emph {et~al.},\ }\bibfield  {title} {\bibinfo {title} {A race-track
  trapped-ion quantum processor},\ }\href
  {https://doi.org/10.1103/PhysRevX.13.041052} {\bibfield  {journal} {\bibinfo
  {journal} {Phys. Rev. X}\ }\textbf {\bibinfo {volume} {13}},\ \bibinfo
  {pages} {041052} (\bibinfo {year} {2023})}\BibitemShut {NoStop}%
\bibitem [{\citenamefont {Olmschenk}\ \emph {et~al.}(2007)\citenamefont
  {Olmschenk} \emph {et~al.}}]{PhysRevA.76.052314}%
  \BibitemOpen
  \bibfield  {author} {\bibinfo {author} {\bibfnamefont {S.}~\bibnamefont
  {Olmschenk}} \emph {et~al.},\ }\bibfield  {title} {\bibinfo {title}
  {Manipulation and detection of a trapped ${\mathrm{yb}}^{+}$ hyperfine
  qubit},\ }\href {https://doi.org/10.1103/PhysRevA.76.052314} {\bibfield
  {journal} {\bibinfo  {journal} {Phys. Rev. A}\ }\textbf {\bibinfo {volume}
  {76}},\ \bibinfo {pages} {052314} (\bibinfo {year} {2007})}\BibitemShut
  {NoStop}%
\bibitem [{\citenamefont {S\o{}rensen}\ and\ \citenamefont
  {M\o{}lmer}(2000)}]{PhysRevA.62.022311}%
  \BibitemOpen
  \bibfield  {author} {\bibinfo {author} {\bibfnamefont {A.}~\bibnamefont
  {S\o{}rensen}}\ and\ \bibinfo {author} {\bibfnamefont {K.}~\bibnamefont
  {M\o{}lmer}},\ }\bibfield  {title} {\bibinfo {title} {Entanglement and
  quantum computation with ions in thermal motion},\ }\href
  {https://doi.org/10.1103/PhysRevA.62.022311} {\bibfield  {journal} {\bibinfo
  {journal} {Phys. Rev. A}\ }\textbf {\bibinfo {volume} {62}},\ \bibinfo
  {pages} {022311} (\bibinfo {year} {2000})}\BibitemShut {NoStop}%
\bibitem [{\citenamefont {Lee}\ \emph {et~al.}(2005)\citenamefont {Lee} \emph
  {et~al.}}]{Lee_2005}%
  \BibitemOpen
  \bibfield  {author} {\bibinfo {author} {\bibfnamefont {P.~J.}\ \bibnamefont
  {Lee}} \emph {et~al.},\ }\bibfield  {title} {\bibinfo {title} {Phase control
  of trapped ion quantum gates},\ }\href
  {https://doi.org/10.1088/1464-4266/7/10/025} {\bibfield  {journal} {\bibinfo
  {journal} {Journal of Optics B: Quantum and Semiclassical Optics}\ }\textbf
  {\bibinfo {volume} {7}},\ \bibinfo {pages} {S371} (\bibinfo {year}
  {2005})}\BibitemShut {NoStop}%
\bibitem [{\citenamefont {Baldwin}\ \emph {et~al.}(2020)\citenamefont
  {Baldwin}, \citenamefont {Bjork}, \citenamefont {Gaebler}, \citenamefont
  {Hayes},\ and\ \citenamefont {Stack}}]{PhysRevResearch.2.013317}%
  \BibitemOpen
  \bibfield  {author} {\bibinfo {author} {\bibfnamefont {C.~H.}\ \bibnamefont
  {Baldwin}}, \bibinfo {author} {\bibfnamefont {B.~J.}\ \bibnamefont {Bjork}},
  \bibinfo {author} {\bibfnamefont {J.~P.}\ \bibnamefont {Gaebler}}, \bibinfo
  {author} {\bibfnamefont {D.}~\bibnamefont {Hayes}},\ and\ \bibinfo {author}
  {\bibfnamefont {D.}~\bibnamefont {Stack}},\ }\bibfield  {title} {\bibinfo
  {title} {Subspace benchmarking high-fidelity entangling operations with
  trapped ions},\ }\href {https://doi.org/10.1103/PhysRevResearch.2.013317}
  {\bibfield  {journal} {\bibinfo  {journal} {Phys. Rev. Res.}\ }\textbf
  {\bibinfo {volume} {2}},\ \bibinfo {pages} {013317} (\bibinfo {year}
  {2020})}\BibitemShut {NoStop}%
\bibitem [{\citenamefont {Magesan}\ \emph
  {et~al.}(2011{\natexlab{b}})\citenamefont {Magesan}, \citenamefont
  {Gambetta},\ and\ \citenamefont {Emerson}}]{PhysRevLett.106.180504}%
  \BibitemOpen
  \bibfield  {author} {\bibinfo {author} {\bibfnamefont {E.}~\bibnamefont
  {Magesan}}, \bibinfo {author} {\bibfnamefont {J.~M.}\ \bibnamefont
  {Gambetta}},\ and\ \bibinfo {author} {\bibfnamefont {J.}~\bibnamefont
  {Emerson}},\ }\bibfield  {title} {\bibinfo {title} {Scalable and robust
  randomized benchmarking of quantum processes},\ }\href
  {https://doi.org/10.1103/PhysRevLett.106.180504} {\bibfield  {journal}
  {\bibinfo  {journal} {Phys. Rev. Lett.}\ }\textbf {\bibinfo {volume} {106}},\
  \bibinfo {pages} {180504} (\bibinfo {year} {2011}{\natexlab{b}})}\BibitemShut
  {NoStop}%
\bibitem [{\citenamefont {Harper}\ \emph {et~al.}(2019)\citenamefont {Harper},
  \citenamefont {Hincks}, \citenamefont {Ferrie}, \citenamefont {Flammia},\
  and\ \citenamefont {Wallman}}]{PhysRevA.99.052350}%
  \BibitemOpen
  \bibfield  {author} {\bibinfo {author} {\bibfnamefont {R.}~\bibnamefont
  {Harper}}, \bibinfo {author} {\bibfnamefont {I.}~\bibnamefont {Hincks}},
  \bibinfo {author} {\bibfnamefont {C.}~\bibnamefont {Ferrie}}, \bibinfo
  {author} {\bibfnamefont {S.~T.}\ \bibnamefont {Flammia}},\ and\ \bibinfo
  {author} {\bibfnamefont {J.~J.}\ \bibnamefont {Wallman}},\ }\bibfield
  {title} {\bibinfo {title} {Statistical analysis of randomized benchmarking},\
  }\href {https://doi.org/10.1103/PhysRevA.99.052350} {\bibfield  {journal}
  {\bibinfo  {journal} {Phys. Rev. A}\ }\textbf {\bibinfo {volume} {99}},\
  \bibinfo {pages} {052350} (\bibinfo {year} {2019})}\BibitemShut {NoStop}%
\bibitem [{\citenamefont {Nielsen}(2002{\natexlab{a}})}]{NIELSEN2002249}%
  \BibitemOpen
  \bibfield  {author} {\bibinfo {author} {\bibfnamefont {M.~A.}\ \bibnamefont
  {Nielsen}},\ }\bibfield  {title} {\bibinfo {title} {A simple formula for the
  average gate fidelity of a quantum dynamical operation},\ }\href
  {https://doi.org/https://doi.org/10.1016/S0375-9601(02)01272-0} {\bibfield
  {journal} {\bibinfo  {journal} {Physics Letters A}\ }\textbf {\bibinfo
  {volume} {303}},\ \bibinfo {pages} {249} (\bibinfo {year}
  {2002}{\natexlab{a}})}\BibitemShut {NoStop}%
\bibitem [{\citenamefont {Hanneke}\ \emph {et~al.}(2010)\citenamefont {Hanneke}
  \emph {et~al.}}]{Hanneke2010}%
  \BibitemOpen
  \bibfield  {author} {\bibinfo {author} {\bibfnamefont {D.}~\bibnamefont
  {Hanneke}} \emph {et~al.},\ }\bibfield  {title} {\bibinfo {title}
  {Realization of a programmable two-qubit quantum processor},\ }\href
  {https://doi.org/10.1038/nphys1453} {\bibfield  {journal} {\bibinfo
  {journal} {Nature Physics}\ }\textbf {\bibinfo {volume} {6}},\ \bibinfo
  {pages} {13} (\bibinfo {year} {2010})}\BibitemShut {NoStop}%
\bibitem [{\citenamefont {Quantinuum\mbox{\:}H1-1}(2024)}]{arb_rb_expt}%
  \BibitemOpen
  \bibfield  {author} {\bibinfo {author} {\bibnamefont
  {Quantinuum\mbox{\:}H1-1}},\ }\href {https://www.quantinuum.com/} {}
  (\bibinfo {year} {August 26, 2024})\BibitemShut {NoStop}%
\bibitem [{\citenamefont {Bergholm}\ \emph {et~al.}(2005)\citenamefont
  {Bergholm}, \citenamefont {Vartiainen}, \citenamefont {M\"ott\"onen},\ and\
  \citenamefont {Salomaa}}]{BVMS05-circuit}%
  \BibitemOpen
  \bibfield  {author} {\bibinfo {author} {\bibfnamefont {V.}~\bibnamefont
  {Bergholm}}, \bibinfo {author} {\bibfnamefont {J.~J.}\ \bibnamefont
  {Vartiainen}}, \bibinfo {author} {\bibfnamefont {M.}~\bibnamefont
  {M\"ott\"onen}},\ and\ \bibinfo {author} {\bibfnamefont {M.~M.}\ \bibnamefont
  {Salomaa}},\ }\bibfield  {title} {\bibinfo {title} {Quantum circuits with
  uniformly controlled one-qubit gates},\ }\href
  {https://doi.org/10.1103/PhysRevA.71.052330} {\bibfield  {journal} {\bibinfo
  {journal} {Phys. Rev. A}\ }\textbf {\bibinfo {volume} {71}},\ \bibinfo
  {pages} {052330} (\bibinfo {year} {2005})}\BibitemShut {NoStop}%
\bibitem [{\citenamefont {Plesch}\ and\ \citenamefont
  {Brukner}(2011)}]{PB11-prep}%
  \BibitemOpen
  \bibfield  {author} {\bibinfo {author} {\bibfnamefont {M.}~\bibnamefont
  {Plesch}}\ and\ \bibinfo {author} {\bibfnamefont {{\v C}.}~\bibnamefont
  {Brukner}},\ }\bibfield  {title} {\bibinfo {title} {Quantum-state preparation
  with universal gate decompositions},\ }\href
  {https://doi.org/10.1103/PhysRevA.83.032302} {\bibfield  {journal} {\bibinfo
  {journal} {Phys. Rev. A}\ }\textbf {\bibinfo {volume} {83}},\ \bibinfo
  {pages} {032302} (\bibinfo {year} {2011})}\BibitemShut {NoStop}%
\bibitem [{\citenamefont {Nielsen}(2002{\natexlab{b}})}]{Nielsen2002}%
  \BibitemOpen
  \bibfield  {author} {\bibinfo {author} {\bibfnamefont {M.~A.}\ \bibnamefont
  {Nielsen}},\ }\bibfield  {title} {\bibinfo {title} {A simple formula for the
  average gate fidelity of a quantum dynamical operation},\ }\href
  {https://doi.org/https://doi.org/10.1016/S0375-9601(02)01272-0} {\bibfield
  {journal} {\bibinfo  {journal} {Physics Letters A}\ }\textbf {\bibinfo
  {volume} {303}},\ \bibinfo {pages} {249} (\bibinfo {year}
  {2002}{\natexlab{b}})}\BibitemShut {NoStop}%
\bibitem [{\citenamefont {Byrd}\ \emph {et~al.}(1995)\citenamefont {Byrd},
  \citenamefont {Lu}, \citenamefont {Nocedal},\ and\ \citenamefont
  {Zhu}}]{BLNZ95-l-bfgs-b}%
  \BibitemOpen
  \bibfield  {author} {\bibinfo {author} {\bibfnamefont {R.~H.}\ \bibnamefont
  {Byrd}}, \bibinfo {author} {\bibfnamefont {P.}~\bibnamefont {Lu}}, \bibinfo
  {author} {\bibfnamefont {J.}~\bibnamefont {Nocedal}},\ and\ \bibinfo {author}
  {\bibfnamefont {C.}~\bibnamefont {Zhu}},\ }\bibfield  {title} {\bibinfo
  {title} {A limited memory algorithm for bound constrained optimization},\
  }\href {https://doi.org/10.1137/0916069} {\bibfield  {journal} {\bibinfo
  {journal} {SIAM Journal on Scientific Computing}\ }\textbf {\bibinfo {volume}
  {16}},\ \bibinfo {pages} {1190} (\bibinfo {year} {1995})}\BibitemShut
  {NoStop}%
\bibitem [{\citenamefont {Zhu}\ \emph {et~al.}(1997)\citenamefont {Zhu},
  \citenamefont {Byrd}, \citenamefont {Lu},\ and\ \citenamefont
  {Nocedal}}]{ZBLN97-l-bfgs-b}%
  \BibitemOpen
  \bibfield  {author} {\bibinfo {author} {\bibfnamefont {C.}~\bibnamefont
  {Zhu}}, \bibinfo {author} {\bibfnamefont {R.~H.}\ \bibnamefont {Byrd}},
  \bibinfo {author} {\bibfnamefont {P.}~\bibnamefont {Lu}},\ and\ \bibinfo
  {author} {\bibfnamefont {J.}~\bibnamefont {Nocedal}},\ }\bibfield  {title}
  {\bibinfo {title} {Algorithm 778: L-bfgs-b: Fortran subroutines for
  large-scale bound-constrained optimization},\ }\href
  {https://doi.org/10.1145/279232.279236} {\bibfield  {journal} {\bibinfo
  {journal} {ACM Trans. Math. Softw.}\ }\textbf {\bibinfo {volume} {23}},\
  \bibinfo {pages} {550} (\bibinfo {year} {1997})}\BibitemShut {NoStop}%
\bibitem [{\citenamefont {Virtanen}\ \emph {et~al.}(2020)\citenamefont
  {Virtanen} \emph {et~al.}}]{SciPy}%
  \BibitemOpen
  \bibfield  {author} {\bibinfo {author} {\bibfnamefont {P.}~\bibnamefont
  {Virtanen}} \emph {et~al.},\ }\bibfield  {title} {\bibinfo {title} {{{SciPy}
  1.0: Fundamental Algorithms for Scientific Computing in Python}},\ }\href
  {https://doi.org/10.1038/s41592-019-0686-2} {\bibfield  {journal} {\bibinfo
  {journal} {Nature Methods}\ }\textbf {\bibinfo {volume} {17}},\ \bibinfo
  {pages} {261} (\bibinfo {year} {2020})}\BibitemShut {NoStop}%
\bibitem [{\citenamefont {Harris}\ \emph {et~al.}(2020)\citenamefont {Harris}
  \emph {et~al.}}]{NumPy}%
  \BibitemOpen
  \bibfield  {author} {\bibinfo {author} {\bibfnamefont {C.~R.}\ \bibnamefont
  {Harris}} \emph {et~al.},\ }\bibfield  {title} {\bibinfo {title} {Array
  programming with {NumPy}},\ }\href
  {https://doi.org/10.1038/s41586-020-2649-2} {\bibfield  {journal} {\bibinfo
  {journal} {Nature}\ }\textbf {\bibinfo {volume} {585}},\ \bibinfo {pages}
  {357} (\bibinfo {year} {2020})}\BibitemShut {NoStop}%
\bibitem [{\citenamefont {O'Neill}(2014)}]{oneill:pcg2014}%
  \BibitemOpen
  \bibfield  {author} {\bibinfo {author} {\bibfnamefont {M.~E.}\ \bibnamefont
  {O'Neill}},\ }\href@noop {} {\emph {\bibinfo {title} {PCG: A Family of Simple
  Fast Space-Efficient Statistically Good Algorithms for Random Number
  Generation}}},\ \bibinfo {type} {Tech. Rep.}\ \bibinfo {number}
  {HMC-CS-2014-0905}\ (\bibinfo  {institution} {Harvey Mudd College},\ \bibinfo
  {address} {Claremont, CA},\ \bibinfo {year} {2014})\BibitemShut {NoStop}%
\bibitem [{\citenamefont {Bradbury}\ \emph {et~al.}(2018)\citenamefont
  {Bradbury} \emph {et~al.}}]{jax2018github}%
  \BibitemOpen
  \bibfield  {author} {\bibinfo {author} {\bibfnamefont {J.}~\bibnamefont
  {Bradbury}} \emph {et~al.},\ }\href {http://github.com/jax-ml/jax} {\bibinfo
  {title} {{JAX}: composable transformations of {P}ython+{N}um{P}y programs}}
  (\bibinfo {year} {2018})\BibitemShut {NoStop}%
\bibitem [{\citenamefont {Boixo}\ \emph {et~al.}(2018)\citenamefont {Boixo}
  \emph {et~al.}}]{Boixo2018}%
  \BibitemOpen
  \bibfield  {author} {\bibinfo {author} {\bibfnamefont {S.}~\bibnamefont
  {Boixo}} \emph {et~al.},\ }\bibfield  {title} {\bibinfo {title}
  {Characterizing quantum supremacy in near-term devices},\ }\href
  {https://doi.org/10.1038/s41567-018-0124-x} {\bibfield  {journal} {\bibinfo
  {journal} {Nature Physics}\ }\textbf {\bibinfo {volume} {14}},\ \bibinfo
  {pages} {595} (\bibinfo {year} {2018})}\BibitemShut {NoStop}%
\bibitem [{\citenamefont {Efron}\ and\ \citenamefont
  {Tibshirani}(1993)}]{EfroTibs1993}%
  \BibitemOpen
  \bibfield  {author} {\bibinfo {author} {\bibfnamefont {B.}~\bibnamefont
  {Efron}}\ and\ \bibinfo {author} {\bibfnamefont {R.~J.}\ \bibnamefont
  {Tibshirani}},\ }\href@noop {} {\emph {\bibinfo {title} {An Introduction to
  the Bootstrap}}},\ \bibinfo {series} {Monographs on Statistics and Applied
  Probability}\ No.~\bibinfo {number} {57}\ (\bibinfo  {publisher} {Chapman \&
  Hall/CRC},\ \bibinfo {address} {Boca Raton, Florida, USA},\ \bibinfo {year}
  {1993})\BibitemShut {NoStop}%
\bibitem [{\citenamefont {Dehaene}\ and\ \citenamefont
  {De~Moor}(2003)}]{DDM03-quadratic}%
  \BibitemOpen
  \bibfield  {author} {\bibinfo {author} {\bibfnamefont {J.}~\bibnamefont
  {Dehaene}}\ and\ \bibinfo {author} {\bibfnamefont {B.}~\bibnamefont
  {De~Moor}},\ }\bibfield  {title} {\bibinfo {title} {{C}lifford group,
  stabilizer states, and linear and quadratic operations over {GF(2)}},\ }\href
  {https://doi.org/10.1103/PhysRevA.68.042318} {\bibfield  {journal} {\bibinfo
  {journal} {Phys. Rev. A}\ }\textbf {\bibinfo {volume} {68}},\ \bibinfo
  {pages} {042318} (\bibinfo {year} {2003})}\BibitemShut {NoStop}%
\bibitem [{\citenamefont {Gross}\ and\ \citenamefont {Van~den
  Nest}(2008)}]{GV08-LU-LC}%
  \BibitemOpen
  \bibfield  {author} {\bibinfo {author} {\bibfnamefont {D.}~\bibnamefont
  {Gross}}\ and\ \bibinfo {author} {\bibfnamefont {M.}~\bibnamefont {Van~den
  Nest}},\ }\bibfield  {title} {\bibinfo {title} {The {LU-LC} conjecture,
  diagonal local operations and quadratic forms over {GF(2)}},\ }\href
  {https://doi.org/10.26421/QIC8.3-4-3} {\bibfield  {journal} {\bibinfo
  {journal} {Quantum Info. Comput.}\ }\textbf {\bibinfo {volume} {8}},\
  \bibinfo {pages} {263} (\bibinfo {year} {2008})}\BibitemShut {NoStop}%
\bibitem [{\citenamefont {Van~den Nes}(2010)}]{VDN10-gk}%
  \BibitemOpen
  \bibfield  {author} {\bibinfo {author} {\bibfnamefont {M.}~\bibnamefont
  {Van~den Nes}},\ }\bibfield  {title} {\bibinfo {title} {Classical simulation
  of quantum computation, the {G}ottesman-{K}nill theorem, and slightly
  beyond},\ }\href {https://doi.org/10.26421/QIC10.3-4-6} {\bibfield  {journal}
  {\bibinfo  {journal} {Quantum Info. Comput.}\ }\textbf {\bibinfo {volume}
  {10}},\ \bibinfo {pages} {258} (\bibinfo {year} {2010})}\BibitemShut
  {NoStop}%
\bibitem [{\citenamefont {Bravyi}\ and\ \citenamefont
  {Gosset}(2016)}]{BG16-clifford}%
  \BibitemOpen
  \bibfield  {author} {\bibinfo {author} {\bibfnamefont {S.}~\bibnamefont
  {Bravyi}}\ and\ \bibinfo {author} {\bibfnamefont {D.}~\bibnamefont
  {Gosset}},\ }\bibfield  {title} {\bibinfo {title} {Improved classical
  simulation of quantum circuits dominated by {C}lifford gates},\ }\href
  {https://doi.org/10.1103/PhysRevLett.116.250501} {\bibfield  {journal}
  {\bibinfo  {journal} {Phys. Rev. Lett.}\ }\textbf {\bibinfo {volume} {116}},\
  \bibinfo {pages} {250501} (\bibinfo {year} {2016})}\BibitemShut {NoStop}%
\end{thebibliography}%

\end{document}